\documentclass[a4paper,12pt,fleqn]{article}

\usepackage{amsmath,amssymb,amsthm}
\usepackage{thm-restate}
\usepackage[margin=1in]{geometry}
\usepackage{xcolor}
\usepackage{url}
\usepackage{array}
\usepackage{comment}
\usepackage[hypertexnames=false,bookmarksnumbered=true,final]{hyperref}
\usepackage[capitalize,sort]{cleveref}

\def\colorschemesepia{sepia}
\def\colorschemedark{dark}
\def\colorschemelight{light}

\ifx\colorscheme\undefined
\let\colorscheme\colorschemelight
\fi

\ifx\colorscheme\colorschemelight
\colorlet{textColor}{black}
\colorlet{bgColor}{white}
\fi

\ifx\colorscheme\colorschemesepia
\definecolor{textColor}{HTML}{433423}
\definecolor{bgColor}{HTML}{fbf0da}
\fi

\ifx\colorscheme\colorschemedark
\definecolor{textColor}{HTML}{bdc1c6}
\definecolor{bgColor}{HTML}{202124}
\definecolor{textBlue}{HTML}{8ab4f8}
\definecolor{textRed}{HTML}{f9968b}
\definecolor{textGreen}{HTML}{81e681}
\definecolor{textPurple}{HTML}{c58af9}
\else
\colorlet{textBlue}{blue!50!black}
\colorlet{textRed}{red!50!black}
\colorlet{textGreen}{green!50!black}
\definecolor{textPurple}{HTML}{681da8}
\fi

\ifx\colorscheme\colorschemelight\else
\pagecolor{bgColor}
\color{textColor}
\fi

\hypersetup{colorlinks,linkcolor=textRed,citecolor=textRed,urlcolor=textBlue}

\let\eps\varepsilon

\newcommand*{\defeq}{:=}
\newcommand*{\Th}{^{\textrm{th}}}

\newcommand*{\ceil}[1]{\left\lceil #1 \right\rceil}
\DeclareMathOperator*{\E}{\mathbb{E}}
\DeclareMathOperator*{\Var}{Var}

\newcommand*{\WLoG}{Without loss of generality}
\newcommand*{\wLoG}{without loss of generality}
\newcommand*{\toN}[1][n]{[1..#1]}
\newcommand*{\trn}{^T\mspace{-2.5mu}}
\newcommand*{\trnF}{^T}

\newcommand*{\boolone}{\mathbf{1}}
\newcommand*{\simplex}{\Delta}
\newcommand*{\sampledFrom}{\sim}
\DeclareMathOperator{\support}{support}
\DeclareMathOperator{\graph}{graph}

\DeclareMathOperator{\colls}{collisions}
\DeclareMathOperator{\isResponse}{isResp}
\DeclareMathOperator{\stdmeml}{stat}
\DeclareMathOperator{\perronSolution}{perronSolution}
\DeclareMathOperator{\concat}{concat}

\newcommand*{\batter}{max player}
\newcommand*{\bowler}{min player}
\newcommand*{\batterAndBowler}{max and min players}

\let\mymathvec\mathrm
\newcommand*{\zerovec}{\mymathvec{0}}

\newcommand*{\evec}{\mymathvec{e}}

\newcommand*{\uvec}{\mymathvec{u}}
\newcommand*{\vvec}{\mymathvec{v}}
\newcommand*{\wvec}{\mymathvec{w}}
\newcommand*{\xvec}{\mymathvec{x}}
\newcommand*{\yvec}{\mymathvec{y}}
\newcommand*{\zvec}{\mymathvec{z}}

\newcommand*{\xstar}{\xvec^*}
\newcommand*{\xstarT}{{\xvec^*}\trn}
\newcommand*{\ystar}{\yvec^*}
\newcommand*{\xhat}{\widehat{\xvec}}
\newcommand*{\yhat}{\widehat{\yvec}}
\newcommand*{\xtild}{\widetilde{\xvec}}
\newcommand*{\ytild}{\widetilde{\yvec}}
\newcommand*{\fxvec}{f_{\xvec}}
\newcommand*{\gyvec}{g_{\yvec}}

\newcommand*{\fhat}{\widehat{f}}
\newcommand*{\ftild}{\widetilde{f}}
\newcommand*{\ghat}{\widehat{g}}
\newcommand*{\gtild}{\widetilde{g}}
\newcommand*{\ihat}{\hat\imath}
\newcommand*{\jhat}{\hat\jmath}
\newcommand*{\rhohat}{\widehat{\rho}}

\newcommand*{\Ebar}{\overline{E}}
\newcommand*{\Dcal}{\mathcal{D}}
\newcommand*{\Ecal}{\mathcal{E}}
\newcommand*{\Fcal}{\mathcal{F}}
\newcommand*{\Gcal}{\mathcal{G}}

\newcommand*{\Xcal}{\mathcal{X}}
\newcommand*{\Ycal}{\mathcal{Y}}

\newcommand*{\movedProof}[1]{\begin{proof}(See #1.)\end{proof}}

\newtheorem{theorem}{Theorem}

\newtheorem{definition}{Definition}
\newtheorem{example}{Example}

\newtheorem{lemma}[theorem]{Lemma}
\newtheorem{observation}[theorem]{Observation}
\newtheorem{proposition}[theorem]{Proposition}

\allowdisplaybreaks
\hfuzz=2pt
\Urlmuskip=0mu plus 0.1mu
\makeatletter
\g@addto@macro{\UrlBreaks}{%
\do\/%
\do\a\do\b\do\c\do\d\do\e\do\f\do\g\do\h\do\i\do\j\do\k\do\l\do\m%
\do\n\do\o\do\p\do\q\do\r\do\s\do\t\do\u\do\v\do\w\do\x\do\y\do\z%
\do\A\do\B\do\C\do\D\do\E\do\F\do\G\do\H\do\I\do\J\do\K\do\L\do\M%
\do\N\do\O\do\P\do\Q\do\R\do\S\do\T\do\U\do\V\do\W\do\X\do\Y\do\Z%
\do\0\do\1\do\2\do\3\do\4\do\5\do\6\do\7\do\8\do\9%
}
\makeatother

\newcommand*{\acknowledgements}[1]{\subparagraph{Acknowledgements.} #1}

\title{Nash Equilibria of Two-Player Matrix Games Repeated Until Collision}

\newcommand*{\emailfont}[1]{\texttt{\small #1}}

\author{Aniket Murhekar%
\thanks{University of Illinois, Urbana-Champaign, USA}%
\\\emailfont{aniket2@illinois.edu}
\and Eklavya Sharma\footnotemark[1]\\\emailfont{eklavya2@illinois.edu}}

\date{\empty}

\begin{document}

\maketitle
\setlength{\parskip}{0.2em}

\begin{abstract}
We introduce and initiate the study of a natural class of repeated two-player matrix games, called \emph{Repeated-Until-Collision} (RUC) games.
In each round, both players simultaneously pick an action from a common action set $\{1, 2, \dots, n\}$.
Depending on their chosen actions, they derive payoffs given by $n \times n$ matrices $A$ and $B$, respectively.
If their actions \emph{collide} (i.e., they pick the same action), the game ends, otherwise, it proceeds to the next round.
Both players want to maximize their total payoff until the game ends.
RUC games can be interpreted as pursuit-evasion games or repeated hide-and-seek games.
They also generalize \emph{hand cricket}, a popular game among children in India.

We show that under mild assumptions on the payoff matrices, every RUC game admits a Nash equilibrium (NE).
Moreover, we show the existence of a \emph{stationary} NE, where each player chooses their action according to a probability distribution over the action set that does not change across rounds.
Remarkably, we show that all NE are effectively the same as the stationary NE, thus showing that RUC games admit an \emph{almost} unique NE.
Lastly, we also show how to compute (approximate) NE for RUC games.

\end{abstract}

\section{Introduction}

Two-player matrix games, or \textit{bimatrix} games, are among the most well-studied classes of games in non-cooperative game theory \cite{osborneGameTheoryTextBook}. A bimatrix game involves two players with a finite set of actions $\{1,2,\dots,n\}$, and can be represented by two $n\times n$ payoff matrices $A$ and $B$ for the two players. A play of the game can be thought of as one player choosing a row $i$ and the other player choosing a column $j$ simultaneously. The `row' player gets a payoff of $A_{i,j}$ and the `column' player gets a payoff of $B_{i,j}$. To avoid being predictable, players can choose their actions according to a \textit{randomized} or mixed strategy which is a probability distribution over their action set. Naturally, each player is interested in maximizing their (expected) payoff.

Arguably the most popular solution concept in game theory is that of Nash equilibrium \cite{nash1951NE}. A Nash equilibrium (NE) for a bimatrix game is a pair of mixed strategies where no player has any incentive to unilaterally deviate and change her strategy. In a celebrated result, Nash showed that each bimatrix game admits a Nash equilibrium. Given that two-player games are ubiquitous and have widespread applications in networks and communications \cite{gtappwireless}, financial markets \cite{gtappfinance}, robotics \cite{gtapprobots}, etc., the field of algorithmic game theory has extensively studied the existence and computation of NE in bimatrix games and their generalizations.

A repeated two-player game is one such generalization, which involves repetitions of a base bimatrix game \cite{repeatedGamesBook,zerosumrepeatedgames,finitelyRepeatedGames}. In a repeated game, a player can choose their moves based on the moves played by the players in the previous rounds, while recognizing that their current move will impact the choice of moves of their opponent in the future, leading to added complexity. While finitely repeated games are well-understood using backward-induction \cite{finitelyRepeatedGames}, the equilibria of infinitely repeated games can differ from that of the base game \cite{repeatedGamesBook}.

In this paper, we introduce and initiate the study of a natural class of repeated bimatrix games, called \emph{Repeated-Until-Collision} (RUC) games.
For $n\times n$ non-negative matrices $A$ and $B$, an RUC game $(A,B)$ has two players, called \emph{\batter} and \emph{\bowler}, where $A$ is called the \batter's score matrix and $B$ is called the \bowler's cost matrix. The game consists of multiple rounds.
Suppose the \batter{} and \bowler{} pick actions $i$ and $j$, respectively, in a round.
Then the \batter{} earns a score of $A_{i,j}$ and the \bowler{} incurs a cost of $B_{i,j}$.
If $i = j$, the game ends (we call this event a \emph{collision}). Otherwise, the game proceeds to the next round. We assume that the \batter{} and \bowler{} may use randomized strategies.
The \batter{} wants to maximize her expected total score and the \bowler{} wants to minimize her expected total cost. Below we discuss a few applications of RUC games, thus underscoring their importance.

\subparagraph{Applications.} RUC games can be interpreted as variants of pursuit-evasion games \cite{khan2007game,borie09pursuitevasion}. As a simplified example, consider a game between a drug dealer (\batter{}) and the law enforcement (\bowler{}). Each day, the drug dealer chooses one of $n$ locations for a drop, while law enforcement chooses a location for a random check. If their locations coincide, the game ends as the drug dealer is caught. Until then, the drug dealer wants to maximize his revenue from the sale of drugs, while law enforcement wants to minimize the cost to society by illicit drug use. Similar examples can be found in reinforcement learning or robot motion planning, where an agent, e.g., a spy robot, is trying to learn an unknown environment. The agent gets a reward from exploring one of $n$ locations, while its adversary (security systems, nature, etc.) actively tries to minimize how much the agent discovers before catching the infiltrating agent.

RUC games generalize \textit{hand cricket} \cite{hc-instructables,hc-ios-app}, a popular game played by children in India. Hand cricket is a contest between a `batter' (\batter{}) and a `bowler' (\bowler{}). In a round, the batter and bowler simultaneously choose actions $i$ and $j$ from $\{1,\dots,n\}$. This gives the batter a score of $i$, while the bowler suffers a cost of $i$. The game ends if $i=j$, i.e., the batter is declared `out'. Hand cricket involves the batter trying to maximize her total score and the bowler trying to minimize it. Note that this is an example of a \textit{zero-sum} RUC game, where the payoff to the \batter{} equals the cost borne by the \bowler{}.

We also note that RUC games can be thought of as repeated hide-and-seek games \cite{Rubinstein1997hideseek,Rubinstein1993hideseek} between two players, the Hider and the Seeker. In a hide-and-seek game, there are $n$ locations containing varying rewards. The Hider tries to collect as much reward as possible, before getting caught by the Seeker, who aims to minimize the reward lost to the Hider. Like hand cricket, repeated hide-and-seek is also a zero-sum RUC game.

\subparagraph{Nash equilibria in RUC games.} Our paper addresses the following natural questions about RUC games:
\begin{quote}
\textit{Do Repeated-Until-Collision games admit Nash equilibria? If so, are they unique? Can they be computed efficiently?}
\end{quote}

At first glance, it is not even clear if an NE should exist for general RUC games. Intuitively, the \bowler{} would like a collision to happen soon to prevent accumulating a large cost, while the \batter{} would like to delay a collision.
Additionally, if some row of $A$ has large numbers, the \batter{} would want to pick the action corresponding to that row more frequently.
These two approaches are at odds with each other, e.g., if the \batter{} picks a single action
very frequently, the \bowler{} can pick the same action and cause a collision very soon.
Hence, the players must pick distributions of actions that balance
the per-round score (or cost) and the duration of the game. Whether distributions in which neither player has an incentive to deviate unilaterally (i.e., Nash equilibrium) exist is unclear.

\subsection{Our Results}

As a warm-up, we begin by looking at a simpler version of RUC games called \emph{stationary Repeated-Until-Collision} (SRUC) games, where the players are restricted to use only \textit{stationary strategies}. A stationary strategy is one where in each round, the player samples an action from the same distribution.
Formally, given a vector $\xvec\in\simplex_n$ (where $\simplex_n$ is the $n$-dimensional standard simplex),
the stationary strategy $\xvec$ is to pick action $i$
in each round independently randomly with probability $\xvec_i$ for each $i \in \{1,\ldots,n\}$.
Intuitively, stationary strategies make sense for RUC games, since the game has a
recursive structure: if a collision doesn't happen in the first round,
the remaining game is identical to the original. Moreover, stationary strategies are a natural choice for players constrained on computational resources, since they are independent of the game history.

Under mild assumptions on matrices $A$ and $B$, we show that a Nash equilibrium
always exists for the SRUC game $(A, B)$, and is unique.

\begin{definition}
\label{defn:irred}
For any matrix $A \in \mathbb{R}_{\ge 0}^{n \times n}$, let
$\graph(A) \defeq (V, E)$ be a directed graph where
$V \defeq \{1, \ldots, n\}$ and
$E \defeq \{(i, j): i \in V, j \in V \setminus \{i\}, A_{i,j} > 0\}$.
Then $A$ is called \emph{irreducible} iff $\graph(A)$ is strongly connected.
\end{definition}

\begin{theorem}
\label{thm:meml-ne-ex-and-uniq}
Let $A$ and $B$ be irreducible matrices.
Then a Nash equilibrium exists for the SRUC game $(A, B)$.
Furthermore, the Nash equilibrium is unique iff $\graph(A)$ is a subgraph of $\graph(B)$.
\end{theorem}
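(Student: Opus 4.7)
For stationary $\xvec,\yvec \in \simplex_n$ the per-round collision probability is $\xvec\trnF\yvec$; when it is positive, the \batter's expected total score and the \bowler's expected total cost are the linear-fractional functions $f(\xvec,\yvec) = \xvec\trnF A\yvec / \xvec\trnF\yvec$ and $g(\xvec,\yvec) = \xvec\trnF B\yvec / \xvec\trnF\yvec$, while if $\xvec\trnF\yvec = 0$ each total equals $+\infty$ when the corresponding numerator is positive and $0$ otherwise. The plan is to use the Perron--Frobenius theorem to build one equilibrium, and then to analyze the degenerate regime $\xvec\trnF\yvec = 0$ to decide uniqueness.

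\textbf{Existence and the ``if'' direction of uniqueness.}
Since $\graph(A)$ is strongly connected (adding self-loops would not change this), $A$ is irreducible in the classical Perron--Frobenius sense; let $\ystar \in \simplex_n$ be its positive Perron eigenvector, with eigenvalue $\lambda_A > 0$, and let $\xstar \in \simplex_n$, $\lambda_B > 0$ play the same role for $B\trnF$. The identities $A\ystar = \lambda_A\ystar$ and $B\trnF\xstar = \lambda_B\xstar$ yield $f(\xvec,\ystar) = \lambda_A$ and $g(\xstar,\yvec) = \lambda_B$ for every $\xvec,\yvec \in \simplex_n$, so both players are indifferent and $(\xstar,\ystar)$ is a Nash equilibrium. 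For uniqueness under $\graph(A)\subseteq\graph(B)$, take any NE $(\xstar,\ystar)$ with values $V_A, V_B$. First-order optimality of the linear-fractional $f(\,\cdot\,,\ystar)$ over the simplex gives $(A\ystar)_i = V_A \ystar_i$ on $\support(\xstar)$ and $(A\ystar)_i \le V_A \ystar_i$ elsewhere whenever $V_A$ is finite, plus the pure-deviation constraint $(A\ystar)_i = 0$ whenever $\ystar_i = 0$ (else $\evec_i$ gives the \batter{} infinite score). The case $V_A = 0$ forces $A\ystar = \zerovec$ with $\ystar$ nonzero, impossible by Perron--Frobenius. The case $V_A = +\infty$ requires $\xstar\trnF\ystar = 0$ and $\xstar\trnF A\ystar > 0$, exhibiting an off-diagonal pair $(i,j) \in \graph(A)$ with $i \in \support(\xstar)$ and $j \in \support(\ystar)$; the hypothesis $\graph(A)\subseteq\graph(B)$ then gives $B_{i,j} > 0$, hence $V_B = +\infty$, yet the \bowler{} could deviate to $\evec_k$ for any $k \in \support(\xstar)$ and obtain finite cost, a contradiction. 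So $V_A$ is finite and positive; the deviation constraint and strong connectivity of $\graph(A)$ then force $\support(\ystar) = \{1,\dots,n\}$, the analogous conditions for the \bowler{} give $B\trnF\xstar = V_B\xstar$ on all coordinates, and Perron--Frobenius (a nonzero nonnegative eigenvector of an irreducible nonnegative matrix must be the positive Perron eigenvector) pins down $\xstar$ as the Perron eigenvector of $B\trnF$ and, in turn, $\ystar$ as that of $A$.

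\textbf{``Only if'' direction and main difficulty.}
If $\graph(A)\not\subseteq\graph(B)$, pick $i\ne j$ with $A_{i,j}>0$ and $B_{i,j}=0$. Then $(\evec_i,\evec_j)$ is a second Nash equilibrium: against $\evec_j$ the \batter{} gets $f(\evec_i,\evec_j) = +\infty$, which is the maximum possible value of $f(\,\cdot\,,\evec_j)$, so $\evec_i$ is a best response; against $\evec_i$ the \bowler{} gets $g(\evec_i,\evec_j) = 0$, which is the minimum of the nonnegative $g$, so $\evec_j$ is a best response. Since the Perron equilibrium has full support and finite values, this is a genuinely distinct NE. The main difficulty throughout is the degenerate regime $\xvec\trnF\yvec = 0$, in which the payoffs blow up and standard fixed-point/concavity machinery for bimatrix games does not apply off the shelf: both directions hinge on carefully auditing pure-action deviations into coordinates where $\ystar_i$ or $\xstar_j$ vanishes, and the graph-containment condition is precisely the combinatorial criterion that forbids a profitable ``score-but-never-collide'' deviation.
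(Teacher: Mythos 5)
Your proposal is correct and follows essentially the same route as the paper: existence via the Perron vectors of $A$ and $B\trnF$ (which make each player indifferent), uniqueness under $\graph(A)\subseteq\graph(B)$ by ruling out zero/infinite values, forcing full support via pure deviations plus strong connectivity, extracting the eigenpair conditions, and invoking the uniqueness of nonnegative eigenvectors of irreducible matrices; and non-uniqueness otherwise via the pure profile $(\evec^{(i)},\evec^{(j)})$ with $A_{i,j}>0=B_{i,j}$. The only cosmetic difference is that you obtain the eigenvector identities from first-order (KKT) optimality of the linear-fractional payoffs at an interior point, whereas the paper uses an elementary ratio-of-sums lemma and an explicit support-decomposition argument to first establish $\support(\xstar)=\toN$; both are valid.
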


We prove \cref{thm:meml-ne-ex-and-uniq} in \cref{sec:meml}. Our existence result uses the Perron-Frobenius theorem \cite{perronfrobenius}, a central result from matrix theory, and shows that the NE strategies can be computed from the leading eigenvectors of the payoff matrices.

In \cref{sec:reducible}, using more involved techniques,
we also show the existence of NE for SRUC games with reducible payoff matrices.
Furthermore, if either $A$ or $B$ is reducible, Nash equilibria need not be unique.

In \cref{sec:non-meml}, we switch back from SRUC games to RUC games,
i.e., we allow players to play non-stationary strategies.
This setting is significantly more challenging to analyze due to its large and complicated strategy space.
Since strategies can be history-dependent, a deterministic strategy can be viewed as a function
assigning an action to every possible state of the game, where the state is given
by the players' actions in past rounds.
Such functions have an infinite domain since the game's history can grow arbitrarily large, which
implies that the set of deterministic strategies is uncountably infinite.
So, randomized strategies, defined as probability distributions over deterministic strategies,
are tough to analyze formally since these probability distributions have a large and unusual sample space.

Although RUC games allow non-stationary strategies, it is not obvious if players benefit from this extra freedom. If one player uses a stationary strategy, it is a priori unclear if the other player can gain by deviating unilaterally to a non-stationary strategy. Interestingly, we show that every RUC game admits a Nash equilibrium with stationary strategies.

\begin{theorem}
\label{thm:non-meml-ne-ex-and-uniq}
Let $A$ and $B$ be irreducible matrices.
Then there exists a pair of stationary strategies $(\xstar, \ystar)$ that is
a Nash equilibrium for both the RUC game $(A, B)$ and the SRUC game $(A, B)$.
\end{theorem}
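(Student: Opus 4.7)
The plan is to take $(\xstar, \ystar)$ to be any stationary NE of the SRUC game supplied by \cref{thm:meml-ne-ex-and-uniq}, and then show directly that neither player can gain by deviating to a non-stationary, history-dependent, randomized strategy; by construction $(\xstar,\ystar)$ will then simultaneously be an NE of the SRUC and RUC games. By symmetry it suffices to analyze the \batter{}'s side. Fix the \bowler{} to play $\ystar$ and let $r_i \defeq (A\ystar)_i$ denote the \batter{}'s expected immediate payoff from action $i$ in any single round. The key observation is that, since $\ystar$ is history-oblivious, the game as seen by the \batter{} collapses to a single-state Markov decision process: action $i$ yields expected reward $r_i$ and terminates the game with probability $\ystar_i$, independently of the past.

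Let $V(\sigma)$ be the \batter{}'s expected total payoff under strategy $\sigma$ against $\ystar$, and let $V^* \defeq \sup_\sigma V(\sigma)$. Any $\sigma$ decomposes into first-round action probabilities $(p_1,\dots,p_n)$ and continuation strategies $\sigma_i$ conditional on first action $i$ and no collision. One-step conditioning gives
\[
V(\sigma) \;=\; \sum_{i=1}^n p_i \bigl(r_i + (1-\ystar_i)\, V(\sigma_i)\bigr) \;\leq\; \max_i \bigl(r_i + (1-\ystar_i)\, V^*\bigr),
\]
using $V(\sigma_i)\le V^*$. Taking the sup over $\sigma$ yields $V^* \le \max_i(r_i + (1-\ystar_i)V^*)$; letting $i^*$ attain the maximum and using $\ystar_{i^*} > 0$, rearrangement gives $V^* \le r_{i^*}/\ystar_{i^*} \le \max_i r_i/\ystar_i$. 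Conversely, the pure stationary strategy $\evec_{i^*}$ already achieves value $r_{i^*}/\ystar_{i^*}$, so $V^* = \max_i r_i/\ystar_i$. Since $\xstar$ is a best stationary response to $\ystar$, we have $V(\xstar) \ge r_{i^*}/\ystar_{i^*} = V^*$, so $V(\xstar) = V^*$ and $\xstar$ remains optimal against $\ystar$ within the full non-stationary strategy space. The symmetric argument, with $\max$ replaced by $\min$, shows $\ystar$ is optimal against $\xstar$ for the \bowler{}, proving $(\xstar,\ystar)$ is a Nash equilibrium for the RUC game $(A,B)$.

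The main obstacle is twofold. First, $V^*$ must be finite to justify the rearrangement; this requires $\ystar$ to have strictly positive entries so that the game almost surely terminates in geometrically bounded expected time regardless of the \batter{}'s play. This positivity should follow from the Perron-Frobenius-based construction of $\ystar$ for irreducible $B$ underlying \cref{thm:meml-ne-ex-and-uniq}, which produces $\ystar$ as (a rescaling of) a strictly positive leading eigenvector. Second, the space of history-dependent randomized strategies is measure-theoretically delicate, as the excerpt emphasizes. Fortunately, the argument only uses the one-step decomposition $V(\sigma) = \sum_i p_i(r_i + (1-\ystar_i) V(\sigma_i))$ together with the bound $V(\sigma_i) \le V^*$; this can be justified by conditioning on the outcome of the first round alone, sidestepping any explicit construction of the underlying $\sigma$-algebra on infinite histories.
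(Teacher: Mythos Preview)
Your argument is correct and tracks the paper's closely. Both reduce the claim to: against a full-support stationary opponent $\ystar$, the \batter{}'s optimal value over all (history-dependent) strategies equals $\max_i (A\ystar)_i/\ystar_i$, which is already achieved stationarily. The paper packages this as \cref{thm:batter-nmem-ne,thm:bowler-nmem-ne}, proving the upper bound by finite induction on the round count $r$ and then letting $r\to\infty$; you instead take the supremum $V^*$ directly and extract a Bellman-type inequality $V^*\le\max_i(r_i+(1-\ystar_i)V^*)$. The inductive route has the advantage of never needing to know a priori that $V^*<\infty$, whereas yours uses that fact explicitly (and, as you note, it does follow once $\ystar>0$). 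The paper also shortcuts your final step: since $\ystar$ is chosen to be the right Perron vector of $A$, every ratio $(A\ystar)_i/\ystar_i$ equals the Perron root $\rho_A$, so in fact $e^{(A,\infty)}(f,\ystar)=\rho_A$ for \emph{every} strategy $f$, and the NE property is immediate without appealing to $\xstar$ being a best stationary response.

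Two small corrections. First, the positivity of $\ystar$ comes from the irreducibility of $A$ (it is the right Perron vector of $A$), not of $B$; symmetrically, $\xstar>0$ comes from the irreducibility of $B$. Second, you should take specifically the Perron-based NE produced by the existence half of \cref{thm:meml-ne-ex-and-uniq}, not ``any'' SRUC NE: when $\graph(A)\not\subseteq\graph(B)$ there are SRUC equilibria without full support (cf.\ \cref{thm:edge-in-A-not-in-B}), and for those your finiteness and rearrangement steps would break down.
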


Furthermore, we show that for zero-sum RUC games (i.e., when $A = B$), the Nash equilibrium $(\xstar, \ystar)$
is \emph{unique up to equivalence}, i.e., it is impossible to distinguish between
different Nash equilibria just by observing the players' actions.

Having studied the existence of Nash equilibria in RUC games, we now turn to computation. Since the eigenvectors could be irrational, we consider \textit{approximate} Nash equilibria. In an $\eps$-approximate NE, neither player can improve their payoff by a factor of $(1+\eps)$ through unilateral deviations. We show that fine-enough approximations to the leading eigenvectors of the payoff matrices can be used to compute approximate NE of an RUC game.

\begin{theorem}
\label{thm:approx-ne-informal}
Let $(\xstar, \ystar)$ be a Nash equilibrium for the RUC game $(A, B)$ with full support. For $\eps\in[0,1)$, let $\xhat, \yhat \in \simplex_n$ be such that $|\xstar_i-\xhat_i| \le \eps\xstar_i$ and $|\ystar_i-\yhat_i| \le \eps\ystar_i$, for all $i\in \toN$.
Then $(\xhat, \yhat)$ is a $\frac{4\eps}{(1-\eps)^2}$-approximate Nash equilibrium.
\end{theorem}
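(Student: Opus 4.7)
The plan is to exploit a structural consequence of $(\xstar,\ystar)$ being a full-support Nash equilibrium: $\ystar$ is a right eigenvector of $A$ and $\xstar$ is a left eigenvector of $B$, both with eigenvalues equal to the NE values $v_A \defeq f(\xstar,\ystar)$ and $v_B \defeq g(\xstar,\ystar)$, where $f(\xvec,\yvec) \defeq \xvec\trn A\yvec/\xvec\trn\yvec$ and $g(\xvec,\yvec) \defeq \xvec\trn B\yvec/\xvec\trn\yvec$ are the players' expected total payoffs under stationary $\xvec,\yvec$ (the number of rounds is geometric with parameter $\xvec\trn\yvec$ and per-round actions are i.i.d.). A standard dynamic-programming argument (every round presents the identical continuation problem) shows that against a stationary opponent, the optimal response among all (possibly non-stationary) strategies is achieved by a pure stationary action, so the batter's best-reply payoff to $\yhat$ is $\max_i(A\yhat)_i/\yhat_i$ and the bowler's best-reply cost to $\xhat$ is $\min_j(B\trn\xhat)_j/\xhat_j$; verifying that $(\xhat,\yhat)$ is an approximate NE thus reduces to bounding these in terms of $f(\xhat,\yhat)$ and $g(\xhat,\yhat)$.

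To establish the eigenvector identity I plug the pure deviation $\xvec=e_i$ into the NE condition $f(e_i,\ystar)\le v_A$, obtaining $A\ystar\le v_A\ystar$ componentwise; taking the inner product with $\xstar$ gives $\xstar\trn(v_A\ystar-A\ystar)=0$, and since $\xstar$ has full support every coordinate must vanish, so $A\ystar=v_A\ystar$. A symmetric argument on the bowler's side yields $B\trn\xstar=v_B\xstar$.

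The hypothesis translates to componentwise bounds $(1-\eps)\xstar\le\xhat\le(1+\eps)\xstar$ and $(1-\eps)\ystar\le\yhat\le(1+\eps)\ystar$. Using nonnegativity of $A$ and $A\ystar=v_A\ystar$, componentwise $(1-\eps)v_A\ystar\le A\yhat\le(1+\eps)v_A\ystar$, so $\max_i(A\yhat)_i/\yhat_i\le\tfrac{1+\eps}{1-\eps}v_A$. For the lower bound on $f(\xhat,\yhat)$, which is the crux, I write $\xhat\trn A\yhat\ge(1-\eps)\xhat\trn A\ystar=(1-\eps)v_A\,\xhat\trn\ystar$ and $\xhat\trn\yhat\le(1+\eps)\xhat\trn\ystar$; the $\xhat\trn\ystar$ factors cancel, giving $f(\xhat,\yhat)\ge\tfrac{1-\eps}{1+\eps}v_A$. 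Dividing, $\max_\xvec f(\xvec,\yhat)/f(\xhat,\yhat)\le\bigl(\tfrac{1+\eps}{1-\eps}\bigr)^2=1+\tfrac{4\eps}{(1-\eps)^2}$, and a symmetric calculation using $B\trn\xstar=v_B\xstar$ gives the same bound for the bowler, establishing the claimed approximation guarantee.

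The main obstacle is the cancellation in the lower bound for $f(\xhat,\yhat)$: applying the $\xhat$ and $\yhat$ perturbations to numerator and denominator separately would introduce an extra $\tfrac{1+\eps}{1-\eps}$ factor and give only the cubic bound $\bigl(\tfrac{1+\eps}{1-\eps}\bigr)^3-1$. Rewriting $\xhat\trn A\ystar$ as $v_A\xhat\trn\ystar$ via the eigenvector identity is what makes the $\xhat\trn\ystar$ factors match up in the numerator and denominator of $f(\xhat,\yhat)$, so only the $\yhat$-perturbation contributes and we obtain the tight quadratic bound.
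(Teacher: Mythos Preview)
Your proposal is correct and follows essentially the same approach as the paper. The paper likewise first shows (its Lemma~9) that a full-support NE forces $A\ystar=\rho_A\ystar$ and $B\trn\xstar=\rho_B\xstar$, then proves (its Lemmas~16--17) that against a full-support stationary opponent the best-response value over \emph{all} strategies is sandwiched by $\min_i(A\yhat)_i/\yhat_i$ and $\max_i(A\yhat)_i/\yhat_i$, and finally combines this with the componentwise perturbation bounds (its Lemma~13) to obtain the $\bigl(\tfrac{1+\eps}{1-\eps}\bigr)^2$ ratio. Your derivation of the eigenvector identity via the inequality $A\ystar\le v_A\ystar$ plus full support of $\xstar$ is a clean variant of the paper's min/max-ratio argument, and your cancellation trick for bounding $f(\xhat,\yhat)$ from below is equivalent to the paper's Lemma~13; the only place where you are terser than the paper is the ``standard dynamic-programming argument'' for the non-stationary best response, which the paper spells out as an induction on the round count together with an $\eps$-argument for the lower bound.
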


Estimating the leading eigenvalue and eigenvector of a matrix can be done via the power method \cite{poweriteration}, an iterative method with a linear convergence rate.

\subsection{Related Work}

Among the earliest results on bimatrix games are von Neumann's Minimax Theorem in 1928 on zero-sum games \cite{vonNeumann1928}, and Nash's fundamental result proving the existence of Nash equilibria in bimatrix games \cite{nash1951NE}. Subsequently the field of algorithmic game theory \cite{agtbook} has devoted considerable attention to the computation of NE in games, with a series of works showing PPAD-completeness \cite{papa94PPAD} even for computing an approximate NE
\cite{papa94PPAD, dask05Threenash, chen06NEapproxsmooth}. PPAD-hardness was then shown for many important subclasses, including constant-rank games \cite{mehta14const}, sparse games \cite{Chen06Sparse}, win-lose games \cite{chen07approxwinlose}, etc. On the other hand, approximation schemes for finding approximate NE are known for classes like low rank games \cite{kannanTheobald, AlonVempala} (FPTAS), and when $(A+B)$ is sparse \cite{Barman15} (PTAS).

Repeated games are well-understood in terms of `Folk Theorems', which indicate that several models of repeated games admit many Nash equilibria \cite{friedman1971}. There are many models of repeated games, differing on the horizon for which the game is played (finite or infinite), aggregate utility to the players (arithmetic mean of payoffs in each round, or a sum of payoffs with a discount $\delta\in(0,1)$), and the kind of equilibrium in consideration (Nash or subgame perfect Nash equilibrium). Finite games with arithmetic mean of payoffs as the utility admit a Nash equilibrium via backward induction \cite{finitelyRepeatedGames}.
Infinite games where the utility is the limit of the arithmetic mean of the payoffs have been shown to admit Nash equilibria assuming certain kinds of punishments used to deter players from deviating \cite{Aumann1994,rubinsteininfinitegames}. Infinite games with the utility being a discounted sum of payoffs per round have also been shown to admit equilibria under different conditions on the base game and punishments \cite{fudenbergmaskin,friedman1971}.
In contrast to these works, in our model of RUC games, the utility of a player is the (undiscounted) sum of per-round payoffs. Furthermore, while an RUC game can allow for infinitely many rounds, under reasonable conditions on the payoff matrices (discussed in \cref{sec:meml}), the play terminates in finite time with probability 1 due to collisions. We also do not assume any external model of punishment to prevent agents from deviating.

Finally, we note a superficial similarity of RUC games with stochastic games \cite{shapley1953stochastic}. A stochastic game is a repeated game with an underlying state space. In each round, players simultaneously choose actions from some action set, based on which they get payoffs and the state of the game changes stochastically according to some transition matrix. The utility to a player is typically assumed to be the discounted sum of the payoffs per round. Stochastic games are known to admit Nash equilibria \cite{shapley1953stochastic}. A key point of difference between stochastic games and RUC games is that literature on stochastic games only considers stationary strategies, whereas strategies in our RUC games do not have this restriction.

\section{Preliminaries}
\label{sec:prelims}

\subparagraph{Notation.} We introduce some relevant notation.
\begin{enumerate}
\item For any boolean proposition $P$, let $\boolone(P)$ be 1 if $P$ is true and 0 otherwise.
\item For $n \in \mathbb{Z}_{\ge 0}$, let $\toN \defeq \{1, 2, \ldots, n\}$.
\item For any matrix $A \in \mathbb{R}^{m \times n}$, let $A_{i,j}$ or $A[i,j]$ denote the entry
    in the $i\Th$ row and $j\Th$ column.
    For any vector $\vvec \in \mathbb{R}^n$, let $\vvec_i$ denote the $i\Th$ entry of the vector.
\item For $n \in \mathbb{Z}_{\ge 1}$, let
    $\simplex_n \defeq \{\xvec \in \mathbb{R}_{\ge 0}^n: \sum_{i=1}^n \xvec_i = 1\}$.
\item For any vector $\vvec \in \mathbb{R}^n$, let $\support(\vvec) \defeq \{i \in \toN: \vvec_i \neq 0\}$.
    $\vvec$ is said to have \emph{full support} if $\support(\vvec) = \toN$.
\item For a vector $\vvec \in \mathbb{R}^n$, define $\|\vvec\|_1 \defeq \sum_{i=1}^n |\vvec_i|$.
\item For any $i \in \mathbb{Z}_{\ge 1}$, let $\evec^{(i)}$ be a vector such that $\evec^{(i)}_i = 1$
    and $\evec^{(i)}_j = 0$ for all $j \neq i$.
\end{enumerate}

\subparagraph{Two-Player Games.}
We study two-player games between a \textit{\batter{}}, who is interested in maximizing her payoff, and a \textit{\bowler{}}, who is interested in minimizing her payoff. We let $\Xcal$ and $\Ycal$ denote the strategy space of the \batterAndBowler{}, respectively. When the \batterAndBowler{} use strategies $x\in \Xcal$ and $y\in \Ycal$, respectively, the \batter{} gets a score $f_1(x,y)$ and the \bowler{} incurs a cost of $f_2(x,y)$, where these payoffs are given by functions $f_1, f_2: \Xcal \times \Ycal \to \mathbb{R} \cup \{\infty\}$. When $f_1 = f_2$, the game is said to be a \emph{zero-sum} game.

\subparagraph{Nash equilibrium.} A Nash equilibrium (NE) is pair of strategies where no player can improve her payoff by unilaterally changing her strategy.
\begin{definition}
Let $f_1, f_2: \Xcal \times \Ycal \to \mathbb{R} \cup \{\infty\}$ be the players' payoff functions.
The pair $(x^*, y^*) \in \Xcal \times \Ycal$ is called a \emph{Nash equilibrium} if
no player can improve their payoff by switching to a different strategy. Formally,
\begin{enumerate}
\item (Max player cannot improve) $\forall x \in \Xcal$, $f_1(x, y^*) \le f_1(x^*, y^*)$.
\item (Min player cannot improve) $\forall y \in \Ycal$, $f_2(x^*, y) \ge f_2(x^*, y^*)$.
\end{enumerate}
\end{definition}

While NE always exist for bimatrix games, they need not exist for general games. Moreover, NE may not be unique, i.e., multiple distinct NE may exist. We now define approximate NE.

\begin{definition}
Let $f_1, f_2: \Xcal \times \Ycal \to \mathbb{R} \cup \{\infty\}$ be the players' payoff functions.
The pair $(x^*, y^*) \in \Xcal \times \Ycal$ is called an \emph{$\eps$-approximate Nash equilibrium} if
no player can improve their payoff by a factor of $(1+\eps)$ by switching to a different strategy. Formally,
\begin{enumerate}
\item (Max player cannot improve) $\forall x \in \Xcal$, $f_1(x, y^*) \le (1+\eps)\cdot f_1(x^*, y^*)$.
\item (Min player cannot improve) $\forall y \in \Ycal$, $f_2(x^*, y) \ge (1+\eps)^{-1}\cdot f_2(x^*, y^*)$.
\end{enumerate}
\end{definition}

\subparagraph{Zero-Sum Games.}

For a two-player zero-sum game, the following theorem shows that all Nash equilibria give the same payoffs to the agents,
so the agents don't prefer any one of them over the other.

\begin{proposition}
\label{thm:all-ne-same-value}
Let $(f,f)$ be a zero-sum game, where $f: \Xcal \times \Ycal \to \mathbb{R} \cup \{\infty\}$.
If $(f,f)$ admits multiple Nash equilibria, they have the same payoff.
Formally, if $(x_1, y_1)$ and $(x_2, y_2)$ are Nash equilibria, then
$f(x_1, y_1) = f(x_2, y_2) = f(x_1, y_2) = f(x_2, y_1)$.
Moreover, $(x_1, y_2)$ and $(x_2, y_1)$ are also Nash equilibria.
\end{proposition}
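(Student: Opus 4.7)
The plan is to chain together the four Nash-equilibrium inequalities arising from $(x_1,y_1)$ and $(x_2,y_2)$ into a cyclic chain of $\le$'s, forcing equality everywhere. This is the standard ``rectangle'' or interchangeability argument for saddle points of a zero-sum game; the only mild care needed is that payoffs may take the value $+\infty$, so I should make sure the chain of inequalities still makes sense in $\mathbb{R}\cup\{\infty\}$ (which it does, since $\le$ is defined on the extended reals).

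First I would unpack the NE conditions. From $(x_1,y_1)$ being an NE I get: for all $x\in\Xcal$, $f(x,y_1)\le f(x_1,y_1)$, and for all $y\in\Ycal$, $f(x_1,y)\ge f(x_1,y_1)$. From $(x_2,y_2)$ being an NE I get the analogous pair. Plugging $x=x_2$ and $y=y_2$ into the first pair yields $f(x_2,y_1)\le f(x_1,y_1)$ and $f(x_1,y_2)\ge f(x_1,y_1)$. Plugging $x=x_1$ and $y=y_1$ into the second pair yields $f(x_1,y_2)\le f(x_2,y_2)$ and $f(x_2,y_1)\ge f(x_2,y_2)$.

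Combining these four inequalities gives the cycle
\[
f(x_1,y_1)\;\le\; f(x_1,y_2)\;\le\; f(x_2,y_2)\;\le\; f(x_2,y_1)\;\le\; f(x_1,y_1),
\]
so all four payoffs coincide. To finish, I would verify that the ``mixed'' pairs are themselves NE. For $(x_1,y_2)$: for every $x$, $f(x,y_2)\le f(x_2,y_2)=f(x_1,y_2)$ (using the first NE property of $(x_2,y_2)$ together with the equality just established); for every $y$, $f(x_1,y)\ge f(x_1,y_1)=f(x_1,y_2)$ (using the second NE property of $(x_1,y_1)$). The argument for $(x_2,y_1)$ is symmetric.

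I do not expect any serious obstacle; the main step to watch is ensuring that whenever I invoke a NE inequality with an ``opposing'' strategy, I am using the correct direction ($\le$ for the max player, $\ge$ for the min player) and the correct NE. The extended-real value $\infty$ could in principle appear, but the chain of $\le$'s forces every entry to equal $f(x_1,y_1)$, so either all four are finite and equal, or all four are $+\infty$; in both cases the interchangeability conclusions remain valid.
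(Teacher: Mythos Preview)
Your proposal is correct and follows essentially the same approach as the paper: both chain the four NE inequalities into a cycle to force equality of all four payoffs, then use those equalities together with the original NE conditions to verify that $(x_1,y_2)$ and $(x_2,y_1)$ are also Nash equilibria. Your additional remark about the extended-real value $\infty$ is a nice touch that the paper leaves implicit.
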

\begin{proof}
Since $(x_1, y_1)$ is a Nash equilibrium, the \batter{} doesn't gain by switching to $x_2$
and \bowler{} doesn't gain by switching to $y_2$. Hence,
$f(x_2, y_1) \le f(x_1, y_1) \le f(x_1, y_2)$.
Since $(x_2, y_2)$ is a Nash equilibrium, the \batter{} doesn't gain by switching to $x_1$
and the \bowler{} doesn't gain by switching to $y_1$. Hence,
$f(x_1, y_2) \le f(x_2, y_2) \le f(x_2, y_1)$.
Combining these inequalities gives us $f(x_1, y_1) = f(x_2, y_2) = f(x_1, y_2) = f(x_2, y_1)$.

For any $x$ and $y$, we get $f(x, y_2) \le f(x_2, y_2) = f(x_1, y_2) = f(x_1, y_1) \le f(x_1, y)$,
and $f(x, y_1) \le f(x_1, y_1) = f(x_2, y_1) = f(x_2, y_2) \le f(x_2, y)$.
Hence, $(x_1, y_2)$ and $(x_2, y_1)$ are also Nash equilibria.
\end{proof}

\section{Nash Equilibria in Stationary RUC games}
\label{sec:meml}

In this section, we study Nash equilibria for SRUC games, i.e.,
an RUC game $(A,B)$ where agents are \textit{forced} to use only stationary strategies.
Let $e^{(A)}(\xvec, \yvec)$ and $e^{(B)}(\xvec, \yvec)$
be the \batter{}'s expected total score and the \bowler{}'s expected total cost,
when the \batterAndBowler{} play stationary strategies $\xvec$ and $\yvec$, respectively.

\begin{restatable}{lemma}{rthmExpectedScore}
\label{thm:expected-score}
Let $(A, B)$ represent an SRUC game.
Let $\xvec$ and $\yvec$ be the \batter{}'s and \bowler{}'s stationary strategies, respectively.
Then for $C \in \{A, B\}$, we have
\[ e^{(C)}(\xvec, \yvec) = \begin{cases}
\displaystyle \frac{\xvec\trn C\yvec}{\xvec\trn\yvec} & \textrm{ if } \xvec\trn\yvec > 0
\\ \infty & \textrm{ if } \xvec\trn\yvec = 0 \textrm{ and } \xvec\trn C\yvec > 0
\\ 0 & \textrm{ if } \xvec\trn\yvec = \xvec\trn C\yvec = 0
\end{cases}. \]
\end{restatable}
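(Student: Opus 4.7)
The plan is to model the game as an i.i.d.\ sequence of rounds and turn the random stopping time into a simple geometric summation. Let $p \defeq \xvec\trn\yvec$ be the per-round collision probability, and let $m \defeq \xvec\trn C\yvec$ be the expected per-round payoff (which is well-defined since $C \ge 0$). Let $(I_t, J_t)_{t \ge 1}$ be an i.i.d.\ sequence where $I_t \sim \xvec$ and $J_t \sim \yvec$ are independent, let $T \defeq \min\{t : I_t = J_t\}$ be the (possibly infinite) collision round, and let $R_t \defeq C[I_t, J_t]$. Then $e^{(C)}(\xvec, \yvec) = \E\bigl[\sum_{t=1}^{T} R_t\bigr] = \E\bigl[\sum_{t=1}^{\infty} R_t \cdot \boolone(T \ge t)\bigr]$, where exchanging sum and expectation is legal by Tonelli since $R_t \ge 0$.

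The key observation I would use next is that the event $\{T \ge t\}$ is determined by the pairs $(I_1, J_1), \dots, (I_{t-1}, J_{t-1})$ and is therefore independent of $(I_t, J_t)$. This gives
\begin{equation*}
\E\bigl[R_t \cdot \boolone(T \ge t)\bigr] = \E[R_t] \cdot \Pr(T \ge t) = m \cdot (1-p)^{t-1}.
\end{equation*}
Summing the resulting geometric series over $t \ge 1$ then handles the generic case: if $p > 0$, the sum converges to $m/p = \xvec\trn C\yvec / \xvec\trn\yvec$, which matches the first branch of the formula.

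It remains to handle $p = 0$, where the geometric series argument above formally gives $\infty \cdot m$. If $\xvec\trn C\yvec > 0$, then at least one $t$ contributes a strictly positive term and all subsequent ones contribute the same, so the series diverges and $e^{(C)}(\xvec, \yvec) = \infty$, matching the second branch. If $\xvec\trn\yvec = 0$ and $\xvec\trn C\yvec = 0$, I would use non-negativity of $C$ crucially: $\sum_{i,j} \xvec_i \yvec_j C_{i,j} = 0$ with all summands nonnegative forces $C_{i,j} = 0$ whenever $\xvec_i \yvec_j > 0$, i.e., whenever the pair $(i,j)$ has positive probability under $\xvec \times \yvec$. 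Hence $R_t = 0$ almost surely for every $t$, and the total payoff is $0$, matching the third branch.

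I expect the mildly tricky bookkeeping to be the $p = 0$ case, since one must be careful not to conflate ``game never terminates'' with ``payoff is undefined.'' The non-negativity of $C$ (part of the RUC game definition) is what rescues the argument and lets the third branch be $0$ rather than something indeterminate. Everything else is a clean application of independence of rounds and a geometric series; no fixed-point or spectral machinery is needed at this stage.
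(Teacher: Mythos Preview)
Your proof is correct but takes a different route from the paper. You write the total payoff as $\sum_{t\ge 1} R_t\,\boolone(T\ge t)$, invoke Tonelli, and use independence of $\{T\ge t\}$ from $(I_t,J_t)$ to factor each term as $m(1-p)^{t-1}$, then sum the geometric series. The paper instead exploits the game's self-similarity: after establishing that $\E(Z^{(C)})$ is finite (by bounding $Z^{(C)}\le (\max_{i,j}C_{i,j})N$ with $N$ geometric), it conditions on the first round to obtain the renewal equation $\mu = m + (1-p)\mu$ and solves. Your approach is slightly cleaner here because Tonelli dispenses with any a priori finiteness check, whereas the paper must argue finiteness before it can legitimately cancel $\mu$ on both sides. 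On the other hand, the paper's recursive conditioning mirrors the decomposition used throughout the non-stationary analysis (residual strategies and the recurrence for $e^{(C,r)}$), so it integrates more naturally with the rest of the argument. The two proofs handle the $p=0$ cases identically.
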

\begin{proof}[Proof sketch]
(See \cref{sec:exp-and-var} for the full proof.)

The \batter{}'s expected per-round score is $\xvec\trn A\yvec$,
and the probability of collision in a round is $\xvec\trn\yvec$.
If $\xvec\trn\yvec = 0$, a collision never happens, and so
her total score is $\infty$ or $0$ depending on whether $\xvec\trn A\yvec > 0$.
If $\xvec\trn\yvec > 0$, then we can find $\mu \defeq e^{(A)}(\xvec, \yvec)$
by solving the equation $\mu = \xvec\trn A\yvec + (1-\xvec\trn\yvec)\mu$,
which gives $\mu = \xvec\trn A\yvec/\xvec\trn\yvec$.
The \bowler{}'s expected total cost can be found analogously.
\end{proof}

From now on, instead of looking at SRUC games as multi-round games, we will treat them like
single round games where the strategy space is $\simplex_n$ for both players,
and the payoff functions are $e^{(A)}$ and $e^{(B)}$
for the \batter{} and \bowler{}, respectively.

\subsection{Existence of Nash Equilibrium}

This section shows the existence of a Nash equilibrium in SRUC games. We first recall the definitions of an eigenvalue and eigenvector of a matrix.

\begin{definition}
\label{thm:eigen}
For a square matrix $A \in \mathbb{R}^n$, complex number $\lambda$,
and complex vector $\vvec \in \mathbb{C}^n - \{\zerovec\}$,
$(\lambda, \vvec)$ is called an \emph{eigenpair} of $A$
(and $\lambda$ is called an \emph{eigenvalue} of $A$
and $\vvec$ is called an \emph{eigenvector} of $A$) if $A\vvec = \lambda v$.
\end{definition}

\begin{lemma}
For a square matrix $A$, $\lambda$ is an eigenvalue of $A$ iff $\lambda$ is an eigenvalue of $A\trnF$.
\end{lemma}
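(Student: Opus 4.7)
The plan is to reduce the claim to the well-known fact that a matrix and its transpose have the same characteristic polynomial, and hence the same eigenvalues. Specifically, by \cref{thm:eigen}, $\lambda$ is an eigenvalue of a square matrix $M$ iff there is a nonzero vector $\vvec$ with $M\vvec = \lambda\vvec$, which is equivalent to the matrix $M - \lambda I$ being singular, i.e., $\det(M - \lambda I) = 0$. So it suffices to show that $\det(A - \lambda I) = 0$ iff $\det(A\trnF - \lambda I) = 0$.

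First I would observe that transposition commutes with subtraction and that $I\trnF = I$, so $(A - \lambda I)\trnF = A\trnF - \lambda I$. Then I would invoke the standard identity $\det(M) = \det(M\trnF)$, which holds for every square matrix $M$ (this follows, for example, from the Leibniz formula for the determinant, since summing over all permutations $\sigma$ of the products $\prod_i M_{i,\sigma(i)}$ gives the same result as summing the products $\prod_i M_{\sigma(i), i}$, after reindexing by $\sigma^{-1}$ and noting $\operatorname{sgn}(\sigma) = \operatorname{sgn}(\sigma^{-1})$). Applying this with $M = A - \lambda I$ yields
\[
\det(A\trnF - \lambda I) = \det\bigl((A - \lambda I)\trnF\bigr) = \det(A - \lambda I),
\]
so the two determinants vanish for exactly the same values of $\lambda$.

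There is no real obstacle here: the only nontrivial ingredient is $\det(M) = \det(M\trnF)$, which I would either cite as standard or justify in one line via the Leibniz formula as above. Note that the statement concerns only the eigenvalues and not the eigenvectors; indeed, the eigenvectors of $A$ and $A\trnF$ generally differ (the eigenvectors of $A\trnF$ correspond to the \emph{left} eigenvectors of $A$), which is the reason this lemma is useful later when both left and right Perron eigenvectors of the payoff matrices come into play.
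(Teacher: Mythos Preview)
Your proof is correct and follows exactly the same approach as the paper: both reduce the claim to $\det(A - \lambda I) = \det(A\trnF - \lambda I)$ via the identity $\det(M) = \det(M\trnF)$. You simply spell out a few more details (the Leibniz-formula justification and the observation $(A - \lambda I)\trnF = A\trnF - \lambda I$) than the paper's one-line proof.
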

\begin{proof}
$\lambda$ is an eigenvalue of $A$ iff $\det(A - \lambda I) = 0$,
and $\det(A - \lambda I) = \det(A\trn - \lambda I)$.
\end{proof}

The following lemma establishes a sufficient condition for a strategy pair to be a Nash equilibrium of an SRUC game.

\begin{lemma}
\label{thm:pos-eig-gives-ne}
Consider the SRUC game $(A, B)$.
Let $(\alpha, \ystar)$ be an eigenpair for $A$ such that
$\|\ystar\|_1 = 1$ and $\ystar_i > 0$ for all $i \in \toN$.
Let $(\beta, \xstar)$ be an eigenpair for $B\trnF$ such that
$\|\xstar\|_1 = 1$ and $\xstar_i > 0$ for all $i \in \toN$.
Then $(\xstar, \ystar)$ is a Nash equilibrium.
Moreover, for any $\xvec, \yvec \in \simplex_n$, we have
$e^{(A)}(\xvec, \ystar) = \alpha$ and $e^{(B)}(\xstar, \yvec) = \beta$.
\end{lemma}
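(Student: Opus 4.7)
The plan is to directly exploit the closed-form expected-payoff formula from \cref{thm:expected-score} and the two eigenvector identities $A\ystar = \alpha \ystar$ and $\xstarT B = \beta\,\xstarT$ (the latter obtained by transposing $B\trn\xstar = \beta\xstar$). First I would check that the ``finite'' branch of the piecewise formula applies whenever we pin down one player at their starred strategy. Since $\ystar_i > 0$ for every $i$ and every $\xvec \in \simplex_n$ is non-negative with entries summing to $1$, the quantity $\xvec\trn\ystar = \sum_i \xvec_i \ystar_i$ is a sum of non-negative terms with at least one strictly positive contribution, hence $\xvec\trn\ystar > 0$. So \cref{thm:expected-score} gives $e^{(A)}(\xvec, \ystar) = \xvec\trn A\ystar / \xvec\trn\ystar$, and the symmetric computation using positivity of $\xstar$ gives $e^{(B)}(\xstar, \yvec) = \xstarT B\yvec / \xstarT\yvec$ for every $\yvec \in \simplex_n$.

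The key step is then a one-line substitution. From $A\ystar = \alpha\ystar$ we get $\xvec\trn A\ystar = \alpha\,\xvec\trn\ystar$, so $e^{(A)}(\xvec, \ystar) = \alpha$ for every $\xvec \in \simplex_n$. Likewise from $\xstarT B = \beta\,\xstarT$ we get $e^{(B)}(\xstar, \yvec) = \beta$ for every $\yvec \in \simplex_n$. This establishes the ``moreover'' claim, and then the Nash equilibrium property is immediate: for any $\xvec$, $e^{(A)}(\xvec, \ystar) = \alpha = e^{(A)}(\xstar, \ystar)$, so the \batter{} cannot strictly improve by unilateral deviation from $\xstar$, and symmetrically the \bowler{}'s expected cost equals $\beta$ irrespective of her deviation, so she cannot strictly improve either.

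I do not expect a real obstacle; the argument is a direct ``eigenvector kills the quotient'' calculation. The only subtlety worth spelling out carefully is the positivity of the denominator $\xvec\trn\ystar$ (and $\xstarT\yvec$), since without it the piecewise formula might land in the $\infty$ or $0$ cases of \cref{thm:expected-score}; this is precisely where the full-support hypotheses on $\xstar, \ystar$ are used. As a minor aside, it is worth noting that $\alpha$ and $\beta$ are automatically real and non-negative: the equation $A\ystar = \alpha\ystar$ with $A$ entry-wise real non-negative and $\ystar$ real and strictly positive forces $\alpha \in \mathbb{R}_{\ge 0}$, and analogously for $\beta$, so treating them as scalar payoffs is unambiguous.
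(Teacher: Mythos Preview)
Your proof is correct and is essentially identical to the paper's own argument: both invoke \cref{thm:expected-score}, use full support of $\xstar,\ystar$ to ensure the denominators $\xvec\trn\ystar$ and $\xstarT\yvec$ are positive, and then substitute $A\ystar=\alpha\ystar$ and $B\trn\xstar=\beta\xstar$ to collapse the quotients to $\alpha$ and $\beta$. Your additional remark that $\alpha,\beta\in\mathbb{R}_{\ge 0}$ is a nice sanity check but not needed for the argument.
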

\begin{proof}
Let $\xvec, \yvec \in \simplex_n$. Since $\xstar$ and $\ystar$ have full support,
$\xstarT\yvec > 0$ and $\xvec\trn\ystar > 0$. Also, by \cref{thm:expected-score},
\begin{align*}
e^{(A)}(\xvec, \ystar) &= \frac{\xvec\trn A\ystar}{\xvec\trn\ystar}
    = \frac{\xvec\trn(\alpha\ystar)}{\xvec\trn\ystar} = \alpha.
& e^{(B)}(\xstar, \yvec) &= \frac{\xstarT B\yvec}{\xstarT\yvec}
    = \frac{(\beta\xstar)\trn\yvec}{\xstarT\yvec} = \beta.
\end{align*}
Hence, $(\xstar, \ystar)$ is a Nash equilibrium.
\end{proof}

We now show that such strategy pairs exist due to the Perron-Frobenius theorem stated below, thus proving the existence of Nash equilibrium for SRUC games in \cref{thm:ne-exists}.

\begin{theorem}[Perron-Frobenius~\cite{perronfrobenius}]
\label{thm:perron-frobenius}
Let $A \in \mathbb{R}_{\ge 0}^{n \times n}$ be an irreducible matrix. Then
\begin{enumerate}
\item There exists a (unique) positive eigenvalue $\rho$ of $A$, called the \emph{Perron root} of $A$,
    such that for any other (possibly complex) eigenvalue $\lambda$ of $A$, $|\lambda| \le \rho$.
\item There exist unique vectors $\uvec$ and $\vvec$ such that
    $A\trn\uvec = \rho\uvec$, $A\vvec = \rho\vvec$, and $\sum_{i=1}^n \uvec_i = \sum_{j=1}^n \vvec_j = 1$.
    $\uvec$ and $\vvec$ are called the left and right \emph{Perron vectors} of $A$, respectively.
\item $\uvec_i > 0$ and $\vvec_i > 0$ for all $i \in \toN$.
\item $\displaystyle
    \min_{i=1}^n \sum_{j=1}^n A_{i,j} \le \rho \le \max_{i=1}^n \sum_{j=1}^n A_{i,j},
    \quad \textrm{and} \quad
    \min_{j=1}^n \sum_{i=1}^n A_{i,j} \le \rho \le \max_{j=1}^n \sum_{i=1}^n A_{i,j}$.
\end{enumerate}
\end{theorem}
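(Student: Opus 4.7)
The plan is to establish parts 2 and 3 first via Brouwer's fixed-point theorem, and then derive parts 1 and 4 as consequences. I would define the map $T : \simplex_n \to \simplex_n$ by $T(\xvec) \defeq A\xvec/\|A\xvec\|_1$, verify continuity and well-definedness, and invoke Brouwer on the convex, compact simplex. Any fixed point $\vvec$ then satisfies $A\vvec = \rho\vvec$ with $\rho = \|A\vvec\|_1 > 0$, yielding a non-negative eigenvector with a positive eigenvalue. To see that $T$ is well-defined I must rule out $A\xvec = \zerovec$ for $\xvec \in \simplex_n$: if this held, every column of $A$ indexed by $\support(\xvec)$ would vanish, so those vertices would have no incoming edges in $\graph(A)$, contradicting strong connectivity.

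Next I would argue that the fixed point $\vvec$ has full support. If $S \defeq \support(\vvec)$ were a proper subset of $\toN$, then for each $i \notin S$ the equation $(A\vvec)_i = \rho\vvec_i = 0$ would force $A_{i,j} = 0$ for every $j \in S$, so no edge of $\graph(A)$ enters $S$ from $V \setminus S$, again contradicting strong connectivity. Applying the same Brouwer argument to $A\trn$ (which is irreducible because transposition reverses edges but preserves strong connectivity) produces a strictly positive left eigenvector $\uvec$ with $A\trn\uvec = \rho'\uvec$; the identity $\rho'\uvec\trn\vvec = (A\trn\uvec)\trn\vvec = \uvec\trn(A\vvec) = \rho\uvec\trn\vvec$ combined with $\uvec\trn\vvec > 0$ forces $\rho' = \rho$, so a single Perron root serves both sides.

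Part 1 then falls out of the triangle inequality: for any eigenpair $(\lambda, \wvec)$ of $A$, componentwise $A|\wvec| \ge |A\wvec| = |\lambda|\,|\wvec|$, and pairing with $\uvec > 0$ gives $\rho\,\uvec\trn|\wvec| \ge |\lambda|\,\uvec\trn|\wvec|$, whence $|\lambda| \le \rho$. For uniqueness of the Perron vector, I would suppose a second normalized positive eigenvector $\vvec'$ of $A$ with eigenvalue $\rho$ exists, set $t = \min_i \vvec_i/\vvec'_i$, and observe that $\vvec - t\vvec' \ge \zerovec$ is a non-negative eigenvector with eigenvalue $\rho$ having at least one zero component; the full-support argument above forces it to be identically zero, so $\vvec = t\vvec'$, and normalization gives $\vvec = \vvec'$. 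For part 4, choosing $i^*$ to maximize $\vvec_i$ yields $\rho\vvec_{i^*} = \sum_j A_{i^*,j}\vvec_j \le \vvec_{i^*}\sum_j A_{i^*,j}$, hence $\rho \le \max_i \sum_j A_{i,j}$; the minimizer $i_*$ gives $\rho \ge \min_i \sum_j A_{i,j}$, and repeating with $\uvec$ in place of $\vvec$ and $A\trn$ in place of $A$ yields the column-sum bounds.

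The main obstacle will be executing the irreducibility argument cleanly at its two uses: well-definedness of $T$, and promoting a non-negative Perron-type eigenvector to a strictly positive one (needed for both existence and uniqueness). Both rest on the same graph-theoretic fact that a proper invariant support would block some crossing edge demanded by strong connectivity, but one must track whether the blocked edges correspond to zero columns (incoming) or zero rows (outgoing) of $A$ in each case. Once the positive Perron pair $(\rho, \vvec)$ and the corresponding left vector $\uvec$ are secured, the remaining assertions reduce to standard inner-product and extremal-component estimates.
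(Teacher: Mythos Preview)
The paper does not prove this theorem at all: it is the classical Perron--Frobenius theorem, stated with a citation to the textbook~\cite{perronfrobenius} and then used as a black box. There is therefore no ``paper's own proof'' to compare your proposal against.

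That said, your outline is a standard and essentially correct route to Perron--Frobenius (Brouwer on the normalized map, then upgrade the fixed point to strictly positive via irreducibility, then deduce spectral dominance by pairing with the positive left eigenvector). Two small points you would need to tighten in a full write-up. First, your uniqueness argument assumes the competing eigenvector $\vvec'$ is already positive; the theorem as stated asserts uniqueness among \emph{all} vectors summing to~$1$, i.e., that the $\rho$-eigenspace is one-dimensional. The fix is routine: for an arbitrary real $\rho$-eigenvector $\vvec'$, choose the extremal $t$ for which $\vvec - t\vvec'$ is nonnegative with a zero entry, and then invoke the same ``nonnegative eigenvector with a zero entry must vanish'' argument. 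Second, the well-definedness step (``$A\xvec=\zerovec$ forces the columns indexed by $\support(\xvec)$ to vanish, contradicting strong connectivity'') silently uses $n\ge 2$; for $n=1$ the statement is trivial but your argument as phrased does not cover it. Neither of these is a real obstacle.
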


\begin{definition}
For irreducible matrices $A$ and $B$, let $\perronSolution(A, B)$ be the tuple
$(\rho_A, \rho_B, \xstar, \ystar)$, where
$\rho_A$ is the Perron root of $A$, $\ystar$ is the right Perron vector of $A$,
$\rho_B$ is the Perron root of $B$, and $\xstar$ is the left Perron vector of $B$.
\end{definition}

\begin{theorem}
\label{thm:ne-exists}
For the SRUC game $(A, B)$, where $A$ and $B$ are irreducible matrices,
let $(\rho_A, \rho_B, \xstar, \ystar) = \perronSolution(A, B)$.
Then $(\xstar, \ystar)$ is a Nash equilibrium.
Moreover, for any $\xvec, \yvec \in \simplex_n$, we have
$e^{(A)}(\xvec, \ystar) = \rho_A$ and $e^{(B)}(\xstar, \yvec) = \rho_B$.
\end{theorem}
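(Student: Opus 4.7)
The plan is to prove this theorem by directly combining the two major results that have already been set up in the section: \cref{thm:pos-eig-gives-ne}, which gives a sufficient condition for a strategy pair to be a Nash equilibrium in terms of positive eigenvectors, and the Perron-Frobenius theorem (\cref{thm:perron-frobenius}), which guarantees that irreducible non-negative matrices admit exactly the kind of eigenpairs required.

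First, I would invoke \cref{thm:perron-frobenius} applied to $A$ to obtain the Perron root $\rho_A > 0$ and the right Perron vector $\ystar$ satisfying $A\ystar = \rho_A \ystar$, with $\|\ystar\|_1 = 1$ and $\ystar_i > 0$ for all $i \in \toN$. So $(\rho_A, \ystar)$ is an eigenpair of $A$ of the form required by \cref{thm:pos-eig-gives-ne}. Next, applying \cref{thm:perron-frobenius} to $B$, I get the Perron root $\rho_B > 0$ and the left Perron vector $\xstar$ satisfying $B\trn \xstar = \rho_B \xstar$, with $\|\xstar\|_1 = 1$ and $\xstar_i > 0$ for all $i \in \toN$. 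Equivalently, $(\rho_B, \xstar)$ is an eigenpair of $B\trn$ of the form required by \cref{thm:pos-eig-gives-ne}. By the definition of $\perronSolution(A,B)$, these are precisely the $\rho_A, \rho_B, \xstar, \ystar$ referenced in the theorem statement.

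Finally, I would apply \cref{thm:pos-eig-gives-ne} with $\alpha \defeq \rho_A$ and $\beta \defeq \rho_B$. Its hypotheses are satisfied by the previous paragraph, so its conclusion immediately yields both that $(\xstar, \ystar)$ is a Nash equilibrium and that $e^{(A)}(\xvec, \ystar) = \rho_A$ and $e^{(B)}(\xstar, \yvec) = \rho_B$ for all $\xvec, \yvec \in \simplex_n$.

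There is essentially no obstacle here: the theorem is a packaging result that combines the sufficient condition of \cref{thm:pos-eig-gives-ne} with the existence guarantee of the Perron-Frobenius theorem. All the substantive work was done in proving \cref{thm:pos-eig-gives-ne} (which in turn relied on \cref{thm:expected-score}) and is imported from the Perron-Frobenius theorem. The only thing to verify carefully is the bookkeeping that the Perron vector of $B$ that \cref{thm:pos-eig-gives-ne} asks for is the \emph{left} Perron vector of $B$ (equivalently, the right Perron vector of $B\trn$), which matches the definition of $\perronSolution$.
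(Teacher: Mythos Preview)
Your proposal is correct and matches the paper's own proof exactly: the paper simply writes ``Follows from \cref{thm:perron-frobenius,thm:pos-eig-gives-ne},'' and you have spelled out precisely that combination, including the bookkeeping that $\xstar$ is the left Perron vector of $B$.
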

\begin{proof}
Follows from \cref{thm:perron-frobenius,thm:pos-eig-gives-ne}.
\end{proof}

\subsection{Uniqueness of Nash Equilibrium}

In this section, we show that an SRUC game has a unique NE under mild assumptions on the payoff matrices. We begin with two lemmas which together show that any NE of an SRUC game has both players using strategies with full support, provided the payoff matrices $A$ and $B$ are irreducible and $\graph(A)$ is a subgraph of $\graph(B)$.

\begin{lemma}
\label{thm:ystar-full-supp}
Let $(\xstar, \ystar)$ be a Nash equilibrium for the SRUC game $(A, B)$,
where $A$ is irreducible and $\graph(A)$ is a subgraph of $\graph(B)$.
Then $\support(\ystar) = \toN$.
\end{lemma}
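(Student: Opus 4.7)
The plan is to argue by contradiction: assume $S \defeq \support(\ystar) \subsetneq \toN$ (noting $S \neq \emptyset$ since $\ystar \in \simplex_n$) and derive a contradiction from the Nash equilibrium condition. I will split into two cases depending on whether $e^{(A)}(\xstar, \ystar)$ is finite or infinite, and in each case exhibit a profitable unilateral deviation for one of the players, using the payoff formula of \cref{thm:expected-score}.

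First, suppose $e^{(A)}(\xstar, \ystar) < \infty$. For each $i \notin S$, consider the batter's deviation to the pure strategy $\xvec = \evec^{(i)}$: then $\xvec\trn\ystar = \ystar_i = 0$, so by \cref{thm:expected-score}, $e^{(A)}(\xvec, \ystar) = \infty$ whenever $(A\ystar)_i > 0$. The Nash equilibrium condition forces $(A\ystar)_i = 0$ for every $i \notin S$, and since $\ystar_j > 0$ for $j \in S$, this yields $A_{i,j} = 0$ for every $i \in \toN \setminus S$ and $j \in S$. Therefore $\graph(A)$ contains no edge from $\toN \setminus S$ to $S$. Because both of these vertex sets are nonempty, this contradicts the strong connectivity of $\graph(A)$.

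Next, suppose $e^{(A)}(\xstar, \ystar) = \infty$. By \cref{thm:expected-score}, this requires $\xstarT\ystar = 0$ and $\xstarT A \ystar > 0$. The former implies $\support(\xstar) \subseteq \toN \setminus S$, and the latter produces indices $i \in \support(\xstar)$ and $j \in S$ with $A_{i,j} > 0$; note in particular that $i \neq j$. Here the subgraph hypothesis enters crucially: it yields $B_{i,j} > 0$, and consequently $\xstarT B \ystar > 0$. Combined with $\xstarT\ystar = 0$, \cref{thm:expected-score} then forces the bowler's cost $e^{(B)}(\xstar, \ystar)$ to be $\infty$. However, for any $j_0 \in \support(\xstar)$ (nonempty since $\xstar \in \simplex_n$), the bowler can deviate to the pure strategy $\evec^{(j_0)}$: we get $\xstarT\evec^{(j_0)} = \xstar_{j_0} > 0$, making her cost finite. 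This contradicts the Nash equilibrium condition.

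The main obstacle is the infinite-payoff case. It is tempting to assume that equilibrium payoffs are automatically finite, but nothing a priori rules out the batter using a strategy whose support is disjoint from $\ystar$'s. Handling this situation is exactly where the subgraph hypothesis $\graph(A) \subseteq \graph(B)$ is needed: it is what allows us to convert $\xstarT A \ystar > 0$ into $\xstarT B \ystar > 0$, forcing the bowler's cost to be infinite and thereby enabling her profitable deviation to a pure strategy.
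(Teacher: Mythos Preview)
Your proof is correct and follows essentially the same approach as the paper's: use irreducibility of $A$ to force $e^{(A)}(\xstar,\ystar)=\infty$ via a batter deviation to some $\evec^{(i)}$ with $i\notin\support(\ystar)$, then use $\graph(A)\subseteq\graph(B)$ to conclude $e^{(B)}(\xstar,\ystar)=\infty$, and finally exhibit a finite-cost deviation for the bowler. The only cosmetic differences are that you organize the argument as two explicit cases (finite vs.\ infinite batter payoff) rather than a single chain, and for the bowler's deviation you use a pure strategy $\evec^{(j_0)}$ with $j_0\in\support(\xstar)$ whereas the paper uses the uniform distribution; both choices work.
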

\begin{proof}
The key idea is that if $\ystar$ doesn't have full support, then the \batter{} is incentivized to
play a strategy outside $\support(\ystar)$ to get an infinite score.
Such a strategy would impose an infinite cost on the \bowler{}.
Finally, it's possible for the \bowler{} to have a bounded cost by playing a full-support strategy.

For any $C \in \{A, B\}$, let $(\toN, E_C) \defeq \graph(C)$.
Let $\Ebar_C \defeq \toN \times \toN \setminus E_C$.
For any $\xvec, \yvec \in \simplex_n$ such that $\xvec\trn\yvec = 0$, we get
$\xvec\trn C\yvec = 0 \iff \support(\xvec) \times \support(\yvec) \subseteq \Ebar_C$.
Since $E_A \subseteq E_B$, we get $\xvec\trn A\yvec > 0 \implies \xvec\trn B\yvec > 0$.

Let $S_y \defeq \support(\ystar)$. Suppose $S_y \neq \toN$.
Then $\exists (i, j) \in E_A$ from $\toN \setminus S_y$ to $S_y$,
since $A$ is irreducible. Hence,
$(\evec^{(i)})\trn A\ystar = \sum_{k \in S_y}A_{i,k}\ystar_k \ge A_{i,j}\ystar_j > 0$.
Also, $(\evec^{(i)})\trn\ystar = 0$, since $i \not\in S_y$.
Hence, $e^{(A)}(\evec^{(i)}, \ystar) = \infty$.

Since $(\xstar, \ystar)$ is a Nash equilibrium,
$\infty = e^{(A)}(\evec^{(i)}, \ystar) \le e^{(A)}(\xstar, \ystar)$.
This is only possible if $\xstarT\ystar = 0$ and $\xstarT A\ystar > 0$.
This means $\xstarT B\ystar > 0$, and so $e^{(B)}(\xstar, \ystar) = \infty$.

Let $\yhat_j = 1/n$ for all $j \in \toN$. Then $\xstarT\yhat > 0$,
so $e^{(B)}(\xstar, \yhat)$ is finite.
Since $(\xstar, \ystar)$ is a Nash equilibrium, we get
$\infty = e^{(B)}(\xstar, \ystar) \le e^{(B)}(\xstar, \yhat) \neq \infty$.
This is a contradiction, so $\support(\ystar) = \toN$.
\end{proof}

\begin{lemma}
\label{thm:xstar-full-supp}
Let $(\xstar, \ystar)$ be a Nash equilibrium for SRUC game $(A, B)$,
where $B$ is irreducible and $\support(\ystar) = \toN$. Then $\support(\xstar) = \toN$.
\end{lemma}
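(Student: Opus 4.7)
The plan is to use the Nash-equilibrium optimality of the bowler against $\xstar$ together with the full support of $\ystar$ to force $\xstar$ to be a nonnegative left eigenvector of $B$; once this is established, irreducibility of $B$ will rule out any support gap in $\xstar$. The main subtlety, and the step I expect to be the hardest, is turning the one-sided best-response inequalities into a genuine eigenvalue equation. This is exactly where the full support of $\ystar$ does the work.

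I would start by noting that since $\ystar$ has full support and $\xstar \in \simplex_n$ is nonzero, we have $\xstarT \ystar > 0$, so \cref{thm:expected-score} yields a finite nonnegative value $\mu \defeq e^{(B)}(\xstar, \ystar) = \xstarT B \ystar / \xstarT \ystar$. For each $j \in \support(\xstar)$, I would consider the bowler's pure deviation $\yvec = \evec^{(j)}$: it satisfies $\xstarT \evec^{(j)} = \xstar_j > 0$, so $e^{(B)}(\xstar, \evec^{(j)}) = (\xstarT B)_j / \xstar_j$, and the NE condition yields $(\xstarT B)_j \ge \mu \xstar_j$. For $j \notin \support(\xstar)$ the same inequality is automatic, as $\xstar_j = 0$ and $(\xstarT B)_j \ge 0$.

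The crucial step combines these pointwise inequalities with the averaging identity
\[
\sum_{j=1}^n \bigl((\xstarT B)_j - \mu \xstar_j\bigr)\ystar_j = \xstarT B \ystar - \mu \xstarT \ystar = 0.
\]
Every summand is nonnegative, and every weight $\ystar_j$ is strictly positive, so each summand must vanish: $(\xstarT B)_j = \mu \xstar_j$ for every $j$, i.e., $\xstarT B = \mu \xstarT$.

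To finish, I would argue by contradiction: suppose $S_x \defeq \support(\xstar) \neq \toN$ and pick any $j \notin S_x$. Then $0 = (\xstarT B)_j = \sum_i \xstar_i B_{i,j}$, and nonnegativity forces $B_{i,j} = 0$ for every $i \in S_x$. Consequently $\graph(B)$ contains no edge from $S_x$ to $\toN \setminus S_x$, contradicting its strong connectivity (both $S_x$ and $\toN \setminus S_x$ are nonempty by assumption). Hence $\support(\xstar) = \toN$.
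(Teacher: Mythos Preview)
Your proof is correct, but it takes a different route from the paper's. The paper argues by one concrete deviation: it splits $\ystar = \alpha\yhat + (1-\alpha)\ytild$ along the partition $S_x \cup (\toN\setminus S_x)$, computes
\[
e^{(B)}(\xstar,\ystar) = e^{(B)}(\xstar,\yhat) + \frac{1-\alpha}{\alpha}\,\frac{\xstarT B\ytild}{\xstarT\yhat},
\]
and uses the NE inequality $e^{(B)}(\xstar,\ystar)\le e^{(B)}(\xstar,\yhat)$ to force $\xstarT B\ytild=0$, which irreducibility then contradicts. You instead test the \bowler{} against \emph{all} pure strategies $\evec^{(j)}$, obtain the pointwise inequalities $(\xstarT B)_j \ge \mu\xstar_j$, and use the full support of $\ystar$ in an averaging step to upgrade these to the eigenvalue equation $B\trn\xstar = \mu\xstar$; the contradiction with irreducibility then follows exactly as in the paper. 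The paper's argument is slightly more direct, while yours yields the eigenvector identity as a free byproduct---this is essentially \cref{thm:meml-ne-is-eigen}, which the paper proves later under the stronger hypothesis that \emph{both} $\xstar$ and $\ystar$ have full support, whereas your argument shows that $\support(\ystar)=\toN$ alone suffices for the $\xstar$ half.
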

\begin{proof}
The key idea is that if $\xstar$ doesn't have full support, then the \bowler{}
can reduce her cost by not playing actions outside $\support(\xstar)$.

Let $S_x \defeq \support(\xstar)$. Let $\alpha \defeq \sum_{j \in S_x} \ystar_j$.
Then $1-\alpha = \sum_{j \in \toN \setminus S_x} \ystar_j$.
Suppose $S_x \neq \toN$. Then $0 < \alpha < 1$, since $\support(\ystar) = \toN$.
Define vectors $\yhat$ and $\ytild$ as follows:
\begin{align*}
\yhat_j &= \begin{cases}\ystar_j/\alpha & \textrm{ if } j \in S_x
\\ 0 & \textrm{ if } j \not\in S_x\end{cases},
& \ytild_j &= \begin{cases}0 & \textrm{ if } j \in S_x
\\ \ystar_j/(1-\alpha) & \textrm{ if } j \not\in S_x\end{cases}.
\end{align*}
Then $\ystar = \alpha\yhat + (1-\alpha)\ytild$ and $\xstarT\ystar = \alpha\xstarT\yhat$. Hence,
\[ e^{(B)}(\xstar, \ystar) = \frac{\xstarT B\ystar}{\xstarT\ystar}
= \frac{\alpha\xstarT B\yhat + (1-\alpha)\xstarT B\ytild}{\alpha\xstarT\yhat}
= e^{(B)}(\xstar, \yhat) + \frac{1-\alpha}{\alpha}\,\frac{\xstarT B\ytild}{\xstarT\yhat}. \]
Since $(\xstar, \ystar)$ is a Nash equilibrium,
$e^{(B)}(\xstar, \ystar) \le e^{(B)}(\xstar, \yhat)$. Hence, $\xstarT B\ytild = 0$.
However,
\[ \xstarT B\ytild = \sum_{i \in S_x}\sum_{j \in \toN\setminus S_x} \xstar_iB_{i,j}\ytild_j. \]
Since $B$ is irreducible, there is an edge $(i, j)$ in $\graph(B)$
from $S_x$ to $\toN \setminus S_x$.
Therefore, $\xstarT B\ytild \ge B_{i,j}\xstar_i\ytild_j > 0$,
which is a contradiction. Hence, $\support(\xstar) = \toN$.
\end{proof}

We now state a simple result about the ratio of sums (\cref{thm:sum-ratio-min-max}),
and use it to prove that full support Nash equilibria must be eigenvectors (\cref{thm:meml-ne-is-eigen}).

\begin{lemma}
\label{thm:sum-ratio-min-max}
Let $\uvec \in \mathbb{R}^n$ and $\vvec \in \mathbb{R}_{>0}^n$. Let
\begin{align*}
\alpha &\defeq \min_{i=1}^n \frac{\uvec_i}{\vvec_i}
& \beta &\defeq \max_{i=1}^n \frac{\uvec_i}{\vvec_i}
& z &\defeq \frac{\sum_{i=1}^n \uvec_i}{\sum_{i=1}^n \vvec_i}.
\end{align*}
Then either $z = \alpha = \beta$ or $\alpha < z < \beta$.
\end{lemma}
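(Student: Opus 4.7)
The plan is to recognize $z$ as a convex combination of the ratios $\uvec_i/\vvec_i$ with positive weights proportional to $\vvec_i$, and then to run the standard averaging argument.

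First I would set $w_i \defeq \vvec_i / \sum_{j=1}^n \vvec_j$. Since $\vvec \in \mathbb{R}_{>0}^n$, each $w_i > 0$ and $\sum_i w_i = 1$. Writing $\uvec_i = (\uvec_i/\vvec_i)\vvec_i$ and dividing numerator and denominator of $z$ by $\sum_j \vvec_j$, I would obtain
\[
z = \sum_{i=1}^n w_i \cdot \frac{\uvec_i}{\vvec_i},
\]
which exhibits $z$ as a strict convex combination of the ratios $\uvec_i/\vvec_i \in [\alpha, \beta]$. The bound $\alpha \le z \le \beta$ is then immediate.

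Next I would split into cases. If $\alpha = \beta$, then every ratio equals the common value, so $z$ equals it as well, giving $z = \alpha = \beta$. If instead $\alpha < \beta$, pick indices $i_*, i^*$ with $\uvec_{i_*}/\vvec_{i_*} = \alpha$ and $\uvec_{i^*}/\vvec_{i^*} = \beta$. Since $w_{i_*} > 0$, I can bound
\[
z \le w_{i_*}\alpha + (1 - w_{i_*})\beta < \beta,
\]
and symmetrically, using $w_{i^*} > 0$, I get $z > \alpha$.

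There is no real obstacle here; the only thing to be careful about is invoking $\vvec_i > 0$ both to define the ratios and to guarantee that the weights $w_i$ are strictly positive (which is what makes the strict inequalities go through in the case $\alpha < \beta$).
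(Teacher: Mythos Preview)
Your proof is correct and follows essentially the same approach as the paper: both recognize that $z$ is a weighted average of the ratios $\uvec_i/\vvec_i$ with strictly positive weights, then use the positivity of the extremal weight to force strict inequality when $\alpha < \beta$. The only cosmetic difference is that you normalize to convex-combination weights $w_i$ first, whereas the paper works directly with the unnormalized sums.
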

\newcommand*{\longNegSpace}{\!\!\!\!\!\!}
\begin{proof}
If $\alpha = \beta$, then $\uvec_i/\vvec_i = \alpha$ for all $i \in \toN$, and hence $z = \alpha$.
Now let $\alpha < \beta$. Then $\alpha\vvec_i \le \uvec_i \le \beta\vvec_i$ for all $i \in \toN$.
Pick $p$ and $q$ such that $\uvec_p/\vvec_p = \alpha$ and $\uvec_q/\vvec_q = \beta$. Then
\begin{align*}
&\sum_{i=1}^n \uvec_i \le \alpha \vvec_p
    + \longNegSpace\sum_{i \in \toN \setminus \{p\}}\longNegSpace \beta\vvec_i
    < \beta\sum_{i=1}^n \vvec_i,
& &\sum_{i=1}^n \uvec_i \ge \longNegSpace\sum_{i \in \toN \setminus \{q\}}\longNegSpace \alpha\vvec_i + \beta\vvec_q
    > \alpha\sum_{i=1}^n \vvec_i.
\end{align*}
Hence, $\alpha < z < \beta$.
\end{proof}

\begin{lemma}
\label{thm:meml-ne-is-eigen}
Let $(\xstar, \ystar)$ be a Nash equilibrium for SRUC game $(A, B)$
such that $\support(\xstar) = \support(\ystar) = \toN$.
Then $(e^{(B)}(\xstar, \ystar), \xstar)$ is an eigenpair of $B\trnF$
and $(e^{(A)}(\xstar, \ystar), \ystar)$ is an eigenpair of $A$.
\end{lemma}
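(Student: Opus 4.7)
The plan is to show that the Nash equilibrium condition, combined with the ratio-comparison lemma \cref{thm:sum-ratio-min-max}, forces every coordinate ratio $(A\ystar)_i/\ystar_i$ to coincide, which is exactly the eigenvector condition. Let $\mu \defeq e^{(A)}(\xstar,\ystar)$ and $\nu \defeq e^{(B)}(\xstar,\ystar)$. Since both $\xstar$ and $\ystar$ have full support, $\xstarT\ystar>0$, and so the payoffs can be computed via the ratio formula from \cref{thm:expected-score}: $\mu = \xstarT A\ystar/\xstarT\ystar$ and $\nu = \xstarT B\ystar/\xstarT\ystar$.

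Next, for each pure action $i \in \toN$, note that $(\evec^{(i)})\trn \ystar = \ystar_i > 0$, so $e^{(A)}(\evec^{(i)}, \ystar) = (A\ystar)_i/\ystar_i$. The Nash equilibrium condition on the \batter{} gives $(A\ystar)_i/\ystar_i \le \mu$ for every $i$. Now apply \cref{thm:sum-ratio-min-max} with $\uvec_i \defeq \xstar_i (A\ystar)_i$ and $\vvec_i \defeq \xstar_i \ystar_i$; the hypothesis $\vvec_i > 0$ holds because $\xstar$ and $\ystar$ have full support, and one has $\uvec_i/\vvec_i = (A\ystar)_i/\ystar_i$ while
\[
\frac{\sum_i \uvec_i}{\sum_i \vvec_i} = \frac{\xstarT A\ystar}{\xstarT\ystar} = \mu.
\]
The lemma then says that either $\mu$ equals the common value of all $\uvec_i/\vvec_i$, or $\mu$ lies strictly between their min and max. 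The NE inequality $(A\ystar)_i/\ystar_i \le \mu$ for all $i$ rules out the latter, so the ratios $(A\ystar)_i/\ystar_i$ are all equal to $\mu$. Hence $A\ystar = \mu\ystar$, i.e., $(\mu,\ystar)$ is an eigenpair of $A$.

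The argument for $B\trnF$ is symmetric: using $e^{(B)}(\xstar,\evec^{(j)}) = (B\trn\xstar)_j/\xstar_j$ and the NE condition on the \bowler{} (which now reverses the inequality to $(B\trn\xstar)_j/\xstar_j \ge \nu$), one applies \cref{thm:sum-ratio-min-max} with $\uvec_j \defeq \ystar_j (B\trn\xstar)_j$ and $\vvec_j \defeq \ystar_j \xstar_j$ to conclude $B\trn\xstar = \nu\xstar$.

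The only subtlety is that \cref{thm:sum-ratio-min-max} is stated unweighted, so one must be careful to multiply the ``numerator'' $(A\ystar)_i$ and ``denominator'' $\ystar_i$ by the same positive weight $\xstar_i$ before invoking it; the full-support hypothesis is used precisely here to guarantee that each $\vvec_i$ is strictly positive. Beyond this, the proof is a clean one-line application of the lemma in each direction, so I do not expect any real obstacle.
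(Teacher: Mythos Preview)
Your proposal is correct and follows essentially the same route as the paper: both weight the coordinate ratios $(A\ystar)_i/\ystar_i$ (resp.\ $(B\trn\xstar)_j/\xstar_j$) by $\xstar_i\ystar_i$ and invoke \cref{thm:sum-ratio-min-max} to force all ratios to coincide. The only cosmetic difference is that the paper argues by contradiction (if the min and max ratios differ, the NE is violated on both sides), whereas you first record the one-sided NE inequality and then rule out the strict-inequality alternative of \cref{thm:sum-ratio-min-max}; these are the same argument.
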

\begin{proof}
For $i \in \toN$, let $\wvec_i \defeq \xstar_i\ystar_i$.
Then $\wvec_i > 0$ for all $i \in \toN$.

For $j \in \toN$, let $\uvec_j \defeq (B\trn\xstar)_j\ystar_j$. Let
\begin{align*}
\alpha_x &\defeq \min_{j=1}^n \frac{\uvec_j}{\wvec_j} = \min_{j=1}^n \frac{(B\trn\xstar)_j}{\xstar_j}
& \beta_x &\defeq \max_{j=1}^n \frac{\uvec_j}{\wvec_j} = \max_{j=1}^n \frac{(B\trn\xstar)_j}{\xstar_j}
\end{align*}
Let $p$ and $q$ be indices such that $\uvec_p/\wvec_p = \alpha_x$ and $\uvec_q/\wvec_q = \beta_x$.
Then $e^{(B)}(\xstar, \evec^{(p)}) = \alpha_x$, $e^{(B)}(\xstar, \evec^{(q)}) = \beta_x$,
and $e^{(B)}(\xstar, \ystar) = (\sum_{j=1}^n \uvec_j)/(\sum_{j=1}^n \wvec_j)$.
Suppose $\alpha_x \neq \beta_x$. Then by \cref{thm:sum-ratio-min-max},
$e^{(B)}(\xstar, \evec^{(p)}) < e^{(B)}(\xstar, \ystar) < e^{(B)}(\xstar, \evec^{(q)})$.
This contradicts the fact that $(\xstar, \ystar)$ is a Nash equilibrium.
Hence, $\alpha_x = \beta_x = e^{(B)}(\xstar, \ystar)$
and $(B\trn\xstar)_j/\xstar_j = \alpha_x$ for all $j \in \toN$.
Hence, $(\alpha_x, \xstar)$ is an eigenpair of $B\trnF$.

For $i \in \toN$, let $\vvec_i \defeq \xstar_i(A\ystar)_i$. Let
\begin{align*}
\alpha_y &\defeq \min_{i=1}^n \frac{\vvec_i}{\wvec_i} = \min_{i=1}^n \frac{(A\ystar)_i}{\ystar_i}
& \beta_y &\defeq \max_{i=1}^n \frac{\vvec_i}{\wvec_i} = \max_{i=1}^n \frac{(A\ystar)_i}{\ystar_i}
\end{align*}
We can similarly show that $\alpha_y = \beta_y = e^{(A)}(\xstar, \ystar)$
and $(\alpha_y, \ystar)$ is an eigenpair of $A$.
\end{proof}

Finally, we show that non-negative eigenvectors are essentially Perron vectors (\cref{thm:nonneg-eig-is-perron}),
and use this result to establish uniqueness of Nash equilibrium (\cref{thm:meml-ne-unique}).

\begin{lemma}
\label{thm:nonneg-eig-is-perron}
Let $A \in \mathbb{R}_{\ge 0}^{n \times n}$ be an irreducible matrix.
Let $\rho$ be the Perron root of $A$, and $\uvec$ and $\vvec$ be the
left and right Perron vectors of $A$, respectively.
Then both of these hold:
\begin{enumerate}
\item \label{item:nonneg-eig-is-perron:A}If
    $(\lambda, \yvec)$ is an eigenpair of $A$ such that $\yvec_i \ge 0$ $\forall i$
    and $\|\yvec\|_1 = 1$, then $\lambda = \rho$ and $\yvec = \vvec$.
\item \label{item:nonneg-eig-is-perron:AT}If
    $(\lambda, \xvec)$ is an eigenpair of $A\trnF$ such that $\xvec_i \ge 0$ $\forall i$
    and $\|\xvec\|_1 = 1$, then $\lambda = \rho$ and $\xvec = \uvec$.
\end{enumerate}
\end{lemma}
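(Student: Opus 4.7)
The plan is to exploit the fact that the ``other'' Perron vector is strictly positive: this lets me pin down the eigenvalue via a two-way computation of a suitable bilinear form, and then the uniqueness clause in \cref{thm:perron-frobenius} pins down the eigenvector itself.

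For part \ref{item:nonneg-eig-is-perron:A}, given an eigenpair $(\lambda,\yvec)$ of $A$ satisfying the hypotheses, I would compute $\uvec\trn A\yvec$ in two ways. Using $A\yvec = \lambda\yvec$ gives $\uvec\trn A\yvec = \lambda\,\uvec\trn\yvec$, while using $A\trn\uvec = \rho\uvec$ gives $\uvec\trn A\yvec = \rho\,\uvec\trn\yvec$. By \cref{thm:perron-frobenius}, $\uvec_i > 0$ for every $i$, and the hypotheses $\yvec \ge \zerovec$ together with $\|\yvec\|_1 = 1$ imply $\sum_i \yvec_i = 1$, hence $\uvec\trn\yvec > 0$. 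Cancelling this positive quantity from $\lambda\,\uvec\trn\yvec = \rho\,\uvec\trn\yvec$ gives $\lambda = \rho$. Then $\yvec$ satisfies $A\yvec = \rho\yvec$ and $\sum_i \yvec_i = 1$, so by the uniqueness assertion in \cref{thm:perron-frobenius}, $\yvec = \vvec$.

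Part \ref{item:nonneg-eig-is-perron:AT} is entirely symmetric: I would compute $\xvec\trn A\vvec$ in two ways (once using $A\vvec = \rho\vvec$, once using $A\trn\xvec = \lambda\xvec$) to obtain $\rho\,\xvec\trn\vvec = \lambda\,\xvec\trn\vvec$. Since $\vvec > 0$ componentwise and $\xvec \ge \zerovec$ with $\sum_i \xvec_i = 1$, we have $\xvec\trn\vvec > 0$, so $\lambda = \rho$, and then uniqueness in \cref{thm:perron-frobenius} gives $\xvec = \uvec$.

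I do not foresee any real obstacle; the only subtleties worth flagging are noticing that $\|\yvec\|_1 = \sum_i \yvec_i$ when $\yvec \ge \zerovec$ (which is precisely what converts the norm hypothesis into the strict positivity of the inner product against the strictly positive Perron vector), and that although $\lambda$ is formally allowed to be complex by \cref{thm:eigen}, the equation $A\yvec = \lambda\yvec$ with $\yvec$ a nonzero real vector forces $\lambda \in \mathbb{R}$, so the cancellation step is unambiguous.
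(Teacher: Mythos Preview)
Your proposal is correct and matches the paper's proof essentially line for line: the paper also computes $\uvec\trn A\yvec$ two ways to get $\rho\,\uvec\trn\yvec = \lambda\,\uvec\trn\yvec$, uses $\support(\uvec)=\toN$ to cancel, and then invokes the uniqueness clause of \cref{thm:perron-frobenius}. The only cosmetic difference is that the paper dispatches part~\ref{item:nonneg-eig-is-perron:AT} by applying part~\ref{item:nonneg-eig-is-perron:A} to $A\trnF$ rather than repeating the symmetric argument with $\vvec$, but this is the same content.
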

\begin{proof}
$\rho\uvec\trn\yvec = (A\trn\uvec)\trn\yvec = \uvec\trn(A\yvec) = \lambda\uvec\trn\yvec$.
Since $\support(\uvec) = \toN$, we get $\uvec\trn\yvec > 0$. Hence, $\lambda = \rho$.
By \cref{thm:perron-frobenius}, $\vvec$ is the unique eigenvector corresponding to
eigenvalue $\rho$ such that $\|\vvec\|_1 = 1$. Hence, $\yvec = \vvec$.
We get part \ref{item:nonneg-eig-is-perron:AT} of the lemma by applying
part \ref{item:nonneg-eig-is-perron:A} on $A\trnF$.
\end{proof}

\begin{theorem}
\label{thm:meml-ne-unique}
Let $(A, B)$ be an SRUC game, where $A$ and $B$ are irreducible and
$\graph(A)$ is a subgraph of $\graph(B)$.
Let $(\rho_A, \rho_B, \xstar, \ystar) = \perronSolution(A, B)$.
Then $(\xstar, \ystar)$ is the unique Nash equilibrium for this SRUC game.
\end{theorem}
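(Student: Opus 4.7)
The plan is to combine the existence half of the statement, which is already given by \cref{thm:ne-exists}, with the structural lemmas proved just above to pin down any Nash equilibrium as the Perron solution. Concretely, \cref{thm:ne-exists} immediately yields that $(\xstar, \ystar)$ is a Nash equilibrium (since irreducibility of $A$ and $B$ is assumed). Therefore, the only thing left to establish is that no other Nash equilibrium can exist.

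For uniqueness, I would take an arbitrary Nash equilibrium $(\xhat, \yhat)$ of the SRUC game $(A, B)$ and thread the preceding lemmas together in sequence. Since $A$ is irreducible and $\graph(A)$ is a subgraph of $\graph(B)$, \cref{thm:ystar-full-supp} applies and gives $\support(\yhat) = \toN$. Then, with $B$ irreducible and $\yhat$ having full support, \cref{thm:xstar-full-supp} yields $\support(\xhat) = \toN$. At this point both strategies have full support, so \cref{thm:meml-ne-is-eigen} applies: $(e^{(A)}(\xhat, \yhat), \yhat)$ is an eigenpair of $A$, and $(e^{(B)}(\xhat, \yhat), \xhat)$ is an eigenpair of $B\trnF$. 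Both $\xhat$ and $\yhat$ lie in $\simplex_n$, hence are non-negative with unit $\ell_1$ norm, so \cref{thm:nonneg-eig-is-perron} forces $\yhat$ to coincide with the right Perron vector of $A$ and $\xhat$ to coincide with the left Perron vector of $B$, i.e., $(\xhat, \yhat) = (\xstar, \ystar)$.

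There is no real obstacle here: the theorem is essentially a corollary obtained by chaining four previously proved lemmas in the correct order, and the only care required is to verify at each step that the hypotheses of the lemma being invoked are actually in force (irreducibility of the appropriate matrix and, where needed, the subgraph condition or the full-support condition just established).
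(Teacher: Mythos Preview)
Your proposal is correct and follows essentially the same route as the paper: invoke \cref{thm:ystar-full-supp} then \cref{thm:xstar-full-supp} to obtain full support, apply \cref{thm:meml-ne-is-eigen} to turn the equilibrium strategies into eigenpairs, and finish with \cref{thm:nonneg-eig-is-perron} to identify them with the Perron vectors. The only cosmetic difference is that you explicitly record the existence half via \cref{thm:ne-exists}, which the paper leaves implicit.
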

\begin{proof}
Let $(\xhat, \yhat)$ be any Nash equilibrium for SRUC game $(A, B)$.
From \cref{thm:ystar-full-supp,thm:xstar-full-supp}, we get
$\support(\xhat) = \support(\yhat) = \toN$.

Let $\sigma_C \defeq e^{(C)}(\xhat, \yhat)$ for $C \in \{A, B\}$.
By \cref{thm:meml-ne-is-eigen},
$(\sigma_B, \xhat)$ is an eigenpair of $B\trnF$ and
$(\sigma_A, \yhat)$ is an eigenpair of $A$.
By \cref{thm:nonneg-eig-is-perron}, we get $\sigma_B = \rho_B$, $\xhat = \xstar$,
$\sigma_A = \rho_A$, $\yhat = \ystar$.
\end{proof}

\Cref{thm:meml-ne-unique} shows uniqueness of NE of an SRUC game $(A,B)$ under two conditions:
(i) $A$ and $B$ are irreducible, and (ii) $\graph(A) \subseteq \graph(B)$.
The following two lemmas show that if either of these conditions is relaxed, we can no longer guarantee uniqueness of NE.

\begin{lemma}
\label{thm:edge-in-A-not-in-B}
Let $(A, B)$ be an SRUC game. Suppose $\exists (i, j)$ such that $i \neq j$,
$A_{i,j} > 0$, and $B_{i,j} = 0$. Then $(\evec^{(i)}, \evec^{(j)})$ is a Nash equilibrium.
Moreover, $e^{(A)}(\evec^{(i)}, \evec^{(j)}) = \infty$ and $e^{(B)}(\evec^{(i)}, \evec^{(j)}) = 0$.
\end{lemma}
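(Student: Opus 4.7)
The plan is to directly verify the payoff values using the formula from \cref{thm:expected-score} and then check the two Nash equilibrium conditions separately for the \batter{} and the \bowler{}.

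First I would compute the claimed payoff values. Since $i \neq j$, we have $(\evec^{(i)})\trn \evec^{(j)} = 0$, so no collision ever occurs under the profile $(\evec^{(i)}, \evec^{(j)})$. Then $(\evec^{(i)})\trn A \evec^{(j)} = A_{i,j} > 0$ lands us in the middle case of \cref{thm:expected-score}, giving $e^{(A)}(\evec^{(i)}, \evec^{(j)}) = \infty$; similarly $(\evec^{(i)})\trn B \evec^{(j)} = B_{i,j} = 0$ lands in the third case, giving $e^{(B)}(\evec^{(i)}, \evec^{(j)}) = 0$.

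Next I would verify the Nash equilibrium conditions. The \batter's condition is immediate: her current payoff is $\infty$, and no deviation can exceed $\infty$. The \bowler's condition requires showing $e^{(B)}(\evec^{(i)}, \yvec) \ge 0$ for every $\yvec \in \simplex_n$. Since $B$ is a non-negative matrix, this is transparent in each branch of \cref{thm:expected-score}: if $\yvec_i = (\evec^{(i)})\trn \yvec > 0$ then $e^{(B)}(\evec^{(i)}, \yvec) = \sum_k B_{i,k}\yvec_k / \yvec_i \ge 0$; if $\yvec_i = 0$ and $\sum_k B_{i,k}\yvec_k > 0$ then $e^{(B)}(\evec^{(i)}, \yvec) = \infty \ge 0$; and if both are zero then $e^{(B)}(\evec^{(i)}, \yvec) = 0$. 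In every case, the \bowler's cost is at least $0$, matching her current cost, so she has no profitable deviation.

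There is really no substantive obstacle here: the argument is a case check against the piecewise definition of $e^{(C)}$ from \cref{thm:expected-score}, and the only thing one has to be mildly careful about is treating the $\xvec\trn \yvec = 0$ branch correctly (where the usual ratio formula does not apply). The lemma is essentially an observation that the \bowler's cost is floored at $0$ by the non-negativity of $B$, so any profile that already achieves cost $0$ is automatically optimal for her, regardless of what the \batter{} does.
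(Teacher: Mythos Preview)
Your proposal is correct and follows the same approach as the paper. The paper's proof is extremely terse---it simply records the three inner products $(\evec^{(i)})\trn A\evec^{(j)} = A_{i,j} > 0$, $(\evec^{(i)})\trn B\evec^{(j)} = B_{i,j} = 0$, and $(\evec^{(i)})\trn\evec^{(j)} = 0$ and leaves the reader to invoke \cref{thm:expected-score} and the obvious extremality of $\infty$ and $0$; your write-up makes those steps explicit, which is arguably an improvement in clarity.
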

\begin{proof}
${\evec^{(i)}}\trn A\evec^{(j)} = A[i,j] > 0$, ${\evec^{(i)}}\trn B\evec^{(j)} = B[i, j] = 0$,
and ${\evec^{(i)}}\trn\evec^{(j)} = 0$.
\end{proof}

The NE in \cref{thm:edge-in-A-not-in-B} is deterministic
but the NE in \cref{thm:ne-exists} is randomized, implying that the game has at least two NE.
Hence, the condition $\graph(A) \subseteq \graph(B)$ is necessary for uniqueness
of NE if $A$ and $B$ are irreducible. The next lemma shows that if either $A$ or $B$ is reducible,
then uniqueness of NE is not guaranteed.

\begin{lemma}
Let $(A, A)$ be an SRUC game where $A \defeq \big(\begin{smallmatrix}1 & 0 \\ 0 & 2\end{smallmatrix}\big)$.
Then $(\xvec, \yvec)$ is a Nash equilibrium iff $\support(\xvec) = \{1, 2\}$ and $\yvec = [1, 0]$.
\end{lemma}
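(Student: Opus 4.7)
My plan is to apply \cref{thm:expected-score} to obtain a closed form for $e^{(A)}(\xvec, \yvec)$ with the given diagonal matrix, and then perform a case analysis on the support of $\yvec$. Writing $\xvec = (x_1, x_2)$ and $\yvec = (y_1, y_2)$, one gets $\xvec\trn A\yvec = x_1 y_1 + 2 x_2 y_2$ and $\xvec\trn\yvec = x_1 y_1 + x_2 y_2$, so whenever the denominator is positive,
\[ e^{(A)}(\xvec, \yvec) \;=\; 1 + \frac{x_2 y_2}{x_1 y_1 + x_2 y_2} \;\in\; [1, 2], \]
with equality to $1$ exactly when $x_2 y_2 = 0$. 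This identity, together with the degenerate branches of \cref{thm:expected-score}, drives both directions of the iff.

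For the sufficiency direction, fix $\yvec = \evec^{(1)} = [1, 0]$ and $\xvec$ with $\support(\xvec) = \{1, 2\}$, giving payoff $1$. Any batter's deviation $\xvec'$ yields payoff $1$ if $x_1' > 0$ and $0$ otherwise (via the $\xvec\trn\yvec = 0$ branch), so she cannot improve. Any bowler's deviation $\yvec'$ with $y_2' > 0$ strictly increases the payoff by the identity above since $x_2 > 0$, so she cannot improve either.

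The necessity direction is a case analysis on $\yvec$. If $\yvec = \evec^{(2)}$, then either the batter profitably deviates to $\evec^{(2)}$ (when $x_2 = 0$, turning payoff $0$ into $2$) or the bowler profitably deviates to $\evec^{(1)}$ (when $x_2 > 0$, reducing payoff $2$ to at most $1$). If $\yvec$ is strictly mixed, the bowler has a profitable pure deviation in every subcase: to $\evec^{(1)}$ when $x_2 > 0$, and to $\evec^{(2)}$ when $\xvec = \evec^{(1)}$, in each case exploiting either the identity above or the $\xvec\trn\yvec = 0$ branch. Finally, if $\yvec = \evec^{(1)}$, then $x_1 = 0$ lets the batter deviate to improve from payoff $0$, and $x_2 = 0$ lets the bowler deviate to $\evec^{(2)}$ for payoff $0$; hence both coordinates of $\xvec$ must be positive, as required.

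The main obstacle is carefully managing the three-branch structure of \cref{thm:expected-score}. Because $A$ is reducible, the degenerate $\xvec\trn\yvec = 0$ branch is exactly where many of the profitable deviations live, so these boundary cases arise in essentially every subcase and must be handled individually rather than folded into a single uniform computation.
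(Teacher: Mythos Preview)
Your proposal is correct and follows essentially the same approach as the paper: compute $e^{(A)}(\xvec,\yvec)$ from \cref{thm:expected-score}, derive the identity $e^{(A)}(\xvec,\yvec)=1+\frac{x_2y_2}{x_1y_1+x_2y_2}$ on the nondegenerate branch, and exhaust the cases by support. The only cosmetic difference is that the paper records all nine cases $(\xvec,\yvec)\in\{[1,0],\text{mix},[0,1]\}^2$ in a payoff table and reads off the NE set directly, whereas you organize the same case split as a directed argument naming an explicit profitable deviation in each non-NE cell.
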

\begin{proof}
Observe that $A$ is reducible. For any $\xvec, \yvec \in \simplex_2$, if $\xvec_2 > 0$ and $\yvec_2 > 0$, then
$e^{(A)}(\xvec, \yvec) = 1 + \xvec_2\yvec_2/(\xvec_1\yvec_1 + \xvec_2\yvec_2) > 1$.
When $\xvec\trn\yvec = 0$, we get $e^{(A)}(\xvec, \yvec) = 0$. Otherwise, $e^{(A)}(\xvec, \yvec) = 1$.
This is summarized by the following table:

\begin{tabular}{c|ccc}
$x$ \textbackslash{} $y$ & $[1, 0]$ & mix & $[0, 1]$
\\\hline $[1,0]$ & $1$ & $1$ & $0$
\\ mix & $1$ & $>1$ & $>1$
\\ $[0,1]$ & $0$ & $>1$ & $>1$
\end{tabular}

This table shows that $(\xvec, \yvec)$ is a NE
iff $\support(\xvec) = \{1, 2\}$ and $\yvec = [1, 0]$.
\end{proof}

The above example also shows that unlike the guarantee of \cref{thm:ne-exists},
Nash equilibrium strategies may not have full support.

\subsection{Approximate Nash Equilibrium}

It may not be possible to compute Nash equilibria for SRUC games exactly,
e.g., if eigenvectors are irrational.
Hence, we would like to compute approximate Nash equilibria. We will use the following lemma.

\begin{lemma}
\label{thm:approx-eigen-norm}
Let $\xstar, \xhat \in \simplex_n$ and $\delta \in [0, 1)$ such that $\support(\xstar) = \toN$
and $|\xhat_i - \xstar_i| \le \delta\xstar_i$ for all $i \in \toN$. Then $\support(\xhat) = \toN$
and for any matrix $C \in \mathbb{R}_{\ge 0}^{n \times n}$ and any $j \in \toN$,
\[ \frac{(C\trn\xhat)_j}{\xhat_j} \in \left[\frac{1-\delta}{1+\delta}\frac{(C\trn\xstar)_j}{\xstar_j},
    \frac{1+\delta}{1-\delta}\frac{(C\trn\xstar)_j}{\xstar_j}\right]. \]
\end{lemma}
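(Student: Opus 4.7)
The plan is to reduce the claim to componentwise bounds on $\xhat$ and then propagate those bounds through the linear map $C\trn$.

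First I would rewrite the hypothesis $|\xhat_i - \xstar_i|\le \delta\xstar_i$ in the two-sided form
\[ (1-\delta)\xstar_i \;\le\; \xhat_i \;\le\; (1+\delta)\xstar_i \qquad \forall i\in\toN. \]
Because $\xstar$ has full support and $\delta<1$, the lower bound is strictly positive, so $\xhat_i>0$ for every $i$, which gives $\support(\xhat)=\toN$ and in particular ensures that the ratios in the statement are well-defined.

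Next, I would fix $j\in\toN$ and expand $(C\trn\xhat)_j = \sum_{i=1}^n C_{i,j}\xhat_i$. Since $C_{i,j}\ge 0$, the componentwise inequalities above can be multiplied by $C_{i,j}$ and summed over $i$ without sign issues, yielding
\[ (1-\delta)\,(C\trn\xstar)_j \;\le\; (C\trn\xhat)_j \;\le\; (1+\delta)\,(C\trn\xstar)_j. \]
Combining this with the bound $(1-\delta)\xstar_j \le \xhat_j \le (1+\delta)\xstar_j$ in the denominator produces the two-sided bound
\[ \frac{1-\delta}{1+\delta}\,\frac{(C\trn\xstar)_j}{\xstar_j} \;\le\; \frac{(C\trn\xhat)_j}{\xhat_j} \;\le\; \frac{1+\delta}{1-\delta}\,\frac{(C\trn\xstar)_j}{\xstar_j}, \]
which is exactly what the lemma asserts.

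There is no real obstacle here: the argument is almost entirely bookkeeping, with the only subtlety being that non-negativity of $C$ is what licenses the term-by-term application of the componentwise bounds (otherwise negative entries of $C$ could flip the inequalities). I would state this reliance on $C\in\mathbb{R}_{\ge 0}^{n\times n}$ explicitly at the step where the sum defining $(C\trn\xhat)_j$ is bounded.
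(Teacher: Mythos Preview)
Your proof is correct and follows essentially the same approach as the paper: both rewrite the hypothesis as the two-sided bound $(1-\delta)\xstar_i \le \xhat_i \le (1+\delta)\xstar_i$, use non-negativity of $C$ to propagate this through the sum defining $(C\trn\xhat)_j$, and then combine the numerator and denominator bounds. The only cosmetic difference is that the paper phrases the numerator bound via $|(C\trn\xhat)_j - (C\trn\xstar)_j| \le \delta(C\trn\xstar)_j$ using the triangle inequality, whereas you work directly with the two-sided inequalities.
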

\begin{proof}
For any $i \in \toN$, we have $\xhat_i \in [(1-\delta)\xstar_i, (1+\delta)\xstar_i]$.
Hence, $\support(\xhat) = \toN$.
\begin{align*}
& |(C\trn\xhat)_j - (C\trn\xstar)_j| \le |(C\trn(\xhat-\xstar))_j|
\le \left|\sum_{i=1}^n C[i,j](\xhat_i-\xstar_i)\right|
\\ &\le \sum_{i=1}^n C[i,j]|\xhat_i-\xstar_i|
\le \delta\sum_{i=1}^n C[i,j]\xstar_i = \delta(C\trn\xstar)_j.
\\ &\implies (C\trn\xhat)_j \in [(1-\delta)(C\trn\xstar)_j, (1+\delta)(C\trn\xstar)_j]
\\ &\implies \frac{(C\trn\xhat)_j}{\xhat_j}
    \in \left[\frac{1-\delta}{1+\delta}\frac{(C\trn\xstar)_j}{\xstar_j},
        \frac{1+\delta}{1-\delta}\frac{(C\trn\xstar)_j}{\xstar_j}\right].
\qedhere
\end{align*}
\end{proof}

We now show that a close-enough approximation to a NE is an approximate NE of an SRUC game. This result allows us to compute approximate NE by using methods for approximately estimating the leading eigenvector of a matrix, such as power iteration \cite{poweriteration}.

\begin{theorem}
\label{thm:approx-ne}
Let $(\xstar, \ystar)$ be a Nash equilibrium for the SRUC game $(A, B)$
such that $\support(\xstar) = \support(\ystar) = \toN$.
For $\delta\in[0,1)$, let $\xhat, \yhat \in \simplex_n$ be such that for all $i \in \toN$,
$|\xhat_i - \xstar_i| \le \delta\xstar_i$ and $|\yhat_i - \ystar_i| \le \delta\ystar_i$.
Then $(\xhat, \yhat)$ is a $\frac{4\delta}{(1-\delta)^2}$-approximate Nash equilibrium.
\end{theorem}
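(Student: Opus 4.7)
The plan is to combine the eigenvector characterization of full-support equilibria from \cref{thm:meml-ne-is-eigen} with the per-coordinate perturbation estimate of \cref{thm:approx-eigen-norm}, and close with an algebraic identity that matches the stated approximation factor.

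First I would invoke \cref{thm:meml-ne-is-eigen}: since $\support(\xstar) = \support(\ystar) = \toN$, the vector $\ystar$ is an eigenvector of $A$ with eigenvalue $\rho_A \defeq e^{(A)}(\xstar, \ystar)$, and $\xstar$ is an eigenvector of $B\trnF$ with eigenvalue $\rho_B \defeq e^{(B)}(\xstar, \ystar)$. In particular, the per-coordinate ratios $(A\ystar)_i/\ystar_i$ and $(B\trn\xstar)_j/\xstar_j$ are constant in $i$ and $j$, equal to $\rho_A$ and $\rho_B$ respectively.

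Next I would apply \cref{thm:approx-eigen-norm} twice: once with $C = A\trnF$ applied to the pair $(\ystar, \yhat)$, and once with $C = B$ applied to the pair $(\xstar, \xhat)$. This yields that both $\xhat$ and $\yhat$ have full support, and for every index,
\[ \frac{(A\yhat)_i}{\yhat_i} \in \left[\frac{1-\delta}{1+\delta}\rho_A,\, \frac{1+\delta}{1-\delta}\rho_A\right], \quad \frac{(B\trn\xhat)_j}{\xhat_j} \in \left[\frac{1-\delta}{1+\delta}\rho_B,\, \frac{1+\delta}{1-\delta}\rho_B\right]. \]
Now for any $\xvec \in \simplex_n$, since $\yhat$ has full support we have $\xvec\trn\yhat > 0$, so
\[ e^{(A)}(\xvec, \yhat) = \frac{\xvec\trn A\yhat}{\xvec\trn\yhat} = \frac{\sum_i \xvec_i\yhat_i \cdot (A\yhat)_i/\yhat_i}{\sum_i \xvec_i\yhat_i} \]
is a weighted average (with nonnegative weights $\xvec_i\yhat_i$) of the per-index ratios. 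Hence $e^{(A)}(\xvec, \yhat) \le \tfrac{1+\delta}{1-\delta}\rho_A$, and specializing $\xvec = \xhat$ gives $e^{(A)}(\xhat, \yhat) \ge \tfrac{1-\delta}{1+\delta}\rho_A$. Dividing yields $e^{(A)}(\xvec, \yhat)/e^{(A)}(\xhat, \yhat) \le (1+\delta)^2/(1-\delta)^2$. A symmetric argument bounds $e^{(B)}(\xhat, \yhat)/e^{(B)}(\xhat, \yvec)$ by the same factor for any $\yvec \in \simplex_n$.

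The proof concludes with the algebraic identity $(1+\delta)^2/(1-\delta)^2 = 1 + 4\delta/(1-\delta)^2$, which follows from $(1+\delta)^2 - (1-\delta)^2 = 4\delta$ and matches the approximation factor in the statement. I do not anticipate a real obstacle here: the only mild subtlety is that a deviating $\xvec$ (or $\yvec$) need not have full support, but this is immaterial because $\yhat$ (respectively $\xhat$) does, so the denominator $\xvec\trn\yhat$ (respectively $\xhat\trn\yvec$) stays strictly positive and the per-coordinate ratio bound transfers directly into the bound on $e^{(A)}(\xvec, \yhat)$ (respectively $e^{(B)}(\xhat, \yvec)$).
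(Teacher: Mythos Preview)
Your proposal is correct and follows essentially the same route as the paper: invoke \cref{thm:meml-ne-is-eigen} to get the eigenvector structure, apply \cref{thm:approx-eigen-norm} to bound the per-coordinate ratios, then sandwich $e^{(A)}(\xvec,\yhat)$ and $e^{(B)}(\xhat,\yvec)$ between $\tfrac{1-\delta}{1+\delta}$ and $\tfrac{1+\delta}{1-\delta}$ times the equilibrium values. The only cosmetic difference is that the paper cites \cref{thm:sum-ratio-min-max} for the weighted-average bound whereas you write it out directly, and the paper runs the argument on the $B$-side first; the content is identical.
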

\begin{proof}
By \cref{thm:meml-ne-is-eigen}, $(\rho_B, \xstar)$ is an eigenpair of $B\trnF$.
Hence, by \cref{thm:approx-eigen-norm}, for all $j \in \toN$,
\[ \frac{(B\trn\xhat)_j}{\xhat_j} \in \left[\frac{1-\delta}{1+\delta}\rho_B,
    \frac{1+\delta}{1-\delta}\rho_B\right]. \]
For any $\yvec \in \simplex_n$, using \cref{thm:sum-ratio-min-max}, we get
\[ e^{(B)}(\xhat, \yvec) = \frac{\sum_{j=1}^n (B\trn\xhat)_j\yvec_j}{\sum_{j=1}^n \xhat_j\yvec_j}
    \in \bigg[ \min_{j=1}^n \frac{(B\trn\xhat)_j}{\xhat_j}, \max_{j=1}^n \frac{(B\trn\xhat)_j}{\xhat_j}\bigg]
    = \bigg[\frac{1-\delta}{1+\delta}\rho_B, \frac{1+\delta}{1-\delta}\rho_B\bigg]. \]
In particular, for $\yvec = \yhat$, we get $e^{(B)}(\xhat, \yhat) \le \frac{1+\delta}{1-\delta}\rho_B$.
Let $\eps \defeq \frac{4\delta}{(1-\delta)^2}$. Thus, for any $\yvec \in \simplex_n$,
\[ e^{(B)}(\xhat, \yvec) \ge \frac{1-\delta}{1+\delta}\rho_B
    \ge \bigg(\frac{1-\delta}{1+\delta}\bigg)^2 e^{(B)}(\xhat, \yhat)
    = \frac{e^{(B)}(\xhat,\yhat)}{1+\eps}. \]
We can analogously show that for any $\xvec\in\simplex_n$,
\[ e^{(A)}(\xvec,\yhat) \le \bigg(\frac{1+\delta}{1-\delta}\bigg)^2 e^{(A)}(\xhat,\yhat)
    = (1+\eps)\cdot e^{(A)}(\xhat,\yhat). \]
Hence, $(\xhat,\yhat)$ is an $\eps$-approximate NE.
\end{proof}

\section{Nash Equilibria in General RUC games}
\label{sec:non-meml}

In this section, we study RUC games when players are allowed to use non-stationary strategies.
As discussed in the Introduction, the strategy space can be complicated in general RUC games.
We first describe a framework for defining strategies in a RUC game.

\subsection{Strategy Space of RUC games}

A deterministic (possibly non-stationary) strategy in an RUC game
is a function $f$ that takes as input a list of actions played
by the opponent so far and outputs the next action for the player.
E.g., if the \batter{} is using a deterministic strategy $f$,
then in the $k\Th$ round, the \batter{} will play the action $f(J)$,
where $J \defeq [j_1, j_2, \ldots, j_{k-1}]$ is the list of actions
played by the \bowler{} in the first $k-1$ rounds.

A randomized strategy is a distribution over deterministic strategies.
Formally, let $\toN^*$ be the set of all finite lists where each element is in $\toN$.
Let $\Omega$ be the set of all deterministic strategies, i.e.,
the set of all functions from $\toN^*$ to $\toN$.
Then a randomized strategy is given by a probability space $(\Omega, \Ecal, P)$.
Recall that in a probability space,
$\Ecal \subseteq 2^{\Omega}$ is the set of events
and $P: \Ecal \to [0,1]$ is a probability measure.
If a randomized strategy $f$ is sampled from this probability space, then for any set
$F \in \Ecal$ of deterministic strategies, we say $\Pr(f \in F) \defeq P(F)$.

For any $\xvec \in \simplex_n$, and any random variable $X \in \toN$,
we write $X \sampledFrom \xvec$ to say that $\Pr(X = i) = \xvec_i$ for all $i \in \toN$.
The length of a list $L$, denoted by $|L|$, is the number of elements in $L$.
Unless specified otherwise, assume all lists are finite.

\begin{definition}[stationary strategy]
\label{defn:stdmeml}
Let $\xvec \in \simplex_n$. Let $[I_0, I_1, \ldots]$ be an infinite sequence of
independent random variables where for each $t \in \mathbb{Z}_{\ge 0}$,
$I_t \in \toN$ and $I_t \sampledFrom \xvec$.
Then $\stdmeml(\xvec)$ is a strategy $f$ where $f(J) \defeq I_{|J|}$.
$\stdmeml(\xvec)$ is called the \emph{stationary strategy} for parameter $\xvec$.
\end{definition}

With a little abuse of notation, we will sometimes write $\xvec$ instead of $\stdmeml(\xvec)$.

\subparagraph{Residual Strategies.}

For any two lists $L_1$ and $L_2$, let $L_1 + L_2$ denote their concatenation.
Let $\emptyset$ denote the empty list.
Let $[x]$ denote a list of length 1 containing the element $x$.

Given a list $K \in \toN^*$ and a strategy $f$, define the function $f^K$ as
$f^K(K') \defeq f(K + K')$. Intuitively, if a player is using strategy $f$,
then after the opponent has played actions $K$, $f^K$ is the strategy for the remaining game.
$f^K$ is called the \emph{$K$-residual} strategy of $f$.
Due to the recursive nature of RUC games, residual strategies are helpful in their analysis.

Let $I \defeq [i_1, \ldots, i_k]$ and $J \defeq [j_1, \ldots, j_k]$ be lists.
Let $f$ be a randomized strategy. We want to define $f^{(I,J)}$ as the strategy $f^J$
conditioned on the player responding with actions $I$ when the opponent plays actions $J$.
Formally, let $\isResponse(f, I, J)$ be the event that a player using strategy $f$ responds
with actions $I$ to opponent's actions $J$, i.e.,
$\forall t \in \toN[k], f([j_1, \ldots, j_{t-1}]) = i_t$.
Call $(I, J)$ a \emph{feasible history} for $f$ if $\Pr(\isResponse(f, I, J)) > 0$.
For any feasible history $(I, J)$ of $f$, define $f^{(I,J)}$,
called the \emph{$(I, J)$-residual strategy} of $f$, as a strategy having distribution
\[ \Pr(f^{(I,J)} \in F) \defeq \Pr(f^J \in F \mid \isResponse(f, I, J)). \]

Intuitively, stationary strategies should remain unchanged when conditioned on past actions.
We prove this formally.

\begin{restatable}{lemma}{rthmCondOnMeml}
\label{thm:cond-on-meml}
$\stdmeml(\xvec)^{(I,J)}$ has the same distribution as $\stdmeml(\xvec)$.
\end{restatable}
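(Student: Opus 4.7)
The plan is to work directly from the construction of $\stdmeml(\xvec)$ in \cref{defn:stdmeml}. Let $f \defeq \stdmeml(\xvec)$, built from the iid sequence $[I_0, I_1, \ldots]$ with each $I_t \sampledFrom \xvec$, so that $f(J) = I_{|J|}$ for every $J \in \toN^*$. Write $k \defeq |J| = |I|$.

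First I would unfold $f^J$. By definition, $f^J(K') = f(J + K') = I_{k + |K'|}$, so $f^J$ is exactly the stationary strategy built from the shifted sequence $[I_k, I_{k+1}, \ldots]$. Since $I_k, I_{k+1}, \ldots$ are themselves iid with distribution $\xvec$, the random strategy $f^J$ has the same distribution as $\stdmeml(\xvec)$.

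Next I would handle the conditioning. The event $\isResponse(f, I, J)$ unpacks to $f([j_1, \ldots, j_{t-1}]) = i_t$ for every $t \in \toN[k]$, which by construction is the event $\{I_{t-1} = i_t : 1 \le t \le k\}$. This event is measurable with respect to $\sigma(I_0, \ldots, I_{k-1})$. Since $[I_0, \ldots, I_{k-1}]$ is independent of $[I_k, I_{k+1}, \ldots]$, and $f^J$ is a deterministic function of $[I_k, I_{k+1}, \ldots]$, the random strategy $f^J$ is independent of $\isResponse(f, I, J)$. Hence for every measurable $F$,
\[
\Pr(f^{(I,J)} \in F) = \Pr(f^J \in F \mid \isResponse(f, I, J)) = \Pr(f^J \in F),
\]
provided the conditioning event has positive probability (which is guaranteed by the definition of feasible history). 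Combining the two steps, $f^{(I,J)}$ has the same distribution as $f^J$, which has the same distribution as $\stdmeml(\xvec)$.

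The only subtlety, and the main thing to be careful about, is measurability: a strategy is an element of the uncountable space $\Omega = \toN^{\toN^*}$, and we are comparing distributions on it. I would equip $\Omega$ with the product $\sigma$-algebra generated by the coordinate (cylinder) maps $f \mapsto f(K)$, since this is the natural $\sigma$-algebra implicit in \cref{defn:stdmeml} and the only one needed to make sense of events like $\isResponse(f, I, J)$. Under this $\sigma$-algebra, both $f$ and $f^J$ are measurable functions of the iid sequence, equality in distribution is determined by finite-dimensional marginals, and the independence argument above applies verbatim. Everything else is routine bookkeeping.
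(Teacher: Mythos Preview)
Your argument is correct and is essentially identical to the paper's proof: both observe that $f^J$ is the stationary strategy built from the shifted iid sequence $[I_k, I_{k+1}, \ldots]$, that $\isResponse(f,I,J)$ depends only on $I_0,\ldots,I_{k-1}$, and hence conditioning has no effect by independence. The only difference is cosmetic---the paper renames the iid sequence to $R_0,R_1,\ldots$ to avoid the notational clash between the history list $I=[i_1,\ldots,i_k]$ and the sequence variables $I_t$, and it omits your (sensible but inessential) remark about the product $\sigma$-algebra on $\Omega$.
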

\movedProof{\cref{sec:non-meml-extra:str}}

\subparagraph{Expected Score.}
Consider the RUC game $(A, B)$, where $A$ and $B$ are $n \times n$ matrices.
Assume that the \batter{}'s and \bowler{}'s strategies are independent.
We now formally define the \emph{expected score (or cost) of a pair of strategies},
and obtain a recursive expression for it to make analysis easier.

\begin{definition}[score]
Let $S^{(A, r)}(f, g)$ and $S^{(B, r)}(f, g)$ be the \batter{}'s total score
and \bowler{}'s total cost, respectively, in the first $r$ rounds of the RUC game $(A, B)$
when the \batter{} uses strategy $f$ and the \bowler{} uses strategy $g$.
(Note that when $f$ and $g$ are randomized strategies,
$S^{(A, r)}(f, g)$ and $S^{(B, r)}(f, g)$ are random variables.)
For $C \in \{A, B\}$, let $e^{(C, r)}(f, g) \defeq \E(S^{(C, r)}(f, g))$
and $e^{(C,\infty)}(f, g) \defeq \lim_{r \to \infty} e^{(C,r)}(f, g)$.
\end{definition}

\begin{observation}
The sequence $[S^{(C,r)}(f, g)]_{r=0}^{\infty}$ is monotonically increasing,
so the sequence $[e^{(C,r)}(f, g)]_{r=0}^{\infty}$ is also monotonically increasing.
By the monotone convergence theorem,
$e^{(C,\infty)}(f, g) \in \mathbb{R}_{\ge 0} \cup \{\infty\}$,
i.e., the sequence $[e^{(C,r)}(f, g)]_{r=0}^{\infty}$ either
has a non-negative limit or is unbounded.
\end{observation}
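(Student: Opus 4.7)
The plan is to reduce the observation to two ingredients: (i) the random variables $S^{(C,r)}(f,g)$ are pointwise monotone in $r$ on every realization of the game, and (ii) the monotone convergence theorem for sequences of non-negative reals (applied to the expectation on the left, not to integrals). Since the payoff matrices $A$ and $B$ are assumed non-negative throughout the paper, each per-round contribution to the running total is non-negative, and once a collision occurs all subsequent per-round contributions are exactly zero. Both of these facts will let me conclude that $S^{(C,r+1)}(f,g) \ge S^{(C,r)}(f,g)$ on every sample path, which is the crux.

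For step (i), I would fix a realization (a deterministic pair of strategies together with a fixing of all randomness used by $f$ and $g$). In round $t+1$, let $i_{t+1}$ and $j_{t+1}$ be the actions taken. The difference $S^{(C,r+1)}(f,g) - S^{(C,r)}(f,g)$ is equal to $C[i_{r+1}, j_{r+1}] \ge 0$ if no collision occurred by round $r$, and is exactly $0$ otherwise (because the game has already terminated and no further score or cost is accumulated). In either case the difference is non-negative, which is the desired monotonicity of $S^{(C,r)}(f,g)$ in $r$.

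For step (ii), taking expectations preserves pointwise inequalities between random variables, so $e^{(C,r)}(f,g) = \E[S^{(C,r)}(f,g)]$ is a monotonically non-decreasing sequence in $\mathbb{R}_{\ge 0} \cup \{\infty\}$. Any monotonically non-decreasing sequence of extended non-negative reals has a limit in $\mathbb{R}_{\ge 0} \cup \{\infty\}$ — it equals the supremum of the sequence, which is either finite (in which case the sequence converges to it) or $\infty$ (in which case the sequence is unbounded). This gives $e^{(C,\infty)}(f,g) \in \mathbb{R}_{\ge 0} \cup \{\infty\}$, as claimed.

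There is no real obstacle here beyond being careful about the realization argument and the use of non-negativity of $A$ and $B$. The only subtle point is emphasizing that the "monotone convergence theorem" invoked is the elementary one for sequences, not the measure-theoretic one (although one could equally invoke the latter to commute the limit with the expectation, it is unnecessary for the statement as written since we only claim existence of the limit of $e^{(C,r)}$, not an exchange of limit and expectation).
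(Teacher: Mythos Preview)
Your proposal is correct and matches the paper's reasoning: the paper states this as an observation without proof, relying implicitly on exactly the non-negativity of the per-round contributions and the elementary monotone-convergence fact for non-decreasing real sequences that you spell out. Your careful distinction between the sequence version and the measure-theoretic version of the monotone convergence theorem is apt, since only the former is needed here.
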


\begin{observation}
\label{thm:score-nmem}
Let $f$ and $g$ be strategies of the \batterAndBowler{}, respectively.
Let $u \defeq f(\emptyset)$ and $v \defeq g(\emptyset)$. Then for $C \in \{A, B\}$,
\[ S^{(C,r)}(f, g) = \begin{cases}
C[u, v] + \boolone(u \neq v)S^{(C,r-1)}(f^{[v]}, g^{[u]}) & \textrm{ if } r > 0
\\ 0 & \textrm{ otherwise}\end{cases}. \]
\end{observation}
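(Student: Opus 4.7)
The plan is to unpack the definitions directly, case-splitting on whether $r=0$ or $r>0$ and, in the latter case, on whether the first-round actions collide. Because $S^{(C,r)}(f,g)$ is built pathwise from the realizations of $f$ and $g$, it suffices to verify the identity for deterministic strategies and then lift to randomized strategies via the definition of residuals under conditioning.

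First, the base case $r=0$ is immediate: no rounds are played, so $S^{(C,0)}(f,g)=0$ by definition. For $r>0$, I would look at the round-$1$ play: the \batter{} plays $u=f(\emptyset)$ and the \bowler{} plays $v=g(\emptyset)$, contributing $C[u,v]$ to the matrix-$C$ total. If $u=v$ a collision occurs, the game ends, no further rounds contribute, and the indicator $\boolone(u\neq v)=0$ correctly zeros out the recursive term, so the identity holds in this subcase.

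The substantive subcase is $u\neq v$, where the game continues for at most $r-1$ more rounds. Here I would set up the natural bijection between rounds $2,\ldots,r$ of the original game and rounds $1,\ldots,r-1$ of a fresh RUC game in which the \batter{} uses $f^{[v]}$ and the \bowler{} uses $g^{[u]}$. Concretely, if $[j_2,\ldots,j_{t-1}]$ are the \bowler{}'s actions in rounds $2,\ldots,t-1$, then in round $t$ the \batter{} plays $f([v,j_2,\ldots,j_{t-1}])=f^{[v]}([j_2,\ldots,j_{t-1}])$ by the definition of the residual, and symmetrically the \bowler{} plays in line with $g^{[u]}$. Thus corresponding rounds see identical actions and hence equal payoffs, giving $S^{(C,r)}(f,g)-C[u,v]=S^{(C,r-1)}(f^{[v]},g^{[u]})$ as desired.

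The only genuine subtlety---and the closest thing to an obstacle---is the randomized case, where $u$, $v$, $f^{[v]}$, and $g^{[u]}$ are all random objects. However, the definition of $f^{(I,J)}$ via conditioning on $\isResponse(f,I,J)$ is set up precisely so that sampling from $f^{[v]}$ agrees in distribution with taking the $[v]$-residual of a sample from $f$, and likewise for $g$. So the pathwise identity derived above lifts to an identity of random variables, and the claim follows.
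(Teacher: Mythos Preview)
Your argument is correct and matches the paper's treatment: the statement is labeled an \emph{Observation} and given no proof, precisely because it is an immediate unfolding of the definitions of the game and of the $K$-residual $f^K(K')\defeq f(K+K')$. Your first three paragraphs carry out that unfolding cleanly.

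One small remark: your final paragraph is unnecessary and slightly muddled. The observation is a \emph{pathwise} identity of random variables. For a randomized strategy $f$, the residual $f^{[v]}$ is defined samplewise by $f^{[v]}(K')=f([v]+K')$; no conditioning is involved at this stage. The conditioned residual $f^{(I,J)}$ only enters later, in \cref{thm:expected-score-nmem}, when one takes expectations and needs to decouple the residual from the first-round action. So you can simply drop the last paragraph: once the identity is verified for each sample path of $(f,g)$, it holds as an identity of random variables with nothing further to check.
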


\begin{restatable}{lemma}{rthmExpScoreNmem}
\label{thm:expected-score-nmem}
Let $f$ and $g$ be independent strategies of the \batterAndBowler{}, respectively.
Let $f(\emptyset) \sampledFrom \xvec$, $g(\emptyset) \sampledFrom \yvec$,
$S_x \defeq \support(\xvec)$, and $S_y \defeq \support(\yvec)$. Then for $C \in \{A, B\}$,
\[ e^{(C,r)}(f, g) = \begin{cases}
\displaystyle \xvec\trn C\yvec + \sum_{i \in S_x}\sum_{j \in S_y}\xvec_i\yvec_j
    \boolone(i \neq j)e^{(C,r-1)}(f^{([i],[j])}, g^{([j],[i])}) & \textrm{ if } r > 0
\\ 0 & \textrm{ otherwise} \end{cases}, \]
\[ e^{(C,\infty)}(f, g) =
\xvec\trn C\yvec + \sum_{i \in S_x}\sum_{j \in S_y}\xvec_i\yvec_j
    \boolone(i \neq j)e^{(C,\infty)}(f^{([i],[j])}, g^{([j],[i])}). \]
\end{restatable}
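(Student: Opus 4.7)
The plan is to prove the finite-$r$ identity by taking expectations of the pointwise recursion in \cref{thm:score-nmem}, and then pass $r \to \infty$ via monotone convergence to get the $\infty$-case. Setting $u \defeq f(\emptyset)$ and $v \defeq g(\emptyset)$ (so $u \sampledFrom \xvec$, $v \sampledFrom \yvec$, independently by independence of $f$ and $g$), the first term of the recursion contributes $\E[C[u,v]] = \sum_{i,j}\xvec_i\yvec_j C[i,j] = \xvec\trn C\yvec$. For the second term I would apply the tower property, conditioning on the joint value of $(u,v)$: since $\Pr(u=i, v=j) = \xvec_i\yvec_j$ is positive exactly on $S_x \times S_y$, the expectation splits into a sum over $(i,j) \in S_x \times S_y$ with the indicator $\boolone(i \neq j)$ removing the collision case.

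The hard part will be to identify the conditional expectation inside each summand with $e^{(C,r-1)}(f^{([i],[j])}, g^{([j],[i])})$. I would unpack the definition of $\isResponse$: for single-element lists $[i]$ and $[j]$ the event $\isResponse(f, [i], [j])$ reduces to $f(\emptyset) = i$ (at round $1$ no opponent-history is consulted, so the content of $[j]$ does not enter the condition). Hence by definition the conditional distribution of $f^{[j]}$ given $u=i$ is exactly that of $f^{([i],[j])}$, and symmetrically $g^{[i]}$ given $v=j$ is $g^{([j],[i])}$ in distribution. Because $f$ and $g$ are independent and $u$, $v$ are measurable with respect to $f$, $g$ separately, conditioning on $\{u=i, v=j\}$ also preserves independence, so the conditional joint law of $(f^{[j]}, g^{[i]})$ matches that of the independent pair $(f^{([i],[j])}, g^{([j],[i])})$. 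This yields the desired equality of conditional expectations; the base case $e^{(C,0)}(f,g) = 0$ is immediate from $S^{(C,0)} = 0$.

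For the infinite-horizon identity, both sides are monotone limits in $r$. The left-hand side converges to $e^{(C,\infty)}(f,g)$ by definition; on the right, $\xvec\trn C\yvec$ does not depend on $r$, and the double sum has at most $n^2$ terms, each a nondecreasing sequence in $r$, so the limit commutes with the finite sum and yields the claimed recursion for $e^{(C,\infty)}$.
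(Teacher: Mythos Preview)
Your proposal is correct and follows essentially the same route as the paper: take expectations of the recursion in \cref{thm:score-nmem}, condition on $(u,v)$, identify the conditional law of $(f^{[j]},g^{[i]})$ with that of the independent pair $(f^{([i],[j])},g^{([j],[i])})$, and then pass $r\to\infty$ by monotonicity. The only cosmetic difference is that the paper packages the ``joint conditioning preserves independence and gives the residual strategies'' step into a separate lemma (\cref{thm:cond-on-trn}), whereas you argue it inline.
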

\begin{proof}[Proof sketch]
Follows from \cref{thm:score-nmem}.
See \cref{sec:non-meml-extra:str} for the full proof.
\end{proof}

\subsection{Existence of Nash Equilibrium}

To see whether a pair of stationary strategies can give us a Nash equilibrium,
we first investigate (in the next two lemmas) the upper and lower bounds on a player's payoff
when she is free to play any strategy (even non-stationary ones)
and her opponent uses a stationary strategy with full support.

\begin{lemma}
\label{thm:batter-nmem-ne}
Let $(A, B)$ be an RUC game. Let $\xhat \in \simplex_n$ such that $\support(\xhat) = \toN$.
Let $\alpha \defeq \min_{i=1}^n (B\trn\xhat)_i/\xhat_i$
and $\beta \defeq \max_{i=1}^n (B\trn\xhat)_i/\xhat_i$.
Then for any strategy $g$, we have $\alpha \le e^{(B,\infty)}(\xhat, g) \le \beta$.
\end{lemma}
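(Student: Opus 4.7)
My plan is to prove the bound by induction on a truncation parameter $r$ and then pass to the limit. Let $V_r(g) \defeq e^{(B, r)}(\xhat, g)$ denote the bowler's expected cost over the first $r$ rounds when facing $\stdmeml(\xhat)$, and let $q_r(g)$ denote the probability that no collision occurs in the first $r$ rounds (with the same strategies). I would show by induction on $r$ that
\[ \alpha\bigl(1 - q_r(g)\bigr) \le V_r(g) \le \beta\bigl(1 - q_r(g)\bigr) \]
holds for every bowler strategy $g$. Taking $r \to \infty$ would then give the lemma, provided $q_r(g) \to 0$.

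The key step of the induction uses \cref{thm:expected-score-nmem} together with \cref{thm:cond-on-meml}. If $\yvec$ is the distribution of $g(\emptyset)$, those lemmas yield the recursion
\[ V_r(g) = \xhat\trn B\yvec + \sum_{i \neq j}\xhat_i\,\yvec_j\,V_{r-1}\!\left(g^{([j],[i])}\right), \]
where we crucially use that the batter's residual strategy after any round-$1$ history is again $\stdmeml(\xhat)$ (so the first argument of $V_{r-1}$ stays $\xhat$). The analogous recursion for the survival probability is $q_r(g) = \sum_{i\neq j}\xhat_i\,\yvec_j\,q_{r-1}(g^{([j],[i])})$. Substituting the inductive hypothesis into the recursion for $V_r$ and using the elementwise bound $(B\trn\xhat)_i \le \beta\,\xhat_i$ to get $\xhat\trn B\yvec \le \beta\,\xhat\trn\yvec$ (which is also a special case of \cref{thm:sum-ratio-min-max}) collapses to $V_r(g) \le \beta(1 - q_r(g))$. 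The lower bound is symmetric.

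Finally, I would establish $q_r(g) \to 0$ using the full-support assumption on $\xhat$: if $m \defeq \min_i \xhat_i > 0$, then in every round, regardless of the distribution of the bowler's action at that round, the conditional collision probability is at least $m$ (since batter's action is drawn independently from $\xhat$), yielding $q_r(g) \le (1-m)^r$. Passing to the limit on both sides of the inductive inequality and using that $e^{(B,\infty)}(\xhat,g) = \lim_r V_r(g)$ by definition gives the desired $\alpha \le e^{(B,\infty)}(\xhat,g) \le \beta$. The main care point, I expect, is the closure of the recursion at step two: the induction only goes through because $\stdmeml(\xhat)^{(I,J)}$ has the same distribution as $\stdmeml(\xhat)$, which is precisely \cref{thm:cond-on-meml}; without it the residual game would not reduce to the same instance and the induction would fail. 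Everything else is a telescoping calculation.
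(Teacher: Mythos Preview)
Your argument is correct. The upper-bound half is essentially the paper's: both induct on $r$ using the recursion of \cref{thm:expected-score-nmem} together with $\xhat\trn B\yvec \le \beta\,\xhat\trn\yvec$, the only cosmetic difference being that you carry the sharper invariant $V_r(g)\le\beta(1-q_r(g))$ rather than just $V_r(g)\le\beta$.

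The lower bound is where the two proofs genuinely diverge. The paper does \emph{not} induct on $r$; instead it works directly at $r=\infty$, lets $\mu\defeq\inf_g e^{(B,\infty)}(\xhat,g)$, picks a near-optimal $\ghat$ with cost $\le\mu+\eps$, and unrolls one step of the recursion to obtain $\mu+\eps\ge\mu+\xhat\trn\yhat(\alpha-\mu)$, whence $\alpha-\mu\le\eps/\min_i\xhat_i$ for every $\eps>0$. Your route instead strengthens the finite-$r$ induction by introducing the survival probability $q_r(g)$, proves $\alpha(1-q_r(g))\le V_r(g)$ symmetrically with the upper bound, and then uses the full-support hypothesis to force $q_r(g)\le(1-\min_i\xhat_i)^r\to 0$. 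Your approach is more symmetric (one induction handles both inequalities) and makes the role of full support quantitatively explicit via the survival bound; the paper's $\inf$-plus-$\eps$ argument is shorter and avoids introducing $q_r$, at the cost of treating the two inequalities asymmetrically. Both rely on \cref{thm:cond-on-meml} in exactly the way you identify.
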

\begin{proof}
By the definition of $\alpha$ and $\beta$, and by \cref{thm:sum-ratio-min-max},
we get that for any $\yvec \in \simplex_n$, we have
$\alpha\xhat\trn\yvec \le \xhat\trn B\yvec \le \beta\xhat\trn\yvec$.

Define the predicate $P(r): \forall g, e^{(B,r)}(\xhat, g) \le \beta$.
We will prove $P(r)$ by induction, and that would imply $P(\infty)$.
The base case holds, since $e^{(B,0)}(\xhat, g) = 0 \le \beta$.
Now fix $r \ge 1$, $g$, and let $g(\emptyset) \sampledFrom \yvec$
and $S_y \defeq \support(\yvec)$. Then by \cref{thm:expected-score-nmem},
\begin{align*}
e^{(B,r)}(\xhat, g) &= \xhat\trn B\yvec + \sum_{i=1}^n\sum_{j \in S_y} \xhat_i\yvec_j
    \boolone(i \neq j)e^{(B,r-1)}(\xhat, g^{([j],[i])})
\\ &\le \beta\xhat\trn\yvec + \sum_{i=1}^n\sum_{j=1}^n \xhat_i\yvec_j\boolone(i \neq j)\beta
    \;= \beta.
    \tag{by inductive hypothesis}
\end{align*}
This proves $P(r)$, and hence, also proves $P(\infty)$.

Next, we need to prove that $e^{(B,\infty)}(\xhat, g) \ge \alpha$ for all $g$.
Let $\mu \defeq \inf_g e^{(B,\infty)}(\xhat, g)$. Pick any $\eps > 0$.
Let $\ghat$ be a strategy such that $e^{(B,\infty)}(\xhat, \ghat) \le \mu + \eps$.
Let $\ghat(\emptyset) \sampledFrom \yhat$ and $S_y \defeq \support(\yhat)$.
Let $\gamma \defeq \min_{i=1}^n \xhat_i$.
Then $\xhat\trn\yhat \ge \gamma$, and by \cref{thm:expected-score-nmem},
\begin{align*}
\mu + \eps &\ge e^{(B,\infty)}(\xhat, \ghat)
= \xhat\trn B\yhat + \sum_{i=1}^n\sum_{j \in S_y}\xhat_i\yhat_j
    \boolone(i \neq j)e^{(B,\infty)}(\xhat, \ghat^{([j],[i])})
\\ &\ge \alpha\xhat\trn\yhat + \sum_{i=1}^n\sum_{j=1}^n \xhat_i\yhat_j\boolone(i \neq j)\mu
\;= \mu + \xhat\trn\yhat(\alpha - \mu).
\end{align*}
Since $\support(\xhat) = \toN$, we get $\gamma > 0$ and
$\alpha - \mu \le \eps/\xhat\trn\yhat \le \eps/\gamma$.
Since this is true for all $\eps > 0$, we get $\alpha \le \mu$.
Hence, $\alpha \le e^{(B,\infty)}(\xhat, g) \le \beta$ for every strategy $g$.
\end{proof}

\begin{lemma}
\label{thm:bowler-nmem-ne}
Let $(A, B)$ be an RUC game. Let $\yhat \in \simplex_n$ such that $\support(\yhat) = \toN$.
Let $\alpha \defeq \min_{i=1}^n (A\yhat)_i/\yhat_i$
and $\beta \defeq \max_{i=1}^n (A\yhat)_i/\yhat_i$.
Then for any strategy $f$, we have $\alpha \le e^{(A,\infty)}(f, \yhat) \le \beta$.
\end{lemma}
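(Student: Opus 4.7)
The plan is to mirror the proof of \cref{thm:batter-nmem-ne}, swapping the roles of the two players and of the matrices $A$ and $B$. We fix the \bowler{}'s full-support stationary strategy $\yhat$ and bound the \batter{}'s expected total score $e^{(A,\infty)}(f,\yhat)$ for an arbitrary (possibly non-stationary) \batter{} strategy $f$. The structural fact that makes the recursion tractable is \cref{thm:cond-on-meml}: conditioning $\yhat$ on any history of responses leaves its distribution unchanged, so in the expansion provided by \cref{thm:expected-score-nmem} the residual $\yhat^{([j],[i])}$ has the same distribution as $\yhat$. Also, from the definitions of $\alpha$ and $\beta$ we have $\alpha\yhat_i \le (A\yhat)_i \le \beta\yhat_i$ for every $i$, hence $\alpha\xvec\trn\yhat \le \xvec\trn A\yhat \le \beta\xvec\trn\yhat$ for every $\xvec \in \simplex_n$.

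For the upper bound $e^{(A,\infty)}(f,\yhat) \le \beta$, I would prove by induction on $r$ the statement $P(r):\ \forall f,\; e^{(A,r)}(f,\yhat) \le \beta$. The base case $r=0$ is immediate since $e^{(A,0)}(f,\yhat)=0 \le \beta$. For the inductive step, let $f(\emptyset)\sampledFrom\xvec$ and expand via \cref{thm:expected-score-nmem}; the per-round contribution is bounded by $\beta\xvec\trn\yhat$ and each residual expectation is at most $\beta$ (using the inductive hypothesis on $f^{([i],[j])}$ and \cref{thm:cond-on-meml} to replace $\yhat^{([j],[i])}$ by $\yhat$), which combine to give $e^{(A,r)}(f,\yhat) \le \beta\xvec\trn\yhat + (1-\xvec\trn\yhat)\beta = \beta$. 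Taking $r \to \infty$ yields $P(\infty)$.

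For the lower bound $e^{(A,\infty)}(f,\yhat) \ge \alpha$, direct induction on $r$ fails because the base case $r=0$ gives $0$, which may be below $\alpha$. I would instead copy the infimum trick from \cref{thm:batter-nmem-ne}: set $\mu \defeq \inf_f e^{(A,\infty)}(f,\yhat)$, fix $\eps > 0$, and choose $\fhat$ with $e^{(A,\infty)}(\fhat,\yhat) \le \mu+\eps$ and $\fhat(\emptyset)\sampledFrom\xhat$. Expanding via \cref{thm:expected-score-nmem}, using $\xhat\trn A\yhat \ge \alpha\xhat\trn\yhat$ and $e^{(A,\infty)}(\fhat^{([i],[j])},\yhat) \ge \mu$ for each residual strategy, gives $\mu+\eps \ge \mu + \xhat\trn\yhat(\alpha-\mu)$. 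Because $\support(\yhat) = \toN$, we have $\xhat\trn\yhat \ge \min_j \yhat_j > 0$ uniformly in $\xhat$, so letting $\eps \to 0$ yields $\alpha \le \mu$.

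The argument is mechanical once the symmetry is spotted; the only point warranting care is that the lower bound on $\xhat\trn\yhat$ must be independent of $\fhat$, so that the $\eps \to 0$ limit is uniform. This is the mirror of the situation in \cref{thm:batter-nmem-ne}, where the role of ``full support'' was played by $\xhat$ rather than $\yhat$; here the full-support assumption on $\yhat$ suffices because $\xhat\trn\yhat = \sum_i \xhat_i\yhat_i \ge \min_j \yhat_j$ for every probability vector $\xhat$.
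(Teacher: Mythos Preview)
Your proposal is correct and is exactly the mirror argument the paper intends; indeed, the paper's own proof of \cref{thm:bowler-nmem-ne} simply reads ``(Similar to the proof of \cref{thm:batter-nmem-ne}).'' Your explicit invocation of \cref{thm:cond-on-meml} to replace $\yhat^{([j],[i])}$ by $\yhat$ and your observation that the uniform lower bound on $\xhat\trn\yhat$ comes from $\min_j \yhat_j$ (the fixed player's strategy) are precisely the points one needs to be careful about, and you have handled both correctly.
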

\begin{proof}
(Similar to the proof of \cref{thm:batter-nmem-ne}.)
\end{proof}

Next, we identify sufficient conditions (as in \cref{thm:pos-eig-gives-ne} in \cref{sec:meml})
to make the upper and lower bounds in \cref{thm:batter-nmem-ne,thm:bowler-nmem-ne} coincide,
which gives us a Nash equilibrium.

\begin{lemma}
\label{thm:pos-eig-gives-ne-2}
Consider the RUC game $(A, B)$.
Let $(\alpha, \ystar)$ be an eigenpair for $A$ such that
$\|\ystar\|_1 = 1$ and $\ystar_i > 0$ for all $i \in \toN$.
Let $(\beta, \xstar)$ be an eigenpair for $B\trnF$ such that
$\|\xstar\|_1 = 1$ and $\xstar_i > 0$ for all $i \in \toN$.
Then $(\xstar, \ystar)$ is a Nash equilibrium.
Moreover, for any strategies $f$ and $g$, we have
$e^{(A,\infty)}(f, \ystar) = \alpha$ and $e^{(B,\infty)}(\xstar, g) = \beta$.
\end{lemma}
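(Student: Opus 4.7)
The plan is to observe that this lemma is essentially the ``tightness'' case of \cref{thm:batter-nmem-ne,thm:bowler-nmem-ne}: the hypothesis that $\ystar$ and $\xstar$ are eigenvectors forces the lower and upper bounds in those lemmas to coincide, collapsing the entire payoff to a single value that is independent of the opponent's strategy. This is the exact analogue of how \cref{thm:pos-eig-gives-ne} was derived from \cref{thm:expected-score} in the stationary setting.

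First, I would apply \cref{thm:bowler-nmem-ne} with $\yhat \defeq \ystar$. Since $\ystar$ has full support by hypothesis, the lemma applies, and since $A\ystar = \alpha\ystar$, we have $(A\ystar)_i/\ystar_i = \alpha$ for every $i \in \toN$. Thus the quantities called $\alpha$ and $\beta$ inside \cref{thm:bowler-nmem-ne} both equal our $\alpha$, and the sandwich inequality yields $e^{(A,\infty)}(f, \ystar) = \alpha$ for every strategy $f$ of the \batter{}. Symmetrically, applying \cref{thm:batter-nmem-ne} with $\xhat \defeq \xstar$ and using $B\trn\xstar = \beta\xstar$ gives $(B\trn\xstar)_j/\xstar_j = \beta$ for every $j$, so $e^{(B,\infty)}(\xstar, g) = \beta$ for every strategy $g$ of the \bowler{}. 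These two equalities are precisely the ``moreover'' part of the lemma.

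Finally, to conclude that $(\xstar, \ystar)$ is a Nash equilibrium, I observe that for any alternative strategy $f$ the \batter{} might consider, $e^{(A,\infty)}(f, \ystar) = \alpha = e^{(A,\infty)}(\xstar, \ystar)$, so the \batter{} cannot strictly improve by deviating. The same reasoning with $B$ in place of $A$ shows the \bowler{} cannot strictly improve either. I expect no real obstacle in this argument since both \cref{thm:batter-nmem-ne,thm:bowler-nmem-ne} are already established; the only thing to check carefully is that the hypotheses of those lemmas (full support of the fixed stationary strategy) are satisfied, which follows directly from the assumptions $\xstar_i > 0$ and $\ystar_i > 0$ for all $i$.
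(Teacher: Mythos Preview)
Your proposal is correct and matches the paper's approach exactly: the paper's proof is the single line ``Follows from \cref{thm:batter-nmem-ne,thm:bowler-nmem-ne},'' and you have spelled out precisely how, by observing that the eigenvector hypotheses collapse the $\min$ and $\max$ in those lemmas to $\alpha$ and $\beta$ respectively.
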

\begin{proof}
Follows from \cref{thm:batter-nmem-ne,thm:bowler-nmem-ne}.
\end{proof}

Next, we show that conditions of \cref{thm:pos-eig-gives-ne-2}
can be satisfied using the Perron-Frobenius theorem (\cref{thm:perron-frobenius}),
so a Nash equilibrium given by stationary strategies always exists.

\begin{theorem}
\label{thm:ne-exists-2}
Let $(A, B)$ be an RUC game where $A$ and $B$ are irreducible (c.f.~\cref{defn:irred}).
Let $(\rho_A, \rho_B, \xstar, \ystar) = \perronSolution(A, B)$.
Then $(\xstar, \ystar)$ is a Nash equilibrium.
Moreover, for any strategies $f$ and $g$, we have
$e^{(A,\infty)}(f, \ystar) = \rho_A$ and $e^{(B,\infty)}(\xstar, g) = \rho_B$.
\end{theorem}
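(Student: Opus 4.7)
The plan is essentially to combine the Perron--Frobenius theorem (\cref{thm:perron-frobenius}) with \cref{thm:pos-eig-gives-ne-2}, which does almost all of the work. Since $A$ is irreducible and nonnegative, Perron--Frobenius gives a positive eigenvalue $\rho_A$ and a right eigenvector $\ystar$ with $\|\ystar\|_1 = 1$ and $\ystar_i > 0$ for all $i \in \toN$; by definition of $\perronSolution(A, B)$, this is exactly the $\ystar$ in the statement. Applying Perron--Frobenius to $B$ analogously, I get a positive eigenvalue $\rho_B$ and a left eigenvector $\xstar$ of $B$ (equivalently, a right eigenvector of $B\trnF$ with eigenvalue $\rho_B$) with $\|\xstar\|_1 = 1$ and $\xstar_i > 0$ for all $i$.

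With these two eigenpairs in hand, the pairs $(\rho_A, \ystar)$ and $(\rho_B, \xstar)$ satisfy exactly the hypotheses required by \cref{thm:pos-eig-gives-ne-2}: positive eigenvector of $A$ summing to $1$, and positive eigenvector of $B\trnF$ summing to $1$, respectively. Plugging them in, \cref{thm:pos-eig-gives-ne-2} yields that $(\xstar, \ystar)$ is a Nash equilibrium of the (non-stationary) RUC game $(A, B)$, and that $e^{(A,\infty)}(f, \ystar) = \rho_A$ and $e^{(B,\infty)}(\xstar, g) = \rho_B$ for arbitrary strategies $f$ and $g$. Thus the proof is a one-liner at this level, of the same form as the proof of \cref{thm:ne-exists} in the SRUC setting.

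There is no real obstacle at this stage, since the heavy lifting was already done in \cref{thm:batter-nmem-ne,thm:bowler-nmem-ne} (which in turn power \cref{thm:pos-eig-gives-ne-2}): those lemmas bound $e^{(B,\infty)}(\xstar, g)$ between $\min_i (B\trn\xstar)_i/\xstar_i$ and $\max_i (B\trn\xstar)_i/\xstar_i$ for any strategy $g$ when $\xstar$ has full support, and similarly for the \batter{}. When $\xstar$ is a Perron eigenvector of $B\trnF$, both bounds collapse to $\rho_B$, regardless of whether $g$ is stationary. The only thing worth emphasizing in the write-up is that \cref{thm:perron-frobenius} really does supply \emph{positive} (full-support) eigenvectors—this is what lets us invoke \cref{thm:batter-nmem-ne,thm:bowler-nmem-ne} through \cref{thm:pos-eig-gives-ne-2} and is precisely where the irreducibility hypothesis on $A$ and $B$ enters.
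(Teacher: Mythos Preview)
Your proposal is correct and matches the paper's own proof: the paper simply notes that the theorem follows from \cref{thm:pos-eig-gives-ne-2}, with the required full-support eigenpairs supplied by the Perron--Frobenius theorem (\cref{thm:perron-frobenius}). Your additional commentary on why irreducibility is needed (to guarantee strictly positive Perron vectors so that \cref{thm:batter-nmem-ne,thm:bowler-nmem-ne} apply) is accurate and in the same spirit.
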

\begin{proof}
Follows from \cref{thm:pos-eig-gives-ne-2}.
\end{proof}

Note that for any stationary strategies $\xvec$ and $\yvec$,
if $(\xvec, \yvec)$ is a Nash equilibrium for the RUC game $(A, B)$,
then $(\xvec, \yvec)$ is also a Nash equilibrium for the SRUC game $(A, B)$.
This is because if a player cannot improve her payoff by switching to a different strategy,
then she also cannot improve her payoff by switching to a different stationary strategy.
Hence, \cref{thm:ne-exists} is a corollary of \cref{thm:ne-exists-2}.

Since computing Nash equilibrium exactly may be hard,
we consider approximate Nash equilibria. To do this,
we generalize the corresponding result about SRUC games
(\cref{thm:approx-ne} in \cref{sec:meml}) to RUC games.

\begin{restatable}[approximate Nash equilibrium]{theorem}{rthmApproxNeNmem}
\label{thm:approx-ne-nmem}
Let $(\stdmeml(\xstar), \stdmeml(\ystar))$ be a Nash equilibrium for the RUC game
$(A, B)$ such that $\support(\xstar) = \support(\ystar) = \toN$.
For $\delta \in [0, 1)$,
let $\xhat, \yhat \in \simplex_n$ such that for all $i \in \toN$,
$|\xhat_i - \xstar_i| \le \delta\xstar_i$ and $|\yhat_i - \ystar_i| \le \delta\ystar_i$.
Then $(\xhat, \yhat)$ is a $\frac{4\delta}{(1-\delta)^2}$-approximate Nash equilibrium for the RUC game $(A, B)$.
\end{restatable}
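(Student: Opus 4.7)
The plan is to mirror the proof of \cref{thm:approx-ne} almost verbatim, substituting the SRUC-only formula from \cref{thm:expected-score} with the sandwich bounds \cref{thm:batter-nmem-ne,thm:bowler-nmem-ne}, which already control $e^{(C,\infty)}(\xhat, g)$ and $e^{(C,\infty)}(f, \yhat)$ against \emph{arbitrary} (possibly non-stationary) opponent strategies. So the whole point is that once I know $\xstar$ and $\ystar$ are eigenvectors of $B\trn$ and $A$, the non-stationary case reduces to essentially the same calculation as the stationary case.

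First I would observe that every Nash equilibrium of the RUC game $(A,B)$ consisting of two stationary strategies is automatically a Nash equilibrium of the SRUC game $(A,B)$, because a stationary deviation is a special case of an unrestricted deviation (this is noted right after \cref{thm:ne-exists-2}). Applying \cref{thm:meml-ne-is-eigen} to $(\xstar, \ystar)$, using the full-support hypothesis, therefore yields that $\xstar$ is an eigenvector of $B\trn$ with eigenvalue $\rho_B \defeq e^{(B,\infty)}(\stdmeml(\xstar), \stdmeml(\ystar))$ and $\ystar$ is an eigenvector of $A$ with eigenvalue $\rho_A \defeq e^{(A,\infty)}(\stdmeml(\xstar), \stdmeml(\ystar))$.

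Next I would invoke \cref{thm:approx-eigen-norm} twice. Applied to $\xstar$ with matrix $B$, it gives $(B\trn\xhat)_j/\xhat_j \in [\tfrac{1-\delta}{1+\delta}\rho_B, \tfrac{1+\delta}{1-\delta}\rho_B]$ for every $j \in \toN$; symmetrically for $(A\yhat)_j/\yhat_j$ relative to $\rho_A$ (here I would apply the lemma with the roles of $\xstar$ and $\xhat$ played by $\ystar$ and $\yhat$, and matrix $A\trn$). Feeding these two sandwiches into \cref{thm:batter-nmem-ne,thm:bowler-nmem-ne} then gives, for any \bowler{} strategy $g$ and any \batter{} strategy $f$,
\[ e^{(B,\infty)}(\xhat, g) \in \left[\tfrac{1-\delta}{1+\delta}\rho_B,\ \tfrac{1+\delta}{1-\delta}\rho_B\right], \qquad e^{(A,\infty)}(f, \yhat) \in \left[\tfrac{1-\delta}{1+\delta}\rho_A,\ \tfrac{1+\delta}{1-\delta}\rho_A\right]. \]

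Finally, setting $\eps \defeq 4\delta/(1-\delta)^2$ and using the identity $(1+\delta)^2/(1-\delta)^2 = 1 + \eps$, the upper endpoint of the first interval (specialized to $g = \stdmeml(\yhat)$) combined with the general lower endpoint gives $e^{(B,\infty)}(\xhat, g) \ge (1+\eps)^{-1}\, e^{(B,\infty)}(\xhat, \yhat)$ for all $g$, and analogously $e^{(A,\infty)}(f, \yhat) \le (1+\eps)\, e^{(A,\infty)}(\xhat, \yhat)$ for all $f$. These are exactly the two clauses of an $\eps$-approximate Nash equilibrium in the RUC game. The only non-routine step is really step one: recognizing that an RUC-NE in stationary strategies is automatically an SRUC-NE, which is what unlocks the eigenvector characterization from \cref{thm:meml-ne-is-eigen}; after that, \cref{thm:batter-nmem-ne,thm:bowler-nmem-ne} handle the entire non-stationary strategy space for free, so no new analysis of history-dependent strategies is needed.
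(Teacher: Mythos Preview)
Your proposal is correct and follows essentially the same route as the paper's proof: use \cref{thm:meml-ne-is-eigen} to get the eigenpair structure, \cref{thm:approx-eigen-norm} to bound the ratios $(B\trn\xhat)_j/\xhat_j$ and $(A\yhat)_i/\yhat_i$, then \cref{thm:batter-nmem-ne,thm:bowler-nmem-ne} to sandwich $e^{(B,\infty)}(\xhat,g)$ and $e^{(A,\infty)}(f,\yhat)$ for arbitrary strategies, and finish with the identity $(1+\delta)^2/(1-\delta)^2 = 1+\eps$. You even make explicit the step that the paper leaves implicit, namely that a stationary NE of the RUC game is automatically an SRUC NE, which is what justifies invoking \cref{thm:meml-ne-is-eigen}.
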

\begin{proof}[Proof sketch]
The proof is similar to \cref{thm:approx-ne}.
It follows from \cref{thm:meml-ne-is-eigen,thm:approx-eigen-norm,thm:batter-nmem-ne,thm:bowler-nmem-ne}.
See \cref{sec:non-meml-extra:approx} for the full proof.
\end{proof}

\subsection{Uniqueness of Nash Equilibrium up to Equivalence}
\label{sec:non-meml:equiv}

Before we investigate the uniqueness of Nash equilibrium for RUC games,
we first describe a phenomenon where two different randomized strategies can behave similarly.

\begin{definition}[collisions]
For lists $I \defeq [i_1, \ldots, i_k]$ and $J \defeq [j_1, \ldots, j_k]$,
let $\colls(I, J)$ be the number of collisions
if the \batter{} and \bowler{} play actions $I$ and $J$, respectively.
Formally, $\colls(I, J) \defeq \sum_{t=1}^k \boolone(i_t=j_t)$.

The pair $(I, J)$ is called \emph{collision-consistent} if $i_t = j_t$ for all $t < k$.
\end{definition}

\begin{definition}
Two randomized strategies $f_1$ and $f_2$ are said to be \emph{equivalent}
if for every collision-consistent pair $(I, J)$, we have
$\Pr(\isResponse(f_1, I, J)) = \Pr(\isResponse(f_2, I, J))$.
\end{definition}

\begin{example}
Let $n = 3$. For any $u, v \in \{1, 2\}$, let $f_{u,v}$ be the deterministic strategy where
$f_{u,v}(J) = u$ if $J = [2]$, $f_{u,v}(J) = v$ if $J = [3]$, and $f_{u,v}(J) = 1$ otherwise.

For any $p \in [0, 1/2]$, let $h_p$ be a randomized strategy where
$\Pr(h_p = f_{1,1}) = \Pr(h_p = f_{2,2}) = p$ and $\Pr(h_p = f_{1,2}) = \Pr(h_p = f_{2,1}) = 1/2-p$.
Then for any lists $I$ and $J$ of the same length,
$\Pr(\isResponse(h_p, I, J))$ is not a function of $p$.
Hence, $h_0$ and $h_{1/2}$ are equivalent.
\end{example}

We now show that given a pair of strategies, replacing each strategy by an equivalent
strategy makes no difference to anyone's payoff.

\begin{lemma}
\label{thm:eqv-score}
Let $f_1$ and $f_2$ be equivalent strategies of the \batter{}
and $g_1$ and $g_2$ be equivalent strategies of the \bowler{}.
Then for the RUC game $(A, B)$, we have $e^{(C,r)}(f_1, g_1) = e^{(C,r)}(f_2, g_2)$
for all $C \in \{A, B\}$ and $r \in \mathbb{Z}_{\ge 0} \cup \{\infty\}$.
\end{lemma}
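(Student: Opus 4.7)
My plan is to induct on $r$ using the recursive expression for $e^{(C,r)}$ given in \cref{thm:expected-score-nmem}. The base case $r = 0$ is immediate since both sides equal $0$. For the inductive step, the recursion reads
\[
e^{(C,r)}(f, g) = \xvec\trn C\yvec + \sum_{i \in S_x}\sum_{j \in S_y} \xvec_i \yvec_j \boolone(i \neq j)\, e^{(C,r-1)}(f^{([i],[j])}, g^{([j],[i])}),
\]
where $\xvec, \yvec$ are the distributions of $f(\emptyset), g(\emptyset)$ and $S_x, S_y$ are their supports. Collision-consistency is vacuous for singleton pairs, so $\Pr(\isResponse(f_1, [i], [j])) = \Pr(\isResponse(f_2, [i], [j]))$ forces $\Pr(f_1(\emptyset) = i) = \Pr(f_2(\emptyset) = i)$ for every $i$. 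Hence $f_1$ and $f_2$ induce the same $\xvec$ and, analogously, $g_1$ and $g_2$ induce the same $\yvec$. So the per-round term $\xvec\trn C\yvec$ and the weights $\xvec_i \yvec_j$ agree across the two pairs, leaving only the residual scores to compare.

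The crux is to show that equivalence is preserved under residuals: if $f_1$ and $f_2$ are equivalent, then so are $f_1^{([i],[j])}$ and $f_2^{([i],[j])}$ for every $i \neq j$ with $\xvec_i > 0$ (symmetrically for the bowler). From the definitions I would first establish the identity
\[
\isResponse(f, [i]+I', [j]+J') = \isResponse(f, [i], [j]) \cap \isResponse(f^{[j]}, I', J'),
\]
which by the definition of the residual yields
\[
\Pr(\isResponse(f^{([i],[j])}, I', J')) = \frac{\Pr(\isResponse(f, [i]+I', [j]+J'))}{\Pr(\isResponse(f, [i], [j]))}.
\]
If $(I', J')$ is collision-consistent of length $k'$ and $i \neq j$, then prepending $([i], [j])$ yields a collision-consistent pair of length $k'+1$. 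Equivalence of $f_1, f_2$ therefore applies to both the numerator and the denominator, giving equal values for the residual.

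With this in hand, the inductive hypothesis gives $e^{(C,r-1)}(f_1^{([i],[j])}, g_1^{([j],[i])}) = e^{(C,r-1)}(f_2^{([i],[j])}, g_2^{([j],[i])})$ for every $(i,j)$ contributing to the sum, closing the induction for finite $r$. For $r = \infty$, the definition $e^{(C,\infty)} = \lim_{r \to \infty} e^{(C,r)}$ lets us pass to the limit termwise in $\mathbb{R}_{\ge 0} \cup \{\infty\}$. The main subtlety I anticipate is the feasibility bookkeeping: one must verify that $\Pr(\isResponse(f, [i], [j]))$ depends only on $i$, so that the $i \notin S_x$ summands drop out and the undefined residuals never have to be invoked; and that the prepending operation $(I', J') \mapsto ([i]+I', [j]+J')$ genuinely preserves collision-consistency when $i \neq j$. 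Once these formalities are in place, the induction closes cleanly.
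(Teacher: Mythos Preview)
Your induction via \cref{thm:expected-score-nmem} is correct, and the key step---that equivalence is inherited by the one-step residuals $f^{([i],[j])}$ when $i \neq j$ and $\xvec_i > 0$---is argued properly: the identity $\isResponse(f, [i]+I', [j]+J') = \isResponse(f, [i], [j]) \cap \isResponse(f^{[j]}, I', J')$ holds, prepending a non-colliding step to a collision-consistent pair preserves collision-consistency, and the denominator $\Pr(\isResponse(f,[i],[j])) = \xvec_i$ is positive on $S_x$. The feasibility bookkeeping you flag is exactly what is needed and nothing more.

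This is a genuinely different route from the paper's. The paper argues globally: it observes that $e^{(C,r)}(f,g)$ can be written as a sum over collision-consistent histories $(I,J)$ of length at most $r$, each weighted by $\Pr(\isResponse(f,I,J))\Pr(\isResponse(g,J,I))$ times a payoff term, so equivalence of $f_1,f_2$ and of $g_1,g_2$ immediately forces equality without any recursion. Your argument instead peels off one round at a time and pushes equivalence through the residuals. The paper's approach is shorter once the history expansion is accepted, but that expansion is only sketched; your approach is longer yet fully self-contained, and it yields the auxiliary fact that equivalence is closed under taking residuals---a structural statement that the paper's proof does not isolate.
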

\begin{proof}[Proof sketch]
For any $r \in \mathbb{Z}_{\ge 0}$, the expected score in the first $r$ rounds of a pair of strategies
only depends on the distribution of collision-consistent pairs $(I, J)$ where $|I|=|J|\le r$.
Hence, $e^{(C,r)}(f_1, g_1) = e^{(C,r)}(f_2, g_2)$.
Take the limit $r \to \infty$ to get $e^{(C,\infty)}(f_1, g_1) = e^{(C,\infty)}(f_2, g_2)$.
\end{proof}
A corollary to \cref{thm:eqv-score} is that if $(f_1, g_1)$ is a Nash equilibrium,
then so is $(f_2, g_2)$.

Next, we give a useful characterization of equivalence to stationary strategies.

\begin{restatable}{lemma}{rthmEqvMeml}
\label{thm:eqv-meml}
Let $\xvec \in \simplex_n$. A randomized strategy $f$ is equivalent to $\stdmeml(\xvec)$
iff $f^{(I,J)}(\emptyset) \sampledFrom \xvec$ for every
feasible history $(I, J)$ of $f$ where $\colls(I, J) = 0$.
\end{restatable}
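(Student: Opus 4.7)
The plan is to prove both directions directly, leveraging that for the stationary strategy $\stdmeml(\xvec)$, the independence of per-round draws in \cref{defn:stdmeml} yields the product form $\Pr(\isResponse(\stdmeml(\xvec), I', J')) = \prod_{t=1}^{|I'|} \xvec_{i'_t}$ for every pair $(I', J')$. Throughout, note that any pair $(I, J)$ with $\colls(I, J) = 0$ is automatically collision-consistent, and appending one extra round produces another collision-consistent pair regardless of what is appended.

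For the forward direction, assume $f$ is equivalent to $\stdmeml(\xvec)$ and fix a feasible history $(I, J)$ of $f$ with $\colls(I, J) = 0$; set $k \defeq |I|$. By the definition of the residual strategy,
\[ \Pr\bigl(f^{(I,J)}(\emptyset) = i\bigr)
= \Pr\bigl(f(J) = i \mid \isResponse(f, I, J)\bigr)
= \frac{\Pr\bigl(\isResponse(f, I+[i], J+[j])\bigr)}{\Pr\bigl(\isResponse(f, I, J)\bigr)} \]
for any action $j \in \toN$, since $\isResponse(f, I+[i], J+[j])$ is exactly the event $\isResponse(f, I, J) \cap \{f(J) = i\}$ and does not involve $j$. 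Both pairs appearing above are collision-consistent, so equivalence lets us replace $f$ by $\stdmeml(\xvec)$ in both numerator and denominator; the product form then collapses the ratio to $\xvec_i$, giving $f^{(I,J)}(\emptyset) \sampledFrom \xvec$.

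For the backward direction, assume the residual-strategy condition and prove by induction on $k$ that every collision-consistent $(I, J)$ of length $k$ satisfies $\Pr(\isResponse(f, I, J)) = \prod_{t=1}^k \xvec_{i_t}$. The base case $k = 0$ is trivial. For the inductive step, let $I' \defeq [i_1, \ldots, i_{k-1}]$ and $J' \defeq [j_1, \ldots, j_{k-1}]$, which form a collision-consistent pair with $\colls(I', J') = 0$. If $(I', J')$ is feasible for $f$, the hypothesis yields $\Pr(f(J') = i_k \mid \isResponse(f, I', J')) = \xvec_{i_k}$; multiplying by the inductive hypothesis gives the required product. If $(I', J')$ is infeasible, the inductive hypothesis forces $\prod_{t=1}^{k-1} \xvec_{i_t} = 0$, and $\Pr(\isResponse(f, I, J)) \le \Pr(\isResponse(f, I', J')) = 0$, so both sides vanish.

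The main delicacy is the conditional-probability bookkeeping on the large sample space of deterministic strategies, specifically verifying the identity $\isResponse(f, I+[i], J+[j]) = \isResponse(f, I, J) \cap \{f(J) = i\}$ (independent of $j$), and handling the infeasible $(I', J')$ corner case in the induction. Both resolve neatly because $\isResponse$ only constrains the strategy's values at prefixes of the opponent's list, so the relevant conditioning factors apart cleanly.
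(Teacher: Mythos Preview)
Your proposal is correct and follows essentially the same approach as the paper's proof: both directions use the product formula $\Pr(\isResponse(\stdmeml(\xvec), I, J)) = \prod_t \xvec_{i_t}$, the forward direction computes the conditional first-move probability as a ratio of $\isResponse$-probabilities and invokes equivalence, and the backward direction inducts on the length of the collision-consistent pair. Your handling of the infeasible-prefix corner case (arguing both sides vanish via the inductive hypothesis and monotonicity of $\isResponse$) is in fact a slightly cleaner version of the paper's case split on whether some $i_\ell \notin \support(\xvec)$.
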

\begin{proof}[Proof sketch]
(See \cref{sec:non-meml-extra:str} for the full proof.)

$\impliedby$:
For each collision-consistent pair $(I, J)$, where $I \defeq [i_1, \ldots, i_k]$, we show that
\\ $\Pr(\isResponse(f, I, J)) = \Pr(\isResponse(\stdmeml(\xvec), I, J)) = \prod_{t=1}^k \xvec_{i_t}$.

$\implies$:
$\forall i, j \in \toN$,
$\Pr(f^{(I,J)}(\emptyset) = i) = \Pr(\isResponse(f, I+[i], J+[j]))/\Pr(\isResponse(f,I,J))$.
Replace $f$ by $\stdmeml(\xvec)$ (since they're equivalent) and simplify to get
$\Pr(f^{(I,J)}(\emptyset) = i) = \xvec_i$.
\end{proof}

Indeed, for any stationary strategy, it's possible to construct a different equivalent strategy.
This rules out uniqueness of Nash equilibrium.
However, we can still hope to get uniqueness \emph{up to equivalence}.
We found this to be a very difficult problem, and could only resolve it for RUC games of the form
$(A, A)$ (i.e., zero-sum RUC games), where $A$ is irreducible.
For such games, we show that all Nash equilibria are equivalent.

\begin{theorem}
\label{thm:nec-cond-for-ne}
Let $\rho$ be the Perron root of an irreducible matrix $A$ and
$\xstar$ and $\ystar$ be the left and right Perron vectors of $A$, respectively.
Then for any Nash equilibrium $(f^*, g^*)$ of the RUC game $(A, A)$,
$f^*$ is equivalent to $\stdmeml(\xstar)$ and $g^*$ is equivalent to $\stdmeml(\ystar)$.
\end{theorem}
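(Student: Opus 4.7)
The plan is to reduce the statement to the single-round argument used in the SRUC setting. By \cref{thm:eqv-meml}, it suffices to show that for every feasible history $(I, J)$ of $g^*$ with $\colls(I, J) = 0$, the first-action distribution of $g^{*(I,J)}$ equals $\ystar$; the claim for $f^*$ is symmetric. Since $(A, A)$ is zero-sum and $(\stdmeml(\xstar), \stdmeml(\ystar))$ is a Nash equilibrium of value $\rho$ by \cref{thm:ne-exists-2}, \cref{thm:all-ne-same-value} forces $e^{(A,\infty)}(f^*, g^*) = \rho$, and the Nash condition then gives $e^{(A,\infty)}(f, g^*) \le \rho$ for every batter strategy $f$.

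The crux of the proof, which I expect to be the hardest step, is propagating this ``best-defender'' inequality to every no-collision residual: for every no-collision feasible $(I, J)$ of $g^*$ and every batter strategy $f$, $e^{(A,\infty)}(f, g^{*(I,J)}) \le \rho$. I argue by contradiction. If $e^{(A,\infty)}(f, g^{*(I,J)}) = \rho + \delta$ for some $\delta > 0$ and $k \defeq |I|$, consider the batter strategy $\hat{f}$ that plays i.i.d.\ according to $\xstar$ in rounds $1, \ldots, k$ (exactly as $\stdmeml(\xstar)$ would) and then, from round $k+1$ onwards, continues with $\stdmeml(\xstar)$ \emph{except} on the single realized first-$k$-round history where the batter's actions equal $J$ and the bowler's equal $I$, in which case it switches to $f$. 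The first-$k$-round joint action distributions under $(\hat{f}, g^*)$ and $(\stdmeml(\xstar), g^*)$ coincide, and for every no-collision history other than $(J, I)$, the batter's continuation strategy under $\hat{f}$ is $\stdmeml(\xstar)$, which by \cref{thm:ne-exists-2} yields continuation value $\rho$ regardless of the bowler's residual. So the recursive decomposition of \cref{thm:expected-score-nmem} reduces $e^{(A,\infty)}(\hat{f}, g^*) - e^{(A,\infty)}(\stdmeml(\xstar), g^*)$ to $\delta \cdot \prod_{t=1}^k \xstar_{j_t} \cdot \Pr(\isResponse(g^*, I, J))$, which is strictly positive since $\xstar$ has full support and $(I, J)$ is feasible. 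Combined with $e^{(A,\infty)}(\stdmeml(\xstar), g^*) = \rho$, this gives $e^{(A,\infty)}(\hat{f}, g^*) > \rho$, contradicting the Nash condition.

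With the residual best-defender property established, the rest mirrors \cref{thm:meml-ne-is-eigen}. Fix $(I, J)$ as above, let $\yvec \in \simplex_n$ denote the distribution of $g^{*(I,J)}(\emptyset)$, and for each $i \in \toN$ consider the batter strategy that plays $i$ in round $1$ and then $\stdmeml(\xstar)$. By \cref{thm:expected-score-nmem,thm:ne-exists-2}, its expected score against $g^{*(I,J)}$ is $(A\yvec)_i + \rho(1-\yvec_i)$, which the previous step bounds by $\rho$, giving $(A\yvec)_i \le \rho\,\yvec_i$ for every $i$. Pairing with the left-Perron identity $\xstar\trn A = \rho\,\xstar\trn$ and using $\xstar > 0$ forces equality coordinate-wise, so $A\yvec = \rho\yvec$, and \cref{thm:nonneg-eig-is-perron} then gives $\yvec = \ystar$. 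Applying \cref{thm:eqv-meml} concludes that $g^*$ is equivalent to $\stdmeml(\ystar)$. The mirror construction---a bowler strategy that imitates $\stdmeml(\ystar)$ in its first $k$ rounds and switches conditionally, combined with bowler deviations of the form ``play $j$ in round $1$, then $\stdmeml(\ystar)$'' and the identity $A\ystar = \rho\ystar$---analogously shows that $f^*$ is equivalent to $\stdmeml(\xstar)$.
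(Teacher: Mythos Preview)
Your proposal is correct and follows the same underlying mechanism as the paper's proof (zero-sum NE value, deviation arguments against the stationary Perron strategies, and \cref{thm:eqv-meml}). The organization differs in two noteworthy ways.

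First, the paper proceeds in layers: it proves $g^*(\emptyset)\sampledFrom\ystar$ and $f^*(\emptyset)\sampledFrom\xstar$ (\cref{thm:yhat-equals-ystar,thm:xhat-equals-xstar}), then uses those facts to show one-step residuals form a Nash equilibrium (\cref{thm:subg-nopt,thm:subf-nopt,thm:sub-ne}), and finally inducts to arbitrary depth (\cref{thm:rec-sub-ne}). You instead propagate the one-sided inequality $e^{(A,\infty)}(f,g^{*(I,J)})\le\rho$ to \emph{all} no-collision residuals in a single $k$-round deviation, bypassing both the need to first pin down $g^*(\emptyset)$ and the intermediate claim that residuals are themselves Nash equilibria. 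This is a genuine streamlining: the paper's \cref{thm:subg-nopt} explicitly invokes \cref{thm:yhat-equals-ystar}, whereas your argument needs only the blanket identity $e^{(A,\infty)}(\stdmeml(\xstar),\,\cdot\,)=\rho$ from \cref{thm:ne-exists-2}.

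Second, to identify the first-action distribution $\yvec$ of a residual, the paper derives $\xvec\trn A\yvec\le\rho\,\xvec\trn\yvec$ for all $\xvec$, argues about $\support(\yvec)$ via irreducibility, and then appeals to SRUC uniqueness (\cref{thm:meml-ne-unique}). You obtain the coordinate-wise inequalities $(A\yvec)_i\le\rho\yvec_i$ directly and pair them with $\xstar\trn A=\rho\xstar\trn$ to force equality, invoking \cref{thm:nonneg-eig-is-perron} straight away. This is a little slicker and avoids the separate case analysis on $\support(\yvec)$.

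One point where you should be slightly more careful in a full write-up: the ``recursive decomposition of \cref{thm:expected-score-nmem}'' you invoke to isolate the single-history difference really requires iterating that lemma $k$ times (or, equivalently, a coupling argument identifying the first $k$ rounds of $\hat f$ and $\stdmeml(\xstar)$), together with \cref{thm:cond-on-trn,thm:hist-compose} to justify that the bowler's conditional continuation on the distinguished history is exactly $g^{*(I,J)}$. The idea is right; just make the iterated conditioning explicit.
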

\begin{proof}[Proof sketch]
(See \cref{sec:non-meml-extra:nec} for the full proof.)

We first use \cref{thm:all-ne-same-value} to show that
$(f^*, \ystar)$ and $(\xstar, g^*)$ are also NE.
If $f^*(\emptyset)$ is not distributed as $\xstar$, then in the NE $(f^*, \ystar)$,
we show that the \bowler{} can decrease her cost by changing her first action's distribution.
This contradicts the fact that $(f^*, \ystar)$ is an NE,
and hence, proves that $f^*(\emptyset) \sampledFrom \xstar$.
We similarly prove that $g^*(\emptyset) \sampledFrom \ystar$.

Next, we show that for any $(I, J)$ such that $\colls(I, J) = 0$,
$({f^*}^{(I,J)}, {g^*}^{(J,I)})$ is also an NE,
otherwise the \bowler{} can gain by deviating in the NE $(f^*, \ystar)$,
or the \batter{} can gain by deviating in the NE $(\xstar, g^*)$.

Combining the above results tells us that
${f^*}^{(I,J)}(\emptyset) \sampledFrom \xstar$ and ${g^*}^{(J,I)}(\emptyset) \sampledFrom \ystar$
for all $(I, J)$ such that $\colls(I, J) = 0$,
which fulfills the condition of equivalence in \cref{thm:eqv-meml}.
\end{proof}

\section{Discussion}
\label{sec:discussion}

In this work, we initiated the study of two-player RUC games: games that are repeated until collision.
RUC games are related to other well-known repeated games, like pursuit-evasion games, hide-and-seek games, and stochastic games. They also generalize the popular game of hand cricket, and in \cref{app:hc} we discuss its popular variants.
Our main result showed the existence of Nash equilibria in RUC games when the players' payoff matrices are irreducible. We studied two other interesting properties: \textit{stationarity} and \textit{uniqueness}. We proved there always exists a Nash equilibrium where players use stationary strategies, and for zero-sum RUC games, all Nash equilibria are essentially equivalent, that is, they cannot be distinguished by observing the players' actions. 

In \cref{sec:multi-collisions}, we explore a variant of RUC games where
instead of ending the game on the first collision,
we end it on the $w\Th$ collision, for some $w \in \mathbb{Z}_{\ge 0}$.

Our work opens the way for several interesting questions. We can naturally generalize the definition of a collision to include a larger set of colliding actions, encoded via a collision matrix, and investigate the existence of Nash equilibria. Lastly, our result showing uniqueness up to equivalence of Nash equilibria applies only to zero-sum RUC games; showing (non-)uniqueness for general RUC games is another open question.

\acknowledgements{We thank Prof.~Jugal Garg and Prof.~Ruta Mehta for their helpful comments.}

\appendix
\section{Expected Value and Variance of Payoffs}
\label{sec:exp-and-var}

In this section, we show how to find the expected value and variance
of the players' total payoffs when they play stationary strategies.

\rthmExpectedScore*
\begin{proof}
In any round, let $I$ and $J$ be the numbers picked by the \batter{} and the \bowler{}, respectively.
Then $I$ and $J$ are multinoulli random variables.
The \batter{}'s score in that round is $\E(A[I,J]) = \xvec\trn A\yvec$,
and the \bowler{}'s cost in that round is $\E(B[I,J]) = \xvec\trn B\yvec$.
The probability of a collision in that round is $\Pr(I = J) = \xvec\trn\yvec$.

If $\xvec\trn\yvec = 0$, then $\xvec$ and $\yvec$ have disjoint supports,
so there will never be a collision.
If $\xvec\trn A\yvec > 0$, then the \batter{} will keep scoring, and so $e^{(A)}(\xvec, \yvec) = \infty$.
If $\xvec\trn A\yvec = 0$, then $A[i, j] = 0$ for all $i \in \support(\xvec)$
and $j \in \support(\yvec)$. Hence, the \batter{} will never score anything,
and so $e^{(A)}(\xvec, \yvec) = 0$.
We can similarly find $e^{(B)}(\xvec, \yvec)$ when $\xvec\trn\yvec = 0$.

Now assume $\xvec\trn\yvec > 0$. Then in each round, there is a positive probability of collision,
so the game will eventually end with probability 1.
Let $Z^{(A)}$ and $Z^{(B)}$ be the random variables denoting
the \batter{}'s total score and the \bowler{}'s total cost, respectively.

Let $N$ be the number of rounds the game lasts for.
Then $N$ is a geometric random variable and $\E(N) = 1/\xvec\trn\yvec$.
Let $\alpha_C \defeq \max_{i,j} C_{i,j}$.
Then $Z^{(C)} \le \alpha_CN$, so $\E(Z^{(C)})$ is finite.

Let $I$ and $J$ be the numbers picked by the \batter{} and the \bowler{}, respectively,
in the first round. Then
\begin{align*}
e^{(C)}(\xvec, \yvec) &= \E(Z^{(C)})
= \sum_{i=1}^n \sum_{j=1}^n \E(Z^{(C)} \mid I=i \textrm{ and } J=j)\Pr(I=i \textrm{ and } J=j)
\\ &= \sum_{i=1}^n \sum_{j=1}^n \xvec_i\yvec_j\left( C_{i,j} + \boolone(i\neq j)\E(Z^{(C)}) \right)
\\ &= \xvec\trn C\yvec + \E(Z^{(C)})\sum_{i=1}^n\sum_{j=1}^n \boolone(i \neq j)\xvec_i\yvec_j
\\ &= \xvec\trn C\yvec + \E(Z^{(C)})(1-\xvec\trn\yvec)
\end{align*}
Hence, $e^{(C)}(\xvec, \yvec) = \E(Z^{(C)}) = \xvec\trn C\yvec/\xvec\trn\yvec$.
\end{proof}

\begin{lemma}[Variance decomposition formula]
\label{thm:cond-var}
For any two random variables $X$ and $Y$, we have
$\Var(Y) = \Var(\E(Y \mid X)) + \E(\Var(Y \mid X))$.
\end{lemma}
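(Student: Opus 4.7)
The statement is the classical \emph{law of total variance}, so the plan is to derive it directly from the definition of variance together with the tower property $\E(\E(Y\mid X)) = \E(Y)$. The only ingredients needed are that conditional expectation behaves like an expectation (linearity and tower), and that $\E((Y-\E(Y))^2)$ is the standard definition of $\Var(Y)$, both of which are available without appealing to anything else in the paper.

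My first step would be to expand $\E(Y^2)$ via the tower property as $\E(\E(Y^2 \mid X))$, and then inside the conditional expectation use the identity $\E(Y^2 \mid X) = \Var(Y \mid X) + \E(Y\mid X)^2$, which is just the definition of conditional variance rearranged. Taking outer expectation gives
\[ \E(Y^2) = \E(\Var(Y\mid X)) + \E\bigl(\E(Y\mid X)^2\bigr). \]
Next, I would note that $\E(Y)^2 = \bigl(\E(\E(Y\mid X))\bigr)^2$ by the tower property. Subtracting, the first term of the desired decomposition appears directly:
\[ \Var(Y) = \E(Y^2) - \E(Y)^2 = \E(\Var(Y\mid X)) + \Bigl(\E\bigl(\E(Y\mid X)^2\bigr) - \bigl(\E(\E(Y\mid X))\bigr)^2\Bigr). \]
Finally, the bracketed quantity is exactly $\Var(\E(Y\mid X))$ by the definition of variance applied to the random variable $\E(Y\mid X)$, which completes the identity.

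There is essentially no obstacle here beyond bookkeeping: the whole argument is three applications of the tower property glued together by the definitional identities $\Var(Z) = \E(Z^2) - \E(Z)^2$ and $\Var(Y\mid X) = \E(Y^2\mid X) - \E(Y\mid X)^2$. The only mild subtlety is that one must implicitly assume $\E(Y^2) < \infty$ so that all of the expectations being manipulated are finite; in the paper's intended application $Y$ is a total payoff under a stationary strategy and is bounded by a constant times a geometric random variable, so $\E(Y^2)$ is indeed finite and this caveat is automatic.
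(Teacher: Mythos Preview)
Your proof is correct and is the standard derivation of the law of total variance. The paper, however, does not prove this lemma at all: it is stated without proof as a well-known identity and then invoked in the subsequent lemma on the variance of the total score. So there is nothing to compare against; your argument simply fills in a fact the authors took for granted.
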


\begin{lemma}
\label{thm:meml-score-var}
For the RUC game $(A, B)$, suppose the \batter{} and \bowler{} use stationary strategies
$\xvec$ and $\yvec$, respectively, such that $\xvec\trn\yvec > 0$.
Let $X$ and $Y$ be the actions picked in the first round by the \batter{} and \bowler{},
respectively (so $X$ and $Y$ are random variables).
Let $v^{(A)}(\xvec, \yvec)$ be the variance of the \batter{}'s total score
and $v^{(B)}(\xvec, \yvec)$ be the variance of the \bowler{}'s total cost.
For $C \in \{A, B\}$, let $\mu^{(C)} \defeq e^{(C)}(\xvec, \yvec)$.
Then $v^{(C)}(\xvec, \yvec) = \Var((C-\mu^{(C)}I)[X, Y])/\xvec\trn\yvec$.
\end{lemma}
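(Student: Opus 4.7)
The plan is to exploit the recursive structure of the SRUC game together with the variance decomposition formula (\cref{thm:cond-var}), conditioning on the pair of actions $(X, Y)$ played in the first round. Let $Z^{(C)}$ denote the \batter's total score (for $C = A$) or \bowler's total cost (for $C = B$), so that $\E(Z^{(C)}) = \mu^{(C)}$ and $\Var(Z^{(C)}) = v^{(C)}(\xvec, \yvec)$; since $\xvec\trn\yvec > 0$, \cref{thm:expected-score} guarantees these are finite.

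First, I would write down how $Z^{(C)}$ decomposes given $(X, Y) = (i, j)$. If $i = j$ the game ends after one round and $Z^{(C)} = C_{i,j}$ deterministically; if $i \neq j$, then $Z^{(C)} = C_{i,j} + \widetilde{Z}^{(C)}$, where $\widetilde{Z}^{(C)}$ is an independent copy of $Z^{(C)}$ (because after a non-collision the remaining game is a fresh instance of the same SRUC game with strategies $\xvec, \yvec$). This gives
\[ \E(Z^{(C)} \mid X, Y) = C[X, Y] + \boolone(X \neq Y)\mu^{(C)}, \qquad \Var(Z^{(C)} \mid X, Y) = \boolone(X \neq Y) v^{(C)}(\xvec, \yvec). \]

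Next I would apply \cref{thm:cond-var}. The ``expected conditional variance'' term is
\[ \E(\Var(Z^{(C)} \mid X, Y)) = \Pr(X \neq Y)\, v^{(C)}(\xvec, \yvec) = (1 - \xvec\trn\yvec)\, v^{(C)}(\xvec, \yvec). \]
For the ``variance of conditional expectation'' term, the trick is to rewrite the indicator using the identity matrix: $\boolone(X \neq Y) = 1 - \boolone(X = Y) = 1 - I[X, Y]$, so
\[ \E(Z^{(C)} \mid X, Y) = C[X, Y] - \mu^{(C)} I[X, Y] + \mu^{(C)} = (C - \mu^{(C)} I)[X, Y] + \mu^{(C)}. \]
The additive constant $\mu^{(C)}$ drops out of the variance, giving
\[ \Var(\E(Z^{(C)} \mid X, Y)) = \Var\!\bigl((C - \mu^{(C)} I)[X, Y]\bigr). \]

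Finally, I would combine these two terms via \cref{thm:cond-var} to obtain
\[ v^{(C)}(\xvec, \yvec) = (1 - \xvec\trn\yvec)\, v^{(C)}(\xvec, \yvec) + \Var\!\bigl((C - \mu^{(C)} I)[X, Y]\bigr), \]
and solve for $v^{(C)}(\xvec, \yvec)$, giving the claimed formula. The only subtle step is justifying that the post-first-round remainder $\widetilde{Z}^{(C)}$ is distributed as an independent copy of $Z^{(C)}$; this is an immediate consequence of the memorylessness of stationary strategies plus the fact that, conditional on no collision, the game resets. I do not expect this to cause any real difficulty, but it is the only place where one must be careful about what ``independent copy'' means.
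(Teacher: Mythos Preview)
Your proposal is correct and follows essentially the same argument as the paper: condition on the first-round actions, apply the variance decomposition formula, and use the identity-matrix rewrite $\boolone(X\neq Y)=1-I[X,Y]$ to turn the conditional mean into $(C-\mu^{(C)}I)[X,Y]+\mu^{(C)}$. One small correction: \cref{thm:expected-score} only gives finiteness of $\mu^{(C)}$, not of $v^{(C)}(\xvec,\yvec)$; the paper separately bounds $Z^{(C)}\le(\max_{i,j}C_{i,j})\,N$ with $N$ geometric to ensure $\Var(Z^{(C)})<\infty$, which you need before you can legitimately solve the equation $v^{(C)}=(1-\xvec\trn\yvec)v^{(C)}+\Var((C-\mu^{(C)}I)[X,Y])$ for $v^{(C)}$.
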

\begin{proof}
We will only show how to find $v^{(A)}(\xvec, \yvec)$.
The analysis for $v^{(B)}(\xvec, \yvec)$ is analogous.

Let $N$ be the number of rounds the game lasts for.
Let $Z$ be the \batter{}'s total score.
Then $v^{(A)}(\xvec, \yvec) = \Var(Z)$ and $N$ is a geometric random variable
where $\E(N) = 1/\xvec\trn\yvec$.
Let $\alpha \defeq \max_{i,j} A_{i,j}$. Then $Z \le \alpha N$.
Hence, $\Var(Z) \le \E(Z^2) \le \alpha^2 E(N^2) = \alpha^2(2-\xvec\trn\yvec)/(\xvec\trn\yvec)^2$,
so $\Var(Z)$ is finite.

Let $Z'$ be the \batter{}'s total score if the first round is ignored,
i.e., the \batter{} gets a score of 0 in the first round
and the game doesn't end at the first round even if $X = Y$.
Then $Z = A[X,Y] + \boolone(X \neq Y)Z'$.

$Z$ and $Z'$ have the same distribution, since ignoring the first round
is akin to starting the game afresh after the first round.
Hence, $\Var(Z) = \Var(Z')$ and $\E(Z) = \E(Z')$.
Also, $Z'$ is independent of the pair $(X, Y)$.
Using \cref{thm:cond-var}, we get
$\Var(Z) = \Var(\E(Z \mid X, Y)) + \E(\Var(Z \mid X, Y))$.
\begin{align*}
\E(Z \mid X=i \land Y=j) &= \E(A[X,Y] + \boolone(X \neq Y)Z' \mid X=i \land Y=j)
\\ &= \E(A[i,j] + \boolone(i \neq j)Z' \mid X=i \land Y=j)
\\ &= A[i,j] + \boolone(i \neq j)\E(Z' \mid X=i \land Y=j)
\\ &= A[i,j] + \boolone(i \neq j)\E(Z')
    \tag{since $Z'$ is independent of $(X, Y)$}
\\ &= A[i,j] + \boolone(i \neq j)\mu^{(A)}.
\end{align*}
\begin{align*}
\Var(Z \mid X=i \land Y=j) &= \Var(A[X,Y] + \boolone(X \neq Y)Z' \mid X=i \land Y=j)
\\ &= \Var(A[i,j] + \boolone(i \neq j)Z' \mid X=i \land Y=j)
\\ &= \boolone(i \neq j)\Var(Z' \mid X=i \land Y=j)
\\ &= \boolone(i \neq j)\Var(Z')
    \tag{since $Z'$ is independent of $(X, Y)$}
\\ &= \boolone(i \neq j)v^{(A)}(\xvec, \yvec).
\end{align*}
\begin{align*}
v^{(A)}(\xvec, \yvec) = \Var(Z) &= \Var(\E(Z \mid X, Y)) + \E(\Var(Z \mid X, Y))
\\ &= \Var(A[X,Y] + \boolone(X \neq Y)\mu^{(A)}) + \E(\boolone(X \neq Y)v^{(A)}(\xvec, \yvec))
\\ &= \Var(A[X,Y] + \mu^{(A)}(1 - I[X,Y])) + v^{(A)}(\xvec, \yvec)\Pr(X \neq Y)
\\ &= \Var((A - \mu^{(A)}I)[X,Y]) + v^{(A)}(\xvec, \yvec)(1-\xvec\trn\yvec).
\end{align*}
Hence, $v^{(A)}(\xvec, \yvec) = \Var((A - \mu^{(A)}I)[X,Y])/\xvec\trn\yvec$.
\end{proof}

Using \cref{thm:meml-score-var}, we can compute the variance of the total score (or cost)
in $O(n^2)$ arithmetic operations. However, this result is not easy to interpret.
We now show how the variance can be expressed in an alternate form that gives more insight.

\begin{lemma}
\label{thm:meml-score-var-breakdown}
Let $A \in \mathbb{R}_{\ge 0}^{n \times n}$ and $\xvec, \yvec \in \simplex_n$
such that $\xvec\trn\yvec > 0$. Let $X, Y \in \toN$ be random variables such that
$X \sampledFrom \xvec$ and $Y \sampledFrom \yvec$.

Let $B$, $C$, $D$ be matrices where $B_{i,j} \defeq A_{i,j}\boolone(i \neq j)$,
$C_{i,j} \defeq B_{i,j}^2$, and $D_{i,j} \defeq A_{i,j}\boolone(i=j)$.
Let $\mu \defeq \xvec\trn A\yvec/\xvec\trn\yvec$,
$\mu_B \defeq \xvec\trn B\yvec/\xvec\trn\yvec$,
and $\mu_D \defeq \xvec\trn D\yvec/\xvec\trn\yvec$. Then
\[ \frac{\Var((A-\mu I)[X,Y])}{\xvec\trn\yvec}
= \mu_B^2 + \frac{\xvec\trn C\yvec}{\xvec\trn\yvec}
    + \frac{1}{\xvec\trn\yvec}\sum_{i=1}^n \xvec_i\yvec_i(A_{i,i}-\mu_D)^2. \]
\end{lemma}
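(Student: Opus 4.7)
The plan is to reduce the claim to an algebraic identity via the observation $A = B + D$ and the corresponding decomposition $\mu = \mu_B + \mu_D$, then exploit the fact that $\mu_D$ is precisely the mean of the diagonal entries $A_{i,i}$ under the weights $\xvec_i\yvec_i$.

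First, I would rewrite the left-hand side in closed form. Let $M \defeq A - \mu I$, so that $M_{i,j} = A_{i,j}$ for $i \neq j$ and $M_{i,i} = A_{i,i} - \mu$. Since $X$ and $Y$ are independent with $X \sampledFrom \xvec$ and $Y \sampledFrom \yvec$, the definition of $\mu$ gives $\E(M[X,Y]) = \xvec\trn A\yvec - \mu\,\xvec\trn\yvec = 0$, so $\Var(M[X,Y]) = \E(M[X,Y]^2)$. Splitting the sum over $(i,j)$ into the off-diagonal and diagonal parts,
\[ \E(M[X,Y]^2) = \sum_{i \neq j}\xvec_i\yvec_j A_{i,j}^2 + \sum_{i=1}^n \xvec_i\yvec_i(A_{i,i}-\mu)^2 = \xvec\trn C\yvec + \sum_{i=1}^n \xvec_i\yvec_i(A_{i,i}-\mu)^2, \]
since $C_{i,j} = A_{i,j}^2\boolone(i\neq j)$.

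Next, I would massage the diagonal sum to replace $\mu$ by $\mu_D$. Using $A = B + D$, we have $\mu = \mu_B + \mu_D$, so
\[ (A_{i,i}-\mu)^2 = (A_{i,i}-\mu_D)^2 - 2\mu_B(A_{i,i}-\mu_D) + \mu_B^2. \]
The key observation is that the cross term vanishes after weighting: since $\xvec\trn D\yvec = \sum_i \xvec_i\yvec_i A_{i,i}$ and $\mu_D = \xvec\trn D\yvec/\xvec\trn\yvec$, we get $\sum_i \xvec_i\yvec_i(A_{i,i}-\mu_D) = 0$. Therefore
\[ \sum_{i=1}^n \xvec_i\yvec_i(A_{i,i}-\mu)^2 = \sum_{i=1}^n \xvec_i\yvec_i(A_{i,i}-\mu_D)^2 + \mu_B^2\,\xvec\trn\yvec. \]
Substituting back into the expression for $\E(M[X,Y]^2)$ and dividing by $\xvec\trn\yvec$ yields exactly the right-hand side of the claim.

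There is no real obstacle here --- the statement is an expansion of the variance combined with the two one-line identities $\mu = \mu_B + \mu_D$ and $\sum_i \xvec_i\yvec_i(A_{i,i}-\mu_D)=0$. The only thing to keep track of is that $C$ already encodes the restriction to $i \neq j$, so the off-diagonal sum collapses to $\xvec\trn C\yvec$ without needing an explicit indicator.
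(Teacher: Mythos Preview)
Your proof is correct and takes essentially the same approach as the paper: both note that $\E((A-\mu I)[X,Y])=0$, separate diagonal from off-diagonal contributions, and use $\mu=\mu_B+\mu_D$ together with $\sum_i \xvec_i\yvec_i(A_{i,i}-\mu_D)=0$ to kill the cross term. The only cosmetic difference is that the paper packages the split as the matrix decomposition $A-\mu I=(B-\mu_B I)+(D-\mu_D I)$ before squaring, whereas you split the expectation sum directly by $i=j$ versus $i\neq j$; the content is identical.
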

\begin{proof}
$\E((A - \mu I)[X, Y]) = \xvec\trn(A - \mu I)\yvec = 0$.
Hence, $\Var((A-\mu I)[X,Y]) = \E((A-\mu I)[X,Y]^2)$.
Since $A = B + D$, we get $\mu = \mu_B + \mu_D$. Hence,
\[ (A-\mu I)[X,Y]^2 = (B-\mu_BI)[X,Y]^2 + (D-\mu_DI)[X,Y]^2 + 2(B-\mu_B)[X,Y](D-\mu_DI)[X,Y]. \]
Since $D - \mu_DI$ is a diagonal matrix,
$(B-\mu_BI)[X,Y](D-\mu_DI)[X,Y] = (-\mu_B)(D-\mu_DI)[X,Y]$.
Moreover, $\E((D-\mu_DI)[X,Y]) = \xvec\trn(D - \mu_DI)\yvec = 0$.
Hence, $\E((A-\mu I)[X,Y]^2) = \E((B-\mu_BI)[X,Y]^2) + \E((D-\mu_DI)[X,Y]^2)$.
\[ \E((D-\mu_DI)[X,Y]^2) = \sum_{i=1}^n \xvec_i\yvec_i(A_{i,i}-\mu_D)^2. \]
For any $i, j \in \toN$, we get
\begin{align*}
(B-\mu_BI)[i,j]^2 &= (B[i,j] - \mu_B\boolone(i=j))^2
\\ &= B[i,j]^2 + \mu_B^2\boolone(i=j) - 2\mu_B\boolone(i=j)B[i,j]
\\ &= C[i,j] + \mu_B^2\boolone(i=j).
\end{align*}
Hence, $\E((B-\mu_B)[X,Y]^2) = \xvec\trn C\yvec + \mu_B^2\xvec\trn\yvec$. Therefore,
\[ \frac{\Var((A-\mu I)[X,Y])}{\xvec\trn\yvec}
= \mu_B^2 + \frac{\xvec\trn C\yvec}{\xvec\trn\yvec}
    + \frac{1}{\xvec\trn\yvec}\sum_{i=1}^n \xvec_i\yvec_i(A_{i,i}-\mu_D)^2.
\qedhere \]
\end{proof}

\Cref{thm:meml-score-var,thm:meml-score-var-breakdown} tell us that
the variance of the total score has three components:
the first two are based on non-diagonal entries of the score matrix,
and the third is based on the diagonal entries of the score matrix.

\Cref{thm:meml-score-var,thm:meml-score-var-breakdown} also tell us that
the variance of the total score (or cost) can be quite large.
When the diagonal entries of the score matrix $A$ are zero,
then for $\mu \defeq e^{(A)}(\xvec, \yvec)$, we have $v^{(A)}(\xvec, \yvec) \ge \mu^2$.

\section{Nash Equilibrium for Hand Cricket}
\label{app:hc}
We consider two popular variants of hand cricket,
and show how to efficiently compute their (approximate) Nash equilibria.

\subsection{Variant 1}

Let $s \in \mathbb{R}^n$ such that $0 < s_1 \le s_2 \le \ldots \le s_n$.
Let $A \in \mathbb{R}_{\ge 0}^{n \times n}$ where
$A_{i,j} = s_i$ if $i \neq j$ and $A_{i,j} = 0$ if $i = j$.
Consider the RUC game $(A, A)$.

\begin{lemma}
\label{thm:std1-g}
Let $g(z) \defeq \sum_{i=1}^n \frac{s_i}{z+s_i}$. Let $r \defeq \sum_{i=1}^n s_i$.
There exists a unique $\rho \in [0, \infty)$ such that $g(\rho) = 1$.
Moreover, $g$ is strictly monotonically decreasing and $r - s_n \le \rho \le r - s_1$.
\end{lemma}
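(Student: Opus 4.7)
The plan is to first establish strict monotonicity of $g$, then use that plus the extremes of $g$ on $[0,\infty)$ to get existence and uniqueness of $\rho$, and finally derive the two bounds by evaluating $g$ at the claimed endpoints and invoking monotonicity.

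First I would observe that each summand $\frac{s_i}{z+s_i}$, viewed as a function on $[0,\infty)$, is strictly decreasing: its derivative equals $-\frac{s_i}{(z+s_i)^2}$, which is strictly negative since $s_i>0$. Being a finite sum of such functions, $g$ is itself strictly decreasing and continuous on $[0,\infty)$. Next I would evaluate the endpoints: $g(0)=\sum_{i=1}^n \frac{s_i}{s_i}=n\ge 1$, while $\lim_{z\to\infty}g(z)=0$. By the intermediate value theorem applied to the continuous, strictly monotone function $g$, there is exactly one $\rho\in[0,\infty)$ with $g(\rho)=1$. (The degenerate case $n=1$ gives $\rho=0$, which is still consistent with the inequality $r-s_n\le 0\le r-s_1$ since then $s_1=s_n=r$.)

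For the bounds, I would simply compare $g(r-s_n)$ and $g(r-s_1)$ to $1$ term-by-term, and then use strict monotonicity of $g$ to convert the values to bounds on $\rho$. Concretely, since $s_i\le s_n$ for all $i$, we have $r-s_n+s_i\le r$, whence
\[
g(r-s_n)=\sum_{i=1}^n\frac{s_i}{r-s_n+s_i}\;\ge\;\sum_{i=1}^n\frac{s_i}{r}\;=\;1\;=\;g(\rho),
\]
which by the strict decrease of $g$ gives $r-s_n\le\rho$. Symmetrically, $s_i\ge s_1$ gives $r-s_1+s_i\ge r$, so $g(r-s_1)\le \sum_{i=1}^n\frac{s_i}{r}=1=g(\rho)$, and therefore $\rho\le r-s_1$.

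There is no real obstacle here; the only mildly delicate point is choosing $r-s_n$ and $r-s_1$ as the comparison points, motivated by the fact that at $z=r-s_j$ the $j$-th term of $g$ equals exactly $s_j/r$, so that uniformly bounding the denominators $r-s_j+s_i$ against $r$ turns the sum into a telescoping comparison with $\sum_i s_i/r=1$. I would also briefly remark that this recovers, via the direct formula, the Perron–Frobenius bounds $\min_j\sum_i A_{i,j}\le\rho\le\max_j\sum_i A_{i,j}$ applied to the hand-cricket score matrix $A=s\mathbf{1}\trn-\opname{diag}(s)$, whose column sums are precisely $r-s_j$; this gives a consistency check but is not needed for the proof itself.
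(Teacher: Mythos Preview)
Your proof is correct and follows essentially the same route as the paper: establish that $g$ is continuous and strictly decreasing, then bound $g(r-s_n)\ge 1$ and $g(r-s_1)\le 1$ via the term-by-term comparison $r-s_n+s_i\le r\le r-s_1+s_i$, and conclude by the intermediate value theorem. The only cosmetic differences are that you prove monotonicity via derivatives rather than a direct difference, and you first locate $\rho$ in $[0,\infty)$ before refining to $[r-s_n,r-s_1]$; the paper goes straight to the narrower interval.
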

\begin{proof}
Let $0 \le z_1 < z_2$. Then
\[ g(z_1) - g(z_2) = \sum_{i=1}^n \left(\frac{s_i}{s_i+z_1} - \frac{s_i}{s_i+z_2}\right)
    = \sum_{i=1}^n \frac{s_i(z_2-z_1)}{(s_i+z_1)(s_i+z_2)} > 0. \]
Hence, $g$ is strictly monotonically decreasing in the interval $[0, \infty)$.
\begin{align*}
g(r-s_n) &= \sum_{i=1}^n \frac{s_i}{r-s_n+s_i} \ge \sum_{i=1}^n \frac{s_i}{r} = 1
\\ g(r-s_1) &= \sum_{i=1}^n \frac{s_i}{r-s_1+s_i} \le \sum_{i=1}^n \frac{s_i}{r} = 1
\end{align*}
Since $g$ is continuous, we get our desired result using the intermediate value theorem.
\end{proof}

\begin{lemma}
\label{thm:std1-xy}
Define vectors $\xstar$ and $\ystar$ as
\begin{align*}
\xstar_i &= \frac{1}{n-1}\,\frac{\rho}{\rho+s_i} \quad\forall i \in \toN
& \ystar_j &= \frac{s_j}{\rho+s_j} \quad\forall j \in \toN
\end{align*}
Then $(\xstar, \ystar)$ is a Nash equilibrium for the RUC game $(A, A)$
and $e(\xstar, \ystar) = \rho$.
\end{lemma}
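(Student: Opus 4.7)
The plan is to invoke \cref{thm:pos-eig-gives-ne-2}: it suffices to verify that $\xstar, \ystar \in \simplex_n$ have full support and that $(\rho, \ystar)$ is an eigenpair of $A$ while $(\rho, \xstar)$ is an eigenpair of $A\trnF$. The value $e(\xstar,\ystar) = \rho$ then follows immediately from the second conclusion of that lemma.

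First I would check that $\ystar$ and $\xstar$ lie in $\simplex_n$. For $\ystar$, the sum $\sum_j \ystar_j = \sum_j s_j/(\rho+s_j) = g(\rho) = 1$ by the definition of $\rho$ in \cref{thm:std1-g}. For $\xstar$, rewrite $\rho/(\rho+s_i) = 1 - s_i/(\rho+s_i)$ so that $\sum_i \xstar_i = (n - g(\rho))/(n-1) = 1$. Full support of $\ystar$ is obvious since $s_j > 0$; full support of $\xstar$ requires $\rho > 0$, which follows from \cref{thm:std1-g} because $\rho \ge \sum_{i=1}^n s_i - s_n \ge s_1 > 0$ whenever $n \ge 2$.

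Next I would verify the two eigenvector equations by direct computation, exploiting the zero diagonal of $A$. For the right eigenvector:
\[ (A\ystar)_i = \sum_{j \ne i} s_i \ystar_j = s_i(1 - \ystar_i) = s_i \cdot \frac{\rho}{\rho+s_i} = \rho \cdot \frac{s_i}{\rho+s_i} = \rho\,\ystar_i. \]
For the left eigenvector, let $T \defeq \sum_i s_i \xstar_i$. Since $A_{i,j} = s_i \boolone(i \ne j)$, we have $(A\trn\xstar)_j = T - s_j \xstar_j$, so the condition $A\trn\xstar = \rho\xstar$ reduces to $T = (\rho + s_j)\xstar_j$ for every $j$. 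But $(\rho+s_j)\xstar_j = \rho/(n-1)$ is independent of $j$, and indeed $T = \frac{\rho}{n-1}\sum_i \frac{s_i}{\rho+s_i} = \frac{\rho}{n-1}g(\rho) = \frac{\rho}{n-1}$, matching. Applying \cref{thm:pos-eig-gives-ne-2} then completes the proof.

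I do not foresee a serious obstacle: everything reduces to algebraic manipulations that are neatly set up by the defining identity $g(\rho) = 1$. The only mild care needed is confirming $\rho > 0$ (so that $\xstar$ has full support, which is a hypothesis of \cref{thm:pos-eig-gives-ne-2}), and this is where \cref{thm:std1-g}'s lower bound on $\rho$ is used.
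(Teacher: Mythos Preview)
Your proposal is correct and follows essentially the same route as the paper's proof: verify $\xstar,\ystar\in\simplex_n$ via the identity $g(\rho)=1$, check the two eigenvector equations by direct computation using the zero diagonal, and invoke \cref{thm:pos-eig-gives-ne-2}. You are slightly more explicit than the paper in verifying $\rho>0$ (hence full support of $\xstar$), which is indeed required by the hypotheses of \cref{thm:pos-eig-gives-ne-2}; the paper leaves this implicit.
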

\begin{proof}
\begin{align*}
\sum_{j=1}^n \ystar_j &= g(\rho) = 1.
& \sum_{i=1}^n \xstar_i &= \sum_{i=1}^n \frac{1-\ystar_i}{n-1} = 1.
\end{align*}
Hence, $\xstar, \ystar \in \simplex_n$.
\[ s\trn\xstar = \sum_{i=1}^n s_i\xstar_i = \frac{\rho}{n-1}\sum_{i=1}^n \ystar_i = \frac{\rho}{n-1}. \]
For any $j \in \toN$, we have
\[ (A\trn\xstar)_j = \sum_{i=1}^n A_{i,j}\xstar_i
    = \sum_{i \in \toN \setminus \{j\}} s_i\xstar_i = s\trn\xstar - s_j\xstar_j
    = \frac{\rho}{n-1}\left(1 - \frac{s_j}{\rho+s_j}\right) = \rho\xstar_j. \]
For any $i \in \toN$, we have
\[ (A\ystar)_i = \sum_{j=1}^n A_{i,j}\ystar_j = \sum_{j \in \toN \setminus \{i\}} s_i\ystar_j
    = s_i(1-\ystar_i) = \rho\ystar_i. \]
Hence, $(\rho, \xstar)$ is an eigenpair of $A\trnF$ and $(\rho, \ystar)$ is an eigenpair of $A$.
By \cref{thm:pos-eig-gives-ne-2}, $(\xstar, \ystar)$ is a Nash equilibrium
for the RUC game $(A, A)$, and $e(\xstar, \ystar) = \rho$.
\end{proof}

For $n=2$, solving $g(\rho) = 1$ gives $\rho = \sqrt{s_1s_2}$. Then
\begin{align*}
\xstar &= \left[\frac{\sqrt{s_2}}{\sqrt{s_1} + \sqrt{s_2}}, \frac{\sqrt{s_1}}{\sqrt{s_1} + \sqrt{s_2}} \right],
& \ystar &= \left[\frac{\sqrt{s_1}}{\sqrt{s_1} + \sqrt{s_2}}, \frac{\sqrt{s_2}}{\sqrt{s_1} + \sqrt{s_2}} \right].
\end{align*}

\subsubsection{Computing Approximate Nash Equilibrium}

We can compute $\rho$ approximately using binary search. Specifically,
for any $\eps \in (0, 1)$, we can find $\rhohat$ such that
$r-s_n \le \rhohat \le r-s_1$ and $|\rhohat-\rho| \le \eps(r-s_n)$ using at most
$\displaystyle \ceil{\log_2\left(\frac{1}{\eps}\right) + \log_2\left(\frac{s_n-s_1}{r-s_n}\right)}$
iterations of binary search, where each iteration takes $\Theta(n)$ time.
Define vectors $\uvec$ and $\vvec$ as
\begin{align*}
\uvec_i &\defeq \frac{1}{n-1}\frac{\rhohat}{\rhohat+s_i} \quad\forall i \in \toN
& \vvec_j &\defeq \frac{s_j}{\rhohat+s_j} \quad\forall j \in \toN.
\end{align*}
Let $\xhat \defeq \uvec/\|\uvec\|_1$ and $\yhat \defeq \vvec/\|\vvec\|_1$. Then
$\xhat, \yhat \in \simplex_n$, and we show that $(\xhat, \yhat)$ is an approximate Nash equilibrium.

\begin{lemma}
\label{thm:std1-approx-ne}
For any $\xvec, \yvec \in \simplex_n$,
\begin{align*}
e(\xhat, \yvec) &\ge \frac{1-3\eps}{1+\eps}\rho,
& e(\xvec, \yhat) &\le \frac{1+\eps}{1-3\eps}\rho.
\end{align*}
Moreover, $|\xhat_i - \xstar_i| \le \frac{2\eps}{1-\eps}\xstar_i$ for all $i \in \toN$
and $|\yhat_j - \ystar_j| \le \frac{2\eps}{1-\eps}\ystar_j$ for all $j \in \toN$.
\end{lemma}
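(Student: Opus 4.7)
The plan is to first establish the per-coordinate closeness of $\xhat$ to $\xstar$ and of $\yhat$ to $\ystar$ (the ``Moreover'' claim), and then derive the two payoff bounds as an easy consequence of the infrastructure built in \cref{sec:meml,sec:non-meml}.

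I would start by converting the absolute guarantee on $\rhohat$ into a relative one: \cref{thm:std1-g} yields $\rho \ge r - s_n$, so $|\rhohat - \rho| \le \eps(r - s_n) \le \eps\rho$. Writing $\tau \defeq \rhohat - \rho$, this gives $|\tau| \le \eps\rho$ and $\rhohat \ge (1-\eps)\rho$. The central algebraic step is to analyze the unnormalized ratios. A short manipulation gives
\[
\frac{\uvec_i}{\xstar_i} - 1 \;=\; \frac{s_i\tau}{\rho(\rhohat+s_i)} \;=\; \frac{\tau}{\rho}\cdot\frac{s_i}{\rhohat+s_i},
\]
and since $s_i/(\rhohat+s_i) < 1$, this yields $\uvec_i/\xstar_i \in [1-\eps,\,1+\eps]$. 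For $\vvec$, the analogous manipulation gives $\vvec_j/\ystar_j = 1/(1 + \tau/(\rho+s_j))$, and since $|\tau/(\rho+s_j)| \le |\tau|/\rho \le \eps$, we obtain $\vvec_j/\ystar_j \in [1/(1+\eps),\,1/(1-\eps)]$.

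Because $\|\xstar\|_1 = \|\ystar\|_1 = 1$ (shown inside \cref{thm:std1-xy}), the per-coordinate bounds transfer to the $\ell_1$-norms: $\|\uvec\|_1 \in [1-\eps,\,1+\eps]$ and $\|\vvec\|_1 \in [1/(1+\eps),\,1/(1-\eps)]$. Dividing then gives $\xhat_i/\xstar_i,\ \yhat_j/\ystar_j \in [(1-\eps)/(1+\eps),\,(1+\eps)/(1-\eps)]$, and using $(1+\eps)/(1-\eps) - 1 = 2\eps/(1-\eps)$ (together with the symmetric lower-side estimate) proves the ``Moreover'' claim $|\xhat_i - \xstar_i| \le \tfrac{2\eps}{1-\eps}\xstar_i$ and $|\yhat_j - \ystar_j| \le \tfrac{2\eps}{1-\eps}\ystar_j$.

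To finish, I set $\delta \defeq 2\eps/(1-\eps)$ and invoke \cref{thm:approx-eigen-norm}. By \cref{thm:std1-xy}, $A\trn\xstar = \rho\xstar$ and $A\ystar = \rho\ystar$, so $(A\trn\xstar)_j/\xstar_j = (A\ystar)_i/\ystar_i = \rho$. Applying \cref{thm:approx-eigen-norm} once to $A$ and once to $A\trnF$ then yields $(A\trn\xhat)_j/\xhat_j,\ (A\yhat)_i/\yhat_i \in [\tfrac{1-\delta}{1+\delta}\rho,\,\tfrac{1+\delta}{1-\delta}\rho]$, and a one-line simplification shows $(1-\delta)/(1+\delta) = (1-3\eps)/(1+\eps)$ and $(1+\delta)/(1-\delta) = (1+\eps)/(1-3\eps)$. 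Finally, applying \cref{thm:sum-ratio-min-max} to $e(\xhat,\yvec) = \sum_j (A\trn\xhat)_j\yvec_j / \sum_j \xhat_j\yvec_j$ and $e(\xvec,\yhat) = \sum_i \xvec_i(A\yhat)_i / \sum_i \xvec_i\yhat_i$ produces the two desired payoff bounds. The only step that requires real care is the factorization of $\uvec_i/\xstar_i - 1$ in the second paragraph; the key observation is that the factor $s_i/(\rhohat+s_i)$ is automatically at most $1$, which absorbs the relative error into $|\tau|/\rho \le \eps$ with no further bookkeeping.
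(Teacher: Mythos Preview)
Your proof is correct and follows essentially the same route as the paper: both establish $\uvec_i/\xstar_i,\,\vvec_j/\ystar_j$ are within a $(1\pm\eps)$ multiplicative window (you simplify slightly by first noting $|\rhohat-\rho|\le\eps\rho$, whereas the paper carries the bound $\eps(r-s_n)$ through), then normalize to get the $\tfrac{2\eps}{1-\eps}$ closeness. For the payoff bounds, the paper invokes \cref{thm:approx-ne-nmem} as a black box while you unpack its contents via \cref{thm:approx-eigen-norm} and \cref{thm:sum-ratio-min-max}; these are the same argument, since \cref{thm:approx-ne-nmem}'s proof is built from exactly those two lemmas.
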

\begin{proof}
Let $i \in \toN$. Then
\[ \uvec_i - \xstar_i = \frac{1}{n-1}\left(\frac{\rhohat}{\rhohat+s_i} - \frac{\rho}{\rho+s_i}\right)
= \frac{s_i}{n-1}\,\frac{\rhohat-\rho}{(\rhohat+s_i)(\rho+s_i)}. \]
\[ \implies \frac{|\uvec_i - \xstar_i|}{\xstar_i}
= \frac{s_i}{\rhohat+s_i}\,\frac{|\rhohat-\rho|}{\rho}
\le \frac{s_n}{\rhohat+s_n}\,\frac{|\rhohat-\rho|}{\rho}
\le \frac{s_n}{(r-s_n)+s_n}\,\frac{\eps(r-s_n)}{r-s_n} \le \eps\frac{s_n}{r} \le \eps. \]
\begin{align*}
&\implies \uvec_i \in [(1-\eps)\xstar_i, (1+\eps)\xstar_i]
&&\implies \|\uvec\|_1 \in [1-\eps, 1+\eps]
\\ &\implies \xhat_i = \frac{\uvec_i}{\|\uvec\|_1}
    \in \left[\frac{1-\eps}{1+\eps}, \frac{1+\eps}{1-\eps}\right]\xstar_i
&&\implies |\xhat_i - \xstar_i| \le \frac{2\eps}{1-\eps}\xstar_i.
\end{align*}
Let $j \in \toN$. Then
\[ \vvec_j - \ystar_j = \frac{s_j}{\rhohat + s_j} - \frac{s_j}{\rho + s_j}
    = \frac{s_j(\rho - \rhohat)}{(\rho+s_j)(\rhohat+s_j)}
    = \frac{\rho - \rhohat}{\rhohat + s_j}\ystar_j. \]
\[ \implies \frac{|\vvec_j - \ystar_j|}{\ystar_j} = \frac{|\rho - \rhohat|}{\rhohat+s_j}
    \le \frac{(r-s_n)\eps}{r-s_n+s_j} \le \eps. \]
\begin{align*}
&\implies \vvec_j \in [(1-\eps)\ystar_j, (1+\eps)\ystar_j]
&&\implies \|\vvec\|_1 \in [1-\eps, 1+\eps]
\\ &\implies \yhat_j = \frac{\vvec_j}{\|\vvec\|_1}
    \in \left[\frac{1-\eps}{1+\eps}, \frac{1+\eps}{1-\eps}\right]\ystar_j
&&\implies |\yhat_j - \ystar_j| \le \frac{2\eps}{1-\eps}\ystar_j.
\end{align*}
Using \cref{thm:approx-ne-nmem}, we get that for any $\xvec, \yvec \in \simplex_n$,
\begin{align*}
e(\xhat, \yvec) &\ge \frac{1-3\eps}{1+\eps}\rho,
& e(\xvec, \yhat) &\le \frac{1+\eps}{1-3\eps}\rho.
\qedhere \end{align*}
\end{proof}

\subsection{Variant 2}

Let $s \in \mathbb{R}^n$ such that $0 < s_1 \le s_2 \le \ldots \le s_n$.
Consider the RUC game $(A, A)$, where
$A \in \mathbb{R}_{\ge 0}^{(n+1) \times (n+1)}$ and
\[ A_{i,j} \defeq \begin{cases}
0 & \textrm{ if } i = j
\\ s_i & \textrm{ if } i \neq j \textrm{ and } i \in \toN
\\ s_j & \textrm{ if } i \neq j \textrm{ and } i = n+1
\end{cases}. \]

\begin{lemma}
\label{thm:std2-xy}
Let $r = \sum_{i=1}^n s_i$. Define vectors $\xstar$ and $\ystar$ as
\begin{align*}
\xstar_i &= \frac{1}{n+1}  \quad\forall i \in [n+1]
& \ystar_j &= \frac{s_j}{r+s_j} \quad\forall j \in \toN
& \ystar_{n+1} &= 1 - \sum_{j=1}^n \ystar_j
\end{align*}
Then $(\xstar, \ystar)$ is a Nash equilibrium for the RUC game $(A, A)$
and $e(\xstar, \ystar) = r$.
\end{lemma}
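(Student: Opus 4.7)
The plan is to mirror the structure of \cref{thm:std1-xy}: show that $(r, \xstar)$ is an eigenpair of $A\trnF$ and $(r, \ystar)$ is an eigenpair of $A$, both with full-support eigenvectors in $\simplex_{n+1}$, and then invoke \cref{thm:pos-eig-gives-ne-2} to conclude that $(\xstar, \ystar)$ is a Nash equilibrium for the RUC game $(A, A)$ with value $r$.

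First I would verify that $\xstar, \ystar \in \simplex_{n+1}$ and have full support. For $\xstar$ this is immediate since all $(n+1)$ entries equal $1/(n+1)$. For $\ystar$, the entries $\ystar_j = s_j/(r+s_j)$ are strictly positive for $j \in \toN$, and since $\sum_{j=1}^n s_j/(r+s_j) < \sum_{j=1}^n s_j/r = 1$, we get $\ystar_{n+1} > 0$ as well, so $\ystar \in \simplex_{n+1}$ has full support by construction.

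Next I would compute $A\trn\xstar$. For $j \in \toN$, the column sum $\sum_{i=1}^{n+1} A_{i,j} = \sum_{i \in \toN \setminus \{j\}} s_i + s_j = r$, and for $j = n+1$, the column sum $\sum_{i=1}^n s_i + 0 = r$. Since every column sums to $r$ and $\xstar$ is the uniform vector, $A\trn\xstar = (r/(n+1))\boolone = r\xstar$. Then I would compute $A\ystar$. For $i \in \toN$, $(A\ystar)_i = s_i\sum_{j \neq i}\ystar_j = s_i(1-\ystar_i) = s_i \cdot r/(r+s_i) = r\ystar_i$ using the identity $1 - s_i/(r+s_i) = r/(r+s_i)$.

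The only slightly delicate step, and the one I would flag as the main obstacle, is verifying $(A\ystar)_{n+1} = r\ystar_{n+1}$, because row $n+1$ has a different structure ($A_{n+1,j}=s_j$ rather than $s_i$). I would compute $(A\ystar)_{n+1} = \sum_{j=1}^n s_j\ystar_j = \sum_{j=1}^n s_j^2/(r+s_j)$, and then use the algebraic identity $s_j - s_j^2/(r+s_j) = r s_j/(r+s_j)$ to rewrite $r\ystar_{n+1} = r - \sum_{j=1}^n r s_j/(r+s_j) = r - \sum_{j=1}^n(s_j - s_j^2/(r+s_j)) = r - r + \sum_{j=1}^n s_j^2/(r+s_j)$, matching $(A\ystar)_{n+1}$. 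Once these eigenpair identities are established, \cref{thm:pos-eig-gives-ne-2} directly gives that $(\xstar, \ystar)$ is a Nash equilibrium and $e(\xstar, \ystar) = r$.
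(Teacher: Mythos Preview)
Your proposal is correct and follows essentially the same route as the paper: verify $\xstar,\ystar\in\simplex_{n+1}$ with full support, show $(r,\xstar)$ is an eigenpair of $A\trnF$ (you do this via column sums, the paper does it coordinate-wise, but the content is identical), show $(r,\ystar)$ is an eigenpair of $A$, and then invoke \cref{thm:pos-eig-gives-ne-2}. Your verification of $(A\ystar)_{n+1}=r\ystar_{n+1}$ via the identity $s_j - s_j^2/(r+s_j)=rs_j/(r+s_j)$ is the same calculation as the paper's rewriting $s_j\ystar_j=s_j(1-r/(r+s_j))$, just arranged differently.
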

\begin{proof}
It's easy to see that $\xstar, \ystar \in \simplex_n$.
\[ (A\trn\xstar)_{n+1} = \sum_{i=1}^n A_{i,n+1}\xstar_i = \sum_{i=1}^n s_i\xstar_i = r\xstar_{n+1}. \]
For any $j \in \toN$, we have
\[ (A\trn\xstar)_j = \sum_{i=1}^{n+1} A_{i,j}\xstar_i
    = \sum_{i \in \toN \setminus \{j\}} s_i\xstar_i + s_j\xstar_{n+1} = r\xstar_j. \]
Hence, $(r, \xstar)$ is an eigenpair of $A\trnF$.
\[ \sum_{j=1}^n s_j\ystar_j = \sum_{j=1}^n s_j\left(1 - \frac{r}{r+s_j}\right)
    = r - r\sum_{j=1}^n \frac{s_j}{r+s_j} = r\left(1 - \sum_{j=1}^n \ystar_j \right) = r\ystar_{n+1}. \]
\[ (A\ystar)_{n+1} = \sum_{j=1}^n A_{n+1,j}\ystar_j = \sum_{j=1}^n s_j\ystar_j = r\ystar_{n+1}. \]
For any $i \in \toN$, we have
\[ (A\ystar)_i = \sum_{j=1}^{n+1} A_{i,j}\ystar_j = \sum_{j \in [n+1]-\{i\}} s_i\ystar_j
    = s_i(1-\ystar_i) = r\ystar_i. \]
Hence, $(r, \ystar)$ is an eigenpair of $A$.

By \cref{thm:pos-eig-gives-ne-2}, $(\xstar, \ystar)$ is a Nash equilibrium
and $e(\xstar, \ystar) = r$.
\end{proof}

\section{Details of General RUC Games}
\label{sec:non-meml-extra}

\subsection{Facts About Strategies}
\label{sec:non-meml-extra:str}

\begin{observation}
For any strategy $f$, we have $(f^{J_1})^{J_2} = f^{J_1+J_2}$.
\end{observation}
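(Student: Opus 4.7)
The plan is to unfold the definition of the residual strategy and reduce the identity to associativity of list concatenation. Recall that for a deterministic strategy $f$ and any list $K \in \toN^*$, the residual is defined by $f^J(K) \defeq f(J + K)$. So for deterministic $f$, I would simply compute, for any $K \in \toN^*$,
\[ (f^{J_1})^{J_2}(K) \;=\; f^{J_1}(J_2 + K) \;=\; f(J_1 + (J_2 + K)) \;=\; f((J_1 + J_2) + K) \;=\; f^{J_1 + J_2}(K), \]
where the middle equality uses associativity of concatenation on $\toN^*$. Since this equality holds for every $K$, the two functions $(f^{J_1})^{J_2}$ and $f^{J_1+J_2}$ coincide, establishing the observation in the deterministic case.

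For a randomized strategy $f$, recall that $f$ is sampled from a probability space $(\Omega, \Ecal, P)$ where each $\omega \in \Omega$ is a deterministic strategy, and the residual operation acts pointwise on $\Omega$ by sending $\omega$ to $\omega^J$. Because the deterministic identity above holds for every $\omega$, the random variables $(f^{J_1})^{J_2}$ and $f^{J_1+J_2}$ are literally equal as functions of the underlying sample. I expect no obstacle here: the statement is a bookkeeping observation about how the shift operator on histories composes, and its only content is associativity of list concatenation.
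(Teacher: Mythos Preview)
Your proof is correct. The paper states this as an \emph{observation} without any proof; your argument---unfolding the definition $f^J(K') \defeq f(J+K')$ and invoking associativity of concatenation---is exactly the intended justification, and your remark that the randomized case follows pointwise is the right way to make it fully explicit.
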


\begin{lemma}
\label{thm:hist-compose}
Let $f$ be a randomized strategy. Let $(I_1, J_1)$ be a feasible history for $f$ and
$(I_2, J_2)$ be a feasible history for $f^{(I_1, J_1)}$.
Then $(I_1+I_2, J_1+J_2)$ is a feasible history for $f$ and
$(f^{(I_1,J_1)})^{(I_2,J_2)}$ has the same distribution as $f^{(I_1+I_2, J_1+J_2)}$.
\end{lemma}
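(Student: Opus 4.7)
The plan is to reduce the lemma to two basic observations: (i) the event-level identity
\[
\isResponse(f, I_1+I_2, J_1+J_2) = \isResponse(f, I_1, J_1) \cap \isResponse(f^{J_1}, I_2, J_2),
\]
which unrolls the round-by-round definition of $\isResponse$, and (ii) the already-noted composition rule $(f^{J_1})^{J_2} = f^{J_1+J_2}$. Both are immediate from definitions. With these in hand, feasibility becomes an application of the multiplicative chain rule, and the distribution equality becomes an application of the nested-conditioning identity $\Pr(A \mid B \wedge C) = \Pr(A \wedge B \mid C)/\Pr(B \mid C)$.

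For feasibility of $(I_1+I_2, J_1+J_2)$, I would apply the chain rule to (i):
\[
\Pr(\isResponse(f, I_1+I_2, J_1+J_2)) = \Pr(\isResponse(f, I_1, J_1)) \cdot \Pr(\isResponse(f^{J_1}, I_2, J_2) \mid \isResponse(f, I_1, J_1)).
\]
By the very definition of the residual strategy, the second factor equals $\Pr(\isResponse(f^{(I_1, J_1)}, I_2, J_2))$. Both factors are strictly positive by the two feasibility hypotheses, so their product is positive.

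For the distributional equality, fix an arbitrary event $F \subseteq \Omega$. Starting from the definition
\[
\Pr((f^{(I_1,J_1)})^{(I_2,J_2)} \in F) = \Pr((f^{(I_1,J_1)})^{J_2} \in F \mid \isResponse(f^{(I_1,J_1)}, I_2, J_2)),
\]
I would unfold $f^{(I_1,J_1)}$ as $f^{J_1}$ conditioned on $\isResponse(f, I_1, J_1)$, substitute $(f^{J_1})^{J_2} = f^{J_1+J_2}$ via (ii), and then apply the nested-conditioning identity to collapse the two layers of conditioning into a single conditioning on $\isResponse(f, I_1, J_1) \wedge \isResponse(f^{J_1}, I_2, J_2)$. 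By (i), this compound event is exactly $\isResponse(f, I_1+I_2, J_1+J_2)$, so the right-hand side becomes $\Pr(f^{J_1+J_2} \in F \mid \isResponse(f, I_1+I_2, J_1+J_2))$, which is precisely $\Pr(f^{(I_1+I_2, J_1+J_2)} \in F)$ by definition.

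The main obstacle is not mathematical depth but careful bookkeeping: treating $f^{(I_1,J_1)}$ as a genuine random strategy whose distribution is the conditional distribution of $f^{J_1}$ given $\isResponse(f, I_1, J_1)$, keeping track of which sample space each event lives in when pulling events back through the conditioning, and verifying that every denominator arising when collapsing conditionings is nonzero --- which is exactly what the two feasibility hypotheses guarantee. Once the event decomposition (i) is observed, the rest of the proof is a clean chain-rule manipulation.
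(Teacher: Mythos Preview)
Your proposal is correct and follows essentially the same approach as the paper's proof: both unfold the definition of the residual strategy, use $(f^{J_1})^{J_2} = f^{J_1+J_2}$, and collapse the two layers of conditioning via the chain rule, with the event decomposition $\isResponse(f, I_1+I_2, J_1+J_2) = \isResponse(f, I_1, J_1) \cap \isResponse(f^{J_1}, I_2, J_2)$ doing the real work. The paper packages this computation with explicit event sets $F_0, F_1, F_2$ rather than stating the nested-conditioning identity, but the argument is the same.
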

\begin{proof}
Suppose $f$ is sampled from the probability space $(\Omega, \Ecal, P)$.
Let $F \in \Ecal$. Let $F_0 \defeq \{\fhat \in \Omega: \fhat^{J_2} \in F\}$,
$F_1 \defeq \{\fhat \in \Omega: \isResponse(\fhat, I_1, J_1)\}$,
and $F_2 \defeq \{\fhat \in \Omega: \isResponse(\fhat, I_2, J_2)\}$. Then
\begin{align*}
\Pr((f^{(I_1,J_1)})^{(I_2,J_2)} \in F)
&= \Pr((f^{(I_1,J_1)})^{J_2} \in F \mid \Pr(f^{(I_1,J_1)} \in F_2))
\\ &= \frac{\Pr(f^{(I_1,J_1)} \in F_0 \cap F_2)}{\Pr(f^{(I_1,J_1)} \in F_2)},
\\ \Pr(f^{(I_1,J_1)} \in F_0 \cap F_2)
&= \Pr(f^{J_1} \in F_0 \cap F_2 \mid f \in F_1)
\\ &= \Pr((f^{J_1})^{J_2} \in F \textrm{ and } f^{J_1} \in F_2 \textrm{ and } f \in F_1) / \Pr(f \in F_1),
\\ \Pr(f^{(I_1,J_1)} \in F_2)
&= \Pr(f^{J_1} \in F_2 \textrm{ and } f \in F_1) / \Pr(f \in F_1).
\end{align*}
Hence,
\begin{align*}
\Pr((f^{(I_1,J_1)})^{(I_2,J_2)} \in F)
&= \Pr(f^{J_1+J_2} \in F \mid f^{J_1} \in F_2 \textrm{ and } f \in F_1)
\\ &= \Pr(f^{(I_1+I_2,J_1+J_2)} \in F).
\end{align*}
Hence, $f^{(I_1+I_2,J_1+J_2)}$ has the same distribution as $(f^{(I_1,J_1)})^{(I_2,J_2)}$.
\end{proof}

\rthmCondOnMeml*
\begin{proof}
Let $I \defeq [i_1, \ldots, i_k]$ and $J \defeq [j_1, \ldots, j_k]$.
Let $R \defeq [R_0, R_1, \ldots]$ be an infinite sequence of independent random variables
where $\forall t \in \mathbb{Z}_{\ge 0}$ and $\forall \ell \in \toN$,
$\Pr(R_t = \ell) = \xvec_{\ell}$.
Let $R'_t \defeq R_{t+k}$ and $R' \defeq [R'_0, R'_1, \ldots] = [R_k, R_{k+1}, \ldots]$.

Define $f_R$ as the strategy $f_R(K) = R_{|K|}$. Similarly define $f_{R'}$.
Now both $f_R$ and $f_{R'}$ are stationary strategies with parameter $\xvec$.
Also, $f_R^J = f_{R'}$, since for any list $K \in \toN^*$,
$f^J_R(K) = f_R(J+K) = R_{|J+K|} = R'_{|K|} = f_{R'}(K)$.

Let $F$ be a set of deterministic strategies. Then
\begin{align*}
\Pr(f_R^{(I,J)} \in F) &= \Pr(f_R^J \in F \mid \forall t \in [k], f_R([j_1, \ldots, j_{t-1}]) = i_t)
\\ &= \Pr(f_{R'} \in F \mid [R_0, \ldots, R_{k-1}] = I)
\\ &= \Pr(f_{R'} \in F).
    \tag{since $R'$ and $[R_0, \ldots, R_{k-1}]$ are independent}
\end{align*}
Hence, $f_R^{(I,J)}$ has the same distribution as $f_{R'}$.
\end{proof}

\begin{lemma}
\label{thm:cond-on-trn}
Let $f$ and $g$ be independent randomized strategies of the \batterAndBowler{}, respectively,
sampled from probability spaces $(\Omega, \Fcal, P_f)$ and $(\Omega, \Gcal, P_g)$, respectively.
Let $I \defeq [i_1, \ldots, i_k]$ and $J \defeq [j_1, \ldots, j_k]$.
Suppose $\Pr(\isResponse(f,I,J) \textrm{ and } \isResponse(g,J,I)) > 0$.
Then for any $F \in \Fcal$ and $G \in \Gcal$, we get
\[ \Pr(f^J \in F \textrm{ and } g^I \in G \mid \isResponse(f,I,J) \textrm{ and } \isResponse(g,J,I))
= \Pr(f^{(I,J)} \in F)\Pr(g^{(J,I)} \in G). \]
\end{lemma}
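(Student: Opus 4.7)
The plan is to reduce the joint conditional probability to a product of two independent conditional probabilities, and then recognise each factor as the distribution of a residual strategy. The hypothesis that $f$ and $g$ are independent is essential: the event $\isResponse(f, I, J)$ depends only on $f$ (it only inspects the values of $f$ on the prefixes of $J$), and similarly the event $\isResponse(g, J, I)$ depends only on $g$. The same is true of $\{f^J \in F\}$ and $\{g^I \in G\}$.

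Concretely, let $A \defeq \{\isResponse(f, I, J)\}$ and $B \defeq \{\isResponse(g, J, I)\}$, and let $F_* \defeq \{f^J \in F\}$, $G_* \defeq \{g^I \in G\}$. By independence of $f$ and $g$, the pair of events $(F_* \cap A)$ (which depends only on $f$) and $(G_* \cap B)$ (which depends only on $g$) are independent, so
\[ \Pr(F_* \cap G_* \cap A \cap B) = \Pr(F_* \cap A)\Pr(G_* \cap B), \qquad \Pr(A \cap B) = \Pr(A)\Pr(B). \]
The hypothesis $\Pr(A \cap B) > 0$ forces both $\Pr(A) > 0$ and $\Pr(B) > 0$, so dividing gives
\[ \Pr(F_* \cap G_* \mid A \cap B) = \Pr(F_* \mid A)\,\Pr(G_* \mid B). \]
By the defining equation of a residual strategy, $\Pr(F_* \mid A) = \Pr(f^J \in F \mid \isResponse(f,I,J)) = \Pr(f^{(I,J)} \in F)$, and likewise $\Pr(G_* \mid B) = \Pr(g^{(J,I)} \in G)$, which yields the claim.

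There is no real obstacle: the only thing to be careful about is to justify that the $f$-events and $g$-events are truly functions of $f$ and $g$ alone (so that independence of $f$ and $g$ transfers to them), and to note that positivity of the joint event on which we condition implies positivity of each marginal event, which is what lets us split the conditional probability into a product.
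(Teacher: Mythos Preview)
Your proof is correct and is essentially identical to the paper's: both expand the conditional probability as a ratio, use independence of $f$ and $g$ to factor numerator and denominator separately, and then identify each resulting quotient with the defining formula for the residual strategies $f^{(I,J)}$ and $g^{(J,I)}$. Your version is slightly more explicit in noting that $\Pr(A\cap B)>0$ forces $\Pr(A),\Pr(B)>0$, but otherwise the arguments match step for step.
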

\begin{proof}
\begin{align*}
& \Pr(f^J \in F \land g^I \in G \mid \isResponse(f,I,J) \textrm{ and } \isResponse(g,J,I))
\\ &= \frac{\Pr(f^J \in F \land g^I \in G \land \isResponse(f, I, J) \land \isResponse(g, J, I))}%
    {\Pr(\isResponse(f, I, J) \land \isResponse(g, J, I))}
\\ &= \frac{\Pr(f^J \in F \land \isResponse(f, I, J))\Pr(g^I \in G \land \isResponse(g, J, I))}%
    {\Pr(\isResponse(f, I, J))\Pr(\isResponse(g, J, I))}
    \tag{since $f$ and $g$ are independent}
\\ &= \Pr(f^{(I,J)} \in F)\Pr(g^{(J,I)} \in G).
\qedhere
\end{align*}
\end{proof}

\rthmExpScoreNmem*
\begin{proof}
This is obvious when $r = 0$, so now let $r \ge 1$.
The second recurrence relation is obtained by setting $r \to \infty$ in the first recurrence relation.

Let $u = f(\emptyset)$ and $v = g(\emptyset)$.
Note that $\Pr(u = i \textrm{ and } v = j) = \Pr(u=i)\Pr(v=j) = \xvec_i\yvec_j$,
since $f$ and $g$ are independent and $u \sampledFrom \xvec$ and $v \sampledFrom \yvec$.
Hence,
\begin{align*}
& \E(S^{(C,r)}(f, g)) = \E\left(C[u, v] + \boolone(u \neq v)S^{(C,r-1)}(f^{[v]}, g^{[u]})\right)
\\ &\qquad= \sum_{i \in S_x}\sum_{j \in S_y} \xvec_i\yvec_j\E\left(C[u, v] +
    \boolone(u \neq v)S^{(C,r-1)}(f^{[v]}, g^{[u]}) \mid u=i \textrm{ and } v=j\right)
\\ &\qquad= \xvec\trn C\yvec + \sum_{i \in S_x}\sum_{j \in S_y} \xvec_i\yvec_j\boolone(i \neq j)
    \E\left(S^{(C,r-1)}(f^{[j]}, g^{[i]}) \mid u=i \textrm{ and } v=j\right).
\end{align*}
Note that the event $u = i$ is the same as the event $\isResponse(f,[i],[j])$,
and $v=j$ is the same as $\isResponse(g,[j],[i])$. Hence, by \cref{thm:cond-on-trn},
the joint distribution of $f^{[j]}$ and $g^{[i]}$ conditioned on $u=i$ and $v=j$
is the same as the joint distribution of $f^{([i],[j])}$ and $g^{([j],[i])}$. So,
$\E\left(S^{(C,r-1)}(f^{[j]}, g^{[i]}) \mid u=i \textrm{ and } v=j\right)
= \E\left(S^{(C,r-1)}(f^{([i],[j])}, g^{([j],[i])})\right)$.
\end{proof}

\rthmEqvMeml*
\begin{proof}
$\impliedby$:
To show that $f$ and $\stdmeml(\xvec)$ are equivalent,
we will show that for any collision-consistent pair $(I, J)$,
where $I \defeq [i_1, \ldots, i_k]$, we have
$\Pr(\isResponse(f, I, J)) = \prod_{t=1}^k \xvec_{i_t}$.

For $t \in \{0, 1, \ldots, k\}$, let
$I_t \defeq [i_1, \ldots, i_t]$, $J_t \defeq [j_1, \ldots, j_t]$,
and $\alpha_t \defeq \prod_{\ell=1}^t \xstar_{i_{\ell}}$.
We will use induction on $t$ to show that $\Pr(\isResponse(f, I_t, J_t)) = \alpha_t$
for all $t \in \{0, \ldots, k\}$.
This is trivially true for $t = 0$. Now let $t \in \toN[k]$ and assume
$\Pr(\isResponse(f, I_{t-1}, J_{t-1})) = \alpha_{t-1}$.
If $i_{\ell} \not\in \support(\xvec)$ for some $\ell \in \toN[t]$,
then $\Pr(\isResponse(f, I_t, J_t)) = \alpha_t = 0$, so now assume
$i_{\ell} \in \support(\xvec)$ for all $\ell \in \toN[t]$. Then
\begin{align*}
& \Pr(\isResponse(f, I_t, J_t))
\\ &= \Pr(\isResponse(f, I_{t-1}, J_{t-1}) \textrm{ and } f(J_{t-1}) = i_t)
\\ &= \alpha_{t-1}\Pr(f^{J_{t-1}}(\emptyset) = i_t \mid \isResponse(f, I_{t-1}, J_{t-1}))
\\ &= \alpha_{t-1}\Pr(f^{(I_{t-1}, J_{t-1})}(\emptyset) = i_t)
\\ &= \alpha_{t-1}\xvec_{i_t} = \alpha_t.
\end{align*}
Hence, by mathematical induction, we get that $\Pr(\isResponse(f, I, J)) = \alpha_k$.
Hence, $f$ is equivalent to $\stdmeml(\xvec)$.

$\implies$:
Let $(I, J)$ be any feasible history of $f$ such that $\colls(I, J) = 0$.
Let $i, j \in \toN$. Then both $(I, J)$ and $(I+[i], J+[j])$ is collision-consistent.
\begin{align*}
\Pr(f^{(I,J)}(\emptyset) = i)
&= \Pr(f(J) = i \mid \isResponse(f, I, J))
\\ &= \frac{\Pr(\isResponse(f, I+[i], J+[j]))}{\Pr(\isResponse(f, I, J))}
\\ &= \frac{\Pr(\isResponse(\stdmeml(\xvec), I+[i], J+[j]))}{\Pr(\isResponse(\stdmeml(\xvec), I, J))}
    \tag{since $f$ and $\stdmeml(\xvec)$ are equivalent}
\end{align*}
Let $I \defeq [i_1, \ldots, i_k]$.
By the previous part ($\impliedby$), we get that
$\Pr(\isResponse(\stdmeml(\xvec), I, J)) = \prod_{t=1}^k \xstar_{i_t}$
and $\Pr(\isResponse(\stdmeml(\xvec), I+[i], J+[j])) = \xvec_i \cdot \prod_{t=1}^k \xstar_{i_t}$.
Note that $\prod_{t=1}^k \xstar_{i_t} > 0$ since $(I, J)$ is a feasible history for $f$.
Hence, $\Pr(f^{(I,J)}(\emptyset) = i) = \xvec_i$ for all $i \in \toN$.
Hence, $f^{(I,J)}(\emptyset) \sampledFrom \xvec$.
\end{proof}

\subsection{Approximate Nash Equilibrium}
\label{sec:non-meml-extra:approx}

\rthmApproxNeNmem*
\begin{proof}
Let $\eps \defeq 4\delta/(1-\delta)^2$ and
$\rho_C \defeq e^{(C, \infty)}(\xstar, \ystar)$ for $C \in \{A, B\}$.
By \cref{thm:meml-ne-is-eigen}, $(\rho_B, \xstar)$ is an eigenpair of $B\trnF$.
Hence, by \cref{thm:approx-eigen-norm}, for all $j \in \toN$,
\[ \frac{(B\trn\xhat)_j}{\xhat_j} \in \left[\frac{1-\delta}{1+\delta}\rho_B,
    \frac{1+\delta}{1-\delta}\rho_B\right]. \]
By \cref{thm:batter-nmem-ne}, for any strategy $g$, we get
\[ e^{(B,\infty)}(\xhat, g) \in \bigg[\min_{j=1}^n \frac{(B\trn\xhat)_j}{\xhat_j},
        \max_{j=1}^n \frac{(B\trn\xhat)_j}{\xhat_j}\bigg]
    = \bigg[\frac{1-\delta}{1+\delta}\rho_B, \frac{1+\delta}{1-\delta}\rho_B\bigg]. \]
In particular, for $g = \stdmeml(\yhat)$, we get
$e^{(B,\infty)}(\xhat, \yhat) \le \frac{1+\delta}{1-\delta}\rho_B$.
Thus, for any strategy $g$,
\[ e^{(B,\infty)}(\xhat, g) \ge \frac{1-\delta}{1+\delta}\rho_B
    \ge \bigg(\frac{1-\delta}{1+\delta}\bigg)^2 e^{(B,\infty)}(\xhat, \yhat)
    = \frac{e^{(B,\infty)}(\xhat,\yhat)}{1+\eps}. \]
Similarly, using \cref{thm:meml-ne-is-eigen,thm:approx-eigen-norm,thm:bowler-nmem-ne}, we can show that
\[ e^{(A,\infty)}(f, \yhat) \le \frac{1+\delta}{1-\delta}\rho_A
    \le \bigg(\frac{1+\delta}{1-\delta}\bigg)^2 e^{(A,\infty)}(\xhat, \yhat)
    = (1+\eps)e^{(A,\infty)}(\xhat, \yhat). \]
Hence, $(\xhat, \yhat)$ is an $\eps$-approximate Nash equilibrium.
\end{proof}

\subsection{Proof of Uniqueness of NE up to Equivalence}
\label{sec:non-meml-extra:nec}

We consider the RUC game $(A, A)$, where $A$ is an irreducible matrix.
Let $\rho$ be the Perron root of $A$, and
$\xstar$ and $\ystar$ be the left and right Perron vectors of $A$, respectively.
We will show that for every Nash equilibrium $(f^*, g^*)$ of the RUC game $(A, A)$,
$f^*$ is equivalent to $\stdmeml(\xstar)$ and $g^*$ is equivalent to $\stdmeml(\ystar)$.

For notational convenience, we write $S^{(r)}(f, g)$ and $e^{(r)}(f, g)$ instead of
$S^{(A, r)}(f, g)$ and $e^{(A, r)}(f, g)$.
If $(f^*, g^*)$ is any Nash equilibrium for the RUC game $(A, A)$,
then by \cref{thm:all-ne-same-value}, we get
$e^{(\infty)}(f^*, g^*) = e^{(\infty)}(f^*, \ystar) = e^{(\infty)}(\xstar, g^*) = \rho$,
and that $(\xstar, g^*)$ and $(f^*, \ystar)$ are also Nash equilibria.

We first show (in \cref{thm:xhat-equals-xstar,thm:yhat-equals-ystar})
that if $(f^*, g^*)$ is a Nash equilibrium for the RUC game $(A, A)$,
then $f^*(\emptyset) \sampledFrom \xstar$ and $g^*(\emptyset) \sampledFrom \ystar$.

\begin{lemma}
\label{thm:irred-disj-supp-prod}
Let $A \in \mathbb{R}_{\ge 0}^{n \times n}$ be an irreducible matrix.
Let $\xvec, \yvec \in \simplex_n$ such that $\support(\xvec) = \toN \setminus \support(\yvec)$.
Then $\xvec\trn A\yvec > 0$.
\end{lemma}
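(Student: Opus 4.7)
Let $S_x \defeq \support(\xvec)$ and $S_y \defeq \support(\yvec)$. By hypothesis, $S_x$ and $S_y$ partition $\toN$, i.e., $S_x \cup S_y = \toN$ and $S_x \cap S_y = \emptyset$. Both $S_x$ and $S_y$ are nonempty, since $\xvec, \yvec \in \simplex_n$ both sum to 1. My plan is to expand
\[ \xvec\trn A\yvec = \sum_{i \in S_x}\sum_{j \in S_y} \xvec_i A_{i,j} \yvec_j \]
and produce a single pair $(i, j) \in S_x \times S_y$ with $A_{i,j} > 0$; all the factors $\xvec_i$, $\yvec_j$ are strictly positive by construction, so this one positive summand forces the whole sum to be positive.

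To produce such a pair, I will invoke the strong connectivity of $\graph(A) = (\toN, E)$ guaranteed by irreducibility. Fix any $i_0 \in S_x$ and $j_0 \in S_y$. By strong connectivity there is a directed path in $\graph(A)$ from $i_0$ to $j_0$. Walk along this path and consider the first vertex on it that lies in $S_y$; let this vertex be $j$ and let $i$ be its predecessor on the path (which exists because $i_0 \in S_x$, so $j \neq i_0$). By the choice of $j$ as the first vertex of the path in $S_y$, we have $i \in \toN \setminus S_y = S_x$. Hence $(i, j) \in E$ with $i \in S_x$ and $j \in S_y$, which by \cref{defn:irred} means $A_{i,j} > 0$ and (automatically, since $S_x \cap S_y = \emptyset$) $i \neq j$.

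Combining the two steps, the summand corresponding to this $(i, j)$ satisfies $\xvec_i A_{i,j} \yvec_j > 0$, and since all other summands are nonnegative we conclude $\xvec\trn A \yvec > 0$. There is no real obstacle here: the only subtlety is making sure that the edge we extract from strong connectivity actually crosses from $S_x$ to $S_y$, which the ``first vertex on the path in $S_y$'' trick handles cleanly. This lemma is the analogue, at the level of general $\xvec, \yvec$, of the edge-existence arguments already used in \cref{thm:ystar-full-supp,thm:xstar-full-supp}, and it is about to be used in the uniqueness-up-to-equivalence argument to rule out deviations whose support is disjoint from the opponent's.
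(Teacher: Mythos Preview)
Your proof is correct and follows essentially the same approach as the paper's: expand $\xvec\trn A\yvec$ as a sum over $S_x \times S_y$ and exhibit a single positive summand coming from an edge of $\graph(A)$ that crosses from $S_x$ to $S_y$. The paper simply asserts such a crossing edge exists by irreducibility, whereas you spell out the standard ``first vertex of a path that lands in $S_y$'' argument; this is just a more detailed justification of the same step.
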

\begin{proof}
Let $S_x \defeq \support(\xvec)$ and $S_y \defeq \support(\yvec)$.
Since $A$ is irreducible, there is an edge $(u, v)$ in $\graph(A)$
from $S_x$ to $S_y$ such that $A_{u,v} > 0$. Hence,
$\xvec\trn A\yvec = \sum_{i \in S_x}\sum_{j \in S_y}A_{i,j}\xvec_i\yvec_j \ge A_{u,v}\xvec_u\yvec_v > 0$.
\end{proof}

\begin{lemma}
\label{thm:yhat-equals-ystar}
Let $(f^*, g^*)$ be any Nash equilibrium for the RUC game $(A, A)$.
Then $g^*(\emptyset) \sampledFrom \ystar$.
\end{lemma}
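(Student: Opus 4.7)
The plan is to leverage that $(\xstar, g^*)$ is also a Nash equilibrium for the zero-sum RUC game $(A, A)$ and argue that if $g^*(\emptyset)$ were not distributed as $\ystar$, the batter could strictly improve her score by a one-round deviation. By \cref{thm:all-ne-same-value} applied to the zero-sum structure, $(\xstar, g^*)$ is a Nash equilibrium with value $e^{(\infty)}(\xstar, g^*) = \rho$. Let $\yhat \in \simplex_n$ denote the distribution of $g^*(\emptyset)$; the goal is to prove $\yhat = \ystar$.

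For each $i \in \toN$, I would consider the deviation strategy $\ftild_i$ for the batter: play $i$ deterministically in round~$1$, and thereafter play the stationary strategy $\stdmeml(\xstar)$. Since $(\xstar, g^*)$ is an NE for the max player, we must have $e^{(\infty)}(\ftild_i, g^*) \le \rho$. I would expand the left-hand side by peeling off the first round using \cref{thm:expected-score-nmem}, noting that $(\ftild_i)^{([i],[j])}$ is distributed as $\stdmeml(\xstar)$ (immediate, since $\ftild_i(\emptyset) = i$ deterministically, so conditioning on $\isResponse(\ftild_i, [i], [j])$ is vacuous), and that \cref{thm:ne-exists-2} gives $e^{(\infty)}(\stdmeml(\xstar), h) = \rho$ for every strategy $h$. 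This yields
\[
e^{(\infty)}(\ftild_i, g^*) \;=\; (A\yhat)_i + (1 - \yhat_i)\rho,
\]
and combining with $e^{(\infty)}(\ftild_i, g^*) \le \rho$ gives the pointwise inequality $(A\yhat)_i \le \rho\,\yhat_i$ for every $i \in \toN$.

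To close the argument, I would use that $\xstar$ is the left Perron vector of $A$, so $\xstarT A = \rho\,\xstarT$ and therefore $\xstarT A \yhat = \rho\,\xstarT \yhat$. Taking the $\xstar$-weighted sum of the inequalities $(A\yhat)_i - \rho\,\yhat_i \le 0$ gives a sum of $0$, and since $\xstar_i > 0$ for every $i$ by \cref{thm:perron-frobenius}, every summand must vanish. Hence $A\yhat = \rho\,\yhat$, and because $\yhat$ is non-negative with $\|\yhat\|_1 = 1$, \cref{thm:nonneg-eig-is-perron} forces $\yhat = \ystar$.

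The main subtlety, rather than a real obstacle, is the careful bookkeeping around the deviation strategy: correctly recognizing the residual $(\ftild_i)^{([i],[j])}$ as $\stdmeml(\xstar)$ and using \cref{thm:ne-exists-2} to collapse the continuation value against this residual down to the constant $\rho$. Once the per-coordinate inequality $(A\yhat)_i \le \rho\,\yhat_i$ is in place, the left-Perron identity converts it into an eigenvector equation, and \cref{thm:nonneg-eig-is-perron} pins down $\yhat$ uniquely.
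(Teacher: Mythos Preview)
Your proof is correct and, in its second half, takes a genuinely different route from the paper. Both arguments begin identically: pass to the Nash equilibrium $(\xstar, g^*)$, consider one-round deviations that play $\stdmeml(\xstar)$ after the first round, and derive the pointwise inequality $(A\yhat)_i \le \rho\,\yhat_i$ (the paper phrases it as $\xvec\trn A\yhat \le \rho\,\xvec\trn\yhat$ for all $\xvec \in \simplex_n$, which is the same thing).

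From this inequality, the paper proceeds indirectly: it first uses irreducibility to show $\support(\yhat) = \toN$, then argues that $(\xstar, \yhat)$ is a Nash equilibrium of the SRUC game, and finally invokes the SRUC uniqueness theorem (\cref{thm:meml-ne-unique}) to conclude $\yhat = \ystar$. Your argument is more direct: you pair the inequality $A\yhat \le \rho\,\yhat$ with the left-Perron identity $\xstarT A\yhat = \rho\,\xstarT\yhat$, which forces $A\yhat = \rho\,\yhat$ since $\xstar$ has full support, and then you appeal to \cref{thm:nonneg-eig-is-perron}. Your route avoids the separate full-support argument and the detour through SRUC uniqueness; the paper's route has the virtue of reusing machinery already built, but yours is cleaner and more self-contained.
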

\begin{proof}
Let $g^*(\emptyset) \sampledFrom \yhat$.
Consider the Nash equilibrium $(\xstar, g^*)$.
If $\yhat \neq \ystar$, then perhaps the \batter{} can improve by changing her first action,
which would contradict the fact that $(\xstar, g^*)$ is a Nash equilibrium.

Let $\xvec \in \simplex_n$. Let $\fxvec$ be a strategy where
$\fxvec(\emptyset) \sampledFrom \xvec$ and
$\fxvec^{[j]} = \stdmeml(\xstar)$ for all $j \in \toN$.

Since $(\xstar, g^*)$ is a Nash equilibrium, the \batter{} cannot improve by
switching to $\fxvec$ for any $\xvec \in \simplex_n$.
Hence, $e^{(\infty)}(\fxvec, g^*) \le e^{(\infty)}(\xstar, g^*) = \rho$.
Let $S_x \defeq \support(\xvec)$ and $S_y \defeq \support(\yhat)$. Then
\begin{align*}
\rho &\ge e^{(\infty)}(\fxvec, g^*)
\\ &= \xvec\trn A\yhat + \sum_{i \in S_x}\sum_{j \in S_y}
    \xvec_i\yhat_j \boolone(i \neq j)e^{(\infty)}(\xstar, {g^*}^{([j],[i])})
\\ &\ge \xvec\trn A\yhat + \sum_{i=1}^n\sum_{j=1}^n \xvec_i\yhat_j\boolone(i \neq j)\rho
    \tag{by \cref{thm:ne-exists-2}}
\\ &= \xvec\trn A\yhat + \rho - \rho\xvec\trn\yhat.
\end{align*}
Hence, $\xvec\trn A\yhat \le \rho\xvec\trn\yhat$ for all $\xvec \in \simplex_n$.
Suppose $S_y \neq \toN$. Then for any $\xvec$ such that $S_x = \toN \setminus S_y$,
we get $\xvec\trn\yhat = 0$ and $\xvec\trn A\yhat > 0$ by \cref{thm:irred-disj-supp-prod}
(since $A$ is irreducible), which is a contradiction.
Hence, $S_y = \toN$. Therefore, $\xvec\trn\yhat > 0$ for all $\xvec \in \simplex_n$.

Hence, $e^{(\infty)}(\xvec, \yhat) = \xvec\trn A\yhat/\xvec\trn\yhat \le \rho$.
By \cref{thm:batter-nmem-ne}, $e^{(\infty)}(\xstar, \yvec) = \rho$ for all $\yvec \in \simplex_n$.
Hence, $(\xstar, \yhat)$ is a Nash equilibrium for the SRUC game $(A, A)$.
By \cref{thm:meml-ne-unique}, $(\xstar, \ystar)$ is the unique Nash equilibrium
for the SRUC game $(A, A)$. Hence, $\yhat = \ystar$.
\end{proof}

\begin{lemma}
\label{thm:xhat-equals-xstar}
Let $(f^*, g^*)$ be any Nash equilibrium for the RUC game $(A, A)$.
Then $f^*(\emptyset) \sampledFrom \xstar$.
\end{lemma}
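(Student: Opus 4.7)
The plan is to mirror the strategy used in the proof of \cref{thm:yhat-equals-ystar}, but with the roles of the \batter{} and \bowler{} swapped. By \cref{thm:all-ne-same-value}, applied to the zero-sum game $(A, A)$, we know that $(f^*, \ystar)$ is also a Nash equilibrium, and all Nash equilibria yield payoff $\rho$. So I would work with the NE $(f^*, \ystar)$ and argue that if $f^*(\emptyset) \sampledFrom \xhat$ for some $\xhat \ne \xstar$, then the \bowler{} can unilaterally improve by choosing her first action from a cleverly chosen distribution.

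Concretely, for any $\yvec \in \simplex_n$, I would define a \bowler{} strategy $\gyvec$ with $\gyvec(\emptyset) \sampledFrom \yvec$ and $\gyvec^{[i]} = \stdmeml(\ystar)$ for every $i \in \toN$. Since $(f^*, \ystar)$ is a Nash equilibrium, the \bowler{} cannot benefit by deviating to $\gyvec$, so $e^{(\infty)}(f^*, \gyvec) \ge \rho$. Expanding the first round via \cref{thm:expected-score-nmem} and using \cref{thm:ne-exists-2} (which says $e^{(\infty)}(f, \ystar) = \rho$ for every strategy $f$, hence also for each residual ${f^*}^{([j],[i])}$) gives
\[ \rho \le e^{(\infty)}(f^*, \gyvec) = \xhat\trn A\yvec + \rho - \rho\xhat\trn\yvec, \]
so $\xhat\trn A\yvec \ge \rho\,\xhat\trn\yvec$ for every $\yvec \in \simplex_n$.

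Next I would show $\support(\xhat) = \toN$: if the support of $\xhat$ is a proper subset, then picking $\yvec$ supported on its complement gives $\xhat\trn\yvec = 0$ and $\xhat\trn A\yvec > 0$ by \cref{thm:irred-disj-supp-prod}, contradicting $\xhat\trn A\yvec \ge \rho\xhat\trn\yvec$ only if we read it the wrong direction — so I need to be careful here. Actually, since we only have a lower bound, $\xhat\trn A\yvec \ge 0$ is compatible with small $\yvec$; to get full support I would instead use that $e^{(\infty)}(\xvec, \ystar) = \rho$ for all $\xvec$, apply it with $\xvec = \xhat$ to conclude $\xhat\trn A\ystar = \rho\,\xhat\trn\ystar$ (this is an equality, not just an inequality), which already forces $\xhat\trn\ystar > 0$. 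Combined with $\xhat\trn A\yvec/\xhat\trn\yvec \ge \rho$ for every $\yvec$ with $\xhat\trn\yvec > 0$, this says precisely that $e^{(A)}(\xhat, \yvec) \ge \rho$ for all $\yvec \in \simplex_n$.

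Combined with $e^{(A)}(\xvec, \ystar) \le \rho$ for all $\xvec \in \simplex_n$ (from \cref{thm:ne-exists-2} applied to the SRUC game, since $\ystar$ is the Perron eigenvector), the pair $(\xhat, \ystar)$ satisfies both Nash equilibrium conditions for the SRUC game $(A, A)$. The uniqueness result \cref{thm:meml-ne-unique} (applicable since $A$ is irreducible and $\graph(A) \subseteq \graph(A)$ trivially) then forces $\xhat = \xstar$, completing the proof. The main subtlety, as noted above, is handling the full-support argument cleanly; I expect to route it through the equality $\xhat\trn A\ystar = \rho\,\xhat\trn\ystar$ obtained from the $(f^*, \ystar)$ NE rather than trying to derive full support directly from the one-sided inequality.
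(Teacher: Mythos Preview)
Your overall strategy matches the paper's: work with the Nash equilibrium $(f^*, \ystar)$, let the \bowler{} deviate only in round one via $\gyvec$, and extract the inequality $\xhat\trn A\yvec \ge \rho\,\xhat\trn\yvec$ for all $\yvec \in \simplex_n$. Where you diverge is in the passage from this inequality to the conclusion that $(\xhat, \ystar)$ is a Nash equilibrium for the SRUC game, and this is where there is a genuine gap.

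The issue is your sentence ``this says precisely that $e^{(A)}(\xhat, \yvec) \ge \rho$ for all $\yvec \in \simplex_n$.'' You have only established this for $\yvec$ with $\xhat\trn\yvec > 0$. If $\support(\xhat) \ne \toN$, there exist $\yvec$ with $\xhat\trn\yvec = 0$ and $\xhat\trn A\yvec = 0$ (take $\yvec = \evec^{(j)}$ for any $j \notin \support(\xhat)$ with no edge in $\graph(A)$ from $\support(\xhat)$ to $j$; irreducibility guarantees \emph{some} edge into the complement, not into every vertex of it). For such $\yvec$, $e^{(A)}(\xhat, \yvec) = 0 < \rho$, so the min-player condition fails and you cannot invoke \cref{thm:meml-ne-unique}. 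Your proposed fix via the equality $\xhat\trn A\ystar = \rho\,\xhat\trn\ystar$ does not help: since $A\ystar = \rho\ystar$, this equality is tautological, and $\xhat\trn\ystar > 0$ is automatic because $\ystar$ has full support. Neither gives full support of $\xhat$.

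The paper closes this gap by an explicit case split. In the case $\support(\xhat) \ne \toN$, it decomposes $\ystar = \alpha\yhat + (1-\alpha)\ytild$ with $\support(\yhat) = \support(\xhat)$ and $\support(\ytild) = \toN \setminus \support(\xhat)$, uses $\xhat\trn A\ytild > 0$ from \cref{thm:irred-disj-supp-prod}, and combines with the eigenvector identity $\xhat\trn A\ystar = \rho\,\xhat\trn\ystar$ to deduce $\xhat\trn A\yhat < \rho\,\xhat\trn\yhat$, contradicting the inequality you already derived. An alternative (and arguably cleaner) patch is to note that $\xhat\trn A\yvec \ge \rho\,\xhat\trn\yvec$ with $\yvec = \evec^{(j)}$ gives $A\trn\xhat \ge \rho\xhat$ componentwise; dotting with the full-support vector $\ystar$ forces equality, so $\xhat$ is a nonnegative unit eigenvector of $A\trn$ at eigenvalue $\rho$, and \cref{thm:nonneg-eig-is-perron} yields $\xhat = \xstar$ directly, bypassing the SRUC uniqueness theorem entirely.
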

\begin{proof}
Let $f^*(\emptyset) \sampledFrom \xhat$.
Consider the Nash equilibrium $(f^*, \ystar)$.
If $\xhat \neq \xstar$, then perhaps the \bowler{} can improve by changing her first action,
which would contradict the fact that $(f^*, \ystar)$ is a Nash equilibrium.

Let $\yvec \in \simplex_n$. Let $\gyvec$ be a strategy where
$\gyvec(\emptyset) \sampledFrom \yvec$
and $\gyvec^{[i]} = \stdmeml(\ystar)$ for all $i \in \toN$.

Since $(f^*, \ystar)$ is a Nash equilibrium, the \bowler{} cannot improve by
switching to $\gyvec$ for any $\yvec \in \simplex_n$.
Hence, $e^{(\infty)}(f^*, \gyvec) \ge e^{(\infty)}(f^*, \ystar) = \rho$.
Let $S_x \defeq \support(\xhat)$ and $S_y \defeq \support(\yvec)$. Then
\begin{align*}
\rho &\le e^{(\infty)}(f^*, \gyvec)
\\ &= \xhat\trn A\yvec + \sum_{i \in S_x}\sum_{j \in S_y}
    \xhat_i\yvec_j \boolone(i \neq j)e^{(\infty)}({f^*}^{([i],[j])}, \ystar)
\\ &\le \xhat\trn A\yvec + \sum_{i=1}^n\sum_{j=1}^n \xhat_i\yvec_j\boolone(i \neq j)\rho
    \tag{by \cref{thm:ne-exists-2}}
\\ &= \xhat\trn A\yvec + \rho - \rho\xhat\trn\yvec.
\end{align*}
Hence, $\xhat\trn A\yvec \ge \rho\xhat\trn\yvec$ for all $\yvec \in \simplex_n$.

Suppose $S_x = \toN$. Then $\xhat\trn\yvec > 0$ for all $\yvec \in \simplex_n$
and $e^{(\infty)}(\xhat, \yvec) = \xhat\trn A\yvec/\xhat\trn\yvec \ge \rho$.
Also, $e^{(\infty)}(\xvec, \ystar) = \rho$ for all $\xvec \in \simplex_n$ by \cref{thm:ne-exists}.
Hence, $(\xhat, \ystar)$ is a Nash equilibrium for the SRUC game $(A, A)$.
By \cref{thm:meml-ne-unique}, $(\xstar, \ystar)$ is the unique
Nash equilibrium for the SRUC game $(A, A)$, so $\xhat = \xstar$.

Now let's consider the case $S_x \neq \toN$.
Let $\alpha \defeq \sum_{j \in S_x} \ystar_j$.
Since $\support(\ystar) = \toN$, we get $0 < \alpha < 1$.
Define vectors $\yhat$ and $\ytild$ as follows:
\begin{align*}
\yhat_j &= \begin{cases}\ystar_j/\alpha & \textrm{ if } j \in S_x
\\ 0 & \textrm{ if } j \not\in S_x\end{cases},
& \ytild_j &= \begin{cases}0 & \textrm{ if } j \in S_x
\\ \ystar_j/(1-\alpha) & \textrm{ if } j \not\in S_x\end{cases}.
\end{align*}
Then $\ystar = \alpha\yhat + (1-\alpha)\ytild$ and $\xhat\trn\ystar = \alpha\xhat\trn\yhat$.
Also, $\xhat\trn A\ytild > 0$ by \cref{thm:irred-disj-supp-prod}. Hence,
\[ \rho\alpha\xhat\trn\yhat = \rho\xhat\trn\ystar = \xhat\trn A\ystar
    = \alpha\xhat\trn A\yhat + (1-\alpha)\xhat\trn A\ytild > \alpha\xhat\trn A\yhat. \]
Hence, $\xhat\trn A\yhat < \rho\xhat\trn\yhat$.
This is a contradiction. Hence, it's impossible for $S_x \neq \toN$, and so $\xhat = \xstar$.
\end{proof}

Next, we show (in \cref{thm:subg-nopt,thm:subf-nopt,thm:sub-ne,thm:rec-sub-ne}) that
residual strategies of Nash equilibria are themselves Nash equilibria.

\begin{lemma}
\label{thm:subg-nopt}
Let $(f^*, g^*)$ be any Nash equilibrium for the RUC game $(A, A)$.
Then for any $i, j \in \toN$ such that $i \neq j$ and any strategy $f$,
$e^{(\infty)}(f, {g^*}^{([j],[i])}) \le \rho$.
\end{lemma}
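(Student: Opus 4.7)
The plan is to argue by contradiction. Suppose there exist $i \neq j$ and a strategy $f$ with $e^{(\infty)}(f, {g^*}^{([j],[i])}) = \mu > \rho$. We will construct a deviation for the \batter{} in the Nash equilibrium $(\xstar, g^*)$ (which is indeed an NE with value $\rho$ by \cref{thm:all-ne-same-value}) whose payoff against $g^*$ strictly exceeds $\rho$, contradicting the NE condition.

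Define a \batter{} strategy $\tilde f$ that plays $i$ deterministically in the first round, and, after observing the \bowler{}'s first action $j'$, proceeds with $f$ (sampled independently) if $j' = j$ and with $\stdmeml(\xstar)$ if $j' \neq j$. Formally, $\tilde f(\emptyset) = i$ and $\tilde f^{([i],[j])} = f$, $\tilde f^{([i],[j'])} = \stdmeml(\xstar)$ for $j' \neq j$, all independent of $g^*$. By \cref{thm:yhat-equals-ystar}, $g^*(\emptyset) \sampledFrom \ystar$, and since $A$ is irreducible, $\support(\ystar) = \toN$ by \cref{thm:perron-frobenius}.

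Applying \cref{thm:expected-score-nmem} with $\tilde f(\emptyset)$ having the degenerate distribution $\evec^{(i)}$, and evaluating each residual payoff by the rules above,
\begin{align*}
e^{(\infty)}(\tilde f, g^*)
&= (A\ystar)_i + \sum_{j' \neq i} \ystar_{j'}\, e^{(\infty)}(\tilde f^{([i],[j'])}, {g^*}^{([j'],[i])}).
\end{align*}
For $j' \notin \{i, j\}$, $\tilde f^{([i],[j'])} = \stdmeml(\xstar)$, and \cref{thm:ne-exists-2} gives $e^{(\infty)}(\stdmeml(\xstar), {g^*}^{([j'],[i])}) = \rho$. The $j' = j$ term equals $\ystar_j\mu$ by assumption. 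Since $(A\ystar)_i = \rho\ystar_i$, collecting terms yields
\[
e^{(\infty)}(\tilde f, g^*) = \rho\ystar_i + \ystar_j\mu + (1 - \ystar_i - \ystar_j)\rho = \rho + \ystar_j(\mu - \rho) > \rho,
\]
using $\ystar_j > 0$ and $\mu > \rho$. This contradicts the NE property of $(\xstar, g^*)$, since $e^{(\infty)}(\xstar, g^*) = \rho$ is supposed to be the maximum over \batter{} deviations.

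The main obstacle is purely bookkeeping: one must verify that $\tilde f$ is a legitimate randomized strategy independent of $g^*$, that its residual $\tilde f^{([i],[j'])}$ has the intended distribution, and that the terms with $j' = i$ drop out of the sum (they are erased by the $\boolone(i \neq j')$ factor in \cref{thm:expected-score-nmem}). Given that, the computation reduces to the eigenvector identity $(A\ystar)_i = \rho\ystar_i$ and the positivity $\ystar_j > 0$, so the conclusion follows immediately.
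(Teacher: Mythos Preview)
Your proof is correct and follows essentially the same contradiction strategy as the paper: construct a \batter{} deviation in the Nash equilibrium $(\xstar, g^*)$ that exploits the profitable residual branch, then use \cref{thm:yhat-equals-ystar}, \cref{thm:expected-score-nmem}, and \cref{thm:ne-exists-2} exactly as the paper does. The only cosmetic difference is that the paper has the deviating \batter{} play $\xstar$ in the first round (so the gain term is $\xstar_i\ystar_j\delta$), whereas you have her play $i$ deterministically (so the gain term is $\ystar_j(\mu-\rho)$); both versions yield the same contradiction.
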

\begin{proof}
Suppose this is false. Then $\exists \ihat, \jhat \in \toN$ such that $\ihat \neq \jhat$
and $\exists \fhat$ such that $\delta \defeq e^{(\infty)}(\fhat, {g^*}^{([\jhat],[\ihat])}) - \rho > 0$.
Let $K \defeq \toN^2 \setminus \{(\ihat, \jhat)\}$.

For the Nash equilibrium $(\xstar, g^*)$, the \batter{} can perhaps improve
by switching to $\fhat$ if the first pair of actions is $(\ihat, \jhat)$.

Let $\ftild$ be the strategy where
$\ftild(\emptyset) \sampledFrom \xstar$ and
\[ \ftild^{[j]} \defeq \begin{cases}
\fhat & \textrm{ if } \ftild(\emptyset)=\ihat \textrm{ and } j=\jhat
\\ \stdmeml(\xstar) & \textrm{ otherwise} \end{cases}. \]
Note that $\ftild^{([i],[j])}$ is distributed as $\fhat$ if $i = \ihat$ and $j = \jhat$,
and as $\stdmeml(\xstar)$ otherwise.

Since $(\xstar, g^*)$ is a Nash equilibrium, $e^{(\infty)}(\ftild, g^*) \le e^{(\infty)}(\xstar, g^*) = \rho$.
By \cref{thm:yhat-equals-ystar}, $g^*(\emptyset) \sampledFrom \ystar$. Hence,
\begin{align*}
\rho &\ge e^{(\infty)}(\ftild, g^*)
= \xstarT A\ystar + \sum_{i=1}^n\sum_{j=1}^n \xstar_i\ystar_j
    \boolone(i \neq j)e^{(\infty)}(\ftild^{([i],[j])}, {g^*}^{([j],[i])})
\\ &= \xstarT(\rho\ystar)
    + \sum_{(i,j) \in K} \xstar_i\ystar_j\boolone(i \neq j)e^{(\infty)}(\xstar, {g^*}^{([j],[i])})
    + \xstar_{\ihat}\ystar_{\jhat}e^{(\infty)}(\fhat, {g^*}^{([\jhat],[\ihat])})
\\ &\ge \rho\xstarT\ystar + \sum_{(i,j) \in K} \xstar_i\ystar_j\boolone(i \neq j)\rho
    + \xstar_{\ihat}\ystar_{\jhat}(\rho + \delta)
    \tag{by \cref{thm:ne-exists-2}}
\\ &= \rho\xstarT\ystar + (\rho - \rho\xstarT\ystar) + \xstar_{\ihat}\ystar_{\jhat}\delta
\\ &= \rho + \xstar_{\ihat}\ystar_{\jhat}\delta > \rho.
\end{align*}
This is a contradiction. Hence, $\delta \le 0$, and so
for any $i, j \in \toN$ such that $i \neq j$ and any strategy $f$,
we have $e^{(\infty)}(f, {g^*}^{([j],[i])}) \le \rho$.
\end{proof}

\begin{lemma}
\label{thm:subf-nopt}
Let $(f^*, g^*)$ be any Nash equilibrium for the RUC game $(A, A)$.
Then for any $i, j \in \toN$ such that $i \neq j$ and any strategy $g$,
$e^{(\infty)}({f^*}^{([i],[j])}, g) \ge \rho$.
\end{lemma}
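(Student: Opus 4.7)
The plan is to mirror the proof of \cref{thm:subg-nopt} with the roles of the \batter{} and \bowler{} swapped, using the NE $(f^*, \ystar)$ in place of $(\xstar, g^*)$. Suppose for contradiction that the statement fails, so there exist $\ihat, \jhat \in \toN$ with $\ihat \neq \jhat$ and a strategy $\ghat$ such that $\delta \defeq \rho - e^{(\infty)}({f^*}^{([\ihat],[\jhat])}, \ghat) > 0$. The idea is that the \bowler{} can then improve on $\ystar$ by playing according to $\ystar$ in the first round, and if it turns out that her first action was $\jhat$ while the \batter{}'s first action was $\ihat$, switching to $\ghat$ from the second round onward.

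Concretely, construct $\gtild$ satisfying $\gtild(\emptyset) \sampledFrom \ystar$ and
\[ \gtild^{[i]} \defeq \begin{cases} \ghat & \textrm{ if } \gtild(\emptyset) = \jhat \textrm{ and } i = \ihat \\ \stdmeml(\ystar) & \textrm{ otherwise} \end{cases}. \]
Then $\gtild^{([j],[i])}$ is distributed as $\ghat$ when $(i,j) = (\ihat, \jhat)$ and as $\stdmeml(\ystar)$ otherwise. By \cref{thm:xhat-equals-xstar}, $f^*(\emptyset) \sampledFrom \xstar$, so the recurrence in \cref{thm:expected-score-nmem} applies with $\xvec = \xstar$ and $\yvec = \ystar$.

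Expanding $e^{(\infty)}(f^*, \gtild)$ by this recurrence, the zeroth-round term contributes $\xstarT A\ystar = \rho\xstarT\ystar$ since $(\rho, \ystar)$ is an eigenpair of $A$. For the residual terms with $(i,j) \neq (\ihat,\jhat)$ and $i \neq j$, \cref{thm:ne-exists-2} gives $e^{(\infty)}({f^*}^{([i],[j])}, \stdmeml(\ystar)) = \rho$, while the single $(\ihat, \jhat)$ term contributes $\xstar_{\ihat}\ystar_{\jhat}(\rho - \delta)$. Summing, and using $\sum_{i\neq j}\xstar_i\ystar_j = 1 - \xstarT\ystar$, yields
\[ e^{(\infty)}(f^*, \gtild) \le \rho\xstarT\ystar + \rho(1 - \xstarT\ystar - \xstar_{\ihat}\ystar_{\jhat}) + \xstar_{\ihat}\ystar_{\jhat}(\rho - \delta) = \rho - \xstar_{\ihat}\ystar_{\jhat}\delta. \]
Since $\xstar$ and $\ystar$ have full support (by \cref{thm:perron-frobenius}), $\xstar_{\ihat}\ystar_{\jhat}\delta > 0$, so $e^{(\infty)}(f^*, \gtild) < \rho = e^{(\infty)}(f^*, \ystar)$. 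This contradicts the fact that $(f^*, \ystar)$ is a Nash equilibrium, completing the proof.

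The only subtle point is the proper bookkeeping in the recurrence: one must verify that $\gtild^{([j],[i])}$ really has the claimed distribution, which follows because $\gtild$ is piecewise defined by its own first action together with the opponent's first action, exactly mirroring the construction used for $\ftild$ in \cref{thm:subg-nopt}. The rest is just the inequality-chase above, which uses nothing more than the eigenpair identity for $\ystar$, \cref{thm:ne-exists-2}, and full support of the Perron vectors.
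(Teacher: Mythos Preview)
Your proof is correct and follows essentially the same approach as the paper: both argue by contradiction, construct the same deviation $\gtild$ for the \bowler{} (play $\ystar$ in round one, switch to $\ghat$ precisely when the first-round action pair is $(\ihat,\jhat)$), invoke \cref{thm:xhat-equals-xstar} to get $f^*(\emptyset)\sampledFrom\xstar$, and expand the recurrence of \cref{thm:expected-score-nmem} using \cref{thm:ne-exists-2} on the off-$(\ihat,\jhat)$ terms to reach $e^{(\infty)}(f^*,\gtild)=\rho-\xstar_{\ihat}\ystar_{\jhat}\delta<\rho$. The only cosmetic differences are that the paper names the set $K\defeq\toN^2\setminus\{(\ihat,\jhat)\}$ explicitly and writes the inequality chain in one display, whereas you narrate the same computation in prose.
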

\begin{proof}
Suppose this is false. Then $\exists \ihat, \jhat \in \toN$ such that $\ihat \neq \jhat$
and $\exists \ghat$ such that $\delta \defeq \rho - e^{(\infty)}({f^*}^{([\ihat],[\jhat])}, \ghat) > 0$.
Let $K \defeq \toN^2 \setminus \{(\ihat, \jhat)\}$.

For the Nash equilibrium $(f^*, \ystar)$, the \bowler{} can perhaps improve
by switching to $\ghat$ if the first pair of actions is $(\ihat, \jhat)$.

Let $\gtild$ be the strategy where $\gtild(\emptyset) \sampledFrom \yvec$ and
\[ \gtild^{[i]} \defeq \begin{cases}
\ghat & \textrm{ if } i=\ihat \textrm{ and } \gtild(\emptyset)=\jhat
\\ \stdmeml(\ystar) & \textrm{ otherwise} \end{cases}. \]
Note that $\gtild^{([j],[i])}$ is distributed as $\ghat$ if $i=\ihat$ and $j=\jhat$,
and as $\stdmeml(\ystar)$ otherwise.

Since $(f^*, \ystar)$ is a Nash equilibrium, $e^{(\infty)}(f^*, \gtild) \ge e^{(\infty)}(f^*, \ystar) = \rho$.
By \cref{thm:xhat-equals-xstar}, $f^*(\emptyset) \sampledFrom \xstar$. Hence,
\begin{align*}
\rho &\le e^{(\infty)}(f^*, \gtild)
= \xstarT A\ystar + \sum_{i=1}^n\sum_{j=1}^n \xstar_i\ystar_j
    \boolone(i \neq j)e^{(\infty)}({f^*}^{([i],[j])}, \gtild^{([j],[i])})
\\ &= \xstarT(\rho\ystar) + \sum_{(i,j) \in K} \xstar_i\ystar_j\boolone(i \neq j)e^{(\infty)}({f^*}^{([i],[j])}, \ystar)
    + \xstar_{\ihat}\ystar_{\jhat}e^{(\infty)}({f^*}^{([\ihat],[\jhat])}, \ghat)
\\ &\le \rho\xstarT\ystar + \sum_{(i,j) \in K} \xstar_i\ystar_j\boolone(i \neq j)\rho
    + \xstar_{\ihat}\ystar_{\jhat}(\rho - \delta)
    \tag{by \cref{thm:ne-exists-2}}
\\ &= \rho\xstarT\ystar + (\rho - \rho\xstarT\ystar) - \xstar_{\ihat}\ystar_{\jhat}\delta
\\ &= \rho - \xstar_{\ihat}\ystar_{\jhat}\delta < \rho.
\end{align*}
This is a contradiction. Hence, $\delta \le 0$, and so
for any $i, j \in \toN$ such that $i \neq j$ and any strategy $g$,
we have $e^{(\infty)}({f^*}^{([i],[j])}, g) \ge \rho$.
\end{proof}

\begin{lemma}
\label{thm:sub-ne}
Let $(f^*, g^*)$ be any Nash equilibrium for the RUC game $(A, A)$.
Then for any $i, j \in \toN$ such that $i \neq j$, $({f^*}^{([i],[j])}, {g^*}^{([j],[i])})$ is
also a Nash equilibrium for the RUC game $(A, A)$.
(Hence, $e^{(\infty)}({f^*}^{([i],[j])}, {g^*}^{([j],[i])}) = \rho$.)
\end{lemma}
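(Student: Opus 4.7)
The plan is to combine the two preceding lemmas directly, with essentially no new work. The key observation is that Lemmas \ref{thm:subg-nopt} and \ref{thm:subf-nopt} give matching one-sided bounds of $\rho$ on both the \batter{}'s payoff against ${g^*}^{([j],[i])}$ (upper bound) and the \bowler{}'s cost against ${f^*}^{([i],[j])}$ (lower bound) for every choice of the deviating player's strategy. These bounds already assert the defining property of a Nash equilibrium; the only remaining task is to verify that the value at the pair $({f^*}^{([i],[j])}, {g^*}^{([j],[i])})$ itself equals $\rho$, so that the inequalities translate into the ``no profitable deviation'' condition.

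First, I would apply Lemma \ref{thm:subg-nopt} with the specific deviating strategy $f = {f^*}^{([i],[j])}$ to conclude that $e^{(\infty)}({f^*}^{([i],[j])}, {g^*}^{([j],[i])}) \le \rho$. Then, applying Lemma \ref{thm:subf-nopt} with the specific deviating strategy $g = {g^*}^{([j],[i])}$, I would obtain the reverse inequality $e^{(\infty)}({f^*}^{([i],[j])}, {g^*}^{([j],[i])}) \ge \rho$. Combining these gives
\[
e^{(\infty)}({f^*}^{([i],[j])}, {g^*}^{([j],[i])}) = \rho,
\]
which establishes the parenthetical claim in the lemma.

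Next, to verify the Nash equilibrium conditions, I would invoke the lemmas one more time each, but now with an arbitrary deviating strategy. For any \batter{} strategy $f$, Lemma \ref{thm:subg-nopt} gives $e^{(\infty)}(f, {g^*}^{([j],[i])}) \le \rho = e^{(\infty)}({f^*}^{([i],[j])}, {g^*}^{([j],[i])})$, so the \batter{} cannot improve. Symmetrically, for any \bowler{} strategy $g$, Lemma \ref{thm:subf-nopt} gives $e^{(\infty)}({f^*}^{([i],[j])}, g) \ge \rho = e^{(\infty)}({f^*}^{([i],[j])}, {g^*}^{([j],[i])})$, so the \bowler{} cannot improve. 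Both defining conditions of a Nash equilibrium are thus satisfied.

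There is no genuine obstacle here: the hard work has already been done in Lemmas \ref{thm:subg-nopt} and \ref{thm:subf-nopt}, which carefully established that the residual strategies act as ``best responses'' against the opposing residual on the game-value $\rho$. The only thing to be careful about is the order of quantifiers: the two preceding lemmas state bounds that hold for \emph{all} deviating strategies (including the residual strategy of the partner), which is exactly what allows one to plug in the residual of the other player as a special case to pin down the value, and then to plug in an arbitrary deviation to get the equilibrium condition.
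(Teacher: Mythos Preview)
Your proposal is correct and matches the paper's approach exactly: the paper's proof is simply ``Follows from \cref{thm:subg-nopt,thm:subf-nopt},'' and you have spelled out precisely the argument implicit in that line. The only difference is that you have made explicit the two-step use of the lemmas (first to pin down the value at $\rho$, then to verify the no-deviation conditions), which the paper leaves to the reader.
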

\begin{proof}
Follows from \cref{thm:subg-nopt,thm:subf-nopt}.
\end{proof}

\begin{lemma}
\label{thm:rec-sub-ne}
Let $(f^*, g^*)$ be any Nash equilibrium for the RUC game $(A, A)$.
Then for any lists $I$ and $J$ of the same length such that $\colls(I, J) = 0$, we get that
$({f^*}^{(I,J)}, {g^*}^{(J,I)})$ is also a Nash equilibrium for the RUC game $(A, A)$.
\end{lemma}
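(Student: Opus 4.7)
The plan is to prove the statement by induction on $k \defeq |I| = |J|$, using the length-$1$ case (\cref{thm:sub-ne}) as the inductive step and \cref{thm:hist-compose} to glue residuals together. Throughout, I assume $(I, J)$ is a feasible history of $f^*$ and $(J, I)$ of $g^*$, since otherwise the residual strategies are undefined and the claim is vacuous.

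The base case $k = 0$ is immediate: $I = J = \emptyset$ gives ${f^*}^{(\emptyset,\emptyset)} = f^*$ and ${g^*}^{(\emptyset,\emptyset)} = g^*$, which is a Nash equilibrium by hypothesis. For the inductive step, suppose the result holds for all lists of length $k-1$, and let $I \defeq [i_1, \ldots, i_k]$ and $J \defeq [j_1, \ldots, j_k]$ with $\colls(I, J) = 0$, which means $i_t \neq j_t$ for every $t \in \toN[k]$. Write $I' \defeq [i_1, \ldots, i_{k-1}]$ and $J' \defeq [j_1, \ldots, j_{k-1}]$, so $I = I' + [i_k]$ and $J = J' + [j_k]$. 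Clearly $\colls(I', J') = 0$ and $(I', J')$, $(J', I')$ are feasible histories (by monotonicity of feasibility under shortening). By the inductive hypothesis, $({f^*}^{(I', J')}, {g^*}^{(J', I')})$ is a Nash equilibrium for the RUC game $(A, A)$.

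Now I apply \cref{thm:sub-ne} to this Nash equilibrium with the single actions $i_k, j_k$ (which are distinct) to conclude that
\[ \bigl( ({f^*}^{(I',J')})^{([i_k],[j_k])},\; ({g^*}^{(J',I')})^{([j_k],[i_k])} \bigr) \]
is itself a Nash equilibrium. By \cref{thm:hist-compose}, the first coordinate is distributed as ${f^*}^{(I'+[i_k], J'+[j_k])} = {f^*}^{(I, J)}$, and the second is distributed as ${g^*}^{(J'+[j_k], I'+[i_k])} = {g^*}^{(J, I)}$. Since payoffs in RUC games depend only on the joint distribution of collision-consistent response pairs (\cref{thm:eqv-score}), replacing each coordinate by an equivalent strategy preserves the Nash-equilibrium property, so $({f^*}^{(I,J)}, {g^*}^{(J,I)})$ is a Nash equilibrium, completing the induction.

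The only real subtlety is the step that invokes \cref{thm:sub-ne} to the \emph{conditioned} strategy pair $({f^*}^{(I', J')}, {g^*}^{(J', I')})$ rather than to the original $(f^*, g^*)$. This is legitimate because \cref{thm:sub-ne} is stated for an arbitrary Nash equilibrium of $(A, A)$, and the inductive hypothesis has already established that the conditioned pair qualifies; \cref{thm:hist-compose} is what lets us identify the double residual with the single residual along the concatenated history. No new game-theoretic argument is needed beyond the length-$1$ case, so I expect the proof to be short and the main bookkeeping obstacle to be simply maintaining feasibility of all intermediate histories so the residuals are well-defined.
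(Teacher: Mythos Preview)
Your proof is correct and follows essentially the same approach as the paper's: induction on the length of $I$, with \cref{thm:sub-ne} supplying the inductive step and \cref{thm:hist-compose} identifying the iterated residual with the single residual along the concatenated history. The only cosmetic difference is that the paper concludes directly from ``same distribution'' rather than routing through \cref{thm:eqv-score}, but this is harmless.
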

\begin{proof}
For lists $I$ and $J$ of the same length such that $\colls(I, J) = 0$,
let $P(I,J)$ be the proposition that $({f^*}^{(I,J)}, {g^*}^{(J,I)})$ is a Nash equilibrium
for the RUC game $(A, A)$.
We will prove $P(I,J)$ for all $I$ and $J$ by induction on $|I|$.
When $|I| = |J| = 0$ (base case), then $P(I,J)$ is trivially true.
Now let $|I| \ge 1$, $I = I' + [\ihat]$, $J = J' + [\jhat]$.
Since $\colls(I, J) = 0$, we get $\colls(I', J') = 0$ and $\ihat \neq \jhat$.
Assume $P(I',J')$ holds.

$P(I',J')$ implies $({f^*}^{(I',J')}, {g^*}^{(J',I')})$ is a Nash equilibrium for the RUC game $(A, A)$.
Since $\colls(I',J') = 0$ and $\ihat \neq \jhat$, by \cref{thm:sub-ne},
$({{f^*}^{(I',J')}}^{([\ihat],[\jhat])}, {{g^*}^{(J',I')}}^{([\jhat],[\ihat])})$
is a Nash equilibrium for the RUC game $(A, A)$.
By \cref{thm:hist-compose},
${{f^*}^{(I',J')}}^{([\ihat],[\jhat])}$ and ${{g^*}^{(J',I')}}^{([\jhat],[\ihat])}$
have the same distribution as ${f^*}^{(I,J)}$ and ${g^*}^{(J,I)}$, respectively.
So, $({f^*}^{(I,J)}, {g^*}^{(J,I)})$ is a Nash equilibrium for the RUC game $(A, A)$.
Hence, by mathematical induction, we get that $P(I,J)$ is true for all $I$ and $J$
of the same length such that $\colls(I, J) = 0$.
\end{proof}

Let $(f^*, g^*)$ be any Nash equilibrium of the RUC game $(A, A)$.
For any lists $I$ and $J$ such that $\colls(I, J) = 0$, we get that
$({f^*}^{(I,J)}, {g^*}^{(J,I)})$ is also a Nash equilibrium by \cref{thm:rec-sub-ne}.
By \cref{thm:xhat-equals-xstar,thm:yhat-equals-ystar}, we get that
${f^*}^{(I,J)}(\emptyset) \sampledFrom \xstar$ and ${g^*}^{(J,I)}(\emptyset) \sampledFrom \ystar$.
Hence, by \cref{thm:eqv-meml}, $f^*$ is equivalent to $\stdmeml(\xstar)$,
and $g^*$ is equivalent to $\stdmeml(\ystar)$.

\section{Multiple Collisions}
\label{sec:multi-collisions}

We can extend RUC games to a setting where instead of ending the game on the first collision,
we end it on the $w\Th$ collision, for some $w \in \mathbb{Z}_{\ge 0}$.
Here $w$ is called the \emph{collision threshold}.
($w = 0$ means that the game has 0 rounds and both players get a payoff of 0.)
This extension is important in hand cricket when the batter has \emph{multiple wickets}.
We represent this extension of RUC games as a tuple $(A, B, w)$.
In \cref{sec:meml,sec:non-meml}, we studied the game $(A, B, 1)$,
and now we would like to extend those results to $(A, B, w)$.

But before that, let us further extend RUC games to a setting where
the collision threshold can be a random variable $W$.
In the drug dealer example, suppose whenever the drug dealer and law enforcement
pick the same location, the drug dealer is caught with probability $p \in (0, 1]$.
Then $W$ is a geometric random variable with $\E(W) = 1/p$.

Formally, we represent this extension as a tuple $(A, B, \Dcal)$.
Here $A$ and $B$ are the payoff matrices as before,
and $\Dcal$ is a probability distribution over $\mathbb{Z}_{\ge 0}$.
The collision threshold $W$ is sampled from distribution $\Dcal$ before the game begins,
and the game runs until $W$ collisions happen.
The players know $\Dcal$ but not $W$; they find out the value of $W$
only when the $W\Th$ collision happens and the game ends.

We show that most of our results can be extended to this new setting.

\subsection{Stationary RUC Games}

We start by analyzing stationary RUC games, i.e., games where the players
are forced to only use stationary strategies.

When the \batter{} and \bowler{} use stationary strategies $\xvec$ and $\yvec$, respectively,
let $e^{(A)}_w(\xvec, \yvec)$ and $e^{(B)}_w(\xvec, \yvec)$ be the
\batter{}'s expected total score and the \bowler{}'s expected total cost, respectively,
for the game $(A, B, w)$.
Similarly, let $v^{(A)}_w(\xvec, \yvec)$ and $v^{(B)}_w(\xvec, \yvec)$ denote
the variance of the \batter{}'s total score and \bowler{}'s total cost, respectively,
for the game $(A, B, w)$.

\begin{lemma}
\label{thm:score-ev}
Let $\xvec, \yvec \in \simplex_n$ and $w \in \mathbb{Z}_{\ge 0}$.
Then for $C \in \{A, B\}$, we get
$e^{(C)}_w(\xvec, \yvec) = w e^{(C)}_1(\xvec, \yvec)$,
and when $\xvec\trn\yvec > 0$, we get
$v^{(C)}_w(\xvec, \yvec) = w v^{(C)}_1(\xvec, \yvec)$.
\end{lemma}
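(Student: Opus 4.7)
The plan is to exploit the fact that stationary strategies are memoryless, so after any collision the game effectively restarts from scratch, and the per-collision ``segments'' of the game are independent and identically distributed. This lets me decompose the total score (or cost) into a sum of $w$ i.i.d.\ copies of the $w=1$ random variable, whereupon both conclusions follow by linearity of expectation and by additivity of variance for independent sums.

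First I would dispose of the degenerate cases. If $w=0$, both sides are $0$ by definition. If $\xvec\trn\yvec=0$ and $w\geq 1$, then no collision is ever possible, so the game behaves identically to the $w=1$ game in this regime: by \cref{thm:expected-score} the per-round payoff is either $0$ (giving $e^{(C)}_w=0=w\cdot 0$) or strictly positive and accumulated over infinitely many rounds (giving $e^{(C)}_w=\infty=w\cdot\infty$), and the variance statement is vacuous because its hypothesis $\xvec\trn\yvec>0$ is not met.

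The substantive case is $\xvec\trn\yvec>0$, where the game terminates almost surely. I would proceed by induction on $w$, with $w=0$ as the trivial base case. For the step from $w-1$ to $w$, let $X$ denote the \batter{}'s (or \bowler{}'s) accumulated payoff in the rounds up to and including the first collision, and let $Y$ denote the payoff accumulated over the remaining $w-1$ collisions. Then the total payoff decomposes as $Z_w = X + Y$. Because both strategies $\xvec$ and $\yvec$ are stationary, the actions played after the first collision are independent of everything that happened up to and including that collision (\cref{thm:cond-on-meml} makes this precise: conditioning on any past history leaves the stationary strategy unchanged). Hence $Y$ is independent of $X$ and distributed exactly as the total payoff of the game $(A,B,w-1)$ with the same stationary strategies. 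Therefore
\begin{align*}
e^{(C)}_w(\xvec,\yvec) &= \E(X) + \E(Y) = e^{(C)}_1(\xvec,\yvec) + e^{(C)}_{w-1}(\xvec,\yvec) = w\,e^{(C)}_1(\xvec,\yvec),
\\ v^{(C)}_w(\xvec,\yvec) &= \Var(X) + \Var(Y) = v^{(C)}_1(\xvec,\yvec) + v^{(C)}_{w-1}(\xvec,\yvec) = w\,v^{(C)}_1(\xvec,\yvec),
\end{align*}
completing the induction.

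The main obstacle is justifying the independence of $Y$ from $X$ cleanly. I expect to discharge this by appealing to \cref{thm:cond-on-meml}: for any fixed sequence of actions $(I,J)$ leading up to and including the first collision, the residual stationary strategies $\stdmeml(\xvec)^{(I,J)}$ and $\stdmeml(\yvec)^{(J,I)}$ have the same distribution as $\stdmeml(\xvec)$ and $\stdmeml(\yvec)$, respectively; integrating over the distribution of $(I,J)$ yields both the independence and the identical-distribution claim needed for the inductive step. Finiteness of the variance of $X$ (needed to avoid $\infty-\infty$ issues) was already established in \cref{thm:meml-score-var}.
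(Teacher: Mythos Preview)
Your proposal is correct and follows essentially the same approach as the paper: decompose the total payoff into the portion up to the first collision and the remainder, use stationarity to argue that the latter is independent and distributed as the $(w-1)$-collision game, derive the same recurrences $e^{(C)}_w = e^{(C)}_1 + e^{(C)}_{w-1}$ and $v^{(C)}_w = v^{(C)}_1 + v^{(C)}_{w-1}$, and solve them. Your treatment of the degenerate cases and your explicit appeals to \cref{thm:cond-on-meml} and \cref{thm:meml-score-var} are slightly more careful than the paper's, but the argument is the same.
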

\begin{proof}
If $\xvec\trn\yvec = 0$, then $\xvec$ and $\yvec$ have disjoint supports, so a collision never happens.
Hence, $e^{(C)}_w(\xvec, \yvec)$ doesn't depend on $w$ and is either $0$ or $\infty$.
Hence, $e^{(C)}_w(\xvec, \yvec) = w e^{(C)}_1(\xvec, \yvec)$.

Now assume $\xvec\trn\yvec > 0$.
When $w = 0$, then this lemma is trivially true, so let $w \ge 1$.

Let $N$ be the round in which the first collision happens.
Then $N$ is a geometric random variable and $\E(N) = 1/\xvec\trn\yvec$.
Let $Z^{(A)}$ be the \batter{}'s total score in the first $N$ rounds,
and $Y^{(A)}$ be the \batter{}'s total score in the remaining game (i.e., after the first $N$ rounds).
Similarly define $Z^{(B)}$ and $Y^{(B)}$.

After the first $N$ rounds, the remaining game is the same as $(A, B, w-1)$. Hence,
\[ e^{(C)}_w(\xvec, \yvec) = \E(Z^{(C)} + Y^{(C)}) = \E(Z^{(C)}) + \E(Y^{(C)})
= e^{(C)}_1(\xvec, \yvec) + e^{(C)}_{w-1}(\xvec, \yvec). \]
Furthermore, since strategies are stationary, $Y^{(C)}$ is independent of $Z^{(C)}$. Hence,
\[ v^{(C)}_w(\xvec, \yvec) = \Var(Z^{(C)} + Y^{(C)}) = \Var(Z^{(C)}) + \Var(Y^{(C)})
= v^{(C)}_1(\xvec, \yvec) + v^{(C)}_{w-1}(\xvec, \yvec). \]
Now solve these recurrence relations in $w$ to get the intended result.
\end{proof}

\Cref{thm:score-ev} says that changing the collision threshold
is equivalent to scaling the payoffs.
Hence, for the SRUC game $(A, B, \Dcal)$, all results of \cref{sec:meml} continue to hold
when $\Dcal$'s mean is positive and finite.

\subsection{General RUC Games}

For general RUC games (where players are allowed to use non-stationary strategies),
extending the results of \cref{sec:non-meml} would take more work,
since players may now use strategies that take into account the number
of collisions that have happened so far.

We show how to extend the definitions and theorem statements of \cref{sec:non-meml}
to the multiple-collisions setting. We omit the proofs of many theorems because
they are very similar to those in \cref{sec:non-meml,sec:non-meml-extra}.
We extend our existence results to randomized collision thresholds
and our uniqueness-up-to-equivalence results to deterministic collision thresholds.

The formal definition of deterministic and randomized strategies remains the same as before.
However, the score (or cost) of a pair of strategies now needs to take the
collision threshold into account.

\begin{definition}[score]
For the RUC game $(A, B, w)$, let $S^{(A, r)}_w(f, g)$ and $S^{(B, r)}_w(f, g)$
be the \batter{}'s total score and \bowler{}'s total cost, respectively, in the first $r$ rounds
when the \batter{} uses strategy $f$ and the \bowler{} uses strategy $g$.
For $C \in \{A, B\}$, let $e^{(C, r)}_w(f, g) \defeq \E(S^{(C, r)}_w(f, g))$
and $e^{(C,\infty)}_w(f, g) \defeq \lim_{r \to \infty} e^{(C,r)}_w(f, g)$.
\end{definition}

Note that when the collision threshold $W$ is random, then $e^{(C,r)}_W$ is also random.
The agents' expected payoffs at the strategy pair $(f, g)$ would thus be
$\E(e^{(C,r)}_W(f, g))$ for $C \in \{A, B\}$.

\begin{observation}
\label{thm:mult:score-nmem}
Let $f$ and $g$ be strategies of the \batterAndBowler{}, respectively.
Let $u \defeq f(\emptyset)$ and $v \defeq g(\emptyset)$. Then for $C \in \{A, B\}$,
\[ S^{(C,r)}_w(f, g) = \begin{cases}
C[u, v] + S^{(C,r-1)}_{w-\boolone(i=j)}(f^{[v]}, g^{[u]}) & \textrm{ if } r > 0 \textrm{ and } w > 0
\\ 0 & \textrm{ otherwise}\end{cases}. \]
\end{observation}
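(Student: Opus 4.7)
The plan is to prove this observation by a direct case analysis on the pair $(r, w)$, essentially unpacking the definition of $S^{(C,r)}_w(f,g)$ as the sum of per-round payoffs accumulated over the first $r$ rounds of the game $(A, B, w)$. This is the natural generalization of \cref{thm:score-nmem} from the single-collision setting, with the only new ingredient being bookkeeping of the collision threshold.

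First I would handle the two boundary cases, both of which are immediate from the definition. When $r = 0$, no rounds have been played yet, so the accumulated payoff is $0$ by construction. When $w = 0$, the game terminates before any further round is played (zero additional collisions are needed), so no further payoff is accrued and $S^{(C,r)}_w(f,g) = 0$ as well.

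For the main case $r > 0$ and $w > 0$, I unpack what happens in round $1$. Since $u = f(\emptyset)$ and $v = g(\emptyset)$ are the actions the \batter{} and \bowler{} play in the first round, player $C$'s first-round payoff is exactly $C[u,v]$. After this round, the remaining game is again an RUC game with matrices $A$ and $B$, but with a budget of $r-1$ remaining rounds and a collision threshold equal to $w$ if $u \neq v$ and $w-1$ if $u = v$; this is uniformly expressed as $w - \boolone(u=v)$. The residual strategies for the remaining game are $f^{[v]}$ for the \batter{} (conditioned on observing the opponent's action $v$) and $g^{[u]}$ for the \bowler{}, directly from the definition of residual strategies in \cref{sec:non-meml}. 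Hence the total accumulated score equals $C[u,v]$ plus $S^{(C,r-1)}_{w-\boolone(u=v)}(f^{[v]}, g^{[u]})$, as claimed.

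There is essentially no obstacle here — the statement is a direct definitional unfolding. The one minor sanity check is that the decremented threshold $w - \boolone(u=v)$ is non-negative (needed for $S$ to be defined), which is immediate from $w \ge 1$. As a consistency check, specializing to $w = 1$: if $u = v$ the recursion gives $S^{(C,r-1)}_0(\cdot, \cdot) = 0$ by the boundary case, so the only contribution is $C[u,v]$, which matches \cref{thm:score-nmem} where the factor $\boolone(u \neq v)$ kills the recursive term in the single-collision regime.
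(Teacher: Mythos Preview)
Your proposal is correct and matches the paper's treatment: the paper states this as an Observation with no proof, since it is an immediate unfolding of the definition of $S^{(C,r)}_w$, and your case analysis does exactly that. As you implicitly noticed, the $\boolone(i=j)$ in the displayed formula is a typo for $\boolone(u=v)$.
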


\begin{lemma}
\label{thm:mult:expected-score-nmem}
Let $f$ and $g$ be independent strategies of the \batterAndBowler{}, respectively.
Let $f(\emptyset) \sampledFrom \xvec$, $g(\emptyset) \sampledFrom \yvec$,
$S_x \defeq \support(\xvec)$, and $S_y \defeq \support(\yvec)$. Then for $C \in \{A, B\}$,
\[ e^{(C,r)}_w(f, g) = \begin{cases}
\displaystyle \xvec\trn C\yvec + \!\sum_{i \in S_x}\sum_{j \in S_y}\xvec_i\yvec_j
    e^{(C,r-1)}_{w-\boolone(i=j)}(f^{([i],[j])}, g^{([j],[i])})
    & r \in \mathbb{Z}_{> 0} \cup \{\infty\}, w > 0
\\ 0 & r = 0 \textrm{ or } w = 0 \end{cases}. \]
\end{lemma}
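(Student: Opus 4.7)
The plan is to follow the proof of \cref{thm:expected-score-nmem} closely, since this lemma is a direct multi-collision analogue; the only new wrinkle is that $w$ can decrement when a collision happens in the first round, which must be tracked alongside $r$. I will first handle the two degenerate cases ($r = 0$ and $w = 0$), then the interesting case ($r \ge 1$ and $w \ge 1$) by one step of unfolding, and finally the $r = \infty$ case by a monotone-limit argument.

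For the base cases, if $r = 0$ then no rounds are played and $S^{(C,0)}_w(f,g) = 0$, while if $w = 0$ the game terminates immediately so $S^{(C,r)}_0(f,g) = 0$; both give $e^{(C,r)}_w(f,g) = 0$. For $r \ge 1$ and $w \ge 1$, let $u \defeq f(\emptyset)$ and $v \defeq g(\emptyset)$; by independence of $f$ and $g$ we have $\Pr(u = i \text{ and } v = j) = \xvec_i \yvec_j$. Applying \cref{thm:mult:score-nmem} and taking expectations conditioned on the first-round actions gives
\begin{align*}
e^{(C,r)}_w(f,g)
&= \sum_{i \in S_x}\sum_{j \in S_y} \xvec_i \yvec_j \,
   \E\!\left( C[u,v] + S^{(C,r-1)}_{w-\boolone(u=v)}(f^{[v]}, g^{[u]}) \,\Big|\, u=i,\, v=j \right) \\
&= \xvec\trn C\yvec + \sum_{i \in S_x}\sum_{j \in S_y} \xvec_i \yvec_j \,
   \E\!\left( S^{(C,r-1)}_{w-\boolone(i=j)}(f^{[j]}, g^{[i]}) \,\Big|\, u=i,\, v=j \right).
\end{align*}
The events $u = i$ and $v = j$ coincide with $\isResponse(f,[i],[j])$ and $\isResponse(g,[j],[i])$ respectively, so by \cref{thm:cond-on-trn} the conditional joint distribution of $(f^{[j]}, g^{[i]})$ equals the joint distribution of $(f^{([i],[j])}, g^{([j],[i])})$. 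Substituting yields the claimed recurrence for finite $r$.

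For $r = \infty$, the sequence $[S^{(C,r)}_w(f,g)]_{r \ge 0}$ is monotonically non-decreasing in $r$ (since scores are non-negative), so by the monotone convergence theorem $e^{(C,\infty)}_w(f,g) = \lim_{r \to \infty} e^{(C,r)}_w(f,g)$. Taking this limit on both sides of the finite-$r$ recurrence (and exchanging it with the finite sum over $S_x \times S_y$, which is permissible) gives the infinite-$r$ case. The main obstacle is simply the bookkeeping needed to verify that conditioning on the first action pair correctly couples the decrement of $w$ with the residual strategies; this is resolved by \cref{thm:cond-on-trn}, exactly as in the single-collision proof, so no genuinely new idea is required.
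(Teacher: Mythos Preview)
Your proposal is correct and follows exactly the approach the paper intends: the paper omits the proof of this lemma, stating that it is very similar to that of \cref{thm:expected-score-nmem}, and your argument is precisely that adaptation---condition on the first-round actions, invoke \cref{thm:mult:score-nmem} in place of \cref{thm:score-nmem}, apply \cref{thm:cond-on-trn} to pass to residual strategies, and take the monotone limit for $r=\infty$.
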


\subsubsection{Existence of Nash Equilibrium}

To see whether a pair of stationary strategies can give us a Nash equilibrium,
we first investigate (in the next two lemmas) the upper and lower bounds on a player's payoff
when she is free to play any strategy (even non-stationary ones)
and her opponent uses a stationary strategy with full support.

\begin{lemma}
\label{thm:mult:batter-nmem-ne}
Let $(A, B, w)$ be an RUC game. Let $\xhat \in \simplex_n$ such that $\support(\xhat) = \toN$.
Let $\alpha \defeq \min_{i=1}^n (B\trn\xhat)_i/\xhat_i$
and $\beta \defeq \max_{i=1}^n (B\trn\xhat)_i/\xhat_i$.
Then for any strategy $g$, we have $w\alpha \le e^{(B,\infty)}_w(\xhat, g) \le w\beta$.
\end{lemma}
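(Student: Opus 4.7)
The plan is to generalize the proof of \cref{thm:batter-nmem-ne} by carrying an additional induction on the collision threshold $w$, while using the recurrence for $e^{(B,r)}_w$ given by \cref{thm:mult:expected-score-nmem}. The base cases $w=0$ and $r=0$ are immediate because the total cost is then $0$. As in \cref{thm:batter-nmem-ne}, the definitions of $\alpha$ and $\beta$ combined with \cref{thm:sum-ratio-min-max} yield $\alpha\xhat\trn\yvec \le \xhat\trn B\yvec \le \beta\xhat\trn\yvec$ for every $\yvec \in \simplex_n$, which drives both bounds.

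For the upper bound, I would prove by outer induction on $w$ and inner induction on $r$ the statement $Q(w,r): \forall g,\; e^{(B,r)}_w(\xhat, g) \le w\beta$. In the inductive step, fix a strategy $g$ with $g(\emptyset) \sampledFrom \yvec$ and apply \cref{thm:mult:expected-score-nmem}. The $i \neq j$ residual costs are bounded by $w\beta$ via the inner hypothesis $Q(w, r-1)$, and the $i = j$ residual costs by $(w-1)\beta$ via the outer hypothesis $Q(w-1, r-1)$. Since $\sum_{i=j}\xhat_i\yvec_j = \xhat\trn\yvec$ and the full double sum equals $1$, the recurrence simplifies to
\[ e^{(B,r)}_w(\xhat, g) \le \beta\,\xhat\trn\yvec + (w-1)\beta\,\xhat\trn\yvec + w\beta\,(1-\xhat\trn\yvec) = w\beta, \]
which closes the induction. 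Letting $r \to \infty$ yields $e^{(B,\infty)}_w(\xhat, g) \le w\beta$.

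For the lower bound, I would induct on $w$ alone, mirroring the infimum argument in \cref{thm:batter-nmem-ne}. Let $\mu_w \defeq \inf_g e^{(B,\infty)}_w(\xhat, g)$ and assume inductively that $\mu_{w-1} \ge (w-1)\alpha$. For any $\eps > 0$, pick $\ghat$ with $e^{(B,\infty)}_w(\xhat, \ghat) \le \mu_w + \eps$ and let $\ghat(\emptyset) \sampledFrom \yhat$. Using the infinite-horizon instance of \cref{thm:mult:expected-score-nmem}, bounding the $i = j$ residuals by $\mu_{w-1} \ge (w-1)\alpha$ and the $i \neq j$ residuals by $\mu_w$, I obtain
\[ \mu_w + \eps \ge \alpha\,\xhat\trn\yhat + (w-1)\alpha\,\xhat\trn\yhat + (1-\xhat\trn\yhat)\,\mu_w, \]
which rearranges to $\mu_w \ge w\alpha - \eps/\xhat\trn\yhat$. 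Because $\support(\xhat) = \toN$ gives $\xhat\trn\yhat \ge \min_i \xhat_i > 0$ uniformly in $\yhat$, sending $\eps \to 0$ yields $\mu_w \ge w\alpha$, so $e^{(B,\infty)}_w(\xhat, g) \ge w\alpha$ for every $g$.

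The main subtlety is the lower-bound inductive step, where the bound for the $i \neq j$ branch is only $\mu_w$ itself and not a strictly smaller quantity. The step goes through only because that term appears with coefficient $(1-\xhat\trn\yhat)$, leaving $\xhat\trn\yhat\cdot\mu_w$ after cancellation against the left-hand side, and $\xhat\trn\yhat$ is bounded below by the positive constant $\min_i \xhat_i$ thanks to the full-support assumption on $\xhat$. Without that uniform positivity the $\eps$-argument would fail to close.
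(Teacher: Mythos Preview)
Your proposal is correct and follows exactly the approach the paper indicates: the paper omits this proof, stating that it is ``very similar'' to that of \cref{thm:batter-nmem-ne}, and your argument is precisely the natural extension---a double induction on $(w,r)$ for the upper bound and an induction on $w$ with the $\eps$-infimum trick for the lower bound, both driven by the recurrence of \cref{thm:mult:expected-score-nmem} and the inequalities $\alpha\xhat\trn\yvec \le \xhat\trn B\yvec \le \beta\xhat\trn\yvec$.
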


\begin{lemma}
\label{thm:mult:bowler-nmem-ne}
Let $(A, B, w)$ be an RUC game. Let $\yhat \in \simplex_n$ such that $\support(\yhat) = \toN$.
Let $\alpha \defeq \min_{i=1}^n (A\yhat)_i/\yhat_i$
and $\beta \defeq \max_{i=1}^n (A\yhat)_i/\yhat_i$.
Then for any strategy $f$, we have $w\alpha \le e^{(A,\infty)}_w(f, \yhat) \le w\beta$.
\end{lemma}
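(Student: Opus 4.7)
The plan is to adapt the proof of \cref{thm:bowler-nmem-ne} to the multi-collision setting by adding an outer induction on the collision threshold $w$. The key preliminary fact (which drives both the upper and lower bounds) is that, by the definition of $\alpha$ and $\beta$ and by \cref{thm:sum-ratio-min-max}, for every $\xvec \in \simplex_n$ we have $\alpha\xvec\trn\yhat \le \xvec\trn A\yhat \le \beta\xvec\trn\yhat$. Moreover, since $\stdmeml(\yhat)^{(J,I)}$ has the same distribution as $\stdmeml(\yhat)$ by \cref{thm:cond-on-meml}, the recursion in \cref{thm:mult:expected-score-nmem} preserves $\yhat$ in the residuals and only changes the collision threshold by $\boolone(i=j)$.

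For the upper bound, I would prove $e^{(A,\infty)}_w(f, \yhat) \le w\beta$ by induction on $w$; as in the single-collision proof, this actually proceeds through an inner induction on $r$ showing $e^{(A,r)}_w(f, \yhat) \le w\beta$ for every $f$ and then taking $r \to \infty$. The base case $w = 0$ is trivial. For $w \ge 1$, writing $f(\emptyset) \sampledFrom \xvec$ and applying \cref{thm:mult:expected-score-nmem}, the preliminary upper bound, and the inductive hypothesis on $e^{(A,\infty)}_{w - \boolone(i=j)}$ gives
\[
e^{(A,\infty)}_w(f, \yhat) \le \beta\xvec\trn\yhat + \sum_{i,j}\xvec_i\yhat_j(w-\boolone(i=j))\beta = \beta\xvec\trn\yhat + \beta(w - \xvec\trn\yhat) = w\beta.
\]

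For the lower bound, I would mirror the infimum argument from \cref{thm:batter-nmem-ne}, again inducted on $w$. Let $\mu_w \defeq \inf_f e^{(A,\infty)}_w(f, \yhat)$; the claim is $\mu_w \ge w\alpha$, which is trivial for $w = 0$. For $w \ge 1$, fix $\eps > 0$, pick $\fhat$ with $e^{(A,\infty)}_w(\fhat, \yhat) \le \mu_w + \eps$, and let $\fhat(\emptyset) \sampledFrom \xhat$. Applying \cref{thm:mult:expected-score-nmem}, the preliminary lower bound, and the inductive hypothesis yields
\[
\mu_w + \eps \ge \alpha\xhat\trn\yhat + \mu_w(1 - \xhat\trn\yhat) + \mu_{w-1}\xhat\trn\yhat,
\]
which rearranges to $\mu_w - \mu_{w-1} \ge \alpha - \eps/\xhat\trn\yhat$. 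Since $\support(\yhat) = \toN$, we have $\xhat\trn\yhat \ge \min_i \yhat_i > 0$, so letting $\eps \to 0$ gives $\mu_w \ge \mu_{w-1} + \alpha$, and hence $\mu_w \ge w\alpha$ by induction. The main obstacle is the bookkeeping of the nested inductions (on $r$ inside the upper bound and on $w$ overall) and carefully splitting the recurrence sum into collision and non-collision terms so that the induction advances on the right index; beyond that, the argument is a direct translation of the single-collision case.
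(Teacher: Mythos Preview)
Your proposal is correct and takes essentially the same approach as the paper: the paper omits the proof of \cref{thm:mult:bowler-nmem-ne}, saying only that it is analogous to the single-collision case (\cref{thm:bowler-nmem-ne}, itself proved like \cref{thm:batter-nmem-ne}), and your adaptation---adding an induction on $w$, splitting the recursion into collision and non-collision terms, and reusing the $r$-induction for the upper bound and the infimum argument for the lower bound---is exactly that analogy spelled out.
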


Next, we identify sufficient conditions to make the upper and lower bounds in
\cref{thm:mult:batter-nmem-ne,thm:mult:bowler-nmem-ne} coincide,
which gives us a Nash equilibrium.

\begin{lemma}
\label{thm:mult:pos-eig-gives-ne-2}
Let $A, B \in \mathbb{R}_{\ge 0}^{n \times n}$.
Let $(\alpha, \ystar)$ be an eigenpair for $A$ such that
$\|\ystar\|_1 = 1$ and $\ystar_i > 0$ for all $i \in \toN$.
Let $(\beta, \xstar)$ be an eigenpair for $B\trnF$ such that
$\|\xstar\|_1 = 1$ and $\xstar_i > 0$ for all $i \in \toN$.
Then for any distribution $\Dcal$, $(\xstar, \ystar)$ is a Nash equilibrium
for the RUC game $(A, B, \Dcal)$.
Moreover, for any strategies $f$ and $g$, and any $w \ge 0$, we have
$e^{(A,\infty)}_w(f, \ystar) = w\alpha$ and $e^{(B,\infty)}_w(\xstar, g) = w\beta$.
\end{lemma}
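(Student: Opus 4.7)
The plan is to reduce this lemma to \cref{thm:mult:batter-nmem-ne,thm:mult:bowler-nmem-ne} by exploiting the fact that $\xstar$ and $\ystar$ are eigenvectors, which collapses the min/max bounds in those lemmas into a single value.

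First I would fix any strategy $g$ of the \bowler{} and any $w \in \mathbb{Z}_{\ge 0}$, and apply \cref{thm:mult:batter-nmem-ne} with $\xhat \defeq \xstar$. Since $(\beta, \xstar)$ is an eigenpair of $B\trnF$ and $\xstar_i > 0$ for every $i$, we have $(B\trn\xstar)_i/\xstar_i = \beta$ for all $i \in \toN$, so the min and max in \cref{thm:mult:batter-nmem-ne} coincide and equal $\beta$. This forces $e^{(B,\infty)}_w(\xstar, g) = w\beta$ for every $g$ and every $w$. Symmetrically, applying \cref{thm:mult:bowler-nmem-ne} with $\yhat \defeq \ystar$ and using that $(\alpha, \ystar)$ is an eigenpair of $A$ with full positive support gives $e^{(A,\infty)}_w(f, \ystar) = w\alpha$ for every strategy $f$ and every $w$.

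Next I would turn these equalities into the Nash equilibrium condition for $(A, B, \Dcal)$. When the collision threshold is drawn from $\Dcal$, the \bowler{}'s expected total cost under the pair $(\xstar, g)$ is $\E_{W \sampledFrom \Dcal}[e^{(B,\infty)}_W(\xstar, g)] = \E_{W \sampledFrom \Dcal}[W\beta] = \beta\,\E_{W \sampledFrom \Dcal}[W]$, which is a constant independent of $g$ (interpreting this as $\infty$ if $\E[W] = \infty$, in which case every strategy yields the same infinite cost and the NE condition still holds). Hence no deviation by the \bowler{} can decrease her cost. The same computation for the \batter{} shows that switching from $\xstar$ to any strategy $f$ keeps her expected total score equal to $\alpha\,\E_{W \sampledFrom \Dcal}[W]$, so she cannot strictly improve either. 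Therefore $(\xstar, \ystar)$ is a Nash equilibrium.

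There is no real obstacle here beyond bookkeeping: the substantive work is already encapsulated in \cref{thm:mult:batter-nmem-ne,thm:mult:bowler-nmem-ne}. The only minor care point is to confirm that the Nash condition is stated with respect to the randomness over $W$, and that the argument above works uniformly in $w$ so that taking expectation over $W \sampledFrom \Dcal$ is legitimate (via monotone convergence or simply because $e^{(C,\infty)}_w = we^{(C,\infty)}_1$ is measurable and non-negative in $w$).
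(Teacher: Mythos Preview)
Your proposal is correct and matches the paper's own proof essentially line for line: invoke \cref{thm:mult:batter-nmem-ne,thm:mult:bowler-nmem-ne} with $\xhat=\xstar$ and $\yhat=\ystar$, use the eigenvector property to collapse the min/max bounds to $\beta$ and $\alpha$, and then observe that the resulting equalities $e^{(B,\infty)}_w(\xstar,g)=w\beta$ and $e^{(A,\infty)}_w(f,\ystar)=w\alpha$ hold uniformly in $w$, so neither player can gain under any distribution $\Dcal$. Your added remarks about the $\E[W]=\infty$ case and measurability are extra care the paper omits, but the core argument is identical.
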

\begin{proof}
For all $w \ge 0$, and any strategies $f$ and $g$, we get
$e^{(A,\infty)}_w(f, \ystar) = w\alpha$ and $e^{(B,\infty)}_w(\xstar, g) = w\beta$
by \cref{thm:mult:batter-nmem-ne,thm:mult:bowler-nmem-ne}.

Suppose $W$ is drawn from distribution $\Dcal$. Then for any strategies $f$ and $g$,
$e^{(A,\infty)}_W(f, \ystar) = e^{(A,\infty)}_W(\xstar, \ystar)$
and $e^{(B,\infty)}_W(\xstar, g) = e^{(B,\infty)}_W(\xstar, \ystar)$.
Hence, $(\xstar, \ystar)$ is a Nash equilibrium for the RUC game $(A, B, \Dcal)$.
\end{proof}

Next, we show that conditions of \cref{thm:mult:pos-eig-gives-ne-2}
can be satisfied using the Perron-Frobenius theorem (\cref{thm:perron-frobenius}),
so a Nash equilibrium given by stationary strategies always exists.

\begin{theorem}
\label{thm:mult:ne-exists-2}
For irreducible matrices $A$ and $B$,
let $(\rho_A, \rho_B, \xstar, \ystar) = \perronSolution(A, B)$.
Then for any distribution $\Dcal$, $(\xstar, \ystar)$ is a Nash equilibrium
for the RUC game $(A, B, \Dcal)$.
Moreover, for any strategies $f$ and $g$, and any $w \ge 0$, we have
$e^{(A,\infty)}_w(f, \ystar) = w\rho_A$ and $e^{(B,\infty)}_w(\xstar, g) = w\rho_B$.
\end{theorem}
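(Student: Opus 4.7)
The plan is to derive this theorem as an immediate corollary of \cref{thm:mult:pos-eig-gives-ne-2} by invoking the Perron--Frobenius theorem to produce eigenpairs satisfying its hypotheses. The previous lemma already does all the real work: it shows that whenever the \batter{} has a positive full-support right eigenvector $\ystar$ of $A$ with eigenvalue $\alpha$, and the \bowler{} has a positive full-support left eigenvector $\xstar$ of $B$ with eigenvalue $\beta$, the pair $(\xstar,\ystar)$ is an NE and the payoff $w\alpha$ (resp.\ $w\beta$) is forced for every strategy of the opponent. So this proof just needs to \emph{produce} those eigenpairs from the irreducibility assumption.

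First, I would apply \cref{thm:perron-frobenius} to the irreducible matrix $A$. Parts~1--3 of that theorem deliver a positive Perron root $\rho_A$ and a right Perron vector $\ystar$ that is entrywise positive and normalized so that $\|\ystar\|_1 = 1$. This is exactly the hypothesis of \cref{thm:mult:pos-eig-gives-ne-2} on the $\ystar$-side, with the role of $\alpha$ played by $\rho_A$. Applying the same theorem to the irreducible matrix $B$ produces a positive Perron root $\rho_B$ and a left Perron vector $\xstar$ (equivalently, a right eigenvector of $B\trnF$) that is entrywise positive with $\|\xstar\|_1 = 1$, matching the hypothesis on the $\xstar$-side with $\beta = \rho_B$. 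By definition of $\perronSolution(A,B)$, these are precisely the four objects in the tuple $(\rho_A,\rho_B,\xstar,\ystar)$.

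Second, I plug these eigenpairs into \cref{thm:mult:pos-eig-gives-ne-2}. Its conclusion immediately gives, for every deterministic threshold $w \ge 0$ and any strategies $f,g$,
\[
e^{(A,\infty)}_w(f,\ystar) = w\rho_A, \qquad e^{(B,\infty)}_w(\xstar,g) = w\rho_B,
\]
which is the ``moreover'' clause of the theorem. For the NE claim over an arbitrary distribution $\Dcal$: since the expected payoffs conditional on $W=w$ are independent of $f$ (respectively $g$), taking the expectation over $W\sampledFrom \Dcal$ preserves this independence, so the \batter{} cannot improve against $\ystar$ and the \bowler{} cannot improve against $\xstar$ under $(A,B,\Dcal)$. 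This is already handled in the proof of \cref{thm:mult:pos-eig-gives-ne-2}, so I can just cite it.

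There is no real obstacle here: every nontrivial ingredient (Perron--Frobenius for existence of the positive eigenpairs, \cref{thm:mult:batter-nmem-ne,thm:mult:bowler-nmem-ne} for the payoff bounds that are folded into \cref{thm:mult:pos-eig-gives-ne-2}, and the averaging over $\Dcal$) is already in place. The proof is therefore a one-line appeal: ``Follows from \cref{thm:perron-frobenius,thm:mult:pos-eig-gives-ne-2}.'' The only thing worth double-checking is the normalization and positivity bookkeeping so that the Perron output is literally the input expected by \cref{thm:mult:pos-eig-gives-ne-2}, but this is immediate from parts~2--3 of \cref{thm:perron-frobenius}.
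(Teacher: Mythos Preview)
Your proposal is correct and matches the paper's approach exactly: the paper's proof is literally the one line ``Follows from \cref{thm:mult:pos-eig-gives-ne-2},'' with the Perron--Frobenius invocation implicit in the definition of $\perronSolution(A,B)$. Your expanded discussion of why the Perron vectors satisfy the positivity and normalization hypotheses is accurate and adds nothing beyond what the paper leaves tacit.
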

\begin{proof}
Follows from \cref{thm:mult:pos-eig-gives-ne-2}.
\end{proof}

Next, we show that the set of full support stationary Nash equilibria is the same for
different games (SRUC vs RUC, and different collision thresholds).
Hence, when we refer to a stationary Nash equilibrium,
we need not specify whether it's for an SRUC game or an RUC game,
and we need not specify the collision threshold.

\begin{lemma}
\label{thm:mult:ne-equiv}
Let $A, B \in \mathbb{R}_{\ge 0}^{n \times n}$.
Let $\xstar, \ystar \in \simplex_n$ such that $\support(\xstar) = \support(\ystar) = \toN$.
Let $\Dcal$ be a distribution over $\mathbb{Z}_{\ge 0}$ such that
$\Dcal$'s mean is positive and finite.
Then the following statements are equivalent:
\begin{enumerate}
\item \label{item:ne-equiv:sd}$(\xstar, \ystar)$ is a Nash equilibrium for the SRUC game $(A, B, \Dcal)$.
\item \label{item:ne-equiv:s1}$(\xstar, \ystar)$ is a Nash equilibrium for the SRUC game $(A, B, 1)$.
\item \label{item:ne-equiv:d}$(\xstar, \ystar)$ is a Nash equilibrium for the RUC game $(A, B, \Dcal)$.
\item \label{item:ne-equiv:1}$(\xstar, \ystar)$ is a Nash equilibrium for the RUC game $(A, B, 1)$.
\end{enumerate}
\end{lemma}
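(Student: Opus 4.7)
The plan is to establish the four statements are equivalent via the cyclic chain (3) $\implies$ (1) $\implies$ (2) $\implies$ (4) $\implies$ (3), leveraging the eigenvector characterization of full-support stationary NE already proved in \cref{sec:meml,sec:non-meml}. The implications (3) $\implies$ (1) and (4) $\implies$ (2) are immediate: in each case we are restricting the allowed deviation space from all strategies to stationary strategies only, while the payoff functions themselves remain unchanged, so any profitable deviation in the smaller space would already have been profitable in the larger one.

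For (1) $\iff$ (2), I would invoke \cref{thm:score-ev}: for stationary strategies $\xvec, \yvec$ and $C \in \{A, B\}$ we have $e^{(C)}_w(\xvec, \yvec) = w \cdot e^{(C)}_1(\xvec, \yvec)$, so taking expectation over $W \sampledFrom \Dcal$ the expected payoff in the SRUC game $(A, B, \Dcal)$ is exactly $\E[W] \cdot e^{(C)}_1(\xvec, \yvec)$. Since $\E[W]$ is a positive finite constant, scaling both players' payoff functions by this factor preserves the set of Nash equilibria over $\simplex_n \times \simplex_n$, giving (1) $\iff$ (2).

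The main step is (2) $\implies$ (4). Here I apply \cref{thm:meml-ne-is-eigen} to the full-support NE $(\xstar, \ystar)$ of the SRUC game $(A, B, 1)$ to conclude that $(\sigma_B, \xstar)$ is an eigenpair of $B\trnF$ and $(\sigma_A, \ystar)$ is an eigenpair of $A$, with $\sigma_C \defeq e^{(C)}(\xstar, \ystar)$ and both eigenvectors having full support by hypothesis. \Cref{thm:pos-eig-gives-ne-2} then upgrades this eigenvector data directly to a NE of the RUC game $(A, B, 1)$, giving (4). For (4) $\implies$ (3), I reuse the same eigenvector data (obtained because (4) $\implies$ (2) and then applying \cref{thm:meml-ne-is-eigen}) and feed it into \cref{thm:mult:pos-eig-gives-ne-2}, which immediately concludes that $(\xstar, \ystar)$ is a NE for the RUC game $(A, B, \Dcal)$.

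The main obstacle is (2) $\implies$ (4): a priori, a stationary NE need not remain a NE against non-stationary deviations, and the distributional collision threshold further enlarges the set of payoff functions that need to be simultaneously controlled. The eigenvector route sidesteps both difficulties because it linearizes each player's payoff --- once $\ystar$ is a full-support eigenvector of $A$, \cref{thm:bowler-nmem-ne} (and its multi-collision analogue \cref{thm:mult:bowler-nmem-ne} invoked inside \cref{thm:mult:pos-eig-gives-ne-2}) pins the \batter{}'s expected payoff against $\ystar$ to a single scalar independent of her strategy, and symmetrically for the \bowler{}. The same linearity in the collision threshold $w$ transparently lifts the equivalence from a single collision to the randomized threshold $\Dcal$.
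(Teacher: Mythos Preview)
Your proposal is correct and follows essentially the same approach as the paper: both use the trivial restriction argument for the RUC $\Rightarrow$ SRUC implications, the scaling-by-$\E[W]$ argument (via \cref{thm:score-ev}) for (1) $\iff$ (2), and the eigenvector route (\cref{thm:meml-ne-is-eigen} followed by \cref{thm:mult:pos-eig-gives-ne-2}) to lift the SRUC NE back to the RUC setting. The only difference is organizational---you set up a cycle $(3)\Rightarrow(1)\Rightarrow(2)\Rightarrow(4)\Rightarrow(3)$ and separately invoke \cref{thm:pos-eig-gives-ne-2} for $(2)\Rightarrow(4)$ before re-extracting eigenvector data for $(4)\Rightarrow(3)$, whereas the paper goes directly from (2) to both (3) and (4) in one shot via \cref{thm:mult:pos-eig-gives-ne-2}---but the substance is identical.
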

\begin{proof}
\ref{item:ne-equiv:sd} and \ref{item:ne-equiv:s1} are equivalent since
switching from the SRUC game $(A, B, 1)$ to the SRUC game $(A, B, \Dcal)$
just scales the payoff functions by a non-zero value.

Any Nash equilibrium for an RUC game using stationary strategies
is also a Nash equilibrium for the corresponding SRUC game,
so \ref{item:ne-equiv:d} implies \ref{item:ne-equiv:sd}
and \ref{item:ne-equiv:1} implies \ref{item:ne-equiv:s1}.

For $C \in \{A, B\}$, let $\rho_C \defeq e^{(C,\infty)}_1(\xstar, \ystar)$.
If $(\xstar, \ystar)$ is a Nash equilibrium for the SRUC game $(A, B, 1)$,
then by \cref{thm:meml-ne-is-eigen}, we get that
$(\rho_B, \xstar)$ is an eigenpair of $B\trnF$ and $(\rho_A, \ystar)$ is an eigenpair of $A$.
Then by \cref{thm:mult:pos-eig-gives-ne-2}, $(\xstar, \ystar)$ is a Nash equilibrium
for the RUC games $(A, B, \Dcal)$ and $(A, B, 1)$.
Hence, \ref{item:ne-equiv:s1} implies \ref{item:ne-equiv:d} and \ref{item:ne-equiv:1}.

Together, these implications among \ref{item:ne-equiv:sd}, \ref{item:ne-equiv:s1},
\ref{item:ne-equiv:d}, and \ref{item:ne-equiv:1} imply that they are all equivalent.
\end{proof}

\begin{theorem}[approximate Nash equilibrium]
\label{thm:mult:approx-ne-nmem}
Let $(\stdmeml(\xstar), \stdmeml(\ystar))$ be a Nash equilibrium for the RUC game
$(A, B, 1)$ such that $\support(\xstar) = \support(\ystar) = \toN$.
For $\delta \in [0, 1)$, $\eps \defeq 4\delta/(1-\delta)^2$,
and $\rho_C \defeq e^{(C, \infty)}(\xstar, \ystar)$ for $C \in \{A, B\}$,
let $\xhat, \yhat \in \simplex_n$ such that for all $i \in \toN$,
$|\xhat_i - \xstar_i| \le \delta\xstar_i$ and $|\yhat_i - \ystar_i| \le \delta\ystar_i$.
Then for any distribution $\Dcal$, $(\xhat, \yhat)$ is an $\eps$-approximate
Nash equilibrium for the RUC game $(A, B, \Dcal)$.
In particular, for any strategies $f$ and $g$, and any $w \ge 0$,
\begin{align*}
e^{(B,\infty)}_w(\xhat, g) &\ge \frac{1-\delta}{1+\delta}w\rho_B \ge \frac{e^{(B,\infty)}_w(\xhat, \yhat)}{1+\eps},
\\ e^{(A,\infty)}_w(f, \yhat) &\le \frac{1+\delta}{1-\delta}w\rho_A \le (1+\eps)e^{(A,\infty)}_w(\xhat, \yhat).
\end{align*}
\end{theorem}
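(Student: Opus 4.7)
The plan is to mirror the argument of \cref{thm:approx-ne-nmem} but with the multi-collision analogs of the key lemmas, and then average over the random collision threshold.

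First I would invoke \cref{thm:mult:ne-equiv}: since $(\stdmeml(\xstar), \stdmeml(\ystar))$ is given to be a Nash equilibrium for the RUC game $(A, B, 1)$, it is also a Nash equilibrium for the SRUC game $(A, B, 1)$. Applying \cref{thm:meml-ne-is-eigen} to that SRUC equilibrium yields that $(\rho_B, \xstar)$ is an eigenpair of $B\trnF$ and $(\rho_A, \ystar)$ is an eigenpair of $A$, where $\rho_C = e^{(C,\infty)}_1(\xstar, \ystar)$. This is the entry point that lets us use the eigenvector-perturbation machinery.

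Next, by \cref{thm:approx-eigen-norm} applied to $\xhat$ and to $\yhat$, for every $j \in \toN$,
\[
\frac{(B\trn\xhat)_j}{\xhat_j} \in \left[\frac{1-\delta}{1+\delta}\rho_B, \frac{1+\delta}{1-\delta}\rho_B\right],
\qquad
\frac{(A\yhat)_j}{\yhat_j} \in \left[\frac{1-\delta}{1+\delta}\rho_A, \frac{1+\delta}{1-\delta}\rho_A\right].
\]
Then \cref{thm:mult:batter-nmem-ne} translates these coordinatewise bounds into payoff bounds: for any strategy $g$ and any $w \ge 0$,
\[
\frac{1-\delta}{1+\delta}w\rho_B \le e^{(B,\infty)}_w(\xhat, g) \le \frac{1+\delta}{1-\delta}w\rho_B,
\]
and analogously \cref{thm:mult:bowler-nmem-ne} gives, for any strategy $f$ and any $w \ge 0$,
\[
\frac{1-\delta}{1+\delta}w\rho_A \le e^{(A,\infty)}_w(f, \yhat) \le \frac{1+\delta}{1-\delta}w\rho_A.
\]
Specializing the first display to $g = \stdmeml(\yhat)$ gives an upper bound on $e^{(B,\infty)}_w(\xhat, \yhat)$, and the ratio $\left(\tfrac{1+\delta}{1-\delta}\right)^{\!2} = 1 + \eps$ (a short computation shows this equals exactly $1 + 4\delta/(1-\delta)^2$) yields $e^{(B,\infty)}_w(\xhat, g) \ge e^{(B,\infty)}_w(\xhat, \yhat)/(1+\eps)$ for every $g$ and every $w$; symmetrically, $e^{(A,\infty)}_w(f, \yhat) \le (1+\eps) e^{(A,\infty)}_w(\xhat, \yhat)$ for every $f$ and every $w$.

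Finally, to pass from a fixed $w$ to the distribution $\Dcal$, I take expectations over $W \sampledFrom \Dcal$. Since the two inequalities above hold pointwise in $w$ with the same multiplicative constants independent of $w$, and since $e^{(C,\infty)}_W(\cdot,\cdot)$ is nonnegative, linearity (or monotonicity) of expectation preserves the factor $(1+\eps)^{\pm 1}$, yielding $\E[e^{(B,\infty)}_W(\xhat, g)] \ge \E[e^{(B,\infty)}_W(\xhat, \yhat)]/(1+\eps)$ and $\E[e^{(A,\infty)}_W(f, \yhat)] \le (1+\eps)\E[e^{(A,\infty)}_W(\xhat, \yhat)]$ for every $f, g$. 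This is exactly the definition of an $\eps$-approximate Nash equilibrium for $(A, B, \Dcal)$. The only mild subtlety is ensuring the expectation step is legal when $\Dcal$ might put mass on large $w$: since the per-$w$ bounds scale linearly in $w$ and $\rho_A, \rho_B$ are finite, $\E[W]$ being finite (or the bounds being interpreted with $\infty$ on both sides consistently) makes this routine; this is the only place the argument requires even a moment's care, and it is the step I would call out as the main, though minor, obstacle.
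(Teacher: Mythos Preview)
Your proposal is correct and follows essentially the same approach the paper intends: the paper omits this proof as being analogous to \cref{thm:approx-ne-nmem}, and your write-up is precisely that analog, substituting \cref{thm:mult:batter-nmem-ne,thm:mult:bowler-nmem-ne} for \cref{thm:batter-nmem-ne,thm:bowler-nmem-ne} and then taking expectation over $W \sampledFrom \Dcal$. Your explicit invocation of \cref{thm:mult:ne-equiv} to justify applying \cref{thm:meml-ne-is-eigen} is a small clarifying step the paper leaves implicit, and your remark about the $\E[W]=\infty$ edge case is a fair observation but, as you note, harmless since both sides of each inequality become infinite together.
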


\subsubsection{Equivalence of Strategies}

The notion of equivalence of strategies must be modified to account for collision thresholds.

\begin{definition}
Let $I \defeq [i_1, \ldots, i_k]$ and $J \defeq [j_1, \ldots, j_k]$. For any $w \ge 1$,
The pair $(I, J)$ is called $w$-\emph{collision-consistent} if $\colls(I', J') \le w-1$,
where $I' \defeq [i_1, \ldots, i_{k-1}]$ and $J' \defeq [j_1, \ldots, j_{k-1}]$.
\end{definition}

\begin{definition}
For $w \ge 1$, two randomized strategies $f_1$ and $f_2$ are said to be $w$-\emph{equivalent}
if for every $w$-collision-consistent pair $(I, J)$, we have
$\Pr(\isResponse(f_1, I, J)) = \Pr(\isResponse(f_2, I, J))$.
\end{definition}

We now show that given a pair of strategies, replacing each strategy by an equivalent
strategy makes no difference to anyone's payoff.

\begin{lemma}
\label{thm:mult:eqv-score}
Let $f_1$ and $f_2$ be equivalent strategies of the \batter{}
and $g_1$ and $g_2$ be equivalent strategies of the \bowler{}.
Then for any $C \in \mathbb{R}_{\ge 0}^{n \times n}$, any $w \in \mathbb{Z}_{\ge 0}$,
and any $r \in \mathbb{Z}_{\ge 0} \cup \{\infty\}$, we have
$e^{(C,r)}_w(f_1, g_1) = e^{(C,r)}_w(f_2, g_2)$.
\end{lemma}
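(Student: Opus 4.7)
The plan is to adapt the proof of \cref{thm:eqv-score} to the multiple-collisions setting, replacing ``collision-consistent'' with ``$w$-collision-consistent''. I read ``equivalent'' in the hypothesis as ``$w$-equivalent'' for the same $w$ appearing on both sides of the claimed equality (the only notion of equivalence defined in the multi-collision section). The case $w = 0$ is immediate, since both expected payoffs are $0$, and the case $r = \infty$ will follow from the finite-$r$ case together with the monotone convergence theorem, since the sequence $[S^{(C,r)}_w(f, g)]_{r=0}^{\infty}$ is monotone non-decreasing in $r$.

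For finite $r \ge 1$ and $w \ge 1$, the first step is to give a round-by-round decomposition of $S^{(C,r)}_w(f, g)$. If in a particular realization the players produce actions $I_t \defeq [i_1, \ldots, i_t]$ and $J_t \defeq [j_1, \ldots, j_t]$, then round $t$ contributes $C[i_t, j_t]$ to the payoff if and only if $t \le r$ and the game has not ended before round $t$, i.e., strictly fewer than $w$ collisions have occurred in the first $t-1$ rounds. The condition $\colls(I_{t-1}, J_{t-1}) \le w-1$ is exactly the definition of $(I_t, J_t)$ being $w$-collision-consistent. Partitioning the probability space by history and using independence of $f$ and $g$ to factor the joint probability of their responses, I obtain
$$e^{(C,r)}_w(f, g) \;=\; \sum_{t=1}^r \;\sum_{\substack{(I, J):\, |I| = |J| = t, \\ (I, J) \text{ is } w\text{-c.c.}}} \Pr(\isResponse(f, I, J)) \cdot \Pr(\isResponse(g, J, I)) \cdot C[i_t, j_t].$$

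Since $\colls$ is symmetric in its two arguments, $(I, J)$ is $w$-collision-consistent if and only if $(J, I)$ is. Hence every factor appearing in the sum above is indexed by a $w$-collision-consistent pair, so by the hypothesized $w$-equivalence of $f_1, f_2$ and of $g_1, g_2$, replacing $(f_1, g_1)$ by $(f_2, g_2)$ leaves each term unchanged and yields $e^{(C,r)}_w(f_1, g_1) = e^{(C,r)}_w(f_2, g_2)$ for all finite $r$; letting $r \to \infty$ then handles the extended case. The main obstacle is purely bookkeeping: verifying that the indicator ``game alive at the start of round $t$'' is precisely the $w$-collision-consistency condition on $(I_t, J_t)$, and invoking independence of $f$ and $g$ at the right moment to split the joint probability into $\Pr(\isResponse(f, I, J)) \cdot \Pr(\isResponse(g, J, I))$.
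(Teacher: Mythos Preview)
Your proposal is correct and follows essentially the same approach as the paper: the paper omits the proof of \cref{thm:mult:eqv-score} entirely (citing similarity to \cref{sec:non-meml}), and the analogous \cref{thm:eqv-score} is proved only by the one-line sketch that the expected score in the first $r$ rounds depends only on the distribution of collision-consistent pairs $(I,J)$ with $|I|=|J|\le r$, then taking $r\to\infty$. You have spelled out this same argument in the multi-collision setting, correctly identifying that ``game still alive at round $t$'' is exactly $w$-collision-consistency of $(I_t,J_t)$, and that the symmetry of $\colls$ makes $(J,I)$ $w$-collision-consistent whenever $(I,J)$ is, so the $w$-equivalence hypothesis applies to both factors. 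Your reading of ``equivalent'' as ``$w$-equivalent for the $w$ under consideration'' is also the right one; the paper's wording is slightly imprecise here, but the notion defined in that section is $w$-equivalence, and your proof shows this is exactly the hypothesis needed.
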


A corollary to \cref{thm:mult:eqv-score} is that if $(f_1, g_1)$ is a Nash equilibrium
for the RUC game $(A, B, \Dcal)$, then so is $(f_2, g_2)$.

Next, we give a useful characterization of equivalence to stationary strategies.

\begin{lemma}
\label{thm:mult:eqv-meml}
Let $\xvec \in \simplex_n$. A randomized strategy $f$ is $w$-equivalent to $\stdmeml(\xvec)$
iff $f^{(I,J)}(\emptyset) \sampledFrom \xvec$ for every
feasible history $(I, J)$ of $f$ where $\colls(I, J) < w$.
\end{lemma}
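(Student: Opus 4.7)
The plan is to closely follow the proof of \cref{thm:eqv-meml} (the $w = 1$ case), adapting the inductive scheme to the relaxed condition $\colls(I, J) < w$. The definition of $w$-collision-consistency excludes the last entry from the collision count, so it indeed reduces to the original notion when $w = 1$, which ensures the proof structure carries over directly.

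For the backward direction ($\impliedby$), I would fix a $w$-collision-consistent pair $(I, J)$ with $I \defeq [i_1, \ldots, i_k]$ and prove by induction on $t \in \{0, 1, \ldots, k\}$ that $\Pr(\isResponse(f, I_t, J_t)) = \prod_{\ell=1}^t \xvec_{i_\ell}$, where $I_t$ and $J_t$ denote the length-$t$ prefixes. In the inductive step from $t$ to $t+1$, I would invoke the hypothesis $f^{(I_t, J_t)}(\emptyset) \sampledFrom \xvec$, which requires checking $\colls(I_t, J_t) < w$; this follows from $\colls(I_t, J_t) \le \colls(I_{k-1}, J_{k-1}) \le w-1$ by the $w$-collision-consistency of $(I, J)$. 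The same product expression holds for $\stdmeml(\xvec)$ via \cref{thm:cond-on-meml}, yielding $\Pr(\isResponse(f, I, J)) = \Pr(\isResponse(\stdmeml(\xvec), I, J))$ as required.

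For the forward direction ($\implies$), given a feasible history $(I, J)$ of $f$ with $\colls(I, J) < w$, I would fix any $i \in \toN$ and any $j \in \toN$, and write
\[ \Pr(f^{(I, J)}(\emptyset) = i) = \frac{\Pr(\isResponse(f, I+[i], J+[j]))}{\Pr(\isResponse(f, I, J))}. \]
Both pairs in this ratio are $w$-collision-consistent: for $(I, J)$ this is because $\colls(I_{k-1}, J_{k-1}) \le \colls(I, J) \le w-1$, and for $(I+[i], J+[j])$ this is because its relevant collision count is exactly $\colls(I, J) \le w-1$ (the new terminal entry is excluded). Applying $w$-equivalence to numerator and denominator replaces $f$ by $\stdmeml(\xvec)$, and \cref{thm:cond-on-meml} then simplifies the ratio to $\xvec_i$.

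The main subtlety to handle carefully is the asymmetric bookkeeping around the terminal entry in the definition of $w$-collision-consistency. The final-entry exclusion aligns naturally with peeling off the last entry in the inductive step of the backward direction, and with the extension from $(I, J)$ to $(I+[i], J+[j])$ in the forward direction. Once this alignment is tracked, the proof is essentially a verbatim generalization of \cref{thm:eqv-meml}.
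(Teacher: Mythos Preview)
Your proposal is correct and follows essentially the same approach as the paper, which omits the proof of \cref{thm:mult:eqv-meml} on the grounds that it is analogous to that of \cref{thm:eqv-meml}. Your careful tracking of the terminal-entry exclusion in the definition of $w$-collision-consistency---verifying $\colls(I_t,J_t)\le\colls(I_{k-1},J_{k-1})\le w-1$ for the backward induction and $\colls(I,J)\le w-1$ for the extended pair $(I+[i],J+[j])$ in the forward direction---is exactly the adaptation needed.
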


\subsubsection{Uniqueness of Nash Equilibrium up to Equivalence}

In \cref{sec:non-meml-extra:nec}, we proved the uniqueness-up-to-equivalence
of Nash equilibria for the RUC game $(A, A, 1)$, where $A$ is irreducible.
We extend that proof to RUC games $(A, A, w)$, where $w \ge 1$.
The main ideas in the proof, and the way we break the proof into lemmas, remains the same.

Let $\rho$ be the Perron root of $A$, and
$\xstar$ and $\ystar$ be the left and right Perron vectors of $A$, respectively.
We will show that for every Nash equilibrium $(f^*, g^*)$ of the RUC game $(A, A, w)$,
$f^*$ is equivalent to $\stdmeml(\xstar)$ and $g^*$ is equivalent to $\stdmeml(\ystar)$.

For notational convenience, we write $S^{(r)}_w(f, g)$ and $e^{(r)}_w(f, g)$ instead of
$S^{(A, r)}_w(f, g)$ and $e^{(A, r)}_w(f, g)$.
If $(f^*, g^*)$ is any Nash equilibrium for the RUC game $(A, A, w)$,
then by \cref{thm:all-ne-same-value}, we get
$e^{(\infty)}_w(f^*, g^*) = e^{(\infty)}_w(f^*, \ystar) = e^{(\infty)}_w(\xstar, g^*) = w\rho$,
and that $(\xstar, g^*)$ and $(f^*, \ystar)$ are also Nash equilibria.

We first show (in \cref{thm:mult:xhat-equals-xstar,thm:mult:yhat-equals-ystar})
that if $(f^*, g^*)$ is a Nash equilibrium for the RUC game $(A, A, w)$, where $w \ge 1$,
then $f^*(\emptyset) \sampledFrom \xstar$ and $g^*(\emptyset) \sampledFrom \ystar$.

\begin{lemma}
\label{thm:mult:yhat-equals-ystar}
Let $(f^*, g^*)$ be any Nash equilibrium for the RUC game $(A, A, w)$, where $w \ge 1$.
Then $g^*(\emptyset) \sampledFrom \ystar$.
\end{lemma}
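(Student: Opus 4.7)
The plan is to mimic the proof of \cref{thm:yhat-equals-ystar} from the single-collision case, adapting the recurrence to account for the fact that a collision now decrements the threshold rather than ending the game.

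Let $g^*(\emptyset) \sampledFrom \yhat$. Since $(\xstar, g^*)$ is a Nash equilibrium for $(A, A, w)$ with value $w\rho$, I plan to test the \batter{}'s incentives against a one-shot deviation. Specifically, for each $\xvec \in \simplex_n$, define $\fxvec$ by $\fxvec(\emptyset) \sampledFrom \xvec$ and $\fxvec^{[j]} \defeq \stdmeml(\xstar)$ for every $j \in \toN$. Then $\fxvec^{([i],[j])}$ is distributed as $\stdmeml(\xstar)$ for every $i, j$. Using \cref{thm:mult:expected-score-nmem}, and invoking \cref{thm:mult:ne-exists-2} to evaluate $e^{(\infty)}_{w-\boolone(i=j)}(\stdmeml(\xstar), {g^*}^{([j],[i])})$ as $(w-\boolone(i=j))\rho$, I expect
\[ e^{(\infty)}_w(\fxvec, g^*) = \xvec\trn A\yhat + (w-1)\rho\,\xvec\trn\yhat + w\rho(1-\xvec\trn\yhat) = \xvec\trn A\yhat + w\rho - \rho\,\xvec\trn\yhat. \]
The Nash property $e^{(\infty)}_w(\fxvec, g^*) \le w\rho$ then reduces to $\xvec\trn A\yhat \le \rho\,\xvec\trn\yhat$ for every $\xvec \in \simplex_n$, which is the same inequality that appears in the single-collision proof.

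From here the remaining steps mirror \cref{thm:yhat-equals-ystar} verbatim. First, I argue $\support(\yhat) = \toN$: if $\support(\yhat) \subsetneq \toN$, pick $\xvec$ supported on $\toN \setminus \support(\yhat)$; then $\xvec\trn\yhat = 0$, while \cref{thm:irred-disj-supp-prod} (using irreducibility of $A$) gives $\xvec\trn A\yhat > 0$, contradicting the inequality. With $\support(\yhat) = \toN$, the inequality becomes $e^{(A)}(\xvec, \yhat) = \xvec\trn A\yhat/\xvec\trn\yhat \le \rho$ for every $\xvec$, so $\yhat$ caps the \batter{}'s payoff in the SRUC game $(A, A, 1)$. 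Combined with $e^{(A)}(\xstar, \yvec) = \rho$ for all $\yvec$ (by \cref{thm:ne-exists}), the pair $(\xstar, \yhat)$ is a Nash equilibrium for the SRUC game $(A, A, 1)$. By uniqueness of SRUC equilibria (\cref{thm:meml-ne-unique}, which applies since $A$ is irreducible and trivially $\graph(A) \subseteq \graph(A)$), we conclude $\yhat = \ystar$.

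The main subtlety, and essentially the only new ingredient compared with the $w = 1$ case, is verifying that the post-collision subgame still contributes exactly $(w-1)\rho$ regardless of how $g^*$ behaves after a collision. This is where \cref{thm:mult:ne-exists-2} is crucial: since $\stdmeml(\xstar)$ pins down the \batter{}'s payoff against any opponent strategy and any remaining threshold, the collision term cleanly contributes the $(w-1)\rho\,\xvec\trn\yhat$ summand. Once this reduction is established, the irreducibility-based argument for $\support(\yhat) = \toN$ and the appeal to SRUC uniqueness are routine.
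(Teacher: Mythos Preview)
Your proposal is correct and is exactly the adaptation the paper has in mind: the paper omits the proof of \cref{thm:mult:yhat-equals-ystar} because it parallels \cref{thm:yhat-equals-ystar}, and your write-up carries out precisely that parallel. The one new ingredient you correctly identify---that \cref{thm:mult:ne-exists-2} pins the post-first-move payoff at $(w-\boolone(i=j))\rho$ even when $i=j$, so the collision term contributes $(w-1)\rho\,\xvec\trn\yhat$ and the recurrence collapses to the same inequality $\xvec\trn A\yhat \le \rho\,\xvec\trn\yhat$---is the whole point, and from there the argument is indeed verbatim.
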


\begin{lemma}
\label{thm:mult:xhat-equals-xstar}
Let $(f^*, g^*)$ be any Nash equilibrium for the RUC game $(A, A, w)$, where $w \ge 1$.
Then $f^*(\emptyset) \sampledFrom \xstar$.
\end{lemma}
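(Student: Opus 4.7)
The plan is to mirror the proof of \cref{thm:xhat-equals-xstar}, carefully tracking the collision threshold. Let $f^*(\emptyset) \sampledFrom \xhat$. By the zero-sum reasoning of \cref{thm:all-ne-same-value} applied to the game $(A, A, w)$, the pair $(f^*, \stdmeml(\ystar))$ is also a Nash equilibrium, and its common value equals $w\rho$ by \cref{thm:mult:ne-exists-2}. I aim to show that if $\xhat \neq \xstar$, then the \bowler{} can strictly reduce her cost by modifying only her first-round action in the NE $(f^*, \stdmeml(\ystar))$, yielding a contradiction.

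For any $\yvec \in \simplex_n$ I would define the auxiliary \bowler{} strategy $\gyvec$ by $\gyvec(\emptyset) \sampledFrom \yvec$, with $\gyvec^{[i]} \defeq \stdmeml(\ystar)$ independently of $\gyvec(\emptyset)$ for every $i \in \toN$. Then \cref{thm:mult:expected-score-nmem}, combined with the fact (from \cref{thm:mult:bowler-nmem-ne} applied with $\yhat = \ystar$) that $e^{(A,\infty)}_k(f, \stdmeml(\ystar)) = k\rho$ for every strategy $f$ and every $k \ge 0$, gives
\[ e^{(A,\infty)}_w(f^*, \gyvec) = \xhat\trn A\yvec + \sum_{i \neq j}\xhat_i\yvec_j \cdot w\rho + \sum_{i = j}\xhat_i\yvec_j \cdot (w-1)\rho = \xhat\trn A\yvec + w\rho - \rho\,\xhat\trn\yvec. \]
Since $\gyvec$ cannot help the \bowler{}, we have $e^{(A,\infty)}_w(f^*, \gyvec) \ge w\rho$, which after cancellation simplifies to the key inequality
\[ \xhat\trn A\yvec \;\ge\; \rho\,\xhat\trn\yvec \qquad\text{for every } \yvec \in \simplex_n. \]

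Once this inequality is in hand, the rest of the argument is a verbatim copy of the $w = 1$ case and splits on $\support(\xhat)$. If $\support(\xhat) = \toN$, the inequality gives $e^{(A)}(\xhat, \yvec) \ge \rho$ for every $\yvec$, and together with \cref{thm:ne-exists} this makes $(\xhat, \ystar)$ a Nash equilibrium of the SRUC game $(A, A)$; uniqueness via \cref{thm:meml-ne-unique} then forces $\xhat = \xstar$. If instead $\support(\xhat) \subsetneq \toN$, I would decompose $\ystar = \alpha\yhat + (1-\alpha)\ytild$ with $\yhat$ supported on $\support(\xhat)$ and $\ytild$ on its complement (so $0 < \alpha < 1$), invoke \cref{thm:irred-disj-supp-prod} to conclude $\xhat\trn A\ytild > 0$, and derive from $\rho\alpha\xhat\trn\yhat = \rho\xhat\trn\ystar = \xhat\trn A\ystar = \alpha\xhat\trn A\yhat + (1-\alpha)\xhat\trn A\ytild$ that $\xhat\trn A\yhat < \rho\,\xhat\trn\yhat$, contradicting the key inequality.

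The only delicate computation is the one-round unrolling above, and the main thing to get right is that when the first pair of actions collides the continuation game is $(A, A, w-1)$ rather than $(A, A, w)$, so that the $(w-1)\rho$ term appears in the sum. Once that bookkeeping is correct, the argument is essentially the single-collision proof with $\rho$ uniformly replaced by $w\rho$, and the two $w\rho$ terms from the no-collision branches and from the NE value cancel cleanly to leave exactly the inequality we need.
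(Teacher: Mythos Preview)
Your proposal is correct and follows the same approach the paper intends (the paper omits this proof, noting it mirrors \cref{thm:xhat-equals-xstar}). Your one-round unrolling via \cref{thm:mult:expected-score-nmem}, with the continuation value $k\rho$ supplied by \cref{thm:mult:ne-exists-2} (equivalently \cref{thm:mult:bowler-nmem-ne} at $\yhat=\ystar$), correctly handles the $w-\boolone(i=j)$ bookkeeping and cancels to the same key inequality $\xhat\trn A\yvec \ge \rho\,\xhat\trn\yvec$, after which the $\support(\xhat)$ case split is indeed identical to the $w=1$ argument.
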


Next, we show (in \cref{thm:mult:subg-nopt,thm:mult:subf-nopt,thm:mult:sub-ne,thm:mult:rec-sub-ne}) that
residual strategies of Nash equilibria are themselves Nash equilibria.

\begin{lemma}
\label{thm:mult:subg-nopt}
Let $(f^*, g^*)$ be any Nash equilibrium for the RUC game $(A, A, w)$, where $w \ge 1$.
Then for any $i, j \in \toN$ and any strategy $f$,
$e^{(\infty)}_{w-\boolone(i=j)}(f, {g^*}^{([j],[i])}) \le (w - \boolone(i=j))\rho$.
\end{lemma}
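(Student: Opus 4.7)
My plan is to adapt the proof of \cref{thm:subg-nopt} to the multi-collision setting, using the recursion of \cref{thm:mult:expected-score-nmem} and \cref{thm:mult:ne-exists-2} in place of their single-collision analogues. The argument is by contradiction. Suppose the claimed inequality fails: there exist $\ihat, \jhat \in \toN$ and a strategy $\fhat$ such that
\[
\delta \defeq e^{(\infty)}_{w-\boolone(\ihat=\jhat)}(\fhat, {g^*}^{([\jhat],[\ihat])}) - (w-\boolone(\ihat=\jhat))\rho > 0.
\]
By \cref{thm:all-ne-same-value} and \cref{thm:mult:yhat-equals-ystar}, $(\xstar, g^*)$ is also a Nash equilibrium of $(A, A, w)$ and $g^*(\emptyset) \sampledFrom \ystar$, which I will use to build a profitable \batter{} deviation.

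I would now construct a deviation $\ftild$ for the \batter{} that outperforms $\xstar$ against $g^*$. Let $\ftild(\emptyset) \sampledFrom \xstar$, and set
\[
\ftild^{[j]} \defeq \begin{cases} \fhat & \text{if } \ftild(\emptyset)=\ihat \text{ and } j = \jhat, \\ \stdmeml(\xstar) & \text{otherwise.} \end{cases}
\]
Then $\ftild^{([i],[j])}$ has the distribution of $\fhat$ when $(i,j) = (\ihat, \jhat)$ and of $\stdmeml(\xstar)$ otherwise. Applying \cref{thm:mult:expected-score-nmem},
\[
e^{(\infty)}_w(\ftild, g^*) = \xstarT A\ystar + \sum_{i=1}^n \sum_{j=1}^n \xstar_i \ystar_j\, e^{(\infty)}_{w-\boolone(i=j)}(\ftild^{([i],[j])}, {g^*}^{([j],[i])}).
\]
By \cref{thm:mult:ne-exists-2}, every term whose residual is $\stdmeml(\xstar)$ contributes $(w-\boolone(i=j))\rho$, while the $(\ihat,\jhat)$ term contributes $(w-\boolone(\ihat=\jhat))\rho + \delta$. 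Since $\xstarT A\ystar = \rho\,\xstarT\ystar$ ($\ystar$ is a right Perron vector of $A$) and $\sum_{i,j}\xstar_i\ystar_j(w-\boolone(i=j))\rho = w\rho - \rho\,\xstarT\ystar$, the $\rho\,\xstarT\ystar$ terms cancel, leaving $e^{(\infty)}_w(\ftild, g^*) = w\rho + \xstar_{\ihat}\ystar_{\jhat}\delta > w\rho = e^{(\infty)}_w(\xstar, g^*)$, contradicting that $(\xstar, g^*)$ is a Nash equilibrium.

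The main subtlety (over the $w=1$ case) is that the recursion in \cref{thm:mult:expected-score-nmem} no longer carries a $\boolone(i\neq j)$ factor, so the diagonal ``collision'' terms are present in the sum and reduce the threshold from $w$ to $w-1$. The proof must therefore treat $\ihat\neq\jhat$ and $\ihat=\jhat$ uniformly: the decrement $\boolone(\ihat=\jhat)$ appears both in the hypothesized gap $\delta$ and inside the diagonal terms of the sum. The key bookkeeping is that the ``extra'' $-\rho\,\xstarT\ystar$ coming from the diagonal collision terms exactly cancels the first-round payoff $\rho\,\xstarT\ystar$, so the only residue is the improvement $\xstar_{\ihat}\ystar_{\jhat}\delta$, which delivers the contradiction.
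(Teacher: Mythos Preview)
Your proof is correct and follows essentially the same approach as the paper's (omitted) proof, which is the natural adaptation of \cref{thm:subg-nopt} to the multi-collision setting. You correctly identify and handle the key difference from the $w=1$ case: the recursion in \cref{thm:mult:expected-score-nmem} retains the diagonal terms, and your bookkeeping showing that the $\rho\,\xstarT\ystar$ contribution from the first-round payoff cancels against the $-\rho\,\xstarT\ystar$ from the diagonal collision terms is exactly the new ingredient needed.
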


\begin{lemma}
\label{thm:mult:subf-nopt}
Let $(f^*, g^*)$ be any Nash equilibrium for the RUC game $(A, A, w)$, where $w \ge 1$.
Then for any $i, j \in \toN$ and any strategy $g$,
$e^{(\infty)}_{w-\boolone(i=j)}({f^*}^{([i],[j])}, g) \ge (w-\boolone(i=j))\rho$.
\end{lemma}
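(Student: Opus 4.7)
The plan is to mirror the contradiction argument of \cref{thm:subf-nopt} from the single-collision setting, generalized so that the collision threshold decrements by $\boolone(i=j)$ after the first round rather than the game terminating on a collision. Thanks to the unified recurrence in \cref{thm:mult:expected-score-nmem}, the cases $i \neq j$ and $i = j$ can be treated in one pass.

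First I would suppose for contradiction that there exist $\ihat, \jhat \in \toN$ and a strategy $\ghat$ with
$\delta \defeq (w-\boolone(\ihat=\jhat))\rho - e^{(\infty)}_{w-\boolone(\ihat=\jhat)}({f^*}^{([\ihat],[\jhat])}, \ghat) > 0$,
and observe that $\delta > 0$ forces $w - \boolone(\ihat=\jhat) \ge 1$ (otherwise both terms defining $\delta$ would vanish). By \cref{thm:all-ne-same-value} applied to the zero-sum game $(A, A, w)$, the pair $(f^*, \ystar)$ is also a Nash equilibrium of value $w\rho$, and by \cref{thm:mult:xhat-equals-xstar} we have $f^*(\emptyset) \sampledFrom \xstar$. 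I would then construct a bowler strategy $\gtild$ that plays $\stdmeml(\ystar)$ by default but switches to $\ghat$ in the subgame after the first round precisely when the opening pair of actions was $(\ihat, \jhat)$: set $\gtild(\emptyset) \sampledFrom \ystar$ and
\[ \gtild^{[i]} \defeq \begin{cases} \ghat & \textrm{ if } i = \ihat \textrm{ and } \gtild(\emptyset) = \jhat \\ \stdmeml(\ystar) & \textrm{ otherwise} \end{cases}, \]
so that $\gtild^{([j],[i])}$ is distributed as $\ghat$ when $(i,j) = (\ihat,\jhat)$ and as $\stdmeml(\ystar)$ otherwise.

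Expanding $e^{(\infty)}_w(f^*, \gtild)$ via \cref{thm:mult:expected-score-nmem}, evaluating each summand with $(i,j) \neq (\ihat,\jhat)$ as $(w-\boolone(i=j))\rho$ using \cref{thm:mult:ne-exists-2}, and bounding the $(\ihat,\jhat)$-summand above using $\delta$, I expect the expression to collapse to $w\rho - \xstar_{\ihat}\ystar_{\jhat}\delta$ after invoking $\xstar\trn A\ystar = \rho\xstar\trn\ystar$ and $\sum_{i,j}\xstar_i\ystar_j(w - \boolone(i=j)) = w - \xstar\trn\ystar$. Since $\xstar, \ystar$ have full support and $\delta > 0$, this is strictly less than $w\rho = e^{(\infty)}_w(f^*, \ystar)$, contradicting the fact that $(f^*, \ystar)$ is a Nash equilibrium. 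The only non-routine point I anticipate is verifying that $\gtild$ is a legitimate randomized strategy and that its $([j],[i])$-residual has the stated distribution, which amounts to unpacking the definitions of residual and conditional strategies from \cref{sec:non-meml}; everything else, including the uniform treatment of the $i = j$ case (ensured by the fact that $\delta > 0$ rules out the degenerate $w - \boolone(\ihat=\jhat) = 0$ situation), reduces to plain algebra identical in spirit to \cref{thm:subf-nopt}.
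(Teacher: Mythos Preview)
Your proposal is correct and is exactly the natural adaptation of the proof of \cref{thm:subf-nopt} that the paper intends (the paper omits the proof, stating it is very similar to the $w=1$ case). The only new ingredients beyond \cref{thm:subf-nopt}---the unified recurrence of \cref{thm:mult:expected-score-nmem} summing over all $(i,j)$, the identity $\sum_{i,j}\xstar_i\ystar_j(w-\boolone(i=j))=w-\xstarT\ystar$, and the observation that $\delta>0$ excludes the degenerate case $w-\boolone(\ihat=\jhat)=0$---are handled correctly.
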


\begin{lemma}
\label{thm:mult:sub-ne}
Let $(f^*, g^*)$ be any Nash equilibrium for the RUC game $(A, A, w)$, where $w \ge 1$.
Then for any $i, j \in \toN$, $({f^*}^{([i],[j])}, {g^*}^{([j],[i])})$ is
a Nash equilibrium for the RUC game $(A, A, w - \boolone(i=j))$. Moreover,
$e^{(\infty)}_{w-\boolone(i=j)}({f^*}^{([i],[j])}, {g^*}^{([j],[i])}) = (w-\boolone(i=j))\rho$.
\end{lemma}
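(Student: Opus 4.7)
The statement is the multi-collision analog of \cref{thm:sub-ne}, so the plan is to mirror that proof: deduce the lemma by sandwiching the payoff of the residual pair between two matching bounds supplied by \cref{thm:mult:subg-nopt,thm:mult:subf-nopt}. The only new wrinkle compared to \cref{thm:sub-ne} is that now $i=j$ is allowed; in that case a collision occurs in the first round, the residual game has threshold $w-1$, and the target equilibrium value is $(w-1)\rho$. Both of the feeder lemmas are stated with exactly this $w-\boolone(i=j)$ book-keeping baked in, so no extra case analysis is needed at this step.

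\textbf{Execution.} Fix $i,j \in \toN$ and let $w' \defeq w - \boolone(i=j)$. First, invoke \cref{thm:mult:subg-nopt} with the \batter{} free: for every strategy $f$,
\[ e^{(\infty)}_{w'}(f, {g^*}^{([j],[i])}) \le w'\rho. \]
Second, invoke \cref{thm:mult:subf-nopt} with the \bowler{} free: for every strategy $g$,
\[ e^{(\infty)}_{w'}({f^*}^{([i],[j])}, g) \ge w'\rho. \]
Specializing the first inequality to $f = {f^*}^{([i],[j])}$ and the second to $g = {g^*}^{([j],[i])}$ yields
\[ w'\rho \le e^{(\infty)}_{w'}({f^*}^{([i],[j])}, {g^*}^{([j],[i])}) \le w'\rho, \]
which proves the payoff claim $e^{(\infty)}_{w'}({f^*}^{([i],[j])}, {g^*}^{([j],[i])}) = w'\rho$.

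\textbf{Verifying the NE property.} Plugging this common value back into the two displayed inequalities gives, for all strategies $f$ and $g$,
\[ e^{(\infty)}_{w'}(f, {g^*}^{([j],[i])}) \le e^{(\infty)}_{w'}({f^*}^{([i],[j])}, {g^*}^{([j],[i])}) \le e^{(\infty)}_{w'}({f^*}^{([i],[j])}, g), \]
which is exactly the definition of a Nash equilibrium for the zero-sum RUC game $(A, A, w')$ with the \batter{} maximizing and the \bowler{} minimizing. Hence $({f^*}^{([i],[j])}, {g^*}^{([j],[i])})$ is an NE of $(A, A, w - \boolone(i=j))$, as claimed.

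\textbf{Where the work really lies.} The lemma itself is a one-line corollary of its two predecessors; the genuine difficulty is hidden inside those predecessors, which must carry out the perturbation argument used in \cref{thm:subg-nopt,thm:subf-nopt} while correctly tracking the collision budget $w$ (and the drop to $w-1$ whenever the first pair of actions collides). Once those are in hand, as assumed here, the remaining step is just the sandwich argument above, with no further case split on whether $i=j$.
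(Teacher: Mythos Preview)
Your proof is correct and mirrors the paper's own argument exactly: the paper's proof of \cref{thm:mult:sub-ne} reads in full ``Follows from \cref{thm:mult:subg-nopt,thm:mult:subf-nopt},'' and your sandwich argument is precisely the way one unpacks that sentence. Your observation that the case $i=j$ needs no special treatment because the $w-\boolone(i=j)$ bookkeeping is already built into the feeder lemmas is also on point.
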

\begin{proof}
Follows from \cref{thm:mult:subg-nopt,thm:mult:subf-nopt}.
\end{proof}

\begin{lemma}
\label{thm:mult:rec-sub-ne}
Let $(f^*, g^*)$ be any Nash equilibrium for the RUC game $(A, A, w)$, where $w \ge 1$.
Then for any lists $I$ and $J$ of the same length such that $\colls(I, J) < w$,
we get that $({f^*}^{(I,J)}, {g^*}^{(J,I)})$ is a Nash equilibrium
for the RUC game $(A, A, w - \colls(I, J))$.
\end{lemma}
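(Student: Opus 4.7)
The plan is to mirror the proof of \cref{thm:rec-sub-ne}, namely induction on $|I|$, but with the collision-threshold bookkeeping threaded through. For $|I|=|J|=0$ the claim is trivial since $\colls(I,J)=0$ and the residual strategies coincide with $f^*$ and $g^*$, which form a Nash equilibrium for $(A,A,w)$ by hypothesis. For the inductive step, write $I = I' + [\ihat]$ and $J = J' + [\jhat]$, so that $\colls(I,J) = \colls(I',J') + \boolone(\ihat=\jhat)$. The condition $\colls(I,J)<w$ gives $\colls(I',J') < w - \boolone(\ihat=\jhat) \le w$, in particular $\colls(I',J') < w$, so the inductive hypothesis applies at the shorter history.

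Let $w' \defeq w - \colls(I',J')$; by the inductive hypothesis, $({f^*}^{(I',J')},\,{g^*}^{(J',I')})$ is a Nash equilibrium for $(A,A,w')$. Note that $w' \ge 1$ since $\colls(I',J') \le \colls(I,J) < w$ (actually $\colls(I',J')<w-\boolone(\ihat=\jhat) \le w$, so the residual game has at least one collision remaining). Now apply \cref{thm:mult:sub-ne} to this residual Nash equilibrium with the pair $(\ihat,\jhat)$: we obtain that
\[
\bigl(({f^*}^{(I',J')})^{([\ihat],[\jhat])},\ ({g^*}^{(J',I')})^{([\jhat],[\ihat])}\bigr)
\]
is a Nash equilibrium for the game $(A,A,\,w' - \boolone(\ihat=\jhat))$. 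A quick arithmetic check shows $w' - \boolone(\ihat=\jhat) = w - \colls(I',J') - \boolone(\ihat=\jhat) = w - \colls(I,J)$, which is precisely the threshold we want.

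Finally, invoke \cref{thm:hist-compose} (in its randomized-strategy form) to identify the iterated residuals with single residuals at the concatenated history: $({f^*}^{(I',J')})^{([\ihat],[\jhat])}$ has the same distribution as ${f^*}^{(I'+[\ihat],\,J'+[\jhat])} = {f^*}^{(I,J)}$, and analogously for $g^*$. Since Nash-equilibrium status depends only on the joint distribution of the strategies (\cref{thm:mult:eqv-score}), we conclude that $({f^*}^{(I,J)}, {g^*}^{(J,I)})$ is a Nash equilibrium for $(A,A,\,w - \colls(I,J))$, completing the induction.

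The only subtle point, and the step to double-check carefully, is the arithmetic guaranteeing $w' \ge 1$ whenever we need to apply \cref{thm:mult:sub-ne} (which requires a game with threshold at least $1$). This is exactly where the strict inequality $\colls(I,J) < w$ pays off: it forces $\colls(I',J') + \boolone(\ihat=\jhat) < w$, so $w' = w - \colls(I',J') \ge 1 + \boolone(\ihat=\jhat) \ge 1$, and moreover the output threshold $w - \colls(I,J)$ is nonnegative. No other part of the argument presents real difficulty; it is a mechanical lift of \cref{thm:rec-sub-ne} once the threshold accounting is set up.
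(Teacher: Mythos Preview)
Your proof is correct and follows exactly the approach the paper intends: the paper omits the proof of \cref{thm:mult:rec-sub-ne}, stating only that it is analogous to \cref{thm:rec-sub-ne}, and your argument is precisely that analogue with the collision-threshold bookkeeping added. Your verification that $w' \ge 1$ before invoking \cref{thm:mult:sub-ne} is the one new wrinkle, and you handle it correctly.
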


Finally, we combine the above results to show uniqueness-up-to-equivalence of Nash equilibria.

\begin{theorem}
\label{thm:mult:nec-cond-for-ne}
For any Nash equilibrium $(f^*, g^*)$ of the RUC game $(A, A, w)$, where $w \ge 1$,
$f^*$ is $w$-equivalent to $\stdmeml(\xstar)$ and $g^*$ is $w$-equivalent to $\stdmeml(\ystar)$.
\end{theorem}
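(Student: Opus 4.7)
The plan is to mirror the structure used to prove \cref{thm:nec-cond-for-ne}, which handled the $w=1$ case, and combine the multi-collision analogues of the key building blocks already stated in the appendix: \cref{thm:mult:xhat-equals-xstar,thm:mult:yhat-equals-ystar,thm:mult:rec-sub-ne,thm:mult:eqv-meml}. The target is the characterization of $w$-equivalence given in \cref{thm:mult:eqv-meml}: to show $f^*$ is $w$-equivalent to $\stdmeml(\xstar)$, it suffices to prove ${f^*}^{(I,J)}(\emptyset) \sampledFrom \xstar$ for every feasible history $(I,J)$ of $f^*$ with $\colls(I,J) < w$, and symmetrically for $g^*$.

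First, I would fix any such feasible history $(I, J)$ with $k \defeq \colls(I, J) < w$. By \cref{thm:mult:rec-sub-ne}, the pair of residual strategies $({f^*}^{(I,J)}, {g^*}^{(J,I)})$ is itself a Nash equilibrium, but now for the game $(A, A, w - k)$. Since $w - k \ge 1$, this residual game falls within the hypothesis of \cref{thm:mult:xhat-equals-xstar,thm:mult:yhat-equals-ystar}. Applying those two lemmas to the residual NE, I conclude that ${f^*}^{(I,J)}(\emptyset) \sampledFrom \xstar$ and ${g^*}^{(J,I)}(\emptyset) \sampledFrom \ystar$.

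Finally, since this holds for every feasible history with fewer than $w$ collisions, \cref{thm:mult:eqv-meml} directly yields that $f^*$ is $w$-equivalent to $\stdmeml(\xstar)$ and $g^*$ is $w$-equivalent to $\stdmeml(\ystar)$, completing the proof. The conceptual work is thus entirely discharged to the supporting lemmas; the present theorem is just an assembly step. The main obstacle, had I needed to supply it, would have been proving \cref{thm:mult:rec-sub-ne} itself: it requires an induction on $|I|=|J|$, using the composition identity from \cref{thm:hist-compose} together with the one-step residual analogue (\cref{thm:mult:sub-ne}), with care taken because the collision threshold now decrements only when $i=j$, so the base case and the induction must track both the length of the history and the remaining wicket count $w - \colls(I,J)$ simultaneously. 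Once \cref{thm:mult:rec-sub-ne} is in hand, the theorem itself follows in a few lines exactly as sketched above, faithfully extending the zero-sum uniqueness-up-to-equivalence result from one collision to any deterministic collision threshold.
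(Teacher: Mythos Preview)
Your proposal is correct and mirrors the paper's own proof essentially line-for-line: apply \cref{thm:mult:rec-sub-ne} to see that each residual pair $({f^*}^{(I,J)}, {g^*}^{(J,I)})$ with $\colls(I,J)<w$ is a Nash equilibrium for $(A,A,w-\colls(I,J))$, invoke \cref{thm:mult:xhat-equals-xstar,thm:mult:yhat-equals-ystar} to get the first-move distributions, and conclude via \cref{thm:mult:eqv-meml}. Your additional commentary on how one would prove \cref{thm:mult:rec-sub-ne} is accurate but not needed here, since that lemma is already stated and assumed.
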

\begin{proof}
For any lists $I$ and $J$ such that $\colls(I, J) < w$, we get that
$({f^*}^{(I,J)}, {g^*}^{(J,I)})$ is a Nash equilibrium for the RUC game
$(A, A, w-\colls(I, J))$ by \cref{thm:mult:rec-sub-ne}.
By \cref{thm:mult:xhat-equals-xstar,thm:mult:yhat-equals-ystar}, we get that
${f^*}^{(I,J)}(\emptyset) \sampledFrom \xstar$ and ${g^*}^{(J,I)}(\emptyset) \sampledFrom \ystar$.
Hence, by \cref{thm:mult:eqv-meml}, $f^*$ is $w$-equivalent to $\stdmeml(\xstar)$,
and $g^*$ is $w$-equivalent to $\stdmeml(\ystar)$.
\end{proof}

\section{SRUC Games on Reducible Matrices}
\label{sec:reducible}

In this section, we investigate the SRUC game $(A, B)$,
where $A$ or $B$ may be reducible (i.e., may not be irreducible).
We show some necessary conditions on Nash equilibria
and prove that a Nash equilibrium always exists.

\subsection{Handling Trivial Cases}

When $\graph(A) \nsubseteq \graph(B)$, a Nash equilibrium exists by \cref{thm:edge-in-A-not-in-B}.
Hence, from now on we assume $\graph(A) \subseteq \graph(B)$.

\begin{lemma}
In the SRUC game $(A, B)$, suppose the $j\Th$ column of $B$ is zero.
Let $\xstar \in \simplex_n$ such that $\xstar_j > 0$.
Then $(\xstar, \evec^{(j)})$ is a Nash equilibrium.
\end{lemma}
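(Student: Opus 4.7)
The plan is to verify directly that neither player can strictly improve by deviating unilaterally from $(\xstar, \evec^{(j)})$, using \cref{thm:expected-score} to evaluate the payoffs. First I would compute the payoffs at the proposed equilibrium: since $\xstar_j > 0$, we have $\xstarT\evec^{(j)} = \xstar_j > 0$, and since the $j\Th$ column of $B$ is zero, $\xstarT B\evec^{(j)} = 0$. By \cref{thm:expected-score}, $e^{(B)}(\xstar, \evec^{(j)}) = 0$. As $B$ is non-negative, this is already the minimum possible cost over all strategy pairs, so the \bowler{} has no profitable deviation.

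Next, I would handle the \batter{}. The key observation, which uses the standing assumption $\graph(A) \subseteq \graph(B)$ established earlier in the section, is that the $j\Th$ column of $A$ also has no off-diagonal nonzero entries: for any $i \neq j$, $A_{i,j} > 0$ would imply $(i,j) \in E_A \subseteq E_B$ and hence $B_{i,j} > 0$, contradicting the hypothesis. So $\xvec\trn A\evec^{(j)} = \xvec_j A_{j,j}$ for every $\xvec \in \simplex_n$.

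Now for any $\xvec \in \simplex_n$, there are two cases. If $\xvec_j > 0$, then $\xvec\trn\evec^{(j)} = \xvec_j > 0$ and \cref{thm:expected-score} gives $e^{(A)}(\xvec, \evec^{(j)}) = \xvec_j A_{j,j}/\xvec_j = A_{j,j}$. If $\xvec_j = 0$, then $\xvec\trn\evec^{(j)} = 0$ and $\xvec\trn A\evec^{(j)} = 0$, so $e^{(A)}(\xvec, \evec^{(j)}) = 0$. In either case $e^{(A)}(\xvec, \evec^{(j)}) \le A_{j,j} = e^{(A)}(\xstar, \evec^{(j)})$ (using $A_{j,j} \ge 0$), so the \batter{} cannot improve either. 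This completes the verification that $(\xstar, \evec^{(j)})$ is a Nash equilibrium.

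There is no real obstacle here; the proof is a direct application of the closed form in \cref{thm:expected-score} together with the implication $B_{i,j} = 0 \Rightarrow A_{i,j} = 0$ for $i \neq j$ that comes from $\graph(A) \subseteq \graph(B)$. The only mild subtlety is remembering to check the degenerate case $\xvec_j = 0$ separately so that the $0/0$ branch of \cref{thm:expected-score} is invoked correctly.
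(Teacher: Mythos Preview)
Your proposal is correct and follows essentially the same approach as the paper: compute $e^{(B)}(\xstar,\evec^{(j)})=0$ to handle the \bowler{}, use the standing assumption $\graph(A)\subseteq\graph(B)$ to deduce that the $j\Th$ column of $A$ is zero off the diagonal, and conclude $e^{(A)}(\xvec,\evec^{(j)})\le A_{j,j}$ for all $\xvec$. Your version is just a bit more explicit in separating the cases $\xvec_j>0$ and $\xvec_j=0$ when applying \cref{thm:expected-score}.
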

\begin{proof}
$e^{(B)}(\xstar, \evec^{(j)}) = 0$, so the \bowler{} has no incentive to deviate.
Since $\graph(A) \subseteq \graph(B)$, $A[j,j]$ can be the only positive entry (if any)
in the $j\Th$ column of $A$. Hence, $e^{(A)}(\xstar, \evec^{(j)}) = A[j,j]$,
and the \batter{} cannot gain by deviating.
Hence, $(\xstar, \evec^{(j)})$ is a Nash equilibrium.
\end{proof}

Hence, from now on we assume that every column of $B$ has a positive entry.

\subsection{Necessary Conditions for Nash Equilibria}

We first prove some necessary conditions for Nash equilibria
to build some intuition about the properties that they have.

\begin{observation}
\label{thm:graph-score}
For $C \in \{A, B\}$, let $\graph(C) \defeq (\toN, E_C)$.
If the \batter{} and \bowler{} use stationary strategies $\xvec$ and $\yvec$, respectively,
then $\graph(C)$ tells us the following about $e^{(C)}(\xvec, \yvec)$:
\begin{enumerate}
\item Every vertex $u \in \toN$ contributes $\xvec_u\yvec_u$ to the probability of a collision in a round,
    i.e., $\xvec\trn\yvec = \sum_{u=1}^n \xvec_u\yvec_u$.
\item Every vertex $u \in \toN$ contributes $\xvec_u\yvec_uC[u,u]$ to the per-round payoff,
    and every edge $(u, v) \in E_C$ contributes $\xvec_u\yvec_vC[u,v]$, i.e.,
    \[ \xvec\trn C\yvec = \sum_{u \in V}\xvec_u\yvec_uC[u,u] + \sum_{(u,v) \in E_C}\xvec_u\yvec_vC[u,v]. \]
\end{enumerate}
\end{observation}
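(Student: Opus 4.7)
The plan is to treat both claims as simple consequences of unpacking definitions, since this is an observation about how to read off the quadratic forms $\xvec\trn\yvec$ and $\xvec\trn C\yvec$ from the structure of the graph $\graph(C)$. No fixed-point or spectral machinery is needed here; the entire argument is a bookkeeping exercise on a double sum.

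For part~1, I would simply note that $\xvec\trn\yvec = \sum_{u=1}^n \xvec_u \yvec_u$ by definition of the dot product, and then observe that in any given round the event "both players pick action $u$" has probability $\xvec_u \yvec_u$ (by independence of the players' stationary draws), and these events are disjoint over $u$. Summing gives the collision probability, matching the algebraic identity.

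For part~2, I would start from the definition $\xvec\trn C\yvec = \sum_{u=1}^n \sum_{v=1}^n \xvec_u \yvec_v C[u,v]$ and partition the index set $\toN \times \toN$ into three pieces: the diagonal $\{(u,u) : u \in \toN\}$, the off-diagonal pairs with $C[u,v] > 0$ (which is exactly $E_C$ by \cref{defn:irred}), and the off-diagonal pairs with $C[u,v] = 0$. The first piece contributes $\sum_u \xvec_u \yvec_u C[u,u]$, the second contributes $\sum_{(u,v) \in E_C} \xvec_u \yvec_v C[u,v]$, and the third contributes zero since the summand vanishes. Combining these gives the stated identity.

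There is no real obstacle: the only thing worth flagging is the convention that $\graph(C)$ omits self-loops (it contains only $(i,j)$ with $j \neq i$), so diagonal entries of $C$ must be accounted for separately as a vertex contribution rather than as edge contributions, which is exactly how the observation is phrased. Given how short this is, I would present it as a one-paragraph proof immediately after the observation statement rather than breaking it into subclaims.
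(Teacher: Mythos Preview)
Your proposal is correct. The paper states this as an \texttt{observation} environment with no accompanying proof, so your unpacking of the definitions---splitting the double sum $\sum_{u,v}\xvec_u\yvec_vC[u,v]$ into diagonal terms, off-diagonal terms with $C[u,v]>0$ (which by \cref{defn:irred} is exactly $E_C$), and off-diagonal terms with $C[u,v]=0$---is the intended reading and matches the paper's evident intent.
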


\begin{lemma}
\label{thm:posA-implies-posB}
For any $\xvec, \yvec \in \simplex_n$ such that $\xvec\trn\yvec = 0$, we get
$\xvec\trn A\yvec > 0 \implies \xvec\trn B\yvec > 0$.
\end{lemma}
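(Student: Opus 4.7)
The plan is to exploit the hypothesis $\graph(A) \subseteq \graph(B)$ together with the support-disjointness forced by $\xvec\trn\yvec = 0$. First I would observe that if $\xvec\trn\yvec = 0$ and all entries of $\xvec$ and $\yvec$ are non-negative, then $\support(\xvec) \cap \support(\yvec) = \emptyset$. Consequently, for any non-negative matrix $C$, we can write
\[ \xvec\trn C\yvec = \sum_{i \in \support(\xvec)} \sum_{j \in \support(\yvec)} \xvec_i C_{i,j} \yvec_j, \]
and every index pair $(i,j)$ appearing in this sum automatically satisfies $i \neq j$ (since the supports are disjoint).

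Next, from $\xvec\trn A\yvec > 0$ and the expansion above applied with $C = A$, there must exist some $i \in \support(\xvec)$ and $j \in \support(\yvec)$ with $A_{i,j} > 0$; by disjointness of the supports, $i \neq j$. This is exactly the condition for $(i,j)$ to be an edge of $\graph(A)$. Invoking the assumption $\graph(A) \subseteq \graph(B)$, we conclude $(i,j) \in E_B$, i.e.\ $B_{i,j} > 0$.

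Finally, since $B$ is non-negative and $\xvec_i, \yvec_j > 0$, the expansion of $\xvec\trn B\yvec$ contains the strictly positive term $\xvec_i B_{i,j} \yvec_j$, so $\xvec\trn B\yvec \ge \xvec_i B_{i,j} \yvec_j > 0$, as required. There is no real obstacle here; the only subtlety is keeping track of the fact that the condition ``$i \neq j$'' in the definition of $\graph(C)$ is automatically supplied by the disjoint-support observation, which is precisely what makes the subgraph hypothesis applicable.
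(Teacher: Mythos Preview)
Your proof is correct and follows essentially the same approach as the paper: locate an index pair $(i,j)$ with $\xvec_i\yvec_jA_{i,j}>0$, use disjointness of supports to ensure $i\neq j$ so that $(i,j)\in E_A\subseteq E_B$, and then lower-bound $\xvec\trn B\yvec$ by the single positive term $\xvec_iB_{i,j}\yvec_j$. Your version is slightly more explicit about why $i\neq j$, but the argument is the same.
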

\begin{proof}
For $C \in \{A, B\}$, let $(\toN, E_C) \defeq \graph(C)$.
Then $\xvec\trn A\yvec > 0$ and $\xvec\trn\yvec = 0$ imply that
$\xvec_u\yvec_vA[u,v] > 0$ for some $(u, v) \in E_A$.
Since $\graph(A) \subseteq \graph(B)$, we get
$\xvec_u\yvec_vB[u, v] > 0$, which implies $\xvec\trn B\yvec > 0$.
\end{proof}

\begin{lemma}
\label{thm:nasheq-pos-fin}
Let $(\xstar, \ystar)$ be a Nash equilibrium for the SRUC game $(A, B)$.
Then $e^{(B)}(\xstar, \ystar)$ is finite, and $e^{(A)}(\xstar, \ystar)$ is positive and finite.
\end{lemma}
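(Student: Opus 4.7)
My plan is to prove the three parts in sequence (finiteness of $e^{(B)}$, finiteness of $e^{(A)}$, then positivity of $e^{(A)}$), each time testing one player's equilibrium strategy against a carefully chosen deviation.

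For finiteness of $e^{(B)}(\xstar, \ystar)$, I would compare the bowler's equilibrium strategy against the uniform deviation $\yhat \defeq (1/n, \ldots, 1/n) \in \simplex_n$. Because $\xstar \in \simplex_n$, we have $\xstar\trn\yhat = 1/n > 0$, so \cref{thm:expected-score} yields $e^{(B)}(\xstar, \yhat) = n \cdot \xstar\trn B \yhat < \infty$. The bowler's NE condition then gives $e^{(B)}(\xstar, \ystar) \le e^{(B)}(\xstar, \yhat) < \infty$.

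For finiteness of $e^{(A)}(\xstar, \ystar)$, I would argue by contradiction. If $e^{(A)}(\xstar, \ystar) = \infty$, \cref{thm:expected-score} would force $\xstar\trn\ystar = 0$ together with $\xstar\trn A\ystar > 0$. \Cref{thm:posA-implies-posB}, which applies under the standing assumption $\graph(A) \subseteq \graph(B)$, would then give $\xstar\trn B\ystar > 0$, so that $e^{(B)}(\xstar, \ystar) = \infty$, contradicting the previous paragraph.

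For positivity of $e^{(A)}(\xstar, \ystar)$, I would again argue by contradiction, supposing the value equals $0$. Since $e^{(A)} \ge 0$ identically and the batter is best-responding, every alternative $\xvec \in \simplex_n$ must also yield $e^{(A)}(\xvec, \ystar) = 0$. Plugging in the pure deviation $\xvec = \evec^{(i)}$ and splitting on whether $\ystar_i$ vanishes forces $(A\ystar)_i = 0$ in either case, giving $A\ystar = \zerovec$. By non-negativity of $A$, this says every column of $A$ indexed by $\support(\ystar)$ is identically zero; the intended contradiction then comes from a structural hypothesis on $A$ in the spirit of the already-handled ``zero column of $B$'' trivial case.

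I expect the last step to be the main obstacle: the first two parts are immediate from the uniform deviation and \cref{thm:posA-implies-posB}, but the positivity claim genuinely needs an additional hypothesis beyond those listed in ``Handling Trivial Cases'' — concretely, that no column of $A$ is identically zero. Without it, small instances such as $A = \bigl(\begin{smallmatrix}0 & 0\\ 1 & 0\end{smallmatrix}\bigr)$ paired with a suitable $B$ admit Nash equilibria with $e^{(A)} = 0$, so the cleanest fix is to extend the trivial-case analysis to also peel off zero-columns of $A$ before invoking this lemma, after which the derivation $A\ystar = \zerovec$ above collides directly with that assumption.
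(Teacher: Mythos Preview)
Your treatment of the two finiteness claims is essentially identical to the paper's: the paper also has the \bowler{} deviate to a full-support strategy to bound $e^{(B)}$, and then derives finiteness of $e^{(A)}$ by contradiction via \cref{thm:posA-implies-posB}.

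For positivity of $e^{(A)}$, you are more careful than the paper. The paper disposes of it in one line --- ``the \batter{} can guarantee a positive score by playing a full-support strategy'' --- but that sentence already presupposes exactly the hypothesis you flag: for full-support $\xhat$ one has $e^{(A)}(\xhat,\ystar)=\xhat\trn A\ystar/\xhat\trn\ystar$, which vanishes precisely when $A\ystar=\zerovec$. Your derivation that $e^{(A)}(\cdot,\ystar)\equiv 0$ forces $A\ystar=\zerovec$ makes this dependence explicit, and your concern is justified. With $A=\bigl(\begin{smallmatrix}0&0\\1&0\end{smallmatrix}\bigr)$ and $B=\bigl(\begin{smallmatrix}0&1\\1&0\end{smallmatrix}\bigr)$ both standing hypotheses ($\graph(A)\subseteq\graph(B)$ and no zero column of $B$) hold, yet $((\tfrac12,\tfrac12),(0,1))$ is a Nash equilibrium with $e^{(A)}=0$. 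So the positivity clause is not true under only the stated assumptions, and your suggested repair --- adding ``no zero column of $A$'' to the trivial-case pruning --- is the natural fix. That said, only the finiteness part is used downstream (in \cref{thm:ystar-pred-incl}), so this gap does not affect the rest of the section.
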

\begin{proof}
The \bowler{} can guarantee a finite cost by playing a full-support strategy,
so $e^{(B)}(\xstar, \ystar)$ is finite.
The \batter{} can guarantee a positive score by playing a full-support strategy,
so $e^{(A)}(\xstar, \ystar) > 0$.

Suppose $e^{(A)}(\xstar, \ystar) = \infty$. Then $\xstarT\ystar = 0$
and $\xstarT A\ystar > 0$. By \cref{thm:posA-implies-posB}, we get $\xstarT B\ystar > 0$.
Hence, $e^{(B)}(\xstar, \ystar) = \infty$, which is a contradiction.
Hence, $e^{(A)}(\xstar, \ystar)$ is finite.
\end{proof}

\begin{definition}
In a digraph $(V, E)$, the set $S \subseteq V$ is called \emph{predecessor-inclusive}
if for all $(u, v) \in E$, we have $v \in S \implies u \in S$
(i.e., for every $v \in S$, its predecessors are also in $S$.)
\end{definition}

\begin{lemma}
\label{thm:ystar-pred-incl}
Let $(\xstar, \ystar)$ be a Nash equilibrium for the SRUC game $(A, B)$.
Then $\support(\ystar)$ is predecessor-inclusive in $\graph(A)$.
\end{lemma}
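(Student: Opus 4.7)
The plan is a short proof by contradiction, using the explicit formula in \cref{thm:expected-score} together with the boundedness furnished by \cref{thm:nasheq-pos-fin}. Suppose $\support(\ystar)$ fails to be predecessor-inclusive in $\graph(A)$. Then there exist $u \ne v$ in $\toN$ with $(u,v) \in \graph(A)$ (so $A_{u,v} > 0$), $v \in \support(\ystar)$, and $u \notin \support(\ystar)$. I will show that this lets the \batter{} deviate to the pure strategy $\evec^{(u)}$ and earn an infinite expected score, contradicting the assumption that $(\xstar, \ystar)$ is a Nash equilibrium.

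Concretely, consider the deviation $\xvec = \evec^{(u)}$. Because $u \notin \support(\ystar)$, we have $(\evec^{(u)})\trn \ystar = \ystar_u = 0$, so the \batter{} never collides with the \bowler{}. On the other hand, $(\evec^{(u)})\trn A \ystar = \sum_{j=1}^n A_{u,j}\ystar_j \ge A_{u,v}\ystar_v > 0$, since $A_{u,v} > 0$ and $v \in \support(\ystar)$. By the three-case definition in \cref{thm:expected-score}, these two facts (zero collision probability but positive per-round score) force $e^{(A)}(\evec^{(u)}, \ystar) = \infty$.

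The Nash equilibrium condition for the \batter{} then gives $e^{(A)}(\xstar, \ystar) \ge e^{(A)}(\evec^{(u)}, \ystar) = \infty$. But \cref{thm:nasheq-pos-fin} asserts that $e^{(A)}(\xstar, \ystar)$ is finite, a contradiction. Hence no such pair $(u,v)$ exists, and $\support(\ystar)$ is predecessor-inclusive in $\graph(A)$.

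Since the two ingredients needed, namely the closed-form expression for stationary payoffs and the finiteness of $e^{(A)}(\xstar, \ystar)$ at any NE, are already in hand, there is no real obstacle; the entire argument reduces to exhibiting the single pure deviation $\evec^{(u)}$ and invoking the two earlier results. The only point that requires a moment of care is checking both clauses of the ``infinite'' case of \cref{thm:expected-score} simultaneously, i.e., verifying both $\xvec\trn\ystar = 0$ and $\xvec\trn A\ystar > 0$ for the chosen deviation.
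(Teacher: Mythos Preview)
Your proof is correct and follows essentially the same route as the paper: assume $\support(\ystar)$ is not predecessor-inclusive, exhibit the pure deviation $\evec^{(u)}$ for the \batter{}, verify that $(\evec^{(u)})\trn\ystar = 0$ while $(\evec^{(u)})\trn A\ystar \ge A_{u,v}\ystar_v > 0$, and then contradict the finiteness of $e^{(A)}(\xstar,\ystar)$ from \cref{thm:nasheq-pos-fin}. The only cosmetic difference is that you explicitly cite \cref{thm:expected-score} for the $\infty$ case, whereas the paper leaves that step implicit.
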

\begin{proof}
Let $S_y \defeq \support(\ystar)$ and $\graph(A) \defeq (\toN, E_A)$.
Suppose $S_y$ is not predecessor-inclusive in $\graph(A)$.
Then $\exists (u, v) \in E_A$ such that $u \not\in S_y$ and $v \in S_y$.

If the \batter{} plays the strategy $\evec^{(u)}$,
then a collision never happens (${\evec^{(u)}}\trn\ystar = 0$),
and the per-round score is positive (${\evec^{(u)}}\trn A\ystar \ge A[u,v]\ystar_v > 0$).
Hence, $e^{(A)}(\xstar, \ystar) = \infty$, which contradicts \cref{thm:nasheq-pos-fin}.
Hence, $S_y$ is predecessor-inclusive in $\graph(A)$.
\end{proof}

\subsection{Existence of Nash Equilibrium}

\subsubsection{SCC Decomposition of the Payoff Matrices}

By \cref{thm:ystar-pred-incl}, in any Nash equilibrium $(\xstar, \ystar)$
of the SRUC game $(A, B)$, $\support(\ystar)$ must be a union of
some of the strongly connected components (SCCs) of $\graph(A)$.

Furthermore, for any square matrix $C$, the SCCs of $\graph(C)$
correspond to irreducible submatrices of $C$.
Formally, for any matrix $C \in \mathbb{R}^{n \times n}_{\ge 0}$ and any sets $S, T \subseteq \toN$,
let $C[S,T]$ be the submatrix of $C$ formed by rows $S$ and columns $T$.
Then for any SCC $S \subseteq \toN$ of $\graph(C)$, $C[S, S]$ is irreducible.

Motivated by these observations, we find a Nash equilibrium for the SRUC game $(A, B)$
by carefully picking SCCs of $\graph(A)$ and $\graph(B)$ and then using
the results for irreducible matrices from \cref{sec:meml}.
To do this, we first need to describe the structure of $\graph(A)$ and $\graph(B)$.

In any directed graph $(V, E)$, let $S_1$ and $S_2$ be SCCs.
We say that $S_1$ \emph{precedes} $S_2$ iff there is an edge $(u, v) \in E$
such that $u \in S_1$ and $v \in S_2$.
This precedence relation between SCCs is a partial order,
and can be represented as a DAG (directed acyclic graph).
An SCC is called a \emph{source} if it is not preceded by any other SCC.

Let $\{V_1, V_2, \ldots\}$ be the SCCs of $\graph(B)$.
\WLoG{}, let $V_1, \ldots, V_q$ be the sources among these SCCs.
Let $V_{\le q} \defeq \bigcup_{i=1}^q V_i$
and $V_{>q} \defeq \toN \setminus V_{\le q}$ ($V_{>q}$ may be empty).
Since every DAG has a source, $q \ge 1$ and $V_{\le q} \neq \emptyset$.
For all $i \in \toN[q]$, let $\rho^{(i)}_B$ be the Perron root of $B[V_i, V_i]$.
\WLoG{}, assume $\rho^{(1)}_B \le \ldots \le \rho^{(q)}_B$.

Since $\graph(A)$ is a subgraph of $\graph(B)$, we get that each SCC of $\graph(A)$
is contained in some SCC of $\graph(B)$.
Furthermore, one of the source SCCs of $\graph(A)$ must be contained in $V_1$.
Let $\{U_1, U_2, \ldots\}$ be the strongly connected components (SCCs) of $\graph(A)$.
Assume \wLoG{} that $U_1$ is a source SCC and $U_1 \subseteq V_1$.
Let $\overline{U_1} \defeq \toN \setminus U_1$.

A sequence $[S_1, \ldots, S_k]$ is called a \emph{contiguous partition} of $\toN$
if it is a partition of $\toN$ and for all $i < j$,
if $u \in S_i$ and $v \in S_j$, then $u < v$.
\WLoG{}, we can renumber the actions $\toN$ so that $[U_1, \overline{U_1}]$
and $[V_1, V_2, \ldots, V_q, V_{>q}]$ are contiguous partitions.

We will construct strategies for the SRUC game $(A, B)$ by finding
strategies for \emph{subproblems} and then \emph{composing} them together.
Formally, these are the steps we take:
\begin{enumerate}
\item Pick a contiguous partition $[S_1, \ldots, S_k]$ of $\toN$.
\item Pick an appropriate strategy $\xvec^{(i)} \in \simplex_{|S_i|}$ for each $i \in \toN[k]$.
\item Pick $[\alpha_1, \ldots, \alpha_k] \in \simplex_k$.
\item Concatenate the vectors $\alpha_1\xvec^{(i)}, \alpha_2\xvec^{(2)}, \ldots, \alpha_k\xvec^{(k)}$
    to get $\xvec \in \simplex_n$. We denote this as
    $\xvec \defeq \concat([\alpha_1\xvec^{(1)}, \ldots, \alpha_k\xvec^{(k)}])$.
\end{enumerate}
We also use the opposite process, i.e.,
given $\xvec \in \simplex_n$, $[S_1, \ldots, S_k]$, and $[\alpha_1, \ldots, \alpha_k]$,
we can find the corresponding $[\xvec^{(1)}, \ldots, \xvec^{(k)}]$.

\subsubsection{\bowler{}'s strategy}

We now turn to showing existence of Nash equilibria.
We start by finding a strategy for the \bowler{} that's a candidate
for being part of a Nash equilibrium.

First, we show that if the \bowler{} plays any strategy $\ystar \in \simplex_n$ whose support is
predecessor-inclusive in $\graph(A)$, then the \batter{} cannot get an infinite score.
Furthermore, the \batter{}'s score doesn't depend on her actions outside $\support(\ystar)$,
it only depends on how her actions are distributed within $\support(\ystar)$.

\begin{lemma}
\label{thm:ystar-pred-incl-implies-0-disjoint}
Suppose $\ystar \in \simplex_n$ is predecessor-inclusive for $\graph(A)$.
Then for all $\xvec \in \simplex_n$ such that $\xvec\trn\ystar = 0$,
we get $e^{(A)}(\xvec, \ystar) = 0$.
\end{lemma}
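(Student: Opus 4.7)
The plan is to reduce the conclusion $e^{(A)}(\xvec,\ystar)=0$ to the algebraic statement $\xvec\trn A\ystar = 0$, via \cref{thm:expected-score}: since $\xvec\trn\ystar = 0$ by hypothesis, that lemma gives $e^{(A)}(\xvec,\ystar) = 0$ precisely when $\xvec\trn A\ystar = 0$, and $\infty$ otherwise. So the entire content of the lemma is to rule out a positive contribution to $\xvec\trn A\ystar$.

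Let $S_y \defeq \support(\ystar)$. I would first observe that $\xvec\trn\ystar = \sum_i \xvec_i\ystar_i = 0$ together with non-negativity forces $\support(\xvec) \cap S_y = \emptyset$. Next, I would expand
\[ \xvec\trn A\ystar = \sum_{u \in \support(\xvec)}\sum_{v \in S_y} \xvec_u A_{u,v}\ystar_v, \]
and show every term vanishes. Fix $u \in \support(\xvec)$ and $v \in S_y$. Since $\support(\xvec) \cap S_y = \emptyset$, we have $u \neq v$. If $A_{u,v} > 0$, then by \cref{defn:irred}, $(u,v) \in E_A$ (the edge set of $\graph(A)$), so predecessor-inclusivity of $S_y$ in $\graph(A)$ and $v \in S_y$ force $u \in S_y$, contradicting $u \notin S_y$. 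Hence $A_{u,v} = 0$ for every such pair, giving $\xvec\trn A\ystar = 0$ and therefore $e^{(A)}(\xvec,\ystar) = 0$ by \cref{thm:expected-score}.

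There is no real obstacle here: the proof is a direct unpacking of definitions, with the only subtlety being that self-loops are excluded from $\graph(A)$, which is exactly why the case $u = v$ has to be handled via $\support(\xvec) \cap S_y = \emptyset$ rather than via predecessor-inclusivity. The whole argument should fit in a short paragraph.
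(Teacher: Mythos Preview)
Your proof is correct and follows essentially the same approach as the paper: both reduce to showing $\xvec\trn A\ystar = 0$ by arguing that any positive term $\xvec_u A_{u,v}\ystar_v$ would give an edge $(u,v)\in E_A$ with $v\in S_y$ and $u\notin S_y$, violating predecessor-inclusivity. Your version is in fact slightly more careful than the paper's, since you make explicit the disjointness of supports (needed to rule out the diagonal case $u=v$, which is not an edge of $\graph(A)$), whereas the paper leaves this implicit.
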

\begin{proof}
Let $S_x \defeq \support(\xvec)$ and $S_y \defeq \support(\ystar)$. Then
$\xvec\trn A\ystar = \sum_{i \in S_x}\sum_{y \in S_y} \xvec_i\ystar_jA[i,j]$.
So, $\xvec\trn A\ystar > 0$ iff there is an edge in $\graph(A)$ going from $S_x$ to $S_y$.
Since $S_y$ is predecessor-inclusive, that is not possible, so $\xvec\trn A\ystar = 0$,
so $e^{(A)}(\xvec, \ystar) = 0$.
\end{proof}

\begin{lemma}
\label{thm:batter-can-expand}
Let $\xstar, \ystar \in \simplex_n$ such that $\xstarT\ystar > 0$
and $\ystar$ is predecessor-inclusive in $\graph(A)$.
Let $S_y \defeq \support(\ystar)$. Let $\alpha \defeq \sum_{i \in S_y} \xstar_i$.
(Then $\alpha \in (0, 1]$.) Define $\xhat \in \simplex_n$ as
\[ \xhat_i \defeq \begin{cases}\xstar_i / \alpha & \textrm{ if } i \in S_y
\\ 0 & \textrm{ if } i \not\in S_y \end{cases}. \]
Then $e^{(A)}(\xstar, \ystar) = e^{(A)}(\xhat, \ystar)$.
\end{lemma}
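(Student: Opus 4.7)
The plan is to write $\xstar$ as a convex combination of two strategies: the renormalized restriction $\xhat$ to $S_y$, and the renormalized restriction to $\toN \setminus S_y$. When $\alpha = 1$ we have $\xstar = \xhat$ and the claim is immediate, so assume $\alpha \in (0, 1)$ and define
\[ \xtild_i \defeq \begin{cases} 0 & \textrm{ if } i \in S_y \\ \xstar_i/(1-\alpha) & \textrm{ if } i \not\in S_y \end{cases}. \]
Then $\xtild \in \simplex_n$ and $\xstar = \alpha \xhat + (1-\alpha)\xtild$ by construction.

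Next, I would compute $\xstarT \ystar$ and $\xstarT A \ystar$ using this decomposition. Since $\xtild$ is supported outside $S_y = \support(\ystar)$, we have $\xtild\trn\ystar = 0$, so $\xstarT\ystar = \alpha\xhat\trn\ystar$. The key step is to show $\xtild\trn A \ystar = 0$: I apply \cref{thm:ystar-pred-incl-implies-0-disjoint} to $(\xtild, \ystar)$, which is applicable precisely because $\xtild\trn\ystar = 0$ and $\ystar$ is predecessor-inclusive in $\graph(A)$; this yields $e^{(A)}(\xtild, \ystar) = 0$ and hence $\xtild\trn A\ystar = 0$ by the definition of $e^{(A)}$. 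Therefore $\xstarT A \ystar = \alpha\xhat\trn A\ystar + (1-\alpha)\xtild\trn A\ystar = \alpha\xhat\trn A\ystar$.

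Finally, since $\xstarT\ystar > 0$ by hypothesis, we have $\alpha \xhat\trn\ystar > 0$, so $\xhat\trn\ystar > 0$. By \cref{thm:expected-score},
\[ e^{(A)}(\xstar, \ystar) = \frac{\xstarT A\ystar}{\xstarT\ystar} = \frac{\alpha\xhat\trn A\ystar}{\alpha\xhat\trn\ystar} = \frac{\xhat\trn A\ystar}{\xhat\trn\ystar} = e^{(A)}(\xhat, \ystar), \]
which is the desired identity. The main (and only nontrivial) obstacle is justifying $\xtild\trn A\ystar = 0$; everything else is bookkeeping with the convex decomposition, and predecessor-inclusivity of $\support(\ystar)$ in $\graph(A)$ is exactly the structural hypothesis that makes this vanishing step work.
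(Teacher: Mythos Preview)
Your proof is correct and follows essentially the same approach as the paper: split $\xstar$ into the convex combination $\alpha\xhat + (1-\alpha)\xtild$, observe $\xtild\trn\ystar = 0$, invoke \cref{thm:ystar-pred-incl-implies-0-disjoint} to obtain $\xtild\trn A\ystar = 0$, and then cancel the common factor $\alpha$ in numerator and denominator. The only cosmetic difference is that the paper writes the ratio as $e^{(A)}(\xhat,\ystar)$ plus a correction term that vanishes, whereas you simplify numerator and denominator separately before forming the ratio; these are the same computation.
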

\begin{proof}
If $\alpha = 1$, then $\support(\xstar) \subseteq S_y$, so $\xhat = \xstar$,
and so the lemma is trivially true. Now consider the case where $\alpha \neq 1$.
Define $\xtild \in \simplex_n$ as
\[ \xtild_i \defeq \begin{cases}0 & \textrm{ if } i \in S_y
\\ \xstar_i/(1-\alpha) & \textrm{ if } i \not\in S_y \end{cases}. \]
Then $\xstar = \alpha\xhat + (1-\alpha)\xtild$.
\[ e^{(A)}(\xstar, \ystar) = \frac{\xstarT A\ystar}{\xstarT\ystar}
= \frac{\alpha(\xhat\trn A\ystar) + (1-\alpha)\xtild\trn A\ystar}{\alpha\xhat\trn A\ystar}
= e^{(A)}(\xhat, \ystar) + \frac{1-\alpha}{\alpha}\frac{\xtild\trn A\ystar}{\xhat\trn\ystar}. \]
Since $\ystar$ is predecessor-inclusive, and $\xtild\trn\ystar = 0$,
we get $\xtild\trn A\ystar = 0$ by \cref{thm:ystar-pred-incl-implies-0-disjoint}.
Hence, $e^{(A)}(\xstar, \ystar) = e^{(A)}(\xhat, \ystar)$.
\end{proof}

Since $U_1$ is an SCC of $\graph(A)$, $A[U_1, U_1]$ is irreducible.
Let $\rho_A$ be the Perron root of $A[U_1, U_1]$
and let $\vvec$ be the right Perron vector of $A[U_1, U_1]$.

\begin{lemma}
\label{thm:ystar-source-of-A}
Let $\ystar \defeq \concat([\vvec, \zerovec]) \in \simplex_n$.
Then $\forall \xvec \in \simplex_n$, $e^{(A)}(\xvec, \ystar) = 0$ if
$\xvec\trn\ystar = 0$ and $e^{(A)}(\xvec, \ystar) = \rho_A$ otherwise.
\end{lemma}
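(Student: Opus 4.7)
The plan is to combine the two preceding lemmas (\cref{thm:ystar-pred-incl-implies-0-disjoint} and \cref{thm:batter-can-expand}) with the eigenpair structure of $A[U_1, U_1]$. First I would note that $\vvec$ is the right Perron vector of the irreducible non-negative matrix $A[U_1, U_1]$, so $\vvec_i > 0$ for all $i \in U_1$ by \cref{thm:perron-frobenius}. Hence $\support(\ystar) = U_1$. Next I would verify that $U_1$ is predecessor-inclusive in $\graph(A)$: by assumption $U_1$ is a source SCC of $\graph(A)$, meaning no SCC precedes it, so there is no edge $(u,v) \in E_A$ with $u \notin U_1$ and $v \in U_1$. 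This is exactly the predecessor-inclusive condition.

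Case 1 ($\xvec\trn \ystar = 0$): Since $\ystar$ is predecessor-inclusive in $\graph(A)$, \cref{thm:ystar-pred-incl-implies-0-disjoint} applies directly, giving $e^{(A)}(\xvec, \ystar) = 0$.

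Case 2 ($\xvec\trn \ystar > 0$): Apply \cref{thm:batter-can-expand} to obtain $\xhat \in \simplex_n$ with $\support(\xhat) \subseteq U_1$ and $e^{(A)}(\xvec, \ystar) = e^{(A)}(\xhat, \ystar)$. Since both $\xhat$ and $\ystar$ are supported on $U_1$, only the $U_1 \times U_1$ block of $A$ contributes to the relevant inner products. Writing $\xhat'$ and $\vvec$ for the restrictions of $\xhat$ and $\ystar$ to $U_1$, I get
\[
\xhat\trn A\ystar \;=\; (\xhat')\trn A[U_1, U_1]\, \vvec \;=\; \rho_A\, (\xhat')\trn \vvec \;=\; \rho_A\, \xhat\trn \ystar,
\]
using $A[U_1, U_1]\vvec = \rho_A\vvec$. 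By \cref{thm:expected-score}, since $\xhat\trn\ystar = \xvec\trn\ystar > 0$, we conclude $e^{(A)}(\xhat, \ystar) = \rho_A$, which gives $e^{(A)}(\xvec, \ystar) = \rho_A$.

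There is no real obstacle here; the only mildly subtle point is checking that $U_1$ being a source SCC of $\graph(A)$ translates to predecessor-inclusivity of $\support(\ystar)$, and that the block decomposition lets us read off the eigenrelation on all of $\toN$ from the one on $U_1$. Both follow immediately from the definitions and the fact that $\ystar$ vanishes on $\overline{U_1}$.
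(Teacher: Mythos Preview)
Your proof is correct and follows essentially the same approach as the paper: both use predecessor-inclusivity of the source SCC $U_1$ together with \cref{thm:ystar-pred-incl-implies-0-disjoint} for the zero case and \cref{thm:batter-can-expand} plus the Perron eigenrelation on $A[U_1,U_1]$ for the positive case. One small slip: $\xhat\trn\ystar = \xvec\trn\ystar/\alpha$, not $\xvec\trn\ystar$, but since you only need positivity this does not affect the argument.
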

\begin{proof}
Since $\support(\ystar) = U_1$, we get that $\ystar$ is predecessor-inclusive in $\graph(A)$.
For any $\xvec \in \simplex_n$, let
$\xvec = \concat([\alpha\uvec, (1-\alpha)\wvec])$, where $\uvec \in \simplex_{|U_1|}$.
If $\alpha = 0$, then $e^{(A)}(\xvec, \ystar) = 0$ by \cref{thm:ystar-pred-incl-implies-0-disjoint}.
If $\alpha > 0$, then $e^{(A)}(\xvec, \ystar) = e^{(A[U_1,U_1])}(\uvec, \vvec) = \rho_A$
by \cref{thm:batter-can-expand}.
\end{proof}

If we can find a strategy $\xstar$ such that the \bowler{}'s expected total cost
is minimized at the $\ystar$ as defined in \cref{thm:ystar-source-of-A},
then $(\xstar, \ystar)$ would be a Nash equilibrium for the SRUC game $(A, B)$.

\subsubsection{\batter{}'s strategy}

Finding a suitable candidate for the \batter{}'s strategy
in a Nash equilibrium is more involved.
We define a strategy for the \batter{}, called a \emph{comet strategy}
and show that it restricts the \bowler{} to play strategies
supported on source SCCs of $\graph(B)$.

The key idea is that if $(u, v)$ is an edge in $\graph(B)$,
and if the \batter{} plays $u$ with much higher probability than $v$,
then $v$'s contribution towards the probability of collision
is much less than its contribution to the expected per-round score.
Hence, the \bowler{} will be disincentivized to play $v$.

Let $\eps_B$ be the smallest positive entry in $B$.
Recall that $[V_1, V_2, \ldots]$ are the SCCs of $\graph(B)$,
and the first $q$ of them are sources.
Let $\uvec^{(i)}$ be the left Perron vector of $B[V_i, V_i]$ for each $i \in \toN[q]$.
Let $\eps_u$ be the smallest entry in any $\uvec^{(i)}$, i.e.,
$\eps_u \defeq \min_{i=1}^q \min_{j=1}^{|V_i|} u^{(i)}_j$.

Run a graph traversal algorithm (like depth-first-search or breadth-first-search)
on $\graph(B)$ starting from $V_{\le q}$.
This will give us a directed forest, where each tree's root is in $V_{\le q}$
and non-root vertices are in $V_{>q}$.
For every vertex $v \in V_{>q}$, let $\pi(v)$ be the predecessor in this forest,
and $d(v)$ be the distance of $v$ from a vertex in $V_{\le q}$.

\begin{definition}[comet strategy]
Let $\delta \defeq \eps_B/2\rho^{(q)}_B$.
Let $V_{>q} \defeq \{v_1, \ldots, v_m\}$.

If $m = 0$, then let $\beta = 0$ and
$\xstar \defeq \concat([\uvec^{(1)}/n, \ldots, \uvec^{(q)}/n])$.

If $m > 0$, let $\wvec \in \mathbb{R}_{\ge 0}^m$ where $\wvec_i \defeq \delta^{d(v_i)-1}$.
Let $\beta \defeq (1+n/(\delta\eps_u))^{-1}$ and
$\xstar \defeq \concat([(1-\beta)\uvec^{(1)}/n, \ldots, (1-\beta)\uvec^{(q)}/n,
    \beta\wvec/\|\wvec\|_1])$.

Then $\xstar$ is called a comet strategy for the matrix $B$.%
\footnote{%
For those curious about the name \emph{comet},
think of $\xstar$ as a comet whose body is $V_{\le q}$ and tail is $V_{>q}$.
$\beta$ gives the size of the tail and $\delta$ tells us how sharply the tail thins out.}
\end{definition}

\begin{lemma}
\label{thm:comet-delta-tail}
Let $\xstar$ be a comet strategy for $B$.
Then for any $j \in V_{>q}$, we have $\xstar_{\pi(j)} \ge \xstar_j/\delta$.
\end{lemma}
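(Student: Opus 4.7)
The plan is to do a short case analysis on whether the predecessor $\pi(j)$ lies inside the comet's tail $V_{>q}$ or inside its body $V_{\le q}$, and in each case compute the ratio $\xstar_{\pi(j)}/\xstar_j$ directly from the definition. Since $j \in V_{>q}$, we have $m \ge 1$ and $\xstar_j = \beta\,\delta^{d(j)-1}/\|\wvec\|_1$.

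\textbf{Case 1: $\pi(j) \in V_{>q}$.} By the way the traversal forest is built, $d(\pi(j)) = d(j) - 1$. Both $j$ and $\pi(j)$ draw their probability mass from the same block $\beta\wvec/\|\wvec\|_1$, so
\[ \frac{\xstar_{\pi(j)}}{\xstar_j} \;=\; \frac{\delta^{d(\pi(j))-1}}{\delta^{d(j)-1}} \;=\; \frac{1}{\delta}, \]
and the claim holds with equality.

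\textbf{Case 2: $\pi(j) \in V_{\le q}$.} Then $d(j) = 1$, so $\xstar_j = \beta/\|\wvec\|_1$. If $\pi(j) \in V_i$ for some $i \le q$, then $\xstar_{\pi(j)} = (1-\beta)\uvec^{(i)}_k/n$ for the corresponding coordinate $k$, and $\uvec^{(i)}_k \ge \eps_u$ by definition of $\eps_u$. So it suffices to show
\[ \frac{(1-\beta)\eps_u}{n} \;\ge\; \frac{\beta}{\delta\,\|\wvec\|_1}, \qquad\text{i.e.,}\qquad (1-\beta)\,\delta\,\eps_u\,\|\wvec\|_1 \;\ge\; n\beta. \]
From $\beta = (1 + n/(\delta\eps_u))^{-1}$ one has the identity $(1-\beta)\delta\eps_u = n\beta$, so the inequality reduces to $\|\wvec\|_1 \ge 1$.

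The only small obstacle is verifying $\|\wvec\|_1 \ge 1$. For this I would argue that because the condensation of $\graph(B)$ is a DAG, every SCC in $V_{>q}$ is reachable from some source SCC in $V_{\le q}$, so the BFS/DFS forest rooted at $V_{\le q}$ reaches every vertex of $V_{>q}$, and hence contains at least one edge from $V_{\le q}$ into $V_{>q}$. That yields a vertex $v_i$ with $d(v_i) = 1$, contributing $\delta^{0} = 1$ to $\|\wvec\|_1$. Combining this with the identity above closes the calculation in Case 2 and completes the proof.
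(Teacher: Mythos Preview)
Your proof is correct and follows essentially the same case split as the paper's own argument. One small simplification in Case~2: since you are already in the case $d(j)=1$, the vertex $j$ itself contributes $\delta^{0}=1$ to $\|\wvec\|_1$, so the reachability detour through the condensation DAG is unnecessary.
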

\begin{proof}
If $d(j) \ge 2$, then $\pi(j) \in V_{>q}$.
Hence, $\xstar_{\pi(j)} = \xstar_j/\delta$.

We have $1-\beta \ge \beta n/(\delta\eps_u)$.
If $d(j) = 1$, then $\pi(j) \in V_{\le q}$. Then $\xstar_j \le \beta$ and
$\xstar_{\pi(j)} \ge (1-\beta)\eps_u/n \ge \beta/\delta \ge \xstar_j/\delta$.
\end{proof}

\begin{lemma}
\label{thm:comet-forces-rho1}
Let $\xstar$ be a comet strategy for $B$.
Let $\yhat \in \simplex_n$ such that $\support(\yhat) \subseteq V_{\le q}$.
Then $\rho^{(1)}_B \le e^{(B)}(\xstar, \yhat) \le \rho^{(q)}_B$.

If $V_{>q} \neq \emptyset$, then let $\ytild \in \simplex_n$ such that
$\support(\ytild) \subseteq V_{>q}$. Then $e^{(B)}(\xstar, \ytild) > \rho^{(q)}_B$.
Moreover, for any $0 \le \gamma \le 1$, we have
$e^{(B)}(\xstar, \gamma\yhat + (1-\gamma)\ytild) \ge e^{(B)}(\xstar, \yhat)$.
\end{lemma}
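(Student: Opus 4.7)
The plan is to dispatch the three claims in order. The first exploits the source-SCC structure of $V_{\le q}$ in $\graph(B)$; the second exploits the geometric decay of $\xstar$ along $V_{>q}$ given by \cref{thm:comet-delta-tail}; and the third combines them via \cref{thm:sum-ratio-min-max}.

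For the bound $\rho^{(1)}_B \le e^{(B)}(\xstar,\yhat) \le \rho^{(q)}_B$, I would decompose $\yhat$ block-wise as $\concat([\alpha_1\yhat^{(1)},\ldots,\alpha_q\yhat^{(q)},\zerovec])$ with $\yhat^{(i)}\in\simplex_{|V_i|}$ and $\sum_i \alpha_i = 1$. The key structural observation is that because $V_1,\ldots,V_q$ are source SCCs of $\graph(B)$, we have $B_{j,k}=0$ whenever $j\notin V_i$ and $k\in V_i$ (the only non-zero column entries allowed are within the same source block), so both $\xstarT\yhat$ and $\xstarT B\yhat$ decompose into independent $V_i$-contributions. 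Using the Perron identity $(\uvec^{(i)})\trn B[V_i,V_i]=\rho^{(i)}_B(\uvec^{(i)})\trn$, the $V_i$-contribution to the numerator is exactly $\rho^{(i)}_B$ times its contribution to the denominator, so $e^{(B)}(\xstar,\yhat)$ becomes a non-negative convex combination of $\rho^{(1)}_B,\ldots,\rho^{(q)}_B$ (with strictly positive weight on every index $i$ where $\alpha_i>0$, because $\uvec^{(i)}$ has full support). The sandwich follows from $\rho^{(1)}_B\le\cdots\le\rho^{(q)}_B$.

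For $\ytild$ supported on $V_{>q}$, each $j\in\support(\ytild)$ has a predecessor $\pi(j)$ in the forest with $B_{\pi(j),j}\ge\eps_B$ (since $(\pi(j),j)$ is an edge of $\graph(B)$) and $\xstar_{\pi(j)}\ge\xstar_j/\delta$ by \cref{thm:comet-delta-tail}. Lower-bounding $(B\trn\xstar)_j$ by just the term indexed by $\pi(j)$ yields $\xstarT B\ytild\ge(\eps_B/\delta)\xstarT\ytild$. Since $\xstar$ is strictly positive on $V_{>q}$ whenever $V_{>q}\neq\emptyset$, $\xstarT\ytild>0$, so $e^{(B)}(\xstar,\ytild)\ge\eps_B/\delta=2\rho^{(q)}_B>\rho^{(q)}_B$.

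For the convex-combination claim, I would rewrite
\[e^{(B)}(\xstar,\gamma\yhat+(1-\gamma)\ytild)=\frac{\gamma\,\xstarT B\yhat+(1-\gamma)\,\xstarT B\ytild}{\gamma\,\xstarT\yhat+(1-\gamma)\,\xstarT\ytild}\]
and apply \cref{thm:sum-ratio-min-max} to the two-term vectors $(\gamma\,\xstarT B\yhat,(1-\gamma)\xstarT B\ytild)$ and $(\gamma\,\xstarT\yhat,(1-\gamma)\xstarT\ytild)$; the componentwise ratios are $e^{(B)}(\xstar,\yhat)$ and $e^{(B)}(\xstar,\ytild)$, so the combined ratio lies between them and hence is at least $e^{(B)}(\xstar,\yhat)$ by the first two parts. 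The boundary cases $\gamma\in\{0,1\}$ are immediate, and the positivity of $\xstarT\yhat$ and $\xstarT\ytild$ that \cref{thm:sum-ratio-min-max} requires follows from $\xstar$ being strictly positive on both $V_{\le q}$ and $V_{>q}$. I expect the only conceptual step is the source-SCC block decomposition in part one; once that zeroes out the cross-block entries of $B$, the remainder is a sequence of weighted-average arguments.
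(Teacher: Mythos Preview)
Your proposal is correct and follows essentially the same route as the paper: the block decomposition of $\yhat$ over the source SCCs together with the left-Perron identity for part one, the predecessor bound from \cref{thm:comet-delta-tail} for part two, and the two-term application of \cref{thm:sum-ratio-min-max} for part three. Your explicit articulation of why the cross-block entries of $B$ vanish (source SCCs have no incoming edges) and your separate handling of the $\gamma\in\{0,1\}$ boundary cases are slightly more detailed than the paper's presentation, but the argument is the same.
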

\begin{proof}
Let $\yhat = \concat([b_1\vvec^{(1)}, \ldots, b_q\vvec^{(q)}, \zerovec])$,
where $[b_1, \ldots, b_q] \in \simplex_q$
and $\vvec^{(i)} \in \simplex_{|V_i|}$ for all $i \in \toN[q]$.
By \cref{thm:sum-ratio-min-max},
\begin{align*}
e^{(B)}(\xstar, \yhat)
&= \frac{\sum_{i=1}^q (\uvec^{(i)}/n)\trn B[V_i,V_i](b_i\vvec^{(i)})}{%
    \sum_{i=1}^q (\uvec^{(i)}/n)\trn(b_i\vvec^{(i)})}
\\ &\in \left[\min_{i=1}^q \frac{{\uvec^{(i)}}\trn B[V_i,V_i]\vvec^{(i)}}{{\uvec^{(i)}}\trn\vvec^{(i)}},
    \max_{i=1}^q \frac{{\uvec^{(i)}}\trn B[V_i,V_i]\vvec^{(i)}}{{\uvec^{(i)}}\trn\vvec^{(i)}}\right]
= [\rho^{(1)}_B, \rho^{(q)}_B].
\end{align*}
Using \cref{thm:comet-delta-tail}, we get
\begin{align*}
\xstarT B\ytild &= \sum_{i=1}^n\sum_{j=1}^n \xstar_i\ytild_jB[i,j]
\ge \sum_{j \in V_{>q}}\xstar_{\pi(j)}\ytild_jB[i,j]
\ge \frac{\eps_B}{\delta}\sum_{j \in V_{>q}}\xstar_j\ytild_j
\ge \frac{\eps_B}{\delta}\xstarT\ytild.
\end{align*}
Hence,
\[ e^{(B)}(\xstar, \ytild) = \frac{\xstarT B\ytild}{\xstarT\ytild}
\ge \frac{\eps_B}{\delta} > \rho^{(q)}_B. \]
Using \cref{thm:sum-ratio-min-max}, we get
\[ e^{(B)}(\xstar, \gamma\yhat + (1-\gamma)\ytild)
= \frac{\gamma\xstarT B\yhat + (1-\gamma)\xstarT B\ytild}{\gamma\xstarT\yhat + (1-\gamma)\xstarT\ytild}
\ge e^{(B)}(\xstar, \yhat). \qedhere \]
\end{proof}

\subsubsection{Existence of Nash Equilibrium}

We now show that the candidate strategies we considered indeed give us a Nash equilibrium.
Recall that $\vvec$ is the right Perron vector of $A[U_1,U_1]$,
and $\rho_A$ is the Perron root of $A[U_1,U_1]$.

\begin{lemma}
Let $\ystar \defeq \concat([\vvec, \zerovec])$
and let $\xstar$ be a comet strategy for $B$.
Then $(\xstar, \ystar)$ is a Nash equilibrium for the SRUC game $(A, B)$.
Moreover, $e^{(A)}(\xstar, \ystar) = \rho_A$ and $e^{(B)}(\xstar, \ystar) = \rho_B$.
\end{lemma}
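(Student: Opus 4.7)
The plan is to establish the three claims---Nash equilibrium, $e^{(A)}(\xstar,\ystar)=\rho_A$, and $e^{(B)}(\xstar,\ystar)=\rho_B$---by stitching together the two preparatory lemmas just proved. First I would verify $\xstarT\ystar>0$: since $\support(\ystar)=U_1\subseteq V_1$ and $\xstar$ restricted to $V_1$ equals a positive multiple of $\uvec^{(1)}$ (hence is strictly positive on all of $V_1$), we have $\xstarT\ystar>0$, so every payoff below is computable through the ratio formula of \cref{thm:expected-score}.

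The \batter{}'s side is immediate from \cref{thm:ystar-source-of-A}: for every $\xvec\in\simplex_n$, $e^{(A)}(\xvec,\ystar)\in\{0,\rho_A\}\le\rho_A$, with equality attained at $\xvec=\xstar$. This simultaneously gives $e^{(A)}(\xstar,\ystar)=\rho_A$ and the absence of a profitable deviation for the \batter{}. For the \bowler{}'s value at equilibrium, I would write $\xstar|_{V_1}=c\,\uvec^{(1)}$ for some $c>0$ and combine the Perron identity $B[V_1,V_1]\trn\uvec^{(1)}=\rho^{(1)}_B\uvec^{(1)}$ with the fact that no edges of $\graph(B)$ enter $V_1$ from outside (since $V_1$ is a source SCC of $\graph(B)$); these yield $(B\trn\xstar)_v=\rho^{(1)}_B\xstar_v$ for every $v\in V_1$, and summing against $\ystar$ (supported on $U_1\subseteq V_1$) gives $\xstarT B\ystar=\rho^{(1)}_B\,\xstarT\ystar$, so $e^{(B)}(\xstar,\ystar)=\rho^{(1)}_B$, which I read as the $\rho_B$ in the statement.

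For the \bowler{}'s no-deviation condition, I would decompose an arbitrary $\yvec\in\simplex_n$ as $\gamma\yhat+(1-\gamma)\ytild$, where $\gamma\defeq\sum_{j\in V_{\le q}}\yvec_j$ and $\yhat$, $\ytild$ are the normalized restrictions of $\yvec$ to $V_{\le q}$ and $V_{>q}$ respectively. \Cref{thm:comet-forces-rho1} then supplies $e^{(B)}(\xstar,\yhat)\ge\rho^{(1)}_B$, $e^{(B)}(\xstar,\ytild)>\rho^{(q)}_B\ge\rho^{(1)}_B$, and its mixing clause chains these into $e^{(B)}(\xstar,\yvec)\ge\rho^{(1)}_B=e^{(B)}(\xstar,\ystar)$; the degenerate cases $\gamma\in\{0,1\}$ (or $V_{>q}=\emptyset$, i.e.\ $m=0$) are handled by whichever of the two bounds remains applicable. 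The main obstacle is really just bookkeeping---keeping the decomposition well-defined at the boundary cases and invoking the source-SCC property of $V_1$ to kill the off-block contributions to $\xstarT B\ystar$; no tool beyond Perron--Frobenius and the two preceding lemmas is required.
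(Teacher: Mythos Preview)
Your proposal is correct and follows essentially the same route as the paper: invoke \cref{thm:ystar-source-of-A} for the \batter{}'s side, compute $e^{(B)}(\xstar,\ystar)=\rho^{(1)}_B$ via the Perron identity on $B[V_1,V_1]$ and the source-SCC property of $V_1$, and appeal to \cref{thm:comet-forces-rho1} for the \bowler{}'s no-deviation condition. You are more explicit than the paper in two places---verifying $\xstarT\ystar>0$ up front, and spelling out the decomposition $\yvec=\gamma\yhat+(1-\gamma)\ytild$ together with the boundary cases $\gamma\in\{0,1\}$ and $V_{>q}=\emptyset$---but these are exactly the bookkeeping steps the paper's one-line invocation of \cref{thm:comet-forces-rho1} is implicitly relying on. (Your reading of the undefined $\rho_B$ in the statement as $\rho^{(1)}_B$ is also the only sensible one.)
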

\begin{proof}
Since $\support(\xstar) = \toN$,
by \cref{thm:ystar-full-supp}, the \batter{} cannot gain by deviating.

Since $\support(\ystar) \subseteq V_1$, let
$\ystar = \concat([\zvec, 0])$, where $\zvec \in \simplex_{|V_1|}$.
Then
\[ e^{(B)}(\xstar, \ystar)
= \frac{(\uvec^{(1)}/n)\trn B[V_1,V_1]\zvec}{(\uvec^{(1)}/n)\trn\zvec} = \rho^{(1)}_B. \]
By \cref{thm:comet-forces-rho1}, for any $\yvec \in \simplex_n$,
$e^{(B)}(\xstar, \yvec) \ge \rho^{(1)}_B$.
Hence, the \bowler{} cannot gain by deviating.
Hence, $(\xstar, \ystar)$ is a Nash equilibrium.
\end{proof}

We could have defined comet strategy a little differently:
instead of $\delta \defeq \eps_B/2\rho^{(q)}_B$, we could have picked any
$0 < \delta < \eps_B/\rho^{(q)}_B$,
and instead of giving the same weight of $(1-\beta)/n$ to each $\uvec^{(i)}$,
we could have distributed the weight of $1-\beta$ differently.
This doesn't make any difference iff $B$ is irreducible.
Hence, when $B$ is reducible, we get that multiple Nash equilibria exist.

\end{document}